\newlength{\unit}
\newcommand{\bulletc}{\tiny\FilledSmallCircle}
\newcommand{\bullets}{\tiny\FilledSmallSquare}
\newcommand{\bulletd}{\scriptsize\FilledSmallDiamondshape}
\newcommand{\Half}[1][1]{\frac{#1}{2}}
\newcommand{\tHalf}[1][1]{\tfrac{#1}{2}}
\newcommand{\rG}{\mathrm G}
\newcommand{\rH}{\mathrm H}
\newcommand{\rP}{\mathrm P}
\newcommand{\cG}{\mathcal G}
\newcommand{\cH}{\mathcal H}
\newcommand{\hi}{\hat\imath}
\newcommand{\hj}{\hat\jmath}
\newcommand{\hk}{\hat k}
\newcommand{\hl}{\hat l}
\newcommand{\Hs}{\hat s}
\newcommand{\Ht}{\hat t}
\newcommand{\Hu}{\hat u}
\newcommand{\Ta}{\tilde a}
\newcommand{\Tb}{\tilde b}
\newcommand{\Tm}{\tilde m}
\newcommand{\Tn}{\tilde n}
\newcommand{\Tp}{\tilde p}
\newcommand{\Tt}{\tilde t}
\newcommand{\Id}{\mathrm I}
\newcommand{\Fx}{\rule{2pt}{0pt}}
\newcommand{\Dx}{\rule{2pt}{2pt}}
\newcommand{\Ux}{\rule[2pt]{2pt}{2pt}}
\newcommand{\tfbox}[1]{{\setlength\fboxsep{0pt}\fbox{\rule{0pt}{4pt}#1}}}
\newcommand{\ssub}{\tfbox{\Dx\Fx}}
\newcommand{\presuf}{\tfbox{\Fx\Dx}}
\newcommand{\sufpre}{\tfbox{\Ux\Fx}}
\newcommand{\subs}{\tfbox{\Fx\Ux}}
\newcommand{\ssubX}{\tfbox{\rule{2pt}{4pt}\rule[0pt]{2pt}{2pt}}}
\newcommand{\subsX}{\tfbox{\rule[2pt]{2pt}{2pt}\rule{2pt}{4pt}}}
\newcommand{\why}[1]{\tag*{\small (#1)}}
\newcommand{\charguard}{\textsf{\$}}
\newcommand{\charwild}{\textrm ?}
\newcommand{\extra}[1]{}
\begin{document}

\newcommand{\mytitle}{%
Semi-local string comparison:\\ Algorithmic techniques and applications}

\newcommand{\mytitleshort}{%
Semi-local string comparison: Algorithmic techniques and applications}

\newcommand{\myabstract}{%
A classical measure of string comparison 
is given by the longest common subsequence (LCS) problem on a pair of strings.
We consider its generalisation, called the semi-local LCS problem,
which arises naturally in many string-related problems.
The semi-local LCS problem asks for the LCS scores 
for each of the input strings
against every substring of the other input string,
and for every prefix of each input string
against every suffix of the other input string.
Such a comparison pattern provides 
a much more detailed picture of string similarity
than a single LCS score;
it also arises naturally in many string-related problems.
In fact, the semi-local LCS problem 
turns out to be fundamental for string comparison,
providing a powerful and flexible alternative to classical dynamic programming.
It is especially useful when the input 
to a string comparison problem may not be available all at once:
for example, comparison of dynamically changing strings;
comparison of compressed strings;
parallel string comparison.
The same approach can also be applied to permutation strings,
providing efficient solutions for local versions 
of the longest increasing subsequence (LIS) problem,
and for the problem of computing a maximum clique in a circle graph.
Furthermore, the semi-local LCS problem 
turns out to have surprising connections in a few seemingly unrelated fields, 
such as computational geometry and algebra of semigroups.
This work is devoted to exploring the structure of the semi-local LCS problem,
its efficient solutions, 
and its applications in string comparison and other related areas, 
including computational molecular biology.}

\renewcommand{\mythanks}{%
Research supported by 
the Centre for Discrete Mathematics and Its Applications (DIMAP), 
University of Warwick, 
and by the Royal Society Leverhulme Trust Senior Research Fellowship.}

\myfront

\tableofcontents

\mychapter{Introduction}
\label{c-intro}

A classical measure of string comparison 
is given by the longest common subsequence (LCS) problem.
Given two strings $a$, $b$ of lengths $m$, $n$ respectively,
the LCS problem asks for the length of the longest possible string
that is a subsequence of both $a$ and $b$.
This length is called the strings' LCS score.
The LCS problem has numerous applications 
both within and outside computer science.
We refer the reader to monographs \cite{Crochemore_Rytter:94,Gusfield:97}
for the background and further references.

For a more detailed approach to string comparison,
let us consider the following generalisation of the LCS problem.
Given two strings $a$, $b$ as before,
the semi-local LCS problem asks for the LCS score 
of each string against all substrings of the other string,
and of all prefixes of each string 
against all suffixes of the other string.
Such a comparison pattern provides 
a much more detailed picture of string similarity
than a single LCS score;
it also arises naturally in many string-related problems.
Upon closer look, the semi-local LCS problem 
turns out to be a powerful and flexible alternative to classical dynamic programming
in situations where the input to a string comparison problem 
may not be available all at once:
for example, comparison of dynamically changing strings;
comparison of compressed strings;
parallel string comparison.
Furthermore, the semi-local LCS problem 
turns out to have surprising connections in a few seemingly unrelated fields, 
such as computational geometry, algebra of semigroups, and graph theory.

This work is devoted to exploring the structure of the semi-local LCS problem,
its efficient solutions, 
and its applications in string comparison and other related areas, 
including computational molecular biology.

This work is organised as follows.
In \chapref{c-prelim}, we give the necessary preliminaries.
In \chapref{c-mmult}, we investigate the algebraic structure
underlying the semi-local LCS problem. 
This is done in two alternative forms:
as matrix distance multiplication on simple unit-Monge matrices,
and as a formal monoid of seaweed braids.
In \chapref{c-semi}, we establish rigorously the relationship 
between this structure and the semi-local LCS problem.
In \chapref{c-seaweed}, we use our structural results
to obtain a simple algorithm for the semi-local LCS problem.
We also show a number of this algorithm's applications.
In \chapref{c-weighted}, we generalise our techniques from LCS scores 
to arbitrary rational-weighted alignment scores and edit distances.
In \chaprefs{c-periodic}--\ref{c-compressed}, we apply our techniques
to several particular classes of string comparison problems:
comparison of a periodic string against a plain string;
comparison of permutation strings;
comparison of compressed strings.
In \chapref{c-network}, 
we explore the connection between semi-local string comparison 
and a subclass of comparison networks, known as transposition networks.
Using this connection, we develop algorithms 
for several important variants of the LCS problem:
parameterised, dynamic, bit-parallel and subword-parallel.
In \chapref{c-beyond}, we discuss ways of extending our techniques 
beyond semi-local string comparison,
towards the ultimate goal of detailed and efficient 
fully-local string comparison.
We also discuss an implementation of our method, which has been used 
to solve several problems in computational molecular biology.

Many results presented in this work 
appeared incrementally in the author's publications 
\cite{Tiskin:06_CSR,Tiskin:06_CPM,Tiskin:08_MCS,%
Tiskin:08_JDA,Tiskin:09_JMS,Krusche_Tiskin:09,Tiskin:09_CPM,Tiskin:11_CSR,%
Tiskin:13_EDBT,Tiskin:Algorithmica}.
The aim of this work is to consolidate these results,
unifying the terminology and notation.
However, a number of results are original to this work.

\mychapter{Preliminaries}
\label{c-prelim}

In this chapter, we give the necessary preliminaries for the rest of the work.
It is organised as follows.
In \secref{s-points}, we establish the terminology and notation,
borrowing main concepts from planar Euclidean geometry and matrix algebra.
In \secref{s-distribution}, we introduce 
some basic combinatorial operations on matrices.
In \secref{s-umonge}, we describe our main algorithmic tool:
a special class of integer matrices, called simple unit-Monge matrices.
These matrices are intimately related 
to the combinatorial concept of a permutation,
and the dominance counting problem arising in computational geometry.

\mysection{Points and matrices}
\label{s-points}

For indices, we will use either integers, or half-integers%
\footnote{The intuition behind using both integers and half-integers
is that we are dealing with planar grid-like graphs and, 
implicitly, with their dual graphs.
In this setting, it is natural to index the nodes of a primal graph
by pairs of integers, and the nodes of its dual graph
(corresponding to the faces of the primal graph)
by pairs of half-integers.}:
\begin{gather*}
\brc{\ldots, -2, -1, 0, 1, 2, \ldots} \\
\bigbrc{\ldots, -\tHalf[5], -\tHalf[3], -\tHalf[1], 
 \tHalf[1], \tHalf[3], \tHalf[5], \ldots}
\end{gather*}
For ease of reading, half-integer variables will be indicated by hats
(e.g.\ $\hi$, $\hj$).
Ordinary variable names 
(e.g.\ $i$, $j$, with possible subscripts or superscripts), 
will normally denote integer variables,
but can sometimes denote a variable 
that may be either integer, or half-integer.

\index{$i^-$, $i^+$: decrement/increment by $\tHalf$}%
It will be convenient to denote
\begin{gather*}
i^- = i-\tHalf \qquad i^+ = i+\tHalf
\end{gather*}
for any integer or half-integer $i$.
The set of all half-integers can now be written as
\begin{gather*}
\bigbrc{\ldots, (-3)^+, (-2)^+, (-1)^+, 0^+, 1^+, 2^+, \ldots}
\end{gather*}

\index{interval notation}%
We denote integer and half-integer \emph{intervals} by
\begin{gather*}
\index{$\protect\bra{i:j}$: integer interval}%
\index{$\protect\ang{i:j}$: half-integer interval}
\bra{i:j} = \brc{i, i+1, \ldots, j-1, j} \\
\ang{i:j} = \bigbrc{i^+, i+\tHalf[3], \ldots, j-\tHalf[3], j^-}
\end{gather*}
In both cases, the interval is defined by its integer endpoints.
For finite intervals $\bra{i:j}$ and $\ang{i:j}$,
we call the difference $j-i$ interval \emph{length}.
Note that an integer (respectively, half-integer) 
interval of length $n$ consists of $n+1$ (respectively, $n$) elements.

To denote infinite intervals of integers and half-integers,
we will use $-\infty$ and $+\infty$ where appropriate.
In particular, $\bra{-\infty:+\infty}$ denotes the set of all integers,
and $\ang{-\infty:+\infty}$ the set of all half-integers.

When dealing with pairs of numbers,
we will often use geometric language and call them \emph{points}.
\index{$\ll$, $\gtrless$: dominance orders}%
\index{order!$\ll$-dominance}%
\index{order!$\gtrless$-dominance}%
We define two natural strict partial orders on points,
called \emph{$\ll$-} and \emph{$\gtrless$-dominance}:
\begin{alignat*}{2}
{}
&(i_0,j_0) \ll (i_1,j_1)      &\quad &\text{if $i_0 < i_1$ and $j_0 < j_1$}\\
&(i_0,j_0) \gtrless (i_1,j_1) &\quad &\text{if $i_0 > i_1$ and $j_0 < j_1$}
\end{alignat*}
When visualising points, 
we will deviate from the standard Cartesian convention 
on the direction of the coordinate axes.
We will use instead the matrix indexing convention:
the first coordinate in a pair increases downwards,
and the second coordinate rightwards.
Hence, \emph{$\ll$-} and $\gtrless$-dominance correspond respectively
to the ``above-left'' and ``below-left'' partial orders.
The latter order corresponds visually 
to the standard definition of dominance in computational geometry.

\index{order!lexicographic}%
We also define the natural \emph{lexicographic order} on points:
point $(i_0,j_0)$ precedes point $(i_1,j_1)$ in this order,
if either $i_0 < i_1$, or $i_0 = i_1$ and $j_0 < j_1$.
The lexicographic order is a strict total order,
compatible with the partial $\ll$-dominance order.

We use standard terminology for special elements and subsets in partial orders.
\index{chain}%
In particular, a set of elements form a \emph{chain},
if they are pairwise comparable,
\index{antichain}%
and an \emph{antichain},
if they pairwise incomparable.
Note that a $\ll$-chain is a $\gtrless$-antichain, and vice versa.
An element in a partially ordered set is \emph{minimal} 
(respectively, \emph{maximal}), if, in terms of the partial order, 
it does not dominate (respectively, is not dominated by)
any other element in the set.
All minimal (respectively, maximal) elements 
in a partially ordered set form an antichain.

A function of an integer argument 
\index{function!unit-monotone}%
will be called \emph{unit-monotone increasing}
(respectively, \emph{decreasing}),
if for every successive pair of values,
the difference between the successor and the predecessor
is either $0$ or $1$ (respectively, $0$ or $-1$).

We will make extensive use of vectors and matrices
with integer (occasionally, also rational or real) elements, 
and with integer or half-integer indices%
\footnote{When integers and half-integers are used as matrix indices,
it is convenient to imagine that the matrices are written on squared paper.
The entries of an integer-indexed matrix 
are at integer points of line intersections;
the entries of a half-integer-indexed matrix 
are at half-integer points within the squares.}.
We regard a vector or matrix as a one- (respectively, two-) argument function,
so we can speak e.g.\ about unit-monotone increasing matrices.

\index{$(I \mid J)$: Cartesian product}%
Given two index ranges $I$, $J$,
it will be convenient to denote their Cartesian product by $(I \mid J)$.
\index{$\bra{i_0:i_1 \mid j_0:j_1}$: interval Cartesian product}%
\index{$\ang{i_0:i_1 \mid j_0:j_1}$: interval Cartesian product}%
We extend this notation to Cartesian products of intervals:
\begin{alignat*}{2}
{}
&\bra{i_0:i_1 \mid j_0:j_1} &&= (\bra{i_0:i_1} \mid \bra{j_0:j_1})\\
&\ang{i_0:i_1 \mid j_0:j_1} &&= (\ang{i_0:i_1} \mid \ang{j_0:j_1})
\end{alignat*}
Given index ranges $I$, $J$,
a \emph{vector over $I$} is indexed by $i \in I$,
and a \emph{matrix over $(I \mid J)$} is indexed by $i \in I$, $j \in J$.
A vector or matrix is \emph{nonnegative}, if all its elements are nonnegative.

\index{matrix!implicit}%
The matrices we consider can be \emph{implicit},
i.e.\ represented by a compact data structure
that supports access to every matrix element 
in a specified (typically small, but not necessarily constant) time.
If the query time is not given, it is assumed to be constant by default.



We will use the parenthesis notation for indexing matrices, e.g.\ $A(i,j)$.
We will also use a straightforward notation for selecting subvectors and submatrices:
for example, given a matrix $A$ over $\bra{0:n \mid 0:n}$, 
we denote by $A\bra{i_0:i_1 \mid j_0:j_1}$
the submatrix defined by the given sub-intervals.
\index{$*$: implicit range}%
A star $*$ will indicate that for a particular index, 
its whole range is being used,
e.g.\ $A\bra{* \mid j_0:j_1} = A\bra{0:n \mid j_0:j_1}$.
In particular, $A(*,j)$ and $A(i,*)$ will denote 
a full matrix column and row, respectively.

\index{$A^T$: matrix transpose}%
\index{$A^R$: matrix rotation}%
We will denote by $A^T$ the transpose of matrix $A$,
and by $A^R$ the matrix obtained from $A$
by counterclockwise 90-degree rotation.
Given a matrix $A$ over $\bra{0:n \mid 0:n}$ or $\ang{0:n \mid 0:n}$,
we have 
\begin{gather*}
A^T(i,j)=A(j,i)\qquad A^R(i,j)=A(j,n-i)
\end{gather*}
for all $i,j$.

\mysection{Distribution, density and Monge matrices}
\label{s-distribution}

We now introduce two fundamental combinatorial operations on matrices.
The first operation obtains 
an integer-indexed matrix from a half-integer-indexed matrix
by summing up, for each of the integer points,
all matrix elements that are $\gtrless$-dominated by the given point.
\begin{definition}
\label{def-distribution}
Let $D$ be a matrix over $\ang{i_0:i_1 \mid j_0:j_1}$.
\index{matrix!distribution}%
\index{$D^\Sigma$: distribution matrix}%
Its \emph{distribution matrix} $D^\Sigma$
over $\bra{i_0:i_1 \mid j_0:j_1}$ is defined by
\begin{gather*}
D^\Sigma(i,j) = 
\sum_{\hi \in \ang{i:i_1},\hj \in \ang{j_0:j}} D(\hi,\hj)
\end{gather*}
for all $i \in \bra{i_0:i_1}$, $j \in \bra{j_0:j_1}$.
\end{definition}

The second operation obtains a half-integer-indexed matrix 
from an integer-indexed matrix,
by taking a four-point difference around each given point.
\begin{definition}
\label{def-density}
Let $A$ be a matrix over $\bra{i_0:i_1 \mid j_0:j_1}$.
\index{matrix!density}%
\index{$A^\square$: density matrix}%
Its \emph{density matrix} $A^\square$
over $\ang{i_0:i_1 \mid j_0:j_1}$ is defined by
\begin{gather*}
A^\square(\hi,\hj) =
A\pa{\hi^+,\hj^-} - A\pa{\hi^-,\hj^-} -
A\pa{\hi^+,\hj^+} + A\pa{\hi^-,\hj^+}
\end{gather*}
for all $\hi \in \ang{i_0:i_1}$, $\hj \in \ang{j_0:j_1}$.
\end{definition}
\begin{example}
\label{ex-distr}
We have
\begin{gather*}
\bmat{0 & 1 & 0 \\ 1 & 0 & 0 \\ 0 & 0 & 1}^\Sigma = 
\bmat{0 & 1 & 2 & 3 \\ 0 & 1 & 1 & 2 \\ 0 & 0 & 0 & 1 \\ 0 & 0 & 0 & 0}
\qquad
\bmat{0 & 1 & 2 & 3 \\ 0 & 1 & 1 & 2 \\ 0 & 0 & 0 & 1 \\ 0 & 0 & 0 & 0}^\square =
\bmat{0 & 1 & 0 \\ 1 & 0 & 0 \\ 0 & 0 & 1}
\end{gather*}
\end{example}

The definitions of distribution and density matrices 
extend naturally to matrices over an infinite index range,
as long as the sum in \defref{def-distribution} is defined.

The operations of taking the distribution and the density matrix 
are close to be mutually inverse.
For any finite matrices $D$, $A$ as above, and for all $i$, $j$, we have
\begin{gather*}
D^{\Sigma\square} = D\\
A^{\square\Sigma}(i,j) = A(i,j) - A(i_1,j) - A(i,j_0) + A(i_1,j_0)
\end{gather*}
When matrix $A$ is restricted to have 
all zeros on its bottom-left boundary
(i.e.\ in the leftmost column and the bottom row),
the two operations become truly mutually inverse.
We introduce special terminology for such matrices.
\begin{definition}
\label{def-simple}
\index{matrix!simple}%
Matrix $A$ over $\bra{i_0:i_1 \mid j_0:j_1}$ will be called \emph{simple}, 
if $A(i_1,j) = A(i,j_0) = 0$ for all $i$, $j$. 
Equivalently, $A$ is simple if $A^{\square\Sigma}=A$.
\end{definition}

The following classes of matrices
play an important role in optimisation theory
(see Burkard et al.\ \cite{Burkard+:96} and Burkard \cite{Burkard:07} 
for an extensive survey),
and also arise in graph and string algorithms.
\begin{definition}
\label{def-totally-monotone}
\index{matrix!totally monotone}%
Matrix $A$ is called \emph{totally monotone}, if
\begin{gather*}
A(i,j) > A(i,j') \implies A(i',j) > A(i',j')
\end{gather*}
for all $i \leq i'$, $j \leq j'$.
\end{definition}
\begin{definition}
\label{def-Monge}
\index{matrix!Monge}%
Matrix $A$ is called \emph{a Monge matrix}, if
\begin{gather*}
A(i,j) + A(i',j') \leq A(i,j') + A(i',j)
\end{gather*}
for all $i \leq i'$, $j \leq j'$.
Equivalently, matrix $A$ is a Monge matrix, 
if $A^\square$ is nonnegative.
\index{matrix!anti-Monge}%
Matrix $A$ is called \emph{an anti-Monge matrix}, if $-A$ is Monge.
\end{definition}
\noindent
It is easy to see that Monge matrices
form a subclass of totally monotone matrices.
The characterisation of Monge matrices via their density matrices 
given by \defref{def-Monge} 
is equivalent to the canonical structure theorem
for Monge matrices by Burdyuk and Trofimov \cite{Burdyuk_Trofimov:76}
and Bein and Pathak \cite{Bein_Pathak:96}
(see also \cite{Burkard+:96,Burkard:07}).

\mysection{Permutation and unit-Monge matrices}
\label{s-umonge}

Our techniques will rely on structures 
that have permutations as their basic building blocks.
We will be dealing with permutations in matrix form,
exploiting the symmetry between indices and elements of a permutation.
\begin{definition}
\label{def-permutation}
\index{matrix!permutation}%
\index{matrix!subpermutation}%
A \emph{permutation}%
\footnote{Strictly speaking, such a matrix corresponds 
to a bijection, rather than a permutation,
since the ranges of column and row indices may be different
(although they must be of the same cardinality).
Therefore, there is a slight abuse of terminology
in calling it a permutation matrix, rather than a ``bijection matrix''.} %
(respectively, \emph{subpermutation}) \emph{matrix}
is a zero-one matrix 
containing exactly one (respectively, at most one) nonzero 
in every row and every column.
\end{definition}
\begin{example}
The $3 \times 3$ matrix in \exref{ex-distr} is a permutation matrix.
\end{example}

Typically, permutation and subpermutation matrices
will be indexed by half-integers.

\index{matrix!permutation!identity}%
\index{$\Id$: identity matrix}%
\index{matrix!permutation!offset identity}%
\index{$\Id_h$: offset identity matrix}%
An \emph{identity matrix} is a permutation matrix $\Id$ 
over an interval range $\ang{i_0:i_1 \mid i_0:i_1}$,
such that $\Id(\hi,\hj)=1$, iff $\hi=\hj$.
More generally, an \emph{offset identity matrix} 
is a permutation matrix $\Id_h$ 
over an interval range $\ang{i_0:i_1 \mid j_0:j_1}$,
where $j_0-i_0 = j_1-i_1 = h$,
such that $\Id_h(\hi,\hj)=1$, iff $\hj-\hi=h$.
Note that $\Id_0=\Id$.
Clearly, an identity or offset identity matrix
can be represented implicitly in constant space
and with constant query time.
When dealing with identity and offset identity matrices,
we will often omit their index ranges,
as long as they are clear from the context.

When dealing with (sub)permutation matrices,
we will write ``nonzeros'' for ``index pairs corresponding to nonzeros'',
as long as this does not lead to confusion.

Due to the extreme sparsity of (sub)permutation matrices, 
it would obviously be wasteful and inefficient to store them explicitly.
Instead, we will normally assume that a permutation matrix $P$ of size $n$
is given implicitly by the underlying permutation and its inverse,
i.e.\ by a pair of arrays $\pi$, $\pi^{-1}$,
such that $P\bigpa{\hi,\pi(\hi)} = 1$ for all $\hi$,
and $P\bigpa{\pi^{-1}(\hj),\hj} = 1$ for all $\hj$.
This compact representation has size $O(n)$,
and allows constant-time access to each nonzero of $P$
by its row index, as well as by its column index.
The implicit representation for subpermutation matrices is analogous.


The following subclasses of Monge matrices 
will play a crucial role in this work.
\begin{definition}
\label{def-unit-Monge}
\index{matrix!unit-Monge}%
\index{matrix!subunit-Monge}%
\index{matrix!unit-anti-Monge}%
\index{matrix!subunit-anti-Monge}%
Matrix $A$ is called \emph{a unit-Monge} 
(respectively, \emph{subunit-Monge}) \emph{matrix}, 
if $A^\square$ is a permutation (respectively, subpermutation) matrix.
Matrix $A$ is called \emph{a unit-anti-Monge} 
(respectively, \emph{subunit-anti-Monge}) \emph{matrix}, 
if $-A$ is unit-Monge (respectively, subunit-Monge).
\end{definition}
\noindent
By \defrefs{def-Monge}, \ref{def-unit-Monge}, 
any unit-Monge matrix is subunit-Monge, 
and any subunit-Monge matrix is Monge
(since the corresponding density matrix $A^\square$ 
is a (sub)permutation matrix, and hence nonnegative).
Similar inclusions hold for (sub)unit-anti-Monge matrices.
\begin{example}
The $4 \times 4$ matrix in \exref{ex-distr} is unit-Monge.
It is also simple.
\end{example}

We will use the following straightforward criterion 
for a simple integer Monge matrix to be unit-Monge.
\begin{lemma}
\label{lm-unit-Monge}
Let $A$ be a simple square integer Monge matrix over $\bra{i_0:i_1 \mid j_0:j_1}$.
Matrix $A$ is unit-Monge, if and only if
\begin{gather*}
A(\hi^-, j_1) - A(\hi^+, j_1) = 1\qquad
A(i_0, \hj^+) - A(i_0, \hj^-) = 1
\end{gather*}
for all $\hi \in \ang{i_0:i_1}$, $\hj \in \ang{j_0:j_1}$.
\end{lemma}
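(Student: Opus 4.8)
The plan is to reduce the claim to counting nonzeros of $A^\square$ along rows and columns, using the fact that $A$ is simple (so $A = A^{\square\Sigma}$) and Monge (so $A^\square \geq 0$). Write $D = A^\square$, a nonnegative integer matrix over $\ang{i_0:i_1 \mid j_0:j_1}$ with $D^\Sigma = A$. By \defref{def-unit-Monge}, we must show $D$ is a permutation matrix exactly when the two stated unit-difference conditions hold. Since $D \geq 0$ and $A$ is square (both index intervals have the same length, say $n$), a nonnegative integer matrix with $n$ rows and $n$ columns is a permutation matrix if and only if every row sum and every column sum equals $1$; so the whole statement reduces to expressing row and column sums of $D$ in terms of boundary values of $A$.

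First I would compute the row and column sums of $D$ via the distribution matrix. For a fixed half-integer row $\hi$, summing $D(\hi,\hj)$ over all $\hj \in \ang{j_0:j_1}$ telescopes through \defref{def-distribution}: since $D^\Sigma = A$ and $A$ is simple, one gets
\begin{gather*}
\sum_{\hj \in \ang{j_0:j_1}} D(\hi,\hj) = A(\hi^-, j_1) - A(\hi^+, j_1),
\end{gather*}
using $A(\hi^\pm, j_0) = 0$. Symmetrically, summing over a fixed column $\hj$ gives
\begin{gather*}
\sum_{\hi \in \ang{i_0:i_1}} D(\hi,\hj) = A(i_0, \hj^+) - A(i_0, \hj^-),
\end{gather*}
using $A(i_1,\hj^\pm)=0$. (These identities are just \defref{def-density} read cumulatively, i.e.\ partial sums of the four-point differences collapsing against the zero boundary.) The direction ``$\Rightarrow$'' is then immediate: if $A$ is unit-Monge then $D$ is a permutation matrix, so every such sum is $1$, which is precisely the asserted pair of equations.

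For the converse, suppose the two difference conditions hold. Then every row sum and every column sum of $D$ equals $1$. Since $D$ is a \emph{nonnegative integer} matrix, a row summing to $1$ has exactly one nonzero entry, equal to $1$; likewise for columns. Hence $D$ has exactly one $1$ in each row and each column and zeros elsewhere, i.e.\ $D = A^\square$ is a permutation matrix, so $A$ is unit-Monge by definition. I expect the only mildly delicate point to be bookkeeping the telescoping sums against the correct (bottom–left) zero boundary of a simple matrix — making sure the endpoints $A(\hi^\pm,j_0)=0$ and $A(i_1,\hj^\pm)=0$ are the ones that drop out, and that the surviving terms are $A(\cdot,j_1)$ and $A(i_0,\cdot)$ as stated — but this is routine once \defref{def-density} and \defref{def-simple} are unwound; there is no real obstacle beyond careful indexing.
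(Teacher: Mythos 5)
Your proof is correct and follows essentially the same route as the paper's: telescope the row and column sums of $A^\square$ against the zero bottom-left boundary guaranteed by simplicity, identify those sums with the two stated boundary differences, and invoke the row-sum/column-sum characterisation of a nonnegative integer permutation matrix. The only cosmetic difference is that you spell out both directions, whereas the paper writes out only the sufficiency computation and leaves the (identical) converse reading implicit.
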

\begin{proof}
Note that a nonnegative integer matrix $P$ over $\ang{i_0:i_1 \mid j_0:j_1}$ 
is a permutation matrix, if and only if
\begin{gather*}
\sum_{\hj' \in \ang{j_0:j_1}} P(\hi,\hj')=1 \qquad 
\sum_{\hi' \in \ang{i_0:i_1}} P(\hi',\hj)=1 
\end{gather*}
for all $\hi \in \ang{i_0:i_1}$, $\hj \in \ang{j_0:j_1}$.
Applying this observation to matrix $A^\square$, we have
\begin{gather*}
\sum_{\hj' \in \ang{j_0:j_1}} A^\square(\hi,\hj') ={} 
\why{definition of $\square$}\\
\sum_{\hj' \in \ang{j_0:j_1}} \bigl(
A(\hi^+,\hj'^-) - A(\hi^-,\hj'^-) -
A(\hi^+,\hj'^+) + A(\hi^-,\hj'^+)\bigr) ={}
\why{telescoping sum}\\
A(\hi^+, j_0) - A(\hi^-, j_0) -
A(\hi^+, j_1) + A(\hi^-, j_1) ={}
\why{$A$ simple}\\
0 - 0 - A(\hi^+, j_1) + A(\hi^-, j_1) = 1
\end{gather*}
for all $\hi \in \ang{i_0:i_1}$.
Analogously,
\begin{gather*}
\sum_{\hi' \in \ang{i_0:i_1}} A^\square(\hi',\hj) = 1 
\end{gather*}
for all $\hj \in \ang{j_0:j_1}$.
Therefore, $A^\square$ is a permutation matrix, so $A$ is unit-Monge.
\end{proof}

The algorithms presented in this work are based 
on dealing with implicitly represented matrices.
While we will occasionally use the term ``implicit matrix'' in the general sense,
it will have the following specific meaning when applied to a (sub)unit-Monge matrix.
\begin{definition}
\label{def-implicit}
Let $A$ be a (sub)unit-Monge matrix over $\bra{0:n_1 \mid 0:n_2}$.
The \emph{implicit representation} for matrix $A$ 
is given by the (sub)permutation matrix $P=A^\square$
and vectors $b=A(n_1,*)$, $c=A(*,0)$:
\begin{gather*}
A(i,j) = P^\Sigma(i,j) + b(j) + c(i) - b(0)
\end{gather*}
for all $i$, $j$.
\end{definition}

Our particular focus will be on matrices that are both simple and unit-Monge. 
Our particular focus will be on square matrices $A$ 
that are both simple and unit-Monge.
By \defrefs{def-simple}, \ref{def-unit-Monge}, this holds
if and only if $A=P^\Sigma$, where $P$ is a permutation matrix.
\defref{def-implicit} can be specialised to such matrices as follows.
\begin{definition}
\label{def-implicit-simple}
Let $A$ be a simple unit-Monge matrix over $\bra{0:n \mid 0:n}$.
The \emph{implicit representation} for matrix $A$ 
is given by the permutation matrix $P=A^\square$:
\begin{gather*}
A = P^\Sigma
\end{gather*}
\end{definition}
\begin{example}
The $4 \times 4$ matrix in \exref{ex-distr} is simple unit-Monge.
\end{example}

\index{dominance counting}%
Thinking of elements of $P$ and $P^\Sigma$ 
as respectively half-integer and integer points in the plane,
the value $P^\Sigma(i,j)$ represents the count of nonzeros in $P$, 
that are $\gtrless$-dominated by the point $(i,j)$.
This type of query to an (implicit) set of points 
is known as \emph{dominance counting}.
An individual element $P^\Sigma(i,j)$ can be queried in time $O(n)$ 
by a linear sweep of the nonzeros of $P$,
counting those that are $\gtrless$-dominated by $(i,j)$.
Using a classical data structure, matrix $P$
can be preprocessed to allow element queries on $P^\Sigma$
much more efficiently.

\begin{figure}[tb]
\centering

\includegraphics{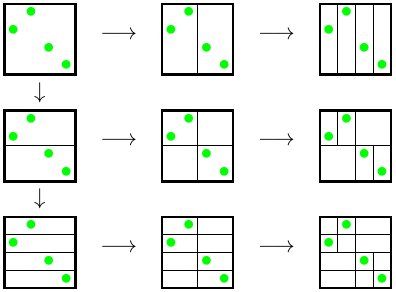}

\caption{\label{f-range-tree}%
A permutation matrix and the corresponding range tree}
\end{figure}
\begin{theorem}
\label{th-query}
Given a (sub)permutation matrix $P$ of size $n$, 
there exists a data structure that
\begin{itemize}
\item has size $O\bigpa{n \log n}$;
\item can be built in time $O\bigpa{n \log n}$;
\index{matrix!implicit!element query}%
\item allows to query an individual element
of the simple (sub)unit-Monge matrix $P^\Sigma$
in time $O\bigpa{\log^2 n}$;
\end{itemize}
\end{theorem}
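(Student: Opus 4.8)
The plan is to build a standard two-level range tree (a.k.a.\ a range tree with fractional cascading omitted, for simplicity) over the nonzeros of $P$, and to express any dominance-counting query $P^\Sigma(i,j)$ as a sum of $O(\log n)$ precomputed partial counts. First I would recall that, by the definition of $\gtrless$-dominance and of $D^\Sigma$, the quantity $P^\Sigma(i,j)$ counts the nonzeros $(\hi,\hj)$ of $P$ with $\hi > i$ (i.e.\ $\hi \in \ang{i:i_1}$) and $\hj < j$ (i.e.\ $\hj \in \ang{j_0:j}$); so this is a two-sided (in fact, since $P$ has only $n$ points with known coordinate ranges, effectively an orthogonal) range-counting query over the point set of $P$. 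The task therefore reduces to the textbook result that $n$ points in the plane admit a data structure of size $O(n \log n)$, built in $O(n \log n)$ time, supporting orthogonal range-counting in $O(\log^2 n)$ time; I will assemble the argument rather than cite it, since the precise constants and the implicit-matrix phrasing matter for later chapters.

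The construction I would carry out: sort the $n$ nonzeros of $P$ by their first (row) coordinate — this is immediate from the implicit representation by $\pi$, $\pi^{-1}$, or costs $O(n \log n)$ if we must sort — and build a balanced binary search tree $T$ (the \emph{primary tree}) on these row coordinates, so that each node $v$ of $T$ corresponds to a contiguous range of rows and hence to a \emph{canonical subset} $S_v$ of nonzeros; the subsets at any fixed depth partition the point set, and each point lies in $O(\log n)$ canonical subsets, giving $\sum_v \abs{S_v} = O(n \log n)$. At each node $v$ I would store a \emph{secondary structure}: the column coordinates of the points of $S_v$ in sorted order, together with a prefix-count array (equivalently, the points of $S_v$ sorted so that rank queries take $O(\log n)$ by binary search). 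Building all secondary arrays costs $O(n \log n)$ total, and the total size is $O(n \log n)$, establishing the first two bullets.

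To answer a query $P^\Sigma(i,j)$, I would first locate, in $O(\log n)$ time, the $O(\log n)$ canonical nodes $v_1,\dots,v_k$ whose subsets $S_{v_1},\dots,S_{v_k}$ together comprise exactly the nonzeros with row coordinate in $\ang{i:i_1}$ — this is the standard decomposition of the one-sided row interval into canonical pieces of $T$. For each such $v_\ell$ I would then binary-search its sorted column array for the number of points with column coordinate in $\ang{j_0:j}$, i.e.\ less than $j$; reading off the appropriate prefix count takes $O(\log n)$. Summing these $k = O(\log n)$ values yields $P^\Sigma(i,j)$, for a total query time of $O(\log^2 n)$, which is the third bullet. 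The subpermutation case is identical since nothing used that every row or column is occupied. The main obstacle — really the only subtlety — is being careful that the canonical decomposition covers the \emph{one-sided} upper row interval $\ang{i:i_1}$ rather than a two-sided one (so we must not forget the rightmost spine of the tree), and that the half-integer coordinate conventions line up with the strict inequalities in the definition of $\gtrless$-dominance; both are bookkeeping rather than genuine difficulty, and the $O(\log^2 n)$ bound is exactly what falls out without fractional cascading.
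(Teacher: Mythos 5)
Your proposal is correct and follows essentially the same route as the paper: the paper's proof simply builds a two-dimensional range tree on the nonzeros of $P$ (citing Bentley and Preparata--Shamos) and observes that it has $O(n \log n)$ nodes and answers a dominance-counting query by visiting $O(\log^2 n)$ of them. You merely spell out explicitly the canonical-subset decomposition and the secondary sorted arrays that the paper leaves to the cited references.
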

\begin{proof}
The required structure is a two-dimensional range tree 
\cite{Bentley:80_CACM} (see also \cite{Preparata_Shamos:85}),
built on the set of nonzeros in $P$.
There are at most $n$ nonzeros,
hence the total number of nodes in the tree is $O\bigpa{n \log n}$.
A dominance counting query on the set of nonzeros
can be answered by accessing $O\bigpa{\log^2 n}$ of the tree nodes.
\end{proof}

\begin{example}
\figref{f-range-tree} shows a $4 \times 4$ permutation matrix,
with nonzeros indicated by green%
\footnote{For colour illustrations, the reader is referred 
to the online version of this work.
If the colour version is not available, 
all references to colour can be ignored.}
bullets, and the corresponding range tree.
\end{example}

The bounds given by \thref{th-query} can be improved 
by employing more advanced data structures.
Successive improvements to the efficiency of orthogonal range counting
(which includes dominance counting as a special case)
were obtained by Chazelle \cite{Chazelle:88},
J\'{a}J\'{a} et al.\ \cite{JaJa+:04},
Chan and P\v{a}tra\c{s}cu \cite{Chan_Patrascu:10}.
The currently most efficient data structure of \cite{Chan_Patrascu:10}
has size $O(n)$, can be built in time $O\bigpa{n(\log n)^{1/2}}$,
and answers a dominance counting query 
in time $O\bigpa{\frac{\log n}{\log\log n}}$.
However, the standard range tree data structure employed by \thref{th-query} 
is simpler, requires a less powerful computation model,
and is more likely to be practical.
Therefore, we will be using \thref{th-query} as our main technique
for implicit representation of simple (sub)unit-Monge matrices.

\index{matrix!implicit!incremental query}%
In addition to ordinary element queries described by \thref{th-query},
we will also access matrix elements via \emph{incremental queries}.
Given an element of an implicit simple (sub)unit-Monge matrix,
such a query returns the value of a specified adjacent element.
Incremental queries can be answered directly from the (sub)permutation matrix,
without any non-trivial data structures or preprocessing.
\begin{theorem}
\label{th-query-inc}
Given a (sub)permutation matrix $P$ of size $n$,
and the value $P^\Sigma(i,j)$, $i,j \in \bra{0:n}$,
the values $P^\Sigma(i \pm 1,j)$, $P^\Sigma(i,j \pm 1)$,
where they exist, can be queried in time $O(1)$.
\end{theorem}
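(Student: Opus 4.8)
The plan is to reduce an incremental query to a constant number of lookups in the implicit representation of $P$, i.e.\ in the arrays $\pi$, $\pi^{-1}$. Recall that $P^\Sigma(i,j)$ counts the nonzeros of $P$ that are $\gtrless$-dominated by the integer point $(i,j)$; geometrically, these are the nonzeros lying strictly below and strictly to the left of $(i,j)$, that is, in rows $\ang{i:i_1}$ and columns $\ang{j_0:j}$. The key observation is that moving the query point by one unit in either coordinate changes this dominated region by exactly one row or one column of half-integer cells, and such a strip of the permutation matrix contains at most one nonzero.

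Concretely, I would argue as follows. Consider the step from $(i,j)$ to $(i,j+1)$. The set of half-integer points $\gtrless$-dominated by $(i,j+1)$ but not by $(i,j)$ is precisely the column $\ang{j^+}$ intersected with the rows $\ang{i:i_1}$; equivalently, it is the single half-integer column index $j^+$, restricted to half-integer row indices exceeding $i$. Since $P$ is a (sub)permutation matrix, column $j^+$ contains at most one nonzero, located in row $\pi^{-1}(j^+)$ (if it exists). Hence
\[
P^\Sigma(i,j+1) = P^\Sigma(i,j) +
\begin{cases}
1 & \text{if column $j^+$ has a nonzero in a row $> i$,}\\
0 & \text{otherwise,}
\end{cases}
\]
and the condition is checked in $O(1)$ time by comparing $\pi^{-1}(j^+)$ against $i$ (treating the ``no nonzero'' case as vacuously false). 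The three remaining cases are symmetric: for $P^\Sigma(i,j-1)$ we subtract the contribution of column $(j-1)^+ = j^-$; for $P^\Sigma(i-1,j)$ and $P^\Sigma(i+1,j)$ we add or remove the contribution of row $i^-$ or $i^+$ respectively, checking whether $\pi(i^\mp)$ lies in the column range $\ang{j_0:j}$, again a single array lookup and comparison. Care is needed only to get the half-integer offsets and the strict inequalities right, and to restrict to queries where the target index stays in $\bra{0:n}$, as the statement stipulates.

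The ``hard part'' here is genuinely mild — it is purely bookkeeping of which half-integer row or column strip is gained or lost under a unit shift, and verifying that the $\gtrless$-dominance comparison in the surviving coordinate reduces to one inequality test against $i$ or $j$. Once the strip is identified, the $O(1)$ bound is immediate from the (sub)permutation property (at most one nonzero per line) together with the constant-time access to $\pi$ and $\pi^{-1}$ guaranteed by the implicit representation. I would present the case $(i,j)\to(i,j+1)$ in full and note that the other three follow by the symmetries $P \leftrightarrow P^T$ and reversal of one coordinate, or simply by the analogous one-line argument.
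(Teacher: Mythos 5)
Your proposal is correct and follows essentially the same route as the paper: identify the single half-integer row or column strip that is gained or lost under the unit shift, use the arrays $\pi$, $\pi^{-1}$ to locate its at most one nonzero in $O(1)$, and decide the $\pm 1$ adjustment by one comparison against the other index. The paper spells out the case $P^\Sigma(i+1,j)$ (subtract $1$ iff $\pi(i^+) < j$) and declares the rest analogous, exactly as you do for $(i,j+1)$.
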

\begin{proof}
Let $P$ be a permutation matrix; 
a generalisation to subpermutation matrices is straightforward.
Consider a query of the type $P^\Sigma(i+1,j)$;
the proof for other query types is analogous.
Let $\hj \in \ang{0:n}$ be such that $P(i+\tHalf,\hj) = 1$;
value $\hj$ can be obtained from the 
permutation representation of $P$ in time $O(1)$.
We have 
\begin{gather*}
P^\Sigma(i+1,j) = P^\Sigma(i,j)-
\begin{cases}
1 &\text{if $\hj < j$}\\
0 &\text{otherwise}
\end{cases}
\end{gather*}
\end{proof}

\index{matrix!implicit!incremental query!columnwise}%
\index{matrix!implicit!incremental query!rowwise}%
\index{matrix!implicit!batch query}%
We will call the incremental queries 
of type $P^\Sigma(i \pm 1,j)$ \emph{columnwise}, 
and of type $P^\Sigma(i,j \pm 1)$ \emph{rowwise}.
Incremental queries described by \thref{th-query-inc}
can be used to answer \emph{batch queries},
returning a set of elements in a row, column or diagonal 
of an implicit simple (sub)unit-Monge matrix.
In particular, all elements 
in a given row, column or diagonal of matrix $P^\Sigma$
can be obtained by a sequence of incremental queries in time $O(n)$,
and a subset of $r$ consecutive elements in time $O\bigpa{r + \log^2 n}$.


\mychapter{Matrix distance multiplication}
\label{c-mmult}

In this chapter, we lay the mathematical foundation for the rest of this work.
Our main mathematical structure is presented in two alternative forms:
first as distance multiplication of simple unit-Monge matrices,
and then via an algebraic formalism of seaweed braids.
The reader interested primarily in the algorithmic applications of our method
may wish to skip this chapter at first reading,
and then return to it as necessary for details of specific definitions,
theorems and proofs.

This chapter is organised as follows. 
In \secref{s-mmult}, we introduce matrix distance multiplication,
and study its algebraic properties 
in the classes of Monge and simple unit-Monge matrices.
In \secref{s-mmult-monge}, we describe efficient algorithms for Monge matrices:
in particular, row/column minima searching, 
matrix-vector and matrix-matrix multiplication.
In \secrefs{s-mvmult-umonge} and \ref{s-mmult-umonge}, 
we extend these algorithmic results
to matrix-vector and matrix-matrix multiplication
of simple unit-Monge matrices.
In \secref{s-braids}, we define the seaweed braid monoid,
and establish its isomorphism 
with the distance multiplication monoid of simple unit-Monge matrices.
In \secref{s-bruhat}, we describe the first application of our method,
obtaining an efficient algorithm for deciding 
Bruhat comparability of permutations.

\mysection{Distance multiplication monoids}
\label{s-mmult}

\index{$(\min,+)$-semiring}%
\index{distance (tropical) algebra}%
\index{$\oplus$: distance addition ($\min$)}%
\index{$\odot$: distance multiplication ($+$)}%
The $(\min,+)$-semiring of integers 
is one of the fundamental structures in algorithm design.
In this semiring, the operators $\min$ and $+$, denoted by $\oplus$ and $\odot$,
play the role of addition and multiplication, respectively.
The $(\min,+)$-semiring is often called 
\emph{distance} (or \emph{tropical}) algebra.
For a detailed introduction into this and related topics, 
see e.g.\ Rote \cite{Rote:90}, Gondran and Minoux \cite{Gondran_Minoux:08},
Butkovi{\'c} \cite{Butkovic:10}.
An application of the distance algebra to string comparison
has been previously suggested by Comet \cite{Comet:03}.

Throughout this chapter, 
vectors and matrices will be indexed by integers beginning from $0$, 
or half-integers beginning from $0^+ = \tHalf$.
All our definitions and statements can easily be generalised
to indexing over arbitrary integer or half-integer intervals.

Multiplication in the $(\min,+)$-semiring of integers
can be naturally extended to integer matrices and vectors.
\begin{definition}
\label{def-mvmult}
\index{distance multiplication!matrix-vector}%
\index{$\odot$: matrix distance multiplication}%
Let $A$ be a matrix over $\bra{0:n_1 \mid 0:n_2}$.
Let $b$, $c$ be vectors over $\bra{0:n_2}$ and $\bra{0:n_1}$ respectively.
The \emph{matrix-vector distance product} $A \odot b = c$ is defined by
\begin{gather*}
c(i) = 
\bigoplus_{j \in \bra{0:n_2}} \bigpa{A(i,j) \odot b(j)} =
\min_{j \in \bra{0:n_2}} \bigpa{A(i,j) + b(j)}
\end{gather*}
for all $i \in \bra{0:n_1}$.
\end{definition}
\begin{definition}
\label{def-mmult}
\index{distance multiplication!matrix-matrix}%
\index{$\odot$: matrix distance multiplication}
Let $A$, $B$, $C$ be matrices over 
$\bra{0:n_1 \mid 0:n_2}$, 
$\bra{0:n_2 \mid 0:n_3}$, 
$\bra{0:n_1 \mid 0:n_3}$
respectively.
The \emph{matrix distance product} $A \odot B = C$ is defined by
\begin{gather*}
C(i,k) = 
\bigoplus_{j \in \bra{0:n_2}} \bigpa{A(i,j) \odot B(j,k)} =
\min_{j \in \bra{0:n_2}} \bigpa{A(i,j) + B(j,k)}
\end{gather*}
for all $i \in \bra{0:n_1}$, $k \in \bra{0:n_3}$.
\end{definition}

We now consider three different monoids of integer matrices 
with respect to matrix distance multiplication.

\paragraph{Monoid of all nonnegative matrices.}
\index{matrix!general!$\odot$-monoid}%
\index{matrix!general!$\odot$-identity}%
\index{$\Id_{\odot}$: $\odot$-identity}%
\index{matrix!general!$\odot$-zero}%
\index{$O_{\odot}$: $\odot$-zero}%
Consider the set of all square matrices with elements in $\bra{0:+\infty}$
over a fixed index range. 
This set forms a monoid with zero with respect to distance multiplication.
The identity and the zero element in this monoid are respectively the matrices
\begin{gather*}
\Id_{\odot}(i,j)=
\begin{cases}
0 & \text{if $i=j$}\\ +\infty & \text{otherwise}
\end{cases}\qquad
O_{\odot}(i,j)=+\infty
\end{gather*}
for all $i$, $j$.
For any matrix $A$, we have 
\begin{gather*}
A \odot \Id_{\odot} = \Id_{\odot} \odot A = A\qquad
A \odot O_{\odot} = O_{\odot} \odot A = O_{\odot}
\end{gather*}

\paragraph{Monge monoid.}
It is well-known (see e.g.\ \cite{Atallah+:89})
that the set of all Monge matrices is closed under distance multiplication.
\begin{theorem}
\label{th-monoid-monge}
Let $A$, $B$, $C$ be matrices, such that $A \odot B = C$.
If $A$, $B$ are Monge, then $C$ is also Monge.
\end{theorem}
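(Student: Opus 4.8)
The plan is to verify the Monge inequality for $C$ directly from its definition as a $(\min,+)$-product, using the Monge property of $A$ and $B$ together with a small combinatorial argument about the indices attaining the minima. Fix $i \leq i'$ and $k \leq k'$. By \defref{def-mmult}, there exist indices $j, j' \in \bra{0:n_2}$ such that
\[
C(i,k) = A(i,j) + B(j,k) \qquad C(i',k') = A(i',j') + B(j',k').
\]
We must show $C(i,k) + C(i',k') \leq C(i,k') + C(i',k)$. The key case split is on whether $j \leq j'$ or $j > j'$, since in the ``sorted'' subcase we can recombine $j$ with $k'$ and $j'$ with $k$ (or symmetrically), and apply the Monge inequality to $B$ to bound the crossing terms.

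Concretely, suppose first $j \leq j'$. Then, using $C(i,k') \leq A(i,j) + B(j,k')$ and $C(i',k) \leq A(i',j') + B(j',k)$, it suffices to show
\[
B(j,k) + B(j',k') \leq B(j,k') + B(j',k),
\]
which is exactly the Monge inequality for $B$ with $j \leq j'$ and $k \leq k'$. Hence $C(i,k') + C(i',k) \le A(i,j)+B(j,k') + A(i',j')+B(j',k)$, and the right-hand side is at most $A(i,j)+A(i',j') + B(j,k)+B(j',k') = C(i,k)+C(i',k')$, as required. In the symmetric subcase $j' \leq j$, one instead recombines $j$ with $i'$ and $j'$ with $i$, writing $C(i,k') \leq A(i,j') + B(j',k')$ and $C(i',k) \leq A(i',j) + B(j,k)$, and applies the Monge inequality for $A$ with $j' \le j$ and $i \le i'$, namely $A(i,j)+A(i',j') \ge A(i,j') + A(i',j)$ --- wait, this is the reverse direction, so care is needed: one should instead keep the $B$-recombination but choose which of the two attained minima to split, ensuring the $A$-inequality is used with its indices correctly sorted. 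The clean way to organise both subcases uniformly is to always recombine on the coordinate ($j$ versus the row index, or $j$ versus the column index) whose two values are in sorted order, so that the relevant Monge inequality is invoked in its native direction; since $i \le i'$ and $k \le k'$ are fixed, exactly one of the two recombinations puts the $j$-values in sorted order, and that is the one to use.

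The only real obstacle is bookkeeping: making sure that in each subcase the pair of indices fed into the Monge inequality for $A$ (respectively $B$) is genuinely ordered the right way, so that the inequality points in the direction we need rather than against us. Once the case analysis is set up correctly this is a two-line computation in each branch. Note also that nonnegativity of $A$, $B$ plays no role in this particular closure statement --- it is only needed to make $\bra{0:+\infty}$ a monoid --- so the argument applies to Monge matrices with arbitrary real entries. Finally, finiteness of the index ranges guarantees the minima are attained, which is what lets us pick the witnesses $j, j'$; for infinite ranges one would instead argue with infima and an $\varepsilon$-approximation, but that is not needed here.
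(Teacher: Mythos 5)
There is a genuine gap, and it sits in the very first move: you fix witnesses for the wrong pair of entries. To prove $C(i,k)+C(i',k') \leq C(i,k')+C(i',k)$ you must expand the \emph{right-hand} side exactly, i.e.\ choose indices attaining the minima that define the cross terms $C(i,k')$ and $C(i',k)$; the left-hand side is then bounded from above by substituting those indices into the minima defining $C(i,k)$ and $C(i',k')$. You do the opposite: you expand $C(i,k)$ and $C(i',k')$ exactly and then plug $j$, $j'$ into the minima defining $C(i,k')$ and $C(i',k)$. That substitution only yields the upper bounds $C(i,k') \leq A(i,j)+B(j,k')$ and $C(i',k) \leq A(i',j')+B(j',k)$, and an upper bound on the right-hand side can never certify the desired inequality --- one cannot lower-bound a minimum by evaluating it at a single point. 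Indeed, the chain you write in the case $j \leq j'$ actually derives $C(i,k')+C(i',k) \leq C(i,k)+C(i',k')$, the \emph{reverse} of the Monge inequality, and it does so by using the Monge property of $B$ backwards: the step ``the right-hand side is at most $A(i,j)+A(i',j')+B(j,k)+B(j',k')$'' needs $B(j,k')+B(j',k) \leq B(j,k)+B(j',k')$, which is the anti-Monge direction. The difficulty you notice in the subcase $j' \leq j$ (``this is the reverse direction'') is therefore not a bookkeeping problem peculiar to that subcase; it is the same structural defect, and the proposed recipe of ``recombining on the coordinate whose values are sorted'' cannot repair it, because no recombination turns an upper bound on a minimum into a lower bound.

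The repair is exactly the paper's argument. Let $p$ attain the minimum in $C(i,k') = \min_j\bigpa{A(i,j)+B(j,k')}$ and $q$ attain the minimum in $C(i',k) = \min_j\bigpa{A(i',j)+B(j,k)}$. If $q \leq p$, then $C(i,k)+C(i',k') \leq \bigpa{A(i,q)+B(q,k)} + \bigpa{A(i',p)+B(p,k')}$; applying the Monge inequality of $A$ with rows $i \leq i'$ and columns $q \leq p$ gives $A(i,q)+A(i',p) \leq A(i,p)+A(i',q)$, and regrouping yields $\bigpa{A(i,p)+B(p,k')} + \bigpa{A(i',q)+B(q,k)} = C(i,k')+C(i',k)$. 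If $p \leq q$, bound instead $C(i,k) \leq A(i,p)+B(p,k)$ and $C(i',k') \leq A(i',q)+B(q,k')$ and use the Monge inequality of $B$ with rows $p \leq q$, columns $k \leq k'$. This is where the case split on the order of the two witnesses genuinely belongs: it decides whether the Monge property of $A$ or of $B$ is invoked, each in its native direction. Your side remarks are fine --- nonnegativity plays no role in the closure statement, and finiteness of the index range is what guarantees the minima are attained --- but they do not touch the core issue above.
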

\begin{proof}
Let $A$, $B$ be 
over $\bra{0:n_1 \mid 0:n_2}$, $\bra{0:n_2 \mid 0:n_3}$, respectively.
Let $i',i'' \in \bra{0:n_1}$, $i' \leq i''$, 
and $k',k'' \in \bra{0:n_3}$, $k' \leq k''$.
By definition of matrix distance multiplication, we have
\begin{gather*}
C(i',k'') = \min_j \bigpa{A(i',j) + B(j,k'')}\\
C(i'',k') = \min_j \bigpa{A(i'',j) + B(j,k')}
\end{gather*}
Let $j'$, $j''$ respectively be the values of $j$ 
on which these minima are attained.
Suppose $j' \leq j''$.
We have
\begin{gather*}
C(i',k') + C(i'',k'') = {}
\why{definition of $\odot$}\\
\min_j \bigpa{A(i',j) + B(j,k')} + 
\min_j \bigpa{A(i'',j) + B(j,k'')} \leq {}
\why{minimisation over $j$}\\
\bigpa{A(i',j') + B(j',k')} + \bigpa{A(i'',j'') + B(j'',k'')} = {}
\why{term rearrangement}\\
\bigpa{A(i',j') + A(i'',j'')} + \bigpa{B(j',k') + B(j'',k'')} \leq {}
\why{$A$ is Monge}\\
\bigpa{A(i',j'') + A(i'',j')} + \bigpa{B(j',k') + B(j'',k'')} = {}
\why{term rearrangement}\\
\bigpa{A(i',j'') + B(j'',k'')} + \bigpa{A(i'',j') + B(j',k')} = {}
\why{definition of $j'$, $j''$}\\
C(i',k'') + C(i'',k')
\end{gather*}
The case $j' \geq j''$ is treated symmetrically, 
making use of the Monge property of $B$.
Hence, matrix $C$ is Monge.
\end{proof}

\index{matrix!Monge!$\odot$-monoid}%
\index{matrix!Monge!$\odot$-identity}%
\index{matrix!Monge!$\odot$-zero}%
\thref{th-monoid-monge} implies that
the set of all square nonnegative Monge matrices over a fixed index range
forms a submonoid (the \emph{Monge monoid}) 
in the distance multiplication monoid of all nonnegative matrices
(where the range of elements has to be formally extended by $+\infty$).

The ambient monoid's identity $\Id_{\odot}$ and zero $O_{\odot}$
are inherited by the Monge monoid.
Indeed, in the expansion of their density matrices 
$\Id_{\odot}^\square$ and $O_{\odot}^\square$ by \defref{def-density},
all indeterminate expressions of the form $+\infty-\infty$ 
can be formally considered to be nonnegative.
Therefore, matrices $\Id_{\odot}$ and $O_{\odot}$
can be formally considered to be Monge matrices.

\paragraph{Unit-Monge monoid.}
It is somewhat surprising, 
but crucial for the development of our techniques,
that the set of all simple (sub)unit-Monge matrices
is also closed under distance multiplication.
\begin{theorem}
\label{th-monoid-umonge}
Let $A$, $B$, $C$ be matrices, such that $A \odot B = C$.
If $A$, $B$ are simple unit-Monge (respectively, simple subunit-Monge),
then $C$ is also simple unit-Monge (respectively, simple subunit-Monge).
\end{theorem}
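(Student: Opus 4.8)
The plan is to reduce the claim to two separate facts: first, that the distance product $C = A \odot B$ of two simple unit-Monge matrices is again simple; second, that it is unit-Monge. The Monge property is already handed to us by \thref{th-monoid-monge}, so the only real work is to control the density matrix $C^\square$ and the boundary of $C$. Since by \defref{def-implicit-simple} a matrix is simple unit-Monge exactly when it equals $P^\Sigma$ for a permutation matrix $P$, I would phrase everything in terms of the underlying permutations: write $A = P_A^\Sigma$, $B = P_B^\Sigma$, and aim to exhibit a permutation matrix $P_C$ with $C = P_C^\Sigma$.

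First I would dispose of simplicity. Assume $A$ is over $\bra{0:n \mid 0:n}$ and $B$ over $\bra{0:n \mid 0:n}$ (same middle range, so that the product is defined and square). Since $A$ is simple, $A(n,j) = 0$ for all $j$, and since $B$ is simple, $B(j,0) = 0$ for all $j$; hence $C(n,k) = \min_j(A(n,j) + B(j,k)) \le A(n,n) + B(n,k) = B(n,k)$, and more carefully, using nonnegativity of all entries together with $A(n,j)=0$, one gets $C(n,k) = \min_j B(j,k)$. Because $B$ is Monge and $B(*,0)$ is all zeros, the column $B(*,k)$ attains its minimum $0$ at $j = n$ — or I would argue directly that $B$ being simple unit-Monge forces $\min_j B(j,k) = 0$ for every $k$. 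Symmetrically $C(i,0) = 0$ for all $i$. That gives simplicity of $C$, so $C^{\square\Sigma} = C$.

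For the unit-Monge part, I would invoke \lmref{lm-unit-Monge}: $C$ is a simple square integer Monge matrix (Monge by \thref{th-monoid-monge}, integer since $\min$ and $+$ preserve integrality, simple by the previous step), so it suffices to check the two boundary difference conditions $C(\hi^-, n) - C(\hi^+, n) = 1$ and $C(0, \hj^+) - C(0, \hj^-) = 1$ for all half-integers $\hi, \hj$. Now $C(i,n) = \min_j(A(i,j) + B(j,n))$; since $B$ is simple unit-Monge, $B(j,n) = P_B^\Sigma(j,n)$ counts all nonzeros of $P_B$ that are $\gtrless$-dominated by $(j,n)$, i.e.\ all nonzeros in rows $\ang{j:n}$ — so $B(\cdot, n)$ is the "row-suffix count" and in particular $B(0,n) = n$, $B(n,n) = 0$, with $B(\cdot,n)$ unit-monotone decreasing in its first argument. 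I expect that the last column of $C$ is computed by a similar suffix-count against $A$, and that the value $C(i,n)$ equals $n$ minus the number of nonzeros of some derived permutation lying in rows $\bra{i:n}$; then the required unit decrement is exactly the statement that this derived object is a permutation — which is where the combinatorial heart of the argument lies.

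The hard part will be pinning down that derived permutation $P_C$ and proving the two boundary conditions actually hold — equivalently, proving that $C(\hi^-, n) - C(\hi^+, n)$ is never $0$ and never $2$, i.e.\ that exactly one unit of "mass" enters the last column of $C$ as the row index decreases by one. The Monge inequality already forces this difference to be monotone and bounds it from one side; the subtlety is the sharp bound, and I anticipate it requires a genuine argument about where the minima defining $C(i,n)$ and $C(i{-}1,n)$ are attained — essentially a "non-crossing" / exchange argument on the optimal index $j$, of the same flavour as the proof of \thref{th-monoid-monge} but quantitative rather than qualitative. An alternative route, which I would keep in reserve, is to avoid \lmref{lm-unit-Monge} entirely and instead show directly that $C^\square$ has exactly one nonzero per row and per column by a telescoping computation of $\sum_{\hj} C^\square(\hi,\hj)$ as in the proof of \lmref{lm-unit-Monge}, reducing it to $C(\hi^-,n) - C(\hi^+,n)$ and then to the boundary behaviour of $A$ and $B$; but either way the crux is the same quantitative non-crossing estimate, and that is the step I would budget the most effort for.
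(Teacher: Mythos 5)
Your plan follows the paper's route — reduce to $A = P_A^\Sigma$, $B = P_B^\Sigma$, establish simplicity of $C$, and then verify the unit-Monge property via the boundary differences of \lmref{lm-unit-Monge} (or, equivalently, the telescoping row/column sums of $C^\square$) — but the decisive step is left unproven. You explicitly defer the verification that $C(\hi^-,n) - C(\hi^+,n) = 1$ as ``the hard part'', anticipating a quantitative non-crossing/exchange argument on where the minima are attained. No such argument is needed, and without filling this in the proposal is a plan rather than a proof. The point you miss is that for a \emph{permutation} matrix $P_B$ the last column of $B$ is determined exactly, not just up to unit-monotonicity: $B(j,n) = P_B^\Sigma(j,n) = n-j$ for all $j$ (each of the $n-j$ rows below $j$ contributes exactly one nonzero). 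Hence $C(i,n) = \min_j \bigpa{A(i,j) + n - j}$, and since $A(i,j)$ grows by at most $1$ per unit step in $j$ while $n-j$ drops by exactly $1$, the minimand is non-increasing in $j$; the minimum is therefore attained at $j=n$, giving $C(i,n) = A(i,n) = P_A^\Sigma(i,n) = n-i$. Together with $C(i,0)=0$ (which your simplicity argument already gives), this yields $C(\hi^-,n)-C(\hi^+,n) = 1$ and, symmetrically, the column condition, after which \lmref{lm-unit-Monge} — or the telescoping sum you keep ``in reserve'', which is exactly the paper's computation — finishes the argument together with nonnegativity of $C^\square$ from \thref{th-monoid-monge}.

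A second, smaller gap: the statement also covers simple \emph{subunit}-Monge matrices, and your proposal never addresses that case. The paper handles it by observing that a zero row of $P_A$ (respectively, zero column of $P_B$) forces the corresponding zero row (column) in $P_C$ and can be deleted without affecting the identity $P_A^\Sigma \odot P_B^\Sigma = P_C^\Sigma$, after which the remaining subpermutation matrices are padded out to $n_2 \times n_2$ permutation matrices and the square unit-Monge case is applied; $P_C$ is then recovered as a corner submatrix of the resulting permutation matrix. You would need some such reduction (or a separate rectangular argument) for the theorem as stated.
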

\begin{proof}
Let $A$, $B$ be simple unit-Monge matrices over $\bra{0:n \mid 0:n}$.
We have $A = P_A^\Sigma$, $B = P_B^\Sigma$,
where $P_A$, $P_B$ are permutation matrices.
It is easy to check that matrix $C$ is simple,
therefore $C = P_C^\Sigma$ for some matrix $P_C$.

We now have $P_A^\Sigma \odot P_B^\Sigma = P_C^\Sigma$,
and we need to show that $P_C$ is a permutation matrix.
Clearly, matrices $C$ and $P_C$ are both integer.
Furthermore, matrix $C$ is Monge by \thref{th-monoid-monge}, 
and therefore matrix $C^\square = P_C$ is nonnegative.

Since $P_B$ is a permutation matrix, we have 
\begin{gather*}
P_B^\Sigma(j,0) = 0 \qquad P_B^\Sigma(j,n) = n-j
\end{gather*}
for all $j \in \bra{0:n}$.
Hence 
\begin{gather*}
C(i,0) =
\min_j \bigpa{P_A^\Sigma(i,j) + P_B^\Sigma(j,0)} =
\min_j \bigpa{P_A^\Sigma(i,j) + 0} = 0\\
C(i,n) =
\min_j \bigpa{P_A^\Sigma(i,j) + P_B^\Sigma(j,n)} =
\min_j \bigpa{P_A^\Sigma(i,j) + n - j} = n-i
\end{gather*}
for all $i \in \bra{0:n}$,
since the minimum is attained respectively at $j=0$ and $j=n$.
Therefore, we have
\begin{gather*}
\sum_{\hk} P_C(\hi,\hk) = {}
\why{definition of $\Sigma$ and $\square$}\\
\sum_{\hk}
\bigpa{
C(\hi^+,\hk^-) - C(\hi^-,\hk^-) -
C(\hi^+,\hk^+) + C(\hi^-,\hk^+)} = {}
\why{term cancellation}\\
C(\hi^+,0) - C(\hi^-,0) - 
C(\hi^+,n) + C(\hi^-,n) = {}\\
0 - 0 - (n-\hi^+) + (n-\hi^-) = 1
\end{gather*}
for all $\hi \in \ang{0:n}$.
Symmetrically, we have
\begin{gather*}
\sum_{\hi} P_C(\hi,\hk) = 1
\end{gather*}
for all $\hk \in \ang{0:n}$.
Taken together, the above properties imply
that matrix $P_C$ is a permutation matrix.
Therefore, $C$ is a simple unit-Monge matrix.

Finally, let $A$, $B$ be simple subunit-Monge matrices
over $\bra{0:n_1 \mid 0:n_2}$, $\bra{0:n_2 \mid 0:n_3}$, respectively.
We have $A = P_A^\Sigma$, $B = P_B^\Sigma$,
where $P_A$, $P_B$ are subpermutation matrices.
As before, let $C=P_C^\Sigma$, for some matrix $P_C$; 
we have to show that $P_C$ is a subpermutation matrix.
Suppose that for some $\hi$, row $P_A(\hi,*)$ contains only zeros.
Then, it is easy to check that 
the corresponding row $P_C(\hi,*)$ also contains only zeros,
and that upon deleting rows $P_A(\hi,*)$ and $P_C(\hi,*)$ 
from the respective matrices,
the equality $P_A^\Sigma \odot P_B^\Sigma = P_C^\Sigma$ still holds.
Symmetrically, a zero column $P_B(*,\hk)$ 
results in a zero column $P_C(*,\hk)$,
and upon deleting both these columns from the respective matrices,
the equality $P_A^\Sigma \odot P_B^\Sigma = P_C^\Sigma$ still holds.
Therefore, we may assume without loss generality that
$n_1 \leq n_2$, $n_2 \geq n_3$, 
and that subpermutation matrix $P_A$ (respectively, $P_B$)
does not have any zero rows (respectively, zero columns).

Let us now extend matrix $P_A$ 
to a square $n_2 \times n_2$ matrix $\bmat{* \\ P_A}$,
where the top $n_2 - n_1$ rows are filled by zeros and ones
so that the resulting matrix is a permutation matrix.
Likewise, let us extend matrix $P_B$ 
to an $n_2 \times n_2$ permutation matrix $\bmat{P_B & *}$.
We now have 
\begin{gather*}
\bmat{* \\ P_A}^\Sigma \odot \bmat{P_B & *}^\Sigma = \bmat{* & * \\ P_C & *}^\Sigma
\end{gather*}
where $\bmat{* & * \\ P_C & *}$ 
is an $n_2 \times n_2$ permutation matrix,
with matrix $P_C$ occupying its lower-left corner.
Hence, matrix $P_C$ is a subpermutation matrix,
and the original matrix $C$ is a simple subunit-Monge matrix.
\end{proof}
\index{matrix!unit-Monge!$\odot$-monoid}%
\index{matrix!unit-Monge!$\odot$-identity}%
\index{matrix!unit-Monge!$\odot$-zero}%
\thref{th-monoid-umonge} implies that
the set of all simple unit-Monge matrices over a fixed index range
forms a submonoid (the \emph{unit-Monge monoid}) in the Monge monoid.

Without loss of generality, 
let the matrices be over $\bra{0:n \mid 0:n}$.
The Monge monoid's identity $\Id_{\odot}$ and zero $O_{\odot}$
are neither simple nor unit-Monge matrices,
and therefore are not inherited by the unit-Monge monoid.
Instead, its identity and zero elements are given respectively by the matrices
\begin{gather*}
\Id^\Sigma(i,j) = \max(j-i,0)\qquad
\Id^{R \Sigma}(i,j) = \min(n-i,j)
\end{gather*}
(recall that $\Id^R$ is the matrix obtained by 90-degree rotation 
of the identity permutation matrix $\Id$).
For any permutation matrix $P$, we have
\begin{gather*}
P^\Sigma \odot \Id^\Sigma = \Id^\Sigma \odot P^\Sigma = P^\Sigma\qquad
P^\Sigma \odot \Id^{R \Sigma} = \Id^{R \Sigma} \odot P^\Sigma = \Id^{R \Sigma}
\end{gather*}

\newcommand{\domain}[3]{%
\path[overlay] (p) ++(#3,0) -- ++(-#2,0) coordinate (q);
\draw (p) ++(-0.5,0) -- (p) -- (q) -- ++(0.5,0);
\path (p) coordinate(pp);
\foreach \count in {#2,...,#3}{%
  \draw (pp) node[circle](#1\count){};
  \path[overlay] (pp) ++(1,0) coordinate(pp);}}
\newcommand{\snake}[2]{%
\draw[green] (#1) to[out=-90,in=90,looseness=1.2] (#2);}

\begin{figure}[tb]
\centering

\setlength{\unit}{0.3cm}

\subfloat[\label{f-mmult-matrix}As permutation matrices]{%
\includegraphics{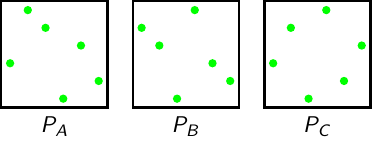}}

\subfloat[\label{f-mmult-seaweed}As seaweed braids]{%
\includegraphics{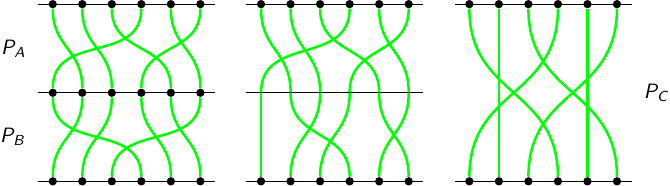}}
\caption{\label{f-mmult-example}%
Implicit matrix distance product $P_A \boxdot P_B = P_C$}
\end{figure}
\thref{th-monoid-umonge} gives us the basis for performing
distance multiplication of simple (sub)unit-Monge matrices implicitly,
by taking the density (sub)permutation matrices as input,
and producing a density (sub)permutation matrix as output.
It will be convenient to introduce special notation 
for such implicit distance matrix (and also matrix-vector) multiplication.
\begin{definition}
\label{def-mvmult-imp}
\index{distance multiplication!matrix-vector!implicit}%
\index{distance multiplication!vector-matrix!implicit}%
\index{$\boxdot$: implicit distance multiplication}
Let $P$ be a (sub)permutation matrix. 
Let $b$, $c$ be vectors.
The \emph{implicit matrix-vector distance product} $P \boxdot b = c$ 
is defined by $P^\Sigma \odot b = c$.
The \emph{implicit matrix-vector distance product} $b \boxdot P = c$
is defined analogously.
\end{definition}
\begin{definition}
\label{def-mmult-imp}
\index{distance multiplication!matrix-matrix!implicit}%
\index{$\boxdot$: implicit distance multiplication}
Let $P_A$, $P_B$, $P_C$ be (sub)permutation matrices. 
The \emph{implicit matrix distance product} $P_A \boxdot P_B = P_C$ 
is defined by $P_A^\Sigma \odot P_B^\Sigma = P_C^\Sigma$.
\end{definition}
The set of all permutation matrices over $\ang{0:n \mid 0:n}$
is therefore a monoid 
with respect to implicit distance multiplication $\boxdot$.
This monoid has identity element $\Id$ and zero element $\Id^R$,
and is isomorphic to the unit-Monge monoid.
Note that, although defined on the set of all permutation matrices of size $n$,
this monoid is substantially different from the symmetric group $\mathcal S_n$,
defined by standard permutation composition
(equivalently, by standard multiplication of permutation matrices).
In particular, 
the implicit distance multiplication monoid has a zero element $\Id^R$, 
whereas $\mathcal S_n$, being a group, cannot have a zero.
More generally, the implicit distance multiplication monoid
has plenty of idempotent elements (defined by involutive permutations),
whereas $\mathcal S_n$ has the only trivial idempotent $\Id$.
However, both the implicit distance multiplication monoid
and the symmetric group $\mathcal S_n$ still share the same identity element $\Id$.
\begin{example}
In \figref{f-mmult-example},
\sfigref{f-mmult-matrix} shows a triple of $6 \times 6$ permutation matrices
$P_A$, $P_B$, $P_C$, such that $P_A \boxdot P_B = P_C$.
Nonzeros are indicated by green circles.
\end{example}

\mysection{Monge distance multiplication}
\label{s-mmult-monge}

In this section, we study algorithms 
for distance multiplication of Monge matrices.
For simplicity, we only consider square matrices,
although the results generalise to rectangular ones.

We begin with matrix-vector distance multiplication.
For generic, explicitly presented matrices, the only reasonable method 
for matrix-vector distance multiplication of size $n$
is by direct application of \defref{def-mvmult} in time $O(n^2)$.

For implicit Monge matrices 
(including the case where the matrix is stored in random-access memory,
so that only subset of its elements may need to be queried,
and the rest can be ignored),
the running time can be substantially reduced.
This is achieved by an application 
of a classical row minima searching algorithm
by Aggarwal et al.\ \cite{Aggarwal+:87} (see also \cite{Giancarlo:97}),
often nicknamed the ``SMAWK algorithm''.
\begin{lemma}[\cite{Aggarwal+:87}]
\label{lm-rowmin-tmon}
Let $A$ be an $n_1 \times n_2$ implicit totally monotone matrix,
where each element can be queried in time $q$.
The problem of finding the (say, leftmost) minimum element 
in every row of $A$ can be solved in time $O(qn)$,
where $n = \max(n_1,n_2)$.
\end{lemma}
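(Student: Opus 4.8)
The statement is the classical ``SMAWK'' result, so the plan is to recall the divide-and-conquer argument behind it rather than to reinvent it. The engine is a single structural fact: if $A$ is totally monotone (\defref{def-totally-monotone}), then the column index of the leftmost minimum in row $i$ is a non-decreasing function of $i$. Indeed, if the leftmost minimum of row $i$ is in column $j$, then for every $j' < j$ we have $A(i,j') > A(i,j)$, hence by total monotonicity $A(i',j') > A(i',j)$ for all $i' \ge i$, so the leftmost minimum of row $i'$ cannot lie strictly left of $j$. First I would state and prove this monotonicity lemma; it is the only place total monotonicity is used, and it is what licenses all the later bookkeeping.

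Next I would describe the column-reduction subroutine \textsc{Reduce}, which takes a totally monotone matrix with $n_1$ rows and $n_2 \ge n_1$ columns and returns a submatrix on the same rows with at most $n_1$ columns, retaining every leftmost row minimum. It maintains a stack of surviving columns $c_1 < c_2 < \dots < c_k$ under the invariant that $c_t$ is still a possible location for the leftmost minimum of row $t$ but has been ruled out for rows $1,\dots,t-1$. To process a new column $c$: while the stack is nonempty and $A(k,c_k) > A(k,c)$, pop $c_k$ (total monotonicity rules it out for all rows $\ge k$, hence for all rows not yet assigned), decrementing $k$; then, if $k < n_1$ push $c$ and increment $k$, and if $k = n_1$ discard $c$. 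Each column is pushed and popped at most once, so \textsc{Reduce} runs in $O(q n_2)$ time, and a short case check against the monotonicity lemma shows no needed column is ever discarded.

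With \textsc{Reduce} in hand the recursion is: apply \textsc{Reduce} once to obtain an $n_1 \times n_1$ matrix; pass the submatrix $B$ of even-indexed rows (of size $\lfloor n_1/2\rfloor \times n_1$, still totally monotone) to the recursive call to get all its leftmost row minima; then recover the odd rows. By the monotonicity lemma, the leftmost minimum of odd row $2i-1$ lies in the column interval bounded on the left by the minimum column of row $2i-2$ (or $0$) and on the right by that of row $2i$ (or $n_1$); scanning that interval costs $O(q)$ per column, and since consecutive intervals overlap in at most one column, the total over all odd rows is $O(q n_1)$. Writing $n = \max(n_1,n_2)$, the running time obeys $T(n_1,n_2) \le O(q n_2) + O(q n_1) + T(\lfloor n_1/2\rfloor, n_1)$, whose \textsc{Reduce} contributions form a geometric series and so unwind to $O(q n)$.

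The main obstacle I expect is the careful verification of \textsc{Reduce}: in each branch of the pop/push/discard decision one must confirm that total monotonicity genuinely forbids the discarded column from hosting the leftmost minimum of any still-unassigned row, including the subtlety that ``leftmost'' ties be broken consistently everywhere and that the stack invariant survive decrementing $k$ past rows whose candidates were popped earlier. The recursion and the interval-scanning bound are then routine telescoping arguments.
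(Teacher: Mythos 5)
Your proposal is correct and is essentially the paper's own argument: both are the classical SMAWK divide-and-conquer, using total monotonicity to justify a stack/staircase column-elimination procedure, recursing on every other row, and recovering the remaining rows via the monotonicity of the leftmost-minimum column indices, with the geometric-series recurrence giving $O(qn)$. The only difference is presentational: the paper first restricts to the half-row matrix $B$ and then eliminates $n/2$ of its columns, whereas you run the column reduction on the full matrix before halving the rows --- the same elimination invariant and the same bound either way.
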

\begin{proof}
We give a sketch of the proof;
for details, see \cite{Aggarwal+:87,Giancarlo:97}.

Without loss of generality, let $A$ be over $\bra{0:n \mid 0:n}$.
Let $B$ be an implicit $\tHalf[n] \times n$ matrix
over $\bigbra{0:\tHalf[n] \mid 0:n}$,
obtained by taking every other row of $A$.
Clearly, at most $\tHalf[n]$ columns of $B$
contain a leftmost row minimum.
The key idea of the algorithm is to eliminate 
$\tHalf[n]$ of the remaining columns
in an efficient process, based on the total monotonicity property.

We call a matrix element \emph{marked} (for elimination), 
if its column has not (yet) been eliminated,
but the element is already known not to be a leftmost row minimum.
A column gets eliminated when all its elements become marked.

Initially, both the set of eliminated columns 
and the set of marked elements are empty.
In the process of column elimination,
marked elements may only be contained 
in the $i$ leftmost uneliminated columns;
the value of $i$ is initially equal to $1$,
and gets either incremented or decremented
in every step of the algorithm.
The marked elements form a \emph{staircase}:
that is, the marked elements 
in the first, second, \ldots, $i$-th uneliminated column,
are respectively the zero, one, \ldots, $i-1$ topmost elements.
In every iteration of the algorithm, two outcomes are possible:
either the staircase gets extended 
to the right to the $i+1$-st uneliminated column,
or the whole $i$-th uneliminated column 
gets eliminated from matrix $B$, and therefore also from the staircase.

Let $j$, $j'$ denote respectively the indices 
of the $i$-th and $i+1$-st uneliminated column
in the original matrix (across both uneliminated and eliminated columns).
The outcome of the current iteration depends on the comparison 
of element $B(i,j)$, 
which is the topmost unmarked element in the $i$-th uneliminated column,
against element $B(i,j')$,
which is the next uneliminated (and unmarked) element immediately to its right.
The outcomes of this comparison and the rest of the elimination procedure 
are given in \tabref{t-elim}.
\begin{table}[tb]
\newcommand{\Q}{\hspace*{8pt}}
\begin{quote}
$i \becomes 0$; $j \becomes 0$; $j' \becomes 1$\\
while $j' \leq n$:\\
\Q case $B(i,j) \leq B(i,j')$:\\
\Q\Q case $i < \tHalf[n]$:\Q $i \becomes i+1$; $j \becomes j'$\\
\Q\Q case $i = \tHalf[n]$:\Q eliminate column $j'$\\
\Q\Q $j'\becomes j'+1$\\
\Q case $B(i,j) > B(i,j')$:\\
\Q\Q eliminate column $j$\\
\Q\Q case $i = 0$:\Q $j\becomes j'$; $j'\becomes j'+1$\\
\Q\Q case $i > 0$:\Q
  $i \becomes i-1$;
  $j \becomes \max\{ k: \text{$k$ uneliminated and $<j$} \}$
\end{quote}
\caption{\label{t-elim} Elimination procedure 
of \lmrefs{lm-rowmin-tmon} and \ref{lm-rowmin-loglog}.}
\end{table}
By storing indices of uneliminated columns 
in an appropriate dynamic data structure, such as a doubly-linked list, 
a single iteration of this procedure can be implemented to run in time $O(q)$.
The whole procedure runs in time $O(qn)$, and eliminates $\tHalf[n]$ columns.

Let $A'$ be the $\tHalf[n] \times \tHalf[n]$ matrix
obtained from $B$ by deleting the $\tHalf[n]$ eliminated columns.
We call the algorithm recursively on $A'$.
This recursive call returns the leftmost row minima of $A'$,
and therefore also of $B$.
It is now straightforward to fill in  
the leftmost minima in the remaining rows of $A$ in time $O(qn)$.
Thus, the top level of recursion runs in time $O(qn)$.
The amount of work gets halved with every recursion level,
therefore the overall running time is $O(qn)$.
\end{proof}
\begin{figure}[tb]
\centering
%
\subfloat[\label{f-elim-1}Case $B(i,j) \leq B(i,j')$, $i < \frac{n}{2}$]{%
\beginpgfgraphicnamed{f-elim-1}%
\begin{tikzpicture}[x=0.5cm, y=-0.5cm]
\fill[gray!50] (1,0) -- ++(0,1) -- ++(1,0) -- ++(0,1) -- ++(1,0)
  -- ++(0,1) -- ++(1,0) -- ++(0,1) -- ++(1,0) -- ++(0,1) -- ++(1,0) |- (1,0);
\fill[gray!10] (6,0) -- ++(0,6) -- ++(1,0) |- (6,0);
\foreach \i in {1,...,6}
  \draw[dotted] (\i,0) -- (\i,8);
\foreach \i/\x in {5.5/0,6.5/1}
  \draw (\i,5.5) node[inner sep=1pt](x\x){$\circ$};
\foreach \i/\y in {6.5/0,7.5/1}
  \draw (\i,6.5) node[inner sep=1pt](y\y){$\bullet$};
\draw[->] (x1) -- (y0); 
\draw[->] (y0) -- (y1);
\draw (0,5.5) node[left]{\small $i$}
  (5.5,0) node[above]{\small $j$} (6.5,0) node[above]{\small $j'$};
\draw[thick] (10,0) -- (0,0) -- (0,8) -- (10,8);
\draw[thick,decorate,decoration=random steps] (10,0) -- (10,8);
\end{tikzpicture}
\endpgfgraphicnamed}
\qquad
\subfloat[\label{f-elim-2}Case $B(i,j) > B(i,j')$, $i > 0$]{%
\beginpgfgraphicnamed{f-elim-2}%
\begin{tikzpicture}[x=0.5cm, y=-0.5cm]
\fill[gray!50] (1,0) -- ++(0,1) -- ++(1,0) -- ++(0,1) -- ++(1,0)
  -- ++(0,1) -- ++(1,0) -- ++(0,1) -- ++(1,0) -- ++(0,1) -- ++(1,0) |- (1,0);
\fill[gray!10] (5,5) -- ++(0,3) -- ++(1,0) |- (5,5);
\foreach \i in {1,...,6}
  \draw[dotted] (\i,0) -- (\i,8);
\foreach \i/\x in {5.5/0,6.5/1}
  \draw (\i,5.5) node[inner sep=1pt](x\x){$\circ$};
\foreach \i/\y in {4.5/0,6.5/1}
  \draw (\i,4.5) node[inner sep=1pt](y\y){$\bullet$};
\draw[->] (x1) -- (y1);
\draw (0,5.5) node[left]{\small $i$}
  (5.5,0) node[above]{\small $j$} (6.5,0) node[above]{\small $j'$};
\draw[thick] (10,0) -- (0,0) -- (0,8) -- (10,8);
\draw[thick,decorate,decoration=random steps] (10,0) -- (10,8);
\end{tikzpicture}
\endpgfgraphicnamed}
\caption{\label{f-elim}
A snapshot of the elimination procedure 
in \lmrefs{lm-rowmin-tmon} and \ref{lm-rowmin-loglog}}
\end{figure}
\begin{example}
\figref{f-elim} gives a snapshot of the two non-boundary cases
of the elimination algorithm described in the proof of \lmref{lm-rowmin-tmon}.
Each vertical dotted line represents an arbitrary number  
of consecutive eliminated columns.
Dark-shaded cells represent the staircase of marked elements.
The current elements $B(i,j)$, $B(i,j')$ are shown by white circles.

\sfigref{f-elim-1} shows the case $B(i,j) \leq B(i,j')$, $i < \tHalf[n]$, and
\sfigref{f-elim-2} the case $B(i,j) > B(i,j')$, $i > 0$.
In both cases, the light-shaded cells represent the newly marked elements.
In \sfigref{f-elim-1},
these new elements extend the staircase by one column to the right.
In \sfigref{f-elim-2},
the marking of new elements results in the elimination of the whole column $j$,
reducing the staircase by its rightmost column.
In both cases, the elements $B(i,j)$, $B(i,j')$ for the next iteration
are shown by black circles.
\end{example}

It is straightforward to apply 
the ``SMAWK algorithm'' of \lmref{lm-rowmin-tmon}
to the distance multiplication of an implicit Monge matrix by a vector.

\begin{theorem}
\label{th-mvmult-monge}
Let $A$ be an $n_1 \times n_2$ implicit Monge matrix,
where each element can be queried in time $q$.
Let $b$ be an $n_1$-vector, and $c$ an $n_2$-vector, such that $A \odot b = c$.
Given vector $b$, vector $c$ can be computed in time $O(qn)$,
where $n = \max(n_1,n_2)$.
\end{theorem}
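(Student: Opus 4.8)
The plan is to reduce matrix-vector distance multiplication of a Monge matrix to the row-minima problem already solved in \lmref{lm-rowmin-tmon}. Recall that by \defref{def-mvmult}, the entries of $c = A \odot b$ are given by $c(i) = \min_{j} \bigpa{A(i,j) + b(j)}$, i.e.\ $c(i)$ is the minimum over row $i$ of the matrix $A'$ defined by $A'(i,j) = A(i,j) + b(j)$. So it suffices to show that $A'$ is totally monotone with queriable entries, and then invoke \lmref{lm-rowmin-tmon}.

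First I would observe that each entry $A'(i,j)$ can be computed in time $O(q)$: one query to the implicit matrix $A$ to get $A(i,j)$, plus one array lookup for $b(j)$, plus an addition. Second, I would show that $A'$ inherits total monotonicity from $A$. Since $A$ is Monge, it is totally monotone; and adding a quantity depending only on the column index to every entry preserves the defining inequality of \defref{def-totally-monotone}: for $i \le i'$, $j \le j'$, the hypothesis $A'(i,j) > A'(i,j')$ unwinds to $A(i,j) + b(j) > A(i,j') + b(j')$, which rearranges to $A(i,j) - A(i,j') > b(j') - b(j)$; total monotonicity of $A$ turns this into $A(i',j) - A(i',j') > b(j') - b(j)$ (here one needs that $A(i,j) > A(i,j')$ follows, which it does once we note $b(j') - b(j) \ge A(i,j') - A(i,j)$ would contradict the strict inequality --- alternatively argue directly via the Monge four-point inequality on $A$, which gives $A(i,j) + A(i',j') \le A(i,j') + A(i',j)$ and hence $A(i',j) - A(i',j') \ge A(i,j) - A(i,j') > b(j') - b(j)$), and rearranging back yields $A'(i',j) > A'(i',j')$ as required.

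Having established that $A'$ is an $n_1 \times n_2$ implicit totally monotone matrix with per-entry query time $O(q)$, I would apply \lmref{lm-rowmin-tmon} to find the leftmost minimum in every row of $A'$ in time $O(qn)$ with $n = \max(n_1,n_2)$. The value of that minimum in row $i$ is exactly $c(i)$, so vector $c$ is produced within the claimed bound, completing the argument.

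I expect the only real subtlety --- hardly an obstacle --- to be the correct handling of the total monotonicity inheritance: one must be slightly careful that the column-dependent shift $b(j)$ does not interfere with the implication in \defref{def-totally-monotone}, which is cleanest to verify by falling back on the Monge four-point inequality of \defref{def-Monge} rather than manipulating the total-monotonicity condition directly. Everything else is bookkeeping: the reduction to row minima is immediate from \defref{def-mvmult}, and the running time is inherited verbatim from \lmref{lm-rowmin-tmon}.
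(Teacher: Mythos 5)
Your proposal is correct and is essentially the paper's own proof: the paper likewise sets $\tilde A(i,j) = A(i,j) + b(j)$, notes that this implicit matrix is still Monge (hence totally monotone) with query time $q+O(1)$, and invokes the row-minima algorithm of \lmref{lm-rowmin-tmon} to get the $O(qn)$ bound. Your slightly more explicit verification of total monotonicity via the four-point Monge inequality is a harmless elaboration of the same argument.
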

\begin{proof}
Let $\tilde A(i,j) = A(i,j) + b(j)$ for all $i$, $j$.
Matrix $\tilde A$ is an implicit Monge matrix,
where each element can be queried in time $q+O(1)$.
The problem of computing the product $A \odot b = c$
is equivalent to searching for row minima in matrix $\tilde A$,
which can be solved in time 
$O(qn)$ by \lmref{lm-rowmin-tmon}.
\end{proof}

The simplest case of application of \thref{th-mvmult-monge}
is when matrix $A$ is represented explicitly in random-access memory.
In such case, we have $q=1$, and Monge matrix-vector multiplication
can be performed in time $O(n)$, 
without even reading most elements of the matrix.

We now consider matrix-matrix distance multiplication.
For generic, explicitly presented matrices, 
direct application of \defref{def-mmult} gives an algorithm 
for matrix distance multiplication of size $n$, running in time $O(n^3)$.
Slightly subcubic algorithms for this problem have also been obtained.
The fastest currently known algorithm is by Chan \cite{Chan:07},
running in time $O\bigpa{\frac{n^3 (\log\log n)^3}{\log^2 n}}$.

For Monge matrices, distance multiplication can easily be performed
in quadratic time (see also \cite{Atallah+:89}).
For simplicity, we restrict ourselves to square Monge matrices.
\begin{theorem}
\label{th-mmult-monge}
Let $A$, $B$, $C$ be $n \times n$ matrices, 
such that $A$ is Monge, and $A \odot B = C$.
Given matrices $A$, $B$, matrix $C$ can be computed 
in time and memory $O(n^2)$.
\end{theorem}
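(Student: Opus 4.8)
The plan is to exploit the Monge property of $A$ to avoid examining all $\Theta(n^3)$ triples $(i,j,k)$. Fix a column index $k$; computing the $k$-th column $C(*,k)$ of $C$ amounts to the matrix-vector distance product $A \odot B(*,k)$, where $B(*,k)$ is the $k$-th column of $B$ viewed as an $n$-vector. Since $A$ is an $n\times n$ Monge matrix stored explicitly (so each element is queried in time $q=1$), \thref{th-mvmult-monge} lets us compute this product in time $O(n)$ using the ``SMAWK'' row-minima search of \lmref{lm-rowmin-tmon}. The total work is then $O(n)$ per column times $n$ columns, giving $O(n^2)$ time; the memory is $O(n^2)$ to store the input and output matrices plus $O(n)$ scratch space per invocation.

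The key steps, in order, are: (i) observe that the definition of $A \odot B = C$ in \defref{def-mmult} factors columnwise, namely $C(i,k) = \min_j\bigl(A(i,j) + B(j,k)\bigr)$ depends on $B$ only through its $k$-th column, so $C(*,k) = A \odot B(*,k)$; (ii) for each fixed $k \in \bra{0:n}$, apply \thref{th-mvmult-monge} with $b = B(*,k)$ to obtain $C(*,k)$ in time $O(n)$ — this is valid because $A$ is Monge and each of its entries is available in $O(1)$ time from random-access memory; (iii) sum the costs over all $n+1$ values of $k$ to get overall time $O(n^2)$; (iv) account for memory: the matrices $A$, $B$, $C$ occupy $O(n^2)$, and each call to the SMAWK-based routine uses only $O(n)$ additional working space (for the vector $b$, the shifted matrix $\tilde A$ accessed implicitly, and the doubly-linked list of uneliminated columns), which is reused across the $n+1$ calls. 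Note that only the Monge property of $A$ is used; no assumption on $B$ is needed, exactly as in the statement.

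There is no substantial obstacle here — the theorem is essentially a corollary of \thref{th-mvmult-monge}. The only point requiring a modicum of care is the reduction in step (i): one must check that treating a column of $B$ as the vector $b$ in the matrix-vector product genuinely reproduces the matrix-product entries $C(i,k)$, which is immediate from comparing \defref{def-mvmult} and \defref{def-mmult}. A secondary point worth stating explicitly is that, although \lmref{lm-rowmin-tmon} is phrased for totally monotone matrices, the Monge matrix $\tilde A(i,j) = A(i,j) + b(j)$ is itself Monge (adding a function of $j$ alone preserves the four-point inequality $A(i,j) + A(i',j') \le A(i,j') + A(i',j)$), hence totally monotone, so the hypothesis of the row-minima search is met for every column $k$. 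With these observations in place the time and memory bounds follow directly.
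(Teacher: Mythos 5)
Your proposal is correct and follows essentially the same route as the paper: the paper likewise reduces $A \odot B = C$ to $n$ columnwise matrix-vector products $A \odot b = c$, each solved in time $O(n)$ via \thref{th-mvmult-monge}, giving $O(n^2)$ overall. Your additional remarks (that only the Monge property of $A$ is needed, and that $\tilde A(i,j)=A(i,j)+b(j)$ remains Monge) are sound and merely make explicit what the paper leaves implicit.
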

\begin{proof}
The problem of computing the product $A \odot B = C$
is equivalent to $n$ instances of the matrix-vector product $A \odot b = c$,
where $b$ (respectively, $c$) is a column of $B$ (respectively, $C$).
Every one of these instances can be solved in time $O(n)$
by \thref{th-mvmult-monge},
so the overall running time 
is $n \cdot O(n) = O(n^2)$.

Alternatively, an algorithm with the same asymptotic running time
can be obtained directly by the divide-and-conquer technique
(see e.g.\ \cite{Apostolico+:90}).
\end{proof}

\mysection{Unit-Monge matrix-vector distance multiplication}
\label{s-mvmult-umonge}

We will now discuss algorithms for multiplication
of implicit simple unit-Monge matrices.
We begin with matrix-vector multiplication,
which turns out to be already a non-trivial problem.

By \thref{th-query}, an element of an implicit simple unit-Monge matrix,
represented by an appropriate data structure,
can be queried in time $q=O(\log^2 n)$.
By plugging this query time into \thref{th-mvmult-monge},
we obtain immediately an algorithm 
for implicit matrix-vector distance multiplication,
running in time $O(n \log^2 n)$.

A more careful analysis of the elimination procedure of \lmref{lm-rowmin-tmon}
shows that the required matrix elements can be obtained, 
instead of the standalone element queries of \thref{th-query},
by more efficient incremental queries of \thref{th-query-inc}.
At the top level of recursion, the query time is $q=O(1)$.
However, the query time per matrix element grows with each recursion level,
as the queried elements become more and more distant from each other
with respect to the original top-level matrix.
The resulting combined query time is $O(n)$ in every recursion level,
so the overall running time becomes $O(n \log n)$.

We now show that it is possible to speed up 
the implicit matrix-vector distance multiplication algorithm still further.
We describe two solutions:
first, a relatively straightforward extension 
of the elimination procedure of \lmref{lm-rowmin-tmon},
running in time $O(n \log\log n)$;
second, an algorithm based on sophisticated data structures 
for the union-find problem in the unit-cost RAM model,
running in optimal time $O(n)$.

The first solution, using incremental queries
and a new ``coarse-grain'' recursive fill-in procedure,
is as follows.
\begin{lemma}
\label{lm-rowmin-loglog}
Let $A$ be an implicit (sub)unit-Monge matrix 
over $\bra{0:n_1 \mid 0:n_2}$,
represented as in \defref{def-implicit}
by the (sub)permutation matrix $P=A^\square$
and vectors $b=A(n_1,*)$, $c=A(*,0)$.
The problem of finding the (say, leftmost) minimum element 
in every row of $A$ can be solved in time $O(n \log\log n)$,
where $n=\max(n_1,n_2)$.
\end{lemma}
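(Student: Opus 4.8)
The plan is to refine the SMAWK-style elimination procedure of \lmref{lm-rowmin-tmon} so that (i) all matrix accesses are the constant-time incremental queries of \thref{th-query-inc}, and (ii) the cost of maintaining the positions of the accessed elements, relative to the top-level matrix, is kept under control across recursion levels. The key difficulty with a naive recursion is that, after a few levels, the column indices of $A'$ in the original matrix become sparsely spread out, so moving a query pointer from one element to an ``adjacent'' element of $A'$ may cost more than $O(1)$ incremental steps in the original matrix; this is exactly why the straightforward approach only gives $O(n \log n)$. The remedy is to change the \emph{fill-in} step: rather than recursing on a submatrix whose columns are scattered, I will partition the $n_2$ columns into $\Theta(n_2 / \log n)$ consecutive \emph{blocks} of width $\Theta(\log n)$ and run the elimination only on a coarse-grained matrix $A'$ with one representative column per block. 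Since $A$ is (sub)unit-Monge, $A^\square = P$ is a (sub)permutation matrix, so the leftmost row minima of $A$ themselves form a monotone staircase across the rows; within each block, the transition of the leftmost-minimum column is unit-monotone, which is what allows the block structure to be exploited.

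Concretely, I would proceed as follows. First, reduce to the square case over $\bra{0:n \mid 0:n}$ and reduce the matrix-vector product to row-minima search exactly as in \thref{th-mvmult-monge}, by setting $\tilde A(i,j)=A(i,j)+b(j)$; note $\tilde A$ is still implicit simple (sub)unit-Monge and supports the incremental queries of \thref{th-query-inc}. Second, carve the columns into $m = \lceil n/\log n\rceil$ blocks and form the coarse matrix $B'$ whose $k$-th column is column $k\lfloor\log n\rfloor$ of $\tilde A$. Run the elimination procedure of \tabref{t-elim} on $B'$: because consecutive coarse columns are $\Theta(\log n)$ apart in the original matrix, each coarse element query and each pointer move between coarse elements costs $O(\log n)$ incremental queries, and the procedure performs $O(n)$ such operations, for a total of $O(n\log n)$ at this coarsened level --- which will be absorbed. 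Third, recurse on $B'$ (of size $m \times \tfrac12 m$ or so after eliminating half the coarse columns) to obtain the leftmost row minima of $B'$, hence a column block for each row that is guaranteed to contain, or lie just left of, the true leftmost row minimum. Fourth, the fill-in step: for each row $i$, I know its coarse minimum lies in some block; by unit-monotonicity of the minima staircase, the true leftmost minimum column of row $i$ lies within $O(1)$ blocks of that of row $i-1$, so I can sweep all rows and locate all true leftmost minima by a single monotone scan that spends $O(\log n)$ incremental queries per block boundary crossed, with $O(n/\log n)$ blocks and $O(n)$ rows --- again $O(n\log n)$ total, but only at the top level, because each recursion level works on a matrix $\log n$ times smaller in its column dimension while the ``block width'' can be re-chosen at each level.

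The point of the analysis is the recurrence. If $T(n_1,n_2)$ is the running time, the coarsening plus fill-in at the current level costs $O(n_1 + n_2)$ incremental queries times the current block width, but the crucial observation is that after one coarsening the column dimension drops from $n_2$ to $O(n_2/\log n)$, so iterating the coarsening $O(\log^* n)$ times is \emph{not} what we want --- instead I choose the block width so that after one application the problem size (measured as $\max$ of the two dimensions) is $n/\log n$, giving $T(n) = T(n/\log n) + O(n)$ per level with the subtlety that the ``$O(n)$'' hides a factor that must be shown to telescope; working this out gives depth $O(\log n / \log\log n)$ and a geometric-like sum dominated by $O(n)$ at the leaves plus $O(n \log\log n)$ from the accumulated overheads. \textbf{The main obstacle} I anticipate is precisely this bookkeeping: showing that the incremental-query pointer can be advanced through the coarsened matrices at amortised cost matching the claimed bound, i.e. that the total number of incremental queries across all recursion levels is $O(n\log\log n)$ rather than $O(n\log n)$. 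This requires a careful charging argument tying the movement of query pointers in a deep recursion level back to disjoint intervals of the original index range, using the unit-monotonicity of $P = A^\square$ to guarantee the pointers never backtrack too far. The correctness of the elimination and fill-in steps themselves follows from total monotonicity (\lmref{lm-rowmin-tmon}) and the staircase structure, so the entire novelty is in the complexity accounting.
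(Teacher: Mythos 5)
Your proposal contains a genuine gap, and you have in fact named it yourself: the whole difficulty is to make each matrix access cost $O(1)$ amortised, and your scheme does not achieve this. With column blocks of width $\Theta(\log n)$ and one representative column per block, every coarse element query and every pointer move costs $O(\log n)$ rowwise incremental queries, so by your own estimate the coarse elimination and the fill-in already cost $\Theta(n\log n)$ at the top level alone; no charging over deeper recursion levels can bring the total back down to $O(n\log\log n)$, and the ``careful charging argument'' you defer is precisely the missing idea, not routine bookkeeping. (Your recurrence $T(n)=T(n/\log n)+O(n)$ would solve to $O(n)$, but the true per-level cost in your scheme is $O(n\log n)$, giving only the $O(n\log n)$ bound you were trying to beat.) A second concrete error: the fill-in step relies on the claim that the leftmost-minimum column of row $i$ lies within $O(1)$ blocks of that of row $i-1$. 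This is false. For $\tilde A(i,j)=P^\Sigma(i,j)+b(j)$ the leftmost-minima positions are monotone in $i$ but can jump arbitrarily far between consecutive rows --- this is exactly the phenomenon handled by the block deletions from the record subsequence in \lmref{lm-rowmin} --- so a row-by-row scan cannot be charged to block-boundary crossings.

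The paper's proof resolves the access-cost problem with two devices absent from your proposal. First, it samples $n^{1/2}$ \emph{rows} at regular intervals (rather than blocking columns) and, during the elimination of \tabref{t-elim}, maintains a vector $d$ counting the nonzeros of $P$ in each strip between consecutive sampled rows, restricted to columns left of the current index $j'$; a columnwise jump between adjacent sampled rows is then answered in $O(1)$ time from these counts, while a move to the next column is a single rowwise incremental query of \thref{th-query-inc}, and backtracking values are cached. Since the elimination performs at most $2n$ iterations, it runs in genuinely $O(n)$ time, not $O(n)$ times an access cost. Second, the compressed $n^{1/2}\times n^{1/2}$ problem (eliminated columns deleted, made explicit in RAM in $O(n)$ time) is solved by plain \lmref{lm-rowmin-tmon}, and the remaining minima are filled in by recursing on the chain of $n^{1/2}$ submatrices determined by the sampled-row minima, each reduced to size $n^{1/2}\times n^{1/2}$ by deleting columns containing no nonzero of $P$. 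The recursion $n\to n^{1/2}$ has $\log\log n$ levels at $O(n)$ work per level, which is exactly where the bound $O(n\log\log n)$ comes from; your recursion $n\to n/\log n$ of depth $\log n/\log\log n$ does not yield it even if each level were linear.
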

\begin{proof}
First, observe that vector $c$ has no effect 
on the positions (as opposed to the values) of any row minima.
Therefore, we assume without loss of generality that $c(i)=0$ for all $i$
(and, in particular, $b(0)=c(n_1)=0$).
Further, suppose that some column $P(*,\hj)$ is identically zero;
then, depending on whether 
$b(\hj^-) \leq b(\hj^+)$ or $b(\hj^-) > b(\hj^+)$,
we may delete respectively column $A(*,\hj^+)$ or $A(*,\hj^-)$
as it does not contain any leftmost row minima.
Also, suppose that some row $P(\hi,*)$ is identically zero;
then the minimum value in row $A(\hi^-,*)$ 
lies in the same column as the minimum value in row $A(\hi^-,*)$,
hence we can delete one of these rows.
Therefore, we assume without loss of generality
that $A$ is an implicit unit-Monge matrix 
over $\bra{0:n \mid 0:n}$,
and hence $P$ is a permutation matrix.

To find the leftmost row minima, we adopt 
the column elimination procedure of \lmref{lm-rowmin-tmon} 
(see \tabref{t-elim}, \figref{f-elim}),
with some modifications outlined below.

Let $B$ be an implicit $n^{1/2} \times n$ matrix,
obtained by taking a subset of $n^{1/2}$ rows of $A$
at regular intervals of $n^{1/2}$.
Clearly, at most $n^{1/2}$ columns of $B$
contain a leftmost row minimum.
We need to eliminate $n-n^{1/2}$ of the remaining columns.

Let $B$ be over $\bigbra{0:\tHalf[n] \mid 0:n}$.
Throughout the elimination procedure,
we maintain a vector $d(i)$, $i \in \bra{0:n^{1/2}-1}$,
initialised by zero values.
In every iteration, given a current value of the index $j'$, 
each value $d(i)$ gives the count of nonzeros $P(s,t)=1$ 
within the rectangle 
$s \in \bigang{n^{1/2}i : n^{1/2}(i+1)}$, $t \in \ang{0:j'}$.

Consider an iteration of 
the column elimination procedure of \lmref{lm-rowmin-tmon}
with given values $i$, $j$, $j'$,
operating on matrix elements $B(i,j)$, $B(i,j')$.
For the iteration that follows the current one, 
the following matrix elements may be required:
\begin{itemize}
\item $B(i-1,j')$, $B(i+1,j')$. 
These values can be obtained respectively as 
$B(i,j) + d(i-1)$ and $B(i,j) - d(i)$.
\item $B(i,j'+1)$, $B(i+1,j'+1)$.
These values can be obtained respectively from $B(i,j')$, $B(i+1,j')$
by a rowwise incremental query of matrix $P^\Sigma$ via \thref{th-query-inc},
plus a single access to vector $b$.
\item $B(i-1,\brc{k : \text{$k$ uneliminated and $<j$}})$.
This element was already queried in the iteration at which 
its column was first added to the staircase.
There is at most one such element per column, 
therefore each of them can be stored and subsequently queried in constant time.
\end{itemize}

At the end of the current iteration, index $j'$ may be incremented
(i.e.\ the staircase may grow by one column).
In this case, we also need to update vector $d$ for the next iteration.
Let $s \in \ang{0:n}$ be such that $P(s,j'-\tHalf) = 1$.
Let $i = \bigfloor{s/n^{1/2}}$; 
we have $s \in \bigang{n^{1/2}i : n^{1/2}(i+1)}$.
The update consists in incrementing the vector element $d(i)$ by $1$.

The total number of iterations in the elimination procedure is at most $2n$.
This is because in total, at most $n$ columns are added to the staircase,
and at most $n$ (in fact, exactly $n-n^{1/2}$) columns are eliminated.
Therefore, the elimination procedure runs in time $O(n)$.

Let $A'$ be the $n^{1/2} \times n^{1/2}$ matrix
obtained from $B$ by deleting the $n-n^{1/2}$ eliminated columns.
Using incremental queries to matrix $P$,
it is straightforward to obtain matrix $A'$ explicitly
in random-access memory in time $O(n)$.
We now call the algorithm of \lmref{lm-rowmin-tmon}
to compute the row minima of $A'$,
and therefore also of $B$, in time $O(n)$.

We now need to fill in the remaining row minima of matrix $A$. 
The row minima of matrix $A'$ define a chain of $n^{1/2}$ submatrices in $A$
at which these remaining row minima may be located.
More specifically, given two successive row minima of $A'$,
all the $n^{1/2}$ row minima that are located 
between the two corresponding rows in $A$
must also be located between the two corresponding columns.
Each of the resulting submatrices has $n^{1/2}$ rows;
the number of columns may vary from submatrix to submatrix.
It is straightforward to eliminate from each submatrix 
all columns not containing any nonzero of matrix $P$;
therefore, without loss of generality, 
we may assume that every submatrix
is of size $n^{1/2} \times n^{1/2}$.

We now call the algorithm recursively on each submatrix
to fill in the remaining leftmost row minima.
The amount of work remains $O(n)$ in every recursion level.
There are $\log\log n$ recursion levels,
therefore the overall running time of the algorithm is $O(n \log\log n)$.
\end{proof}
\begin{example}
In \figref{f-elim}, the incremental queries made by the elimination algorithm 
in the proof of \lmref{lm-rowmin-loglog} are shown by arrows.
Note that no incremental query can cross a vertical dotted line,
since every such line represents an arbitrary number of eliminated columns.
\end{example}

A faster, time-optimal solution was suggested by Gawrychowski \cite{Gawrychowski:12}.
\begin{lemma}
\label{lm-rowmin}
Under the conditions of \lmref{lm-rowmin-loglog},
the running time can be reduced to $O(n)$.
\end{lemma}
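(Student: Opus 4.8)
The plan is to keep the framework of the column-elimination procedure of \lmref{lm-rowmin-tmon} (see \tabref{t-elim}) together with the incremental-query idea of \lmref{lm-rowmin-loglog}, and to show that, with sufficiently powerful data structures, every operation of the procedure can be supported so that the \emph{total} work is $O(n)$, thereby removing both the $O(\log\log n)$ recursion overhead of \lmref{lm-rowmin-loglog} and any dependence on the $O(\log^2 n)$ element queries of \thref{th-query}.

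First I would reduce, exactly as in the proof of \lmref{lm-rowmin-loglog}, to an implicit unit-Monge matrix $A$ over $\bra{0:n \mid 0:n}$ whose density matrix $P = A^\square$ is a permutation matrix given by $\pi,\pi^{-1}$, with $c \equiv 0$; the only remaining data are $\pi$ and the vector $b$. The elimination procedure sweeps its pointer $j'$ once over the $n$ columns, and over the whole run it appends at most $n$ columns to the staircase and eliminates at most $n$ columns, so it performs $O(n)$ iterations; it therefore suffices to realise one iteration in $O(1)$ amortised time. Inspecting \tabref{t-elim}, a single iteration requires only: (a) a constant number of entries of the sampled matrix adjacent in row or column index (within the original matrix) to the current entry $B(i,j)$ --- supplied by a rowwise or columnwise incremental query of $P^\Sigma$ via \thref{th-query-inc} (using $\pi$) plus a lookup in $b$; (b) when the staircase grows by a column $j_{\mathrm{new}} < j$, the entry $B(i{-}1,j_{\mathrm{new}})$, which was already computed and can be cached when that column was first appended (one cached cell per column); and (c) when a column is eliminated, a retreat of $j$ to the nearest surviving column strictly to its left.

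The subtle point is that the index $i$ (the sampled-row pointer) and the original-matrix row it represents drift apart, so obtaining an entry like $B(i{\pm}1,j')$ from $B(i,j')$ really asks for the number of already-activated nonzeros of $P$ inside a band of rows --- this is exactly why \lmref{lm-rowmin-loglog} introduced the auxiliary counting vector $d$ and then had to recurse to resolve the counts \emph{within} a band. Two ingredients remove the recursion. Operation (c), together with the eliminations, is an instance of disjoint-set union in which the union tree is fixed in advance (the path of columns) and only the order of unions varies; this is precisely the special case of set union admitting a linear-time implementation in the unit-cost RAM model (Gabow and Tarjan), so the whole sequence of ``skip'' and ``eliminate'' operations costs $O(n)$ rather than the $O(\log\log n)$-per-column a recursive decomposition would incur. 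For the band counts, one samples rows aggressively enough that each band has only $O(\log n)$ (or $O(\sqrt{\log n})$) rows, so the within-band query becomes a partial-sum query over a fixed laminar family of small ranges; these can likewise be answered in $O(1)$ amortised time, either by the offline/static-structure union-find machinery of Gabow and Tarjan or by precomputed tables over the $2^{O(\log n)}$ possible band patterns (``Four Russians''), after which the elimination over the $\le n$ columns is the dominant cost.

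The main obstacle is the amortised accounting: one must show that, summed over the now-shallow recursion and over all columns, the total number of incremental $P^\Sigma$ queries, cached-cell lookups, set-union operations, and table lookups is $O(n)$ --- in particular that restricting each sub-instance to the columns actually containing a nonzero of $P$ makes the total size of all sub-instances $O(n)$, not $O(n)$ per level. A secondary but essential point is verifying that the set-union instances that arise really do fall into the linear-time special case (fixed union tree / deletions on a line, and a laminar query family), and that the RAM word width suffices both for that structure and for the tabulation step; granting these, combining the linear-time set-union bound with the $O(n)$-iteration bound on the elimination procedure yields the claimed $O(n)$ running time.
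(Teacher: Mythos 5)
The central gap is that you keep the two-level elimination framework but never explain how the minima of the \emph{sampled} rows themselves are found in $O(n)$ once the bands shrink to $O(\log n)$ rows. In \lmref{lm-rowmin-loglog} this step is precisely what dictates the choice $n^{1/2}$: after the elimination leaves one surviving column per sampled row, the reduced matrix has $n^{1/2}\times n^{1/2}=n$ entries, so it can be written out explicitly and handed to \lmref{lm-rowmin-tmon}. With bands of $O(\log n)$ rows the reduced matrix has $n/\log n$ rows and columns; it can neither be materialised in $O(n)$ nor queried at $O(1)$ per element inside the non-local access pattern of \lmref{lm-rowmin-tmon} (the incremental queries of \thref{th-query-inc} only serve adjacent entries), so you are pushed back either into recursion --- reintroducing the very overhead you set out to remove --- or into superconstant query times. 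The within-band fill-in also cannot be tabulated as stated: each band subproblem involves the arbitrary offset vector $b$ restricted to its columns, and you give no argument that the order information needed from $b$ fits into $o(\log n)$ bits per band, nor how it would be extracted in $O(1)$ time; a ``Four Russians'' table indexed by band patterns is simply not available for unbounded integer (let alone rational) offsets without such an argument. Finally, the union-find ingredient is misplaced: the retreat to the previous surviving column in \tabref{t-elim} is already $O(1)$ with a doubly-linked list under deletions on a line, while the Gabow--Tarjan structure answers find/union queries, not the partial-sum counting queries your band bookkeeping would need. Since you yourself defer exactly these points as ``obstacles'', the proposal is a plan rather than a proof.

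For comparison, the paper's proof (due to Gawrychowski) abandons the elimination framework altogether. After normalising $c\equiv 0$ and perturbing $b$, one observes that only the record subsequence of strict prefix minima of the bottom row can contain row minima, and then sweeps the rows from bottom to top, maintaining this record subsequence implicitly by consecutive differences, so that the $+1$ shifts caused by the single nonzero of $P$ in the current row cost $O(1)$. Each row update deletes a contiguous block of records and creates one new difference; the linear search for the upper index $k_1$ is charged to the deleted elements, and locating the interval containing the current nonzero (the index $k_0$) is a genuine instance of the interval union-find problem, solvable in $O(n)$ total time on the unit-cost RAM by Gabow and Tarjan. That is where the union-find machinery actually earns its keep, in a role quite different from the one your sketch assigns to it.
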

\begin{proof}
As before, observe that vector $c$ has no effect 
on the positions of any row minima.
Therefore, we assume without loss of generality 
that $c(i)=A(i,0)=0$ for all $i$,
so $A(i,j) = P^\Sigma(i,j) + b(j)$ for all $i$, $j$.
Also note that we can perturb the elements of vector $b$ slightly,
so that each leftmost row minimum becomes the only minimum in its row.
Therefore, from now on we will omit the adjective ``leftmost''.

Consider vector $b$, which coincides with the bottom row of matrix $A$: 
$b=A(n,*)$.
Suppose that for some $j$, $j'$, $j \leq j'$, 
we have $A(n,j) \leq A(n,j')$.
Then, the element $A(n,j')$ cannot be the minimum in row $n$.
Furhermore, by the Monge property of matrix $A$,
we have $A(i,j) \leq A(i,j')$ for all $i$,
therefore an element $A(i,j')$ cannot be the minimum in any row $i$,
and hence column $j'$ can be safely excluded from the search for row minima.
After excluding all such columns,
the remaining elements in row $n$ 
form a decreasing subsequence of \emph{record minimal values,}
which we call for short the \emph{record subsequence}.
Here, an element $b(j')$ is called a record minimal value,
if we have $b(j) > b(j')$ for all $j \leq j'$. 
The record subsequence can be found trivially 
in a single pass of the input vector $b$ in time $O(n)$.
The final element in the record subsequence is the row minimum in row $n$.

Let $j_0 < j_1 < \ldots < j_r$ be the indices of the record minimal values
in row $n$, so the initial record subsequence is
\begin{gather*}
b(j_0) > b(j_1) > \ldots > b(j_r)
\end{gather*}
Our goal now is to compute the record subsequence for every row in matrix $A$.
We will represent the record subsequences implicitly 
by storing the differences between successive pairs of elements.
In particular, the initial record subsequence
is represented by the sequence of (all negative) values 
\begin{gather*}
d_{\hk} = b(j_{\hk^+}) - b(j_{\hk^-})\qquad \hk \in \ang{0:r}
\end{gather*}

We now move through rows of matrix $A$
from the bottom row $n$ towards the top row $0$,
updating the implicit record subsequence incrementally for each row.
We describe the procedure for updating this subsequence
from row $n$ to row $n-1$; the other updates are analogous.

Let $P(n^-,\hj)=1$ be the nonzero of matrix $P$ in row $n^-$.
Let 
\begin{gather*}
k_0 = \max \brc{k : j_k < \hj}\\
k_1 = \min \brc{k : j_k > \hj \text{ and } b(j_k) + 1 < b(j_{k_0})}
\end{gather*}

Recall that $A(i,j) = P^\Sigma(i,j) + b(j)$ for all $i$, $j$.
Assuming that both $k_0$ and $k_1$ above are well-defined
(i.e.\ the set under respectively the $\max$ and the $\min$ operator is non-empty),
it is easy to see that the record subsequence in row $n-1$ is
\begin{multline*}
b(j_0) > b(j_1) > \ldots > b(j_{k_0}) > {}\\
b(j_{k_1}) + 1 > b(j_{k_1+1}) + 1 > \ldots > b(j_r) + 1
\end{multline*}
In other words, we take all the elements of the record subsequence 
from $b(j_0)$ to $b(j_{k_0})$ inclusive,
we delete all the elements strictly between $b(j_{k_0})$ and $b(j_{k_1})$,
and then we take all the elements from $b(j_{k_1})$ to $b(j_r)$,
incrementing them by $1$.

The described updated record subsequence 
is represented implicitly by the updated difference sequence
\begin{gather*}
d_{0^+}, d_{1^+}, \ldots, d_{j_0^-}, 
b(j_{k_1}) - b(j_{k_0}) + 1,
d_{j_1^+}, d_{j_1^++1}, \ldots, d_{r^-} 
\end{gather*}
In other words, we take all the elements 
of the original difference subsequence 
from $d_{0^+}$ to $d_{j_0^-}$ inclusive,
we delete all the elements from $d_{j_0^+}$ to $d_{j_1^-}$ inclusive,
we create a new element $b(j_{k_1}) - b(j_{k_0}) + 1$,
and then we take all the elements of the original difference subsequence 
from $d_{j_1^+}$ to $d_{r^-}$ inclusive.
Assuming indices $k_0$ and $k_1$ are known,
such an update can be performed in time $O(1)$.

In case $k_0$ is undefined (this happens whenever $j_0 > \hj$), 
the updated record subsequence becomes
\begin{gather*}
b(j_0)+1 > b(j_1)+1 > \ldots > b(j_r)+1
\end{gather*}
hence the corresponding difference sequence 
remains the same as for the original record subsequence, 
and does not need to be updated.
In case $k_1$ is undefined (this happens whenever $b(j_r) + 1 > b(j_{k_0})$), 
the record subsequence becomes
\begin{gather*}
b(j_0) > b(j_1) > \ldots > b(j_{k_0})
\end{gather*}
hence the corresponding difference sequence 
is obtained by taking the original record subsequence
from $d_{0^+}$ to $d_{k_0^-}$ inclusive.
In both above cases, the update can still be performed in time $O(1)$.

Now, assume that only index $k_0$ is known before the start of the update.
Then, index $k_1$ can be found by linear search through the difference sequence.
The size of this linear search is equal to the number of elements
deleted from the sequence by the subsequent update.
Hence, the amortized running time 
of the linear search across all the updates is $O(n)$.

It remains to show how to find the index $k_0$ efficiently.
Consider the partitioning of interval $\ang{0:n}$ 
into a disjoint union of sub-intervals
\begin{gather*}
\ang{0:n} = \ang{0:j_0} \uplus \ang{j_0:j_1} \uplus\cdots\uplus \ang{j_{r-1}:j_r}
\end{gather*}
The problem of finding $k_0$ 
is equivalent to finding the interval $\ang{j_{k_0}:j_{k_0+1}}$ 
containing the index $\hj$ of the nonzero $P(n^-,\hj)$.
The same problem has to be solved repeatedly for each subsequent row,
where we need to find the interval between elements 
of the current record subsequence,
containing the current nonzero of matrix $P$.
As elements get deleted from the record subsequence by the update,
pairs of adjacent intervals also have to be merged into one interval.

The described problem fits in the classical setup 
of the \emph{union-find problem},
in particular its special case called the \emph{interval union-find problem} 
(see e.g.\ Italiano and Raman \cite{Italiano_Raman:10}).
This is a highly non-trivial problem that,
in the most general setting, 
has a marginally superlinear lower bound on the running time.
However, in the unit-cost RAM model of computation
this problem can be solved 
by an algorithm of Gabow and Tarjan \cite{Gabow_Tarjan:85}
(see also \cite{Italiano_Raman:10}) in time $O(n)$.

The overall running time of the algorithm
(assuming, as usual, the unit-cost RAM model of computation) is $O(n)$.
\end{proof}

\lmref{lm-rowmin} can now be applied to obtain an optimal algorithm
for distance multiplication of a simple (sub)unit-Monge matrix by a vector.
\begin{theorem}
\label{th-mvmult}
Let $P$ be an $n_1 \times n_2$ (sub)permutation matrix.
Let $b$ be an $n_1$-vector, and $c$ an $n_2$-vector, such that $P \boxdot b = c$.
Given the nonzeros of $P$ and the full vector $b$,
vector $c$ can be computed in time $O(n)$,
where $n = \max(n_1,n_2)$.
\end{theorem}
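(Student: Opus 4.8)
The plan is to reduce the implicit matrix-vector distance product $P \boxdot b = c$ to a row-minima search on an implicit Monge matrix, and then invoke \lmref{lm-rowmin}. Recall that by \defref{def-mvmult-imp}, the product $P \boxdot b = c$ means $P^\Sigma \odot b = c$, i.e.
\begin{gather*}
c(i) = \min_{j \in \bra{0:n_2}} \bigpa{P^\Sigma(i,j) + b(j)}
\end{gather*}
for all $i \in \bra{0:n_1}$. So the entries of $c$ are exactly the row-minima \emph{values} of the matrix $\tilde A(i,j) = P^\Sigma(i,j) + b(j)$. Since $P^\Sigma$ is a simple (sub)unit-Monge matrix, it is in particular Monge, and adding a term depending only on the column index $j$ preserves the Monge property (the density matrix is unchanged up to the boundary, and in fact $\tilde A^\square = P^\Sigma{}^\square = P$); hence $\tilde A$ is again a simple (sub)unit-Monge matrix, given implicitly in the sense of \defref{def-implicit} by the same (sub)permutation matrix $P = \tilde A^\square$ together with the vectors $\tilde A(n_1,*) = P^\Sigma(n_1,*) + b = b$ (since $P^\Sigma(n_1,j) = 0$ for all $j$) and $\tilde A(*,0) = P^\Sigma(*,0) = 0$.

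First I would verify the above bookkeeping carefully: that $P^\Sigma$ is simple (immediate from \defref{def-simple} and \defref{def-implicit-simple}, as $P^\Sigma(n_1,j)=0$ and $P^\Sigma(i,0)=0$), that $\tilde A$ is simple and (sub)unit-Monge with $\tilde A^\square = P$, and that the vectors $b$ and $c$ appearing in \defref{def-implicit} for $\tilde A$ are precisely $b = \tilde A(n_1,*)$ and $0 = \tilde A(*,0)$. This makes $\tilde A$ a legitimate input to \lmref{lm-rowmin}, already presented in the exact implicit form that lemma requires — we need only hand over the nonzeros of $P$ (given) and the vector $b$ (given). Then \lmref{lm-rowmin} computes, in time $O(n)$ with $n = \max(n_1,n_2)$, the position of the (leftmost) minimum in every row of $\tilde A$; reading off the corresponding \emph{value} $\tilde A(i, j^*) = P^\Sigma(i,j^*) + b(j^*)$ for each row $i$ gives $c(i)$. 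Computing this value from the minimizing position $j^*$ costs one element query to $\tilde A$ per row, but since the row minima are produced by the incremental/sweep machinery underlying \lmref{lm-rowmin-loglog} and \lmref{lm-rowmin}, these values are available in the same $O(n)$ budget (alternatively, the row-minimum value can be maintained alongside the position throughout the sweep at no extra asymptotic cost). Hence $c$ is obtained in total time $O(n)$.

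The only real content is the observation that row-minima \emph{values}, not just positions, come out of \lmref{lm-rowmin} within the stated bound; the rest is a direct application. I expect the main (and only mild) obstacle to be this point — ensuring the row-minimum values are extracted in $O(n)$ rather than paying an $O(\log^2 n)$ element-query penalty per row via \thref{th-query}. This is handled by noting that the algorithm of \lmref{lm-rowmin} maintains the record subsequence of each row explicitly (as differences), so the value at the minimizing column — the final element of the record subsequence — is known directly; equivalently one tracks the running minimum value along the sweep. For the subpermutation case the same argument applies verbatim, since $P^\Sigma$ is then simple subunit-Monge and \lmref{lm-rowmin-loglog}/\lmref{lm-rowmin} are stated for (sub)unit-Monge matrices; the zero rows of $P$ (equivalently, the degenerate structure) are dealt with exactly as in the proof of \lmref{lm-rowmin-loglog}. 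This completes the reduction and the proof.
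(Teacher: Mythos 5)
Your proposal is correct and follows essentially the same route as the paper: the paper's proof simply says the argument is analogous to \thref{th-mvmult-monge} (i.e.\ reduce $P \boxdot b$ to row-minima search on $\tilde A(i,j) = P^\Sigma(i,j) + b(j)$), with \lmref{lm-rowmin} replacing the SMAWK-based \lmref{lm-rowmin-tmon}. Your extra care about $\tilde A$ being in exactly the implicit form required by \lmref{lm-rowmin} and about extracting minimum values (not just positions) within the $O(n)$ budget is sound and just makes explicit what the paper leaves implicit.
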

\begin{proof}
Analogous to \thref{th-mvmult-monge}, 
but using \lmref{lm-rowmin} for finding row minima.
\end{proof}

\mysection{Unit-Monge matrix-matrix distance multiplication}
\label{s-mmult-umonge}

\newcommand{\PA}[1]{P_{A,\mathit{#1}}}
\newcommand{\PB}[1]{P_{B,\mathit{#1}}}
\newcommand{\PC}[1]{P_{C,\mathit{#1}}}
\newcommand{\Mx}[1]{M_{\mathit{#1}}}
\newcommand{\Lo}{\mathit{lo}}
\newcommand{\Hi}{\mathit{hi}}

We now consider matrix-matrix distance multiplication.
While the quadratic running time of \thref{th-mmult-monge} 
is trivially optimal for explicit matrices,
it is possible to break through this time barrier 
in the case of implicitly represented matrices.

For simplicity, we restrict ourselves once again to square matrices
(which is trivially the case for unit-Monge matrices,
but not so for subunit-Monge matrices).
Subquadratic distance multiplication algorithms 
for implicit simple (sub)unit-Monge matrices 
were given in \cite{Tiskin:06_CSR,Tiskin:08_MCS}, 
and culminated with the following result in \cite{Tiskin:10_SODA}.
\begin{theorem}
\label{th-mmult}
Let $P_A$, $P_B$, $P_C$ be $n \times n$ (sub)permutation matrices, 
such that $P_A \boxdot P_B = P_C$.
Given the nonzeros of $P_A$, $P_B$, the nonzeros of $P_C$
can be computed in time $O(n \log n)$.
\end{theorem}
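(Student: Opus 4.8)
The plan is to prove Theorem~\ref{th-mmult} by a divide-and-conquer recursion on the common index dimension, splitting the product $P_A \boxdot P_B = P_C$ into two half-size products whose results must be recombined. Write the index range as $\bra{0:n}$ and let $m = \tHalf[n]$ be the midpoint. Partition $P_A$ columnwise into $\PA{lo}$ over columns $\ang{0:m}$ and $\PA{hi}$ over columns $\ang{m:n}$, and correspondingly partition $P_B$ rowwise into $\PB{lo}$ over rows $\ang{0:m}$ and $\PB{hi}$ over rows $\ang{m:n}$. Each of $\PA{lo}$, $\PA{hi}$ has exactly the nonzeros of $P_A$ whose column falls in the respective half; these are subpermutation matrices, not permutation matrices, so by Theorem~\ref{th-monoid-umonge} (subunit-Monge case) each ``half'' distance product $\PA{lo} \boxdot \PB{lo}$ and $\PA{hi} \boxdot \PB{hi}$ is again simple subunit-Monge, represented by a subpermutation matrix. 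The first step is to recurse on these two subproblems, having first compressed the dimension of each (deleting empty rows/columns, exactly as in the proof of Theorem~\ref{th-monoid-umonge}) so that the recursion is genuinely on problems of total size $n$.

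The heart of the argument is the recombination step: given the subpermutation matrices $M_{lo} = (\PA{lo} \boxdot \PB{lo})$ and $M_{hi} = (\PA{hi} \boxdot \PB{hi})$, reconstruct $P_C$. Writing $C(i,k) = \min_j\bigpa{A(i,j)+B(j,k)}$ and splitting the minimisation at $j = m$, one gets $C = M_{lo}^\Sigma \oplus M_{hi}^\Sigma$ entrywise, i.e.\ $C$ is the elementwise minimum of the two simple subunit-Monge distribution matrices (after aligning their constant offsets). Passing to density matrices, $P_C = (M_{lo}^\Sigma \oplus M_{hi}^\Sigma)^\square$. The key structural fact to establish — this is what makes the bound $O(n\log n)$ rather than $O(n\log^2 n)$ or worse — is that the nonzeros of $P_C$ are obtained from the nonzeros of $M_{lo}$ and $M_{hi}$ by a simple local rule: each nonzero of $P_C$ is either a nonzero of $M_{lo}$, or a nonzero of $M_{hi}$, or lies at the ``crossing point'' determined by a matched pair consisting of one surplus nonzero of $M_{lo}$ and one surplus nonzero of $M_{hi}$. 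Concretely, the difference function $\delta(i,k) = M_{lo}^\Sigma(i,k) - M_{hi}^\Sigma(i,k)$ is a step function whose sign changes trace out a monotone staircase in the plane (this monotonicity follows from the Monge property of both summands), and $P_C$ agrees with $M_{lo}$ strictly on one side of the staircase, with $M_{hi}$ strictly on the other, with the $O(n)$ cells exactly on the staircase boundary handled by an explicit dominance-counting-style merge.

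The plan for the recombination is therefore: trace the staircase $\delta = 0$ by a single coordinated sweep through the $O(n)$ nonzeros of $M_{lo}$ and $M_{hi}$, using incremental queries of Theorem~\ref{th-query-inc} to update $M_{lo}^\Sigma$ and $M_{hi}^\Sigma$ along the sweep in $O(1)$ amortised time per step; classify each nonzero of $M_{lo}$, $M_{hi}$ as ``inherited by $P_C$'' or ``surplus'' according to which side of the staircase it lies on; and pair up the surplus nonzeros in the unique order-preserving way dictated by the Monge/permutation constraints, emitting one new nonzero of $P_C$ per pair. A counting check — $n$ rows and $n$ columns each accounting for exactly one nonzero — confirms that $P_C$ is indeed a permutation matrix, consistent with Theorem~\ref{th-monoid-umonge}. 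The recombination runs in $O(n)$ time, so the recurrence is $T(n) = 2\,T(n/2) + O(n)$, giving $T(n) = O(n\log n)$.

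The main obstacle I expect is not the recursion bookkeeping but the precise combinatorial description and correctness proof of the staircase and the surplus-pairing rule — in particular, verifying that the crossing points are exactly the new nonzeros, that they fall on distinct rows and columns disjoint from the inherited ones, and that the sweep can genuinely be driven in linear time (i.e.\ that the order in which surplus nonzeros of $M_{lo}$ and $M_{hi}$ must be matched coincides with the order in which the sweep encounters them). Getting the half-integer boundary indices right where the two blocks meet, and handling the degenerate cases where $M_{lo}$ or $M_{hi}$ has empty rows or columns (so that the ``permutation'' being reconstructed is assembled from genuinely partial data), is the fiddly part; but conceptually it is all controlled by the Monge inequality and the telescoping identities already used in the proofs of Lemma~\ref{lm-unit-Monge} and Theorem~\ref{th-monoid-umonge}.
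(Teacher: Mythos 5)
Your proposal is correct and takes essentially the same route as the paper's proof: split the inner index range at $n/2$, recurse on the two compressed half-size subproblems, recombine by taking the elementwise minimum of the two offset-corrected sub-products, and trace the monotone sign-change boundary of the difference function $\delta$ with incremental queries in $O(n)$ time per level, yielding $T(n)=2T(n/2)+O(n)=O(n\log n)$. The only caveats are details you yourself flag as fiddly: the correction terms are row/column-dependent quantities (namely $P_{C,\mathit{hi}}^\Sigma(0,k)$ and $P_{C,\mathit{lo}}^\Sigma(i,n)$), not constant offsets, and the unit-monotonicity of $\delta$ follows from $P_{C,\mathit{lo}}+P_{C,\mathit{hi}}$ being a permutation matrix rather than from the Monge property alone --- exactly the points the paper's conquer phase spells out.
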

\begin{proof}
Let $P_A$, $P_B$, $P_C$ be permutation matrices over $\ang{0:n \mid 0:n}$.
The algorithm follows a divide-and-conquer approach,
in the form of recursion on $n$.
\begin{trivlist}
\setlabelit
\item[Recursion base: $n=1$.] The computation is trivial.

\item[Recursive step: $n>1$.]
Assume without loss of generality that $n$ is even.
Informally, the idea is to split the range of index $j$ 
in the definition of matrix distance product (\defref{def-mmult})
into two sub-intervals of size $\tHalf[n]$.
For each of these half-sized sub-intervals of $j$, we use the sparsity 
of the input permutation matrix $P_A$ (respectively, $P_B$)
to reduce the range of index $i$ (respectively, $k$)
to a (not necessarily contiguous) subset of size $\tHalf[n]$;
this completes the \emph{divide phase}.
We then call the algorithm recursively 
on the two resulting half-sized subproblems.
Using the subproblem solutions, 
we reconstruct the output permutation matrix $P_C$;
this is the \emph{conquer phase}.

We now describe each phase of the recursive step in more detail.

\item[Divide phase.]
By \defref{def-mmult-imp}, we have
\begin{gather*}
P_A^\Sigma \odot P_B^\Sigma = P_C^\Sigma 
\end{gather*}
Consider the partitioning of matrices $P_A$, $P_B$ into subpermutation matrices
\begin{gather*}
P_A = \bmat{\PA{lo} & \PA{hi}} \qquad
P_B = \bmat{\PB{lo} \\ \PB{hi}}
\end{gather*}
where $\PA{lo}$, $\PA{hi}$, $\PB{lo}$, $\PB{hi}$
are over $\ang{0:n \mid 0:\Half[n]}$, $\ang{0:n \mid \Half[n]:n}$,
$\ang{0:\Half[n] \mid 0:n}$, $\ang{\Half[n]:n \mid 0:n}$, respectively;
in each of these matrices, we maintain the indexing 
of the original matrices $P_A$, $P_B$.
We now have two implicit matrix multiplication subproblems
\begin{gather*}
\PA{lo}^\Sigma \odot \PB{lo}^\Sigma = \PC{lo}^\Sigma\qquad
\PA{hi}^\Sigma \odot \PB{hi}^\Sigma = \PC{hi}^\Sigma
\end{gather*}
where $\PC{lo}$, $\PC{hi}$ are of size $n \times n$.
Each of the subpermutation matrices 
$\PA{lo}$, $\PA{hi}$, $\PB{lo}$, $\PB{hi}$, $\PC{lo}$, $\PC{hi}$
has exactly $\tHalf[n]$ nonzeros.

Recall from the proof of \thref{th-monoid-umonge}
that a zero row in $\PA{lo}$ (respectively, a zero column in $\PB{lo}$)
corresponds to a zero row (respectively, column) 
in their implicit distance product $\PC{lo}$.
Therefore, we can delete all zero rows and columns 
from $\PA{lo}$, $\PB{lo}$, $\PC{lo}$,
obtaining, after appropriate index remapping,
three $\tHalf[n] \times \tHalf[n]$ permutation matrices.
Consequently, the first subproblem can be solved
by first performing a linear-time index remapping 
(corresponding to the deletion 
of zero rows and columns from $\PA{lo}$, $\PB{lo}$),
then making a recursive call on the resulting half-sized problem,
and then performing an inverse index remapping
(corresponding to the reinsertion 
of the zero rows and columns into $\PC{lo}$).
The second subproblem can be solved analogously.

\item[Conquer phase.]
We now need to combine the solutions for the two subproblems
to a solution for the original problem.
Note that we cannot simply put together 
the nonzeros of the subproblem solutions.
The original problem depends on the subproblems in a more subtle way:
some elements of $P_A^\Sigma$ 
depend on elements of both $\PA{lo}$ and $\PA{hi}$,
and therefore would not be accounted for directly 
by the solution to either subproblem on its own.
A similar observation holds for elements of $P_B^\Sigma$.
However, note that the nonzeros in the two subproblems 
have disjoint index ranges, and therefore
the direct combination of subproblem solutions $\PC{lo}+\PC{hi}$,
although not a solution to the original problem,
is still a permutation matrix.

In order to combine correctly the solutions of the two subproblems,
let us consider the relationship between these subproblems in more detail.
First, we split the range of index $j$ 
in the definition of matrix distance product (\defref{def-mmult})
into a ``low'' and a ``high'' sub-interval, each of size $\tHalf[n]$.
\begin{gather}
P_C^\Sigma(i,k) = 
\min_{j \in \bra{0:n}}
  \bigpa{P_A^\Sigma(i,j) + P_B^\Sigma(j,k)} = {} \notag \\
\min\Bigpa{%
  \min_{j \in \bigbra{0:\tHalf[n]}}
    \bigpa{P_A^\Sigma(i,j) + P_B^\Sigma(j,k)},
  \min_{j \in \bigbra{\tHalf[n]:n}}
    \bigpa{P_A^\Sigma(i,j) + P_B^\Sigma(j,k)}}
\label{eq-split}
\end{gather}
for all $i,k \in \bra{0:n}$.
Let us denote the two arguments in \eqref{eq-split}
by $\Mx{lo}(i,k)$ and $\Mx{hi}(i,k)$, respectively:
\begin{gather}
\label{eq-mlohi}
P_C^\Sigma(i,k) = \min\bigpa{\Mx{lo}(i,k), \Mx{hi}(i,k)}
\end{gather}
for all $i,k \in \bra{0:n}$.
The first argument in \eqref{eq-split}, \eqref{eq-mlohi} 
can be expressed via the solutions of the two subproblems as follows:
\begin{gather}
\Mx{lo}(i,k) =
\min_{j \in \bigbra{0:\tHalf[n]}}
  \bigpa{P_A^\Sigma(i,j) + P_B^\Sigma(j,k)} = {}
\why{definition of $\Sigma$}\\
\min_{j \in \bigbra{0:\tHalf[n]}} 
  \bigpa{\PA{lo}^\Sigma(i,j) + \PB{lo}^\Sigma(j,k) + 
    \PB{hi}^\Sigma(\tHalf[n],k)} = {}
\why{term rearrangement}\\
\min_{j \in \bigbra{0:\tHalf[n]}}
  \bigpa{\PA{lo}^\Sigma(i,j) + \PB{lo}^\Sigma(j,k)} +
  \PB{hi}^\Sigma(\tHalf[n],k) = {}
\why{definition of $\odot$}\\
\PC{lo}^\Sigma(i,k) + \PC{hi}^\Sigma(0,k)
\label{eq-mlo}
\end{gather}
Here, the final equality is due to
\begin{gather*}
\PC{hi}^\Sigma(0,k) = \min_{j \in \bigbra{\tHalf[n]:n}} 
  \bigpa{\PA{hi}^\Sigma(0,j) + \PB{hi}^\Sigma(j,k)} = {}\\
\min_{j \in \bigbra{\tHalf[n]:n}} 
  \bigpa{j - \tHalf[n] + \PB{hi}^\Sigma(j,k)} = 
\PB{hi}^\Sigma(\tHalf[n],k)
\end{gather*}
since the minimum is attained at $j = \tHalf[n]$.
The second argument in \eqref{eq-split}, \eqref{eq-mlohi} 
can be expressed analogously as
\begin{gather}
\Mx{hi}(i,k) =
\min_{j \in \bigbra{\tHalf[n]:n}}
  \bigpa{P_A^\Sigma(i,j) + P_B^\Sigma(j,k)} = {} \notag\\
\PC{hi}^\Sigma(i,k) + \PC{lo}^\Sigma(i,n)
\label{eq-mhi}
\end{gather}
The minimisation operator in \eqref{eq-split}, \eqref{eq-mlohi} 
is equivalent to evaluating the sign of the difference of its two arguments:
\begin{gather*}
\delta(i,k) = \Mx{lo}(i,k) - \Mx{hi}(i,k) = {}
\why{by \eqref{eq-mlo}, \eqref{eq-mhi}}\\
\bigpa{\PC{lo}^\Sigma(i,k) + \PC{hi}^\Sigma(0,k)} - 
\bigpa{\PC{hi}^\Sigma(i,k) + \PC{lo}^\Sigma(i,n)} = {}
\why{term rearrangement}\\
\bigpa{\PC{hi}^\Sigma(0,k) - \PC{hi}^\Sigma(i,k)} -
\bigpa{\PC{lo}^\Sigma(i,n) - \PC{lo}^\Sigma(i,k)} = {}
\why{definition of $\Sigma$}\\
\sum_{\hi \in \ang{0:i},\hk \in \ang{0:k}} \PC{hi}(\hi,\hk) - 
\sum_{\hi \in \ang{i:n},\hk \in \ang{k:n}} \PC{lo}(\hi,\hk) = {}
\why{definition of $\Sigma$, $R$}\\
\PC{hi}^{R\Sigma}(n-k,i) - \PC{lo}^{RRR\Sigma}(k,n-i)
\end{gather*}
Since $\PC{lo}$, $\PC{hi}$ are subpermutation matrices,
and $\PC{lo}+\PC{hi}$ a permutation matrix,
it follows that function $\delta$ 
is unit-monotone increasing in each of its arguments.

The sign of function $\delta$ determines 
the positions of nonzeros in $P_C$ as follows.
Let us fix some half-integer point $\hi,\hk \in \ang{0:n}$ in $P_C$,
and consider the signs of the four values $\delta(\hi^\pm,\hk^\pm)$ 
at neighbouring integer points.
Due to the unit-monotonicity of $\delta$, only three cases are possible.
\begin{trivlist}
\item{\itshape Case $\delta(\hi^\pm,\hk^\pm) \leq 0$ 
for all four sign combinations.}
We have
\begin{gather*}
\Mx{lo}(\hi^\pm,\hk^\pm) \leq \Mx{hi}(\hi^\pm,\hk^\pm)
\end{gather*}
for each sign combination taken consistently on both sides of the inequality,
and, by \eqref{eq-mlohi},
\begin{gather*}
P_C^\Sigma(\hi^\pm,\hk^\pm) = \Mx{lo}(\hi^\pm,\hk^\pm)
\end{gather*}
Hence, we have
\begin{gather*}
P_C(\hi,\hk) = P_C^{\Sigma\square}(\hi,\hk) = \Mx{lo}^{\square}(\hi,\hk) = \PC{lo}(\hi,\hk)
\why{definition of $\Sigma$, $\square$, \eqref{eq-mlo}, \eqref{eq-mhi}}
\end{gather*}
Thus, in this case $P_C(\hi,\hk) = 1$ is equivalent to $\PC{hi}(\hi,\hk) = 1$.
Note that this also implies $\delta(\hi^-,\hk^-) < 0$,
since otherwise we would have $\delta(\hi^\pm,\hk^\pm) = 0$ for all four sign combinations,
and hence, by symmetry, also $\PC{hi}(\hi,\hk) = 1$.
However, that would imply $\PC{lo}(\hi,\hk)+\PC{hi}(\hi,\hk) = 1+1 = 2$,
which is a contradiction to $\PC{lo}+\PC{hi}$ being a permutation matrix.
\item{\itshape Case $\delta(\hi^\pm,\hk^\pm) \geq 0$ 
for all four sign combinations.}
Symmetrically to the previous case, we have
\begin{gather*}
P_C(\hi,\hk) = \PC{hi}(\hi,\hk)
\end{gather*}
Thus, in this case $P_C(\hi,\hk) = 1$ is equivalent to $\PC{lo}(\hi,\hk) = 1$,
and implies $\delta(\hi^+,\hk^+) > 0$.
\item{\itshape Case $\delta(\hi^-,\hk^-) < 0$, 
$\delta(\hi^-,\hk^+) = \delta(\hi^+,\hk^-) = 0$, $\delta(\hi^+,\hk^+) > 0$.}
By \eqref{eq-mlohi}, we have
\begin{gather*}
P_C^\Sigma(\hi^-,\hk^-) = \Mx{lo}(\hi^-,\hk^-) \\
P_C^\Sigma(\hi^+,\hk^-) = \Mx{lo}(\hi^+,\hk^-) \\
P_C^\Sigma(\hi^-,\hk^+) = \Mx{lo}(\hi^-,\hk^+) \\
P_C^\Sigma(\hi^+,\hk^+) = \Mx{hi}(\hi^+,\hk^+) < \Mx{lo}(\hi^+,\hk^+)
\end{gather*}
Hence,
\begin{gather*}
P_C(\hi,\hk) = P_C^{\Sigma\square}(\hi,\hk) >
\Mx{lo}^{\square}(\hi,\hk) = \PC{lo}(\hi,\hk)
\why{definition of $\Sigma$, $\square$, \eqref{eq-mlo}, \eqref{eq-mhi}}
\end{gather*}
Since both $P_C$ and $\PC{lo}$ are zero-one matrices,
the strict inequality implies that $P_C(\hi,\hk)=1$ and $\PC{lo}(\hi,\hk)=0$.
Symmetrically, also $\PC{hi}(\hi,\hk)=0$.
\end{trivlist}

Summarising the above three cases, we have $P_C(\hi,\hk)=1$, 
if and only if one of the following conditions holds:
\begin{gather}
\text{$\delta(\hi^-,\hk^-) < 0$ and $\PC{lo}(\hi,\hk)=1$} \label{eq-lo} \\
\text{$\delta(\hi^+,\hk^+) > 0$ and $\PC{hi}(\hi,\hk)=1$} \label{eq-hi} \\
\text{$\delta(\hi^-,\hk^-) < 0$ and $\delta(\hi^+,\hk^+) > 0$} \label{eq-lohi}
\end{gather}
By the discussion above, these three conditions are mutually exclusive.

In order to check the conditions \eqref{eq-lo}--\eqref{eq-lohi},
we need an efficient procedure for determining the sign of function $\delta$
in points of the integer square $\bra{0:n \mid 0:n}$.
Informally, low (respectively, high) values of both $i$ and $k$ 
correspond to negative (respectively, positive) values of $\delta(i,k)$.
By unit-monotonicity of $\delta$, 
there must exist a pair of monotone rectilinear paths 
from the bottom-left to the top-right corner 
of the half-integer square $\ang{-1:n+1 \mid -1:n+1}$,
that separate strictly negative and nonnegative 
(respectively, strictly positive and nonpositive) values of $\delta$.

We now give a simple efficient procedure for finding such a pair of separating paths.
By symmetry we only need to the consider the \emph{lower separating path}.
For all integer points $(i,k)$ above-left (respectively, below-right) of this path,
we have $\delta(i,k) < 0$ (respectively, $\delta(i,k) \geq 0$).

We start at the bottom-left corner of the square,
with $(\hi,\hk) = (n^+,0^-)$ as the initial point on the lower separating path.
We have $\delta(\hi^-,\hk^+) = \delta(n,0) = 0$.

Let $(\hi,\hk)$ now denote any current point on the lower separating path,
and suppose that we have evaluated $\delta(\hi^-,\hk^+)$.
The sign of this value determines the next point on the path:
\begin{alignat*}{2}
{}
&(\hi,\hk+1) &\qquad &\text{if $\delta(\hi^-,\hk^+) < 0$}\\
&(\hi-1,\hk) &       &\text{if $\delta(\hi^-,\hk^+) \geq 0$}
\end{alignat*}
Following this choice, we then evaluate 
either $\delta\bigpa{\hi^-,(\hk+1)^+}$, or $\delta\bigpa{(\hi-1)^-,\hk^+}$
from $\delta(\hi^-,\hk^+)$
by an incremental query of \thref{th-query-inc} in time $O(1)$.
The computation is now repeated with the new current point.

The described path-finding procedure runs until 
either $\hi = 0^-$, or $\hk = n^+$.
We then complete the path by moving 
in a straight horizontal (respectively, vertical) line
to the final destination $(\hi,\hk) = (0^-,n^+)$.
The whole procedure of finding the lower separating path runs in time $O(n)$.
A symmetric procedure procedure with the same running time
can be used to find the \emph{upper separating path,}
for which we have $\delta(i,k) \leq 0$ on the above-left, 
and $\delta(i,k) > 0$ on the below-right.

Given a value $d \in \bra{-n+1:n-1}$, 
let us now consider the set of points $(\hi,\hk)$ with $\hk-\hi=d$;
such a set forms a diagonal in the half-integer square.
Let $\bigpa{\hi_\Lo,\hk_\Lo}$, where $\hk_\Lo - \hi_\Lo = d$, 
be the unique intersection point of the given diagonal with the lower separating path.
Let $r_\Lo(d) = \hi_\Lo + \hk_\Lo$.
Define $r_\Hi(d)$ analogously, using the upper separating path.
Conditions \eqref{eq-lo}--\eqref{eq-lohi} can now be expressed 
in terms of arrays $r_\Lo$, $r_\Hi$ as follows:
\begin{gather}
\text{$\hi+\hk \leq r_\Lo(\hk-\hi)$ and $\PC{lo}(\hi,\hk)=1$} \label{eq-r-lo} \\
\text{$\hi+\hk \geq r_\Hi(\hk-\hi)$ and $\PC{hi}(\hi,\hk)=1$} \label{eq-r-hi} \\
\text{$\hi+\hk = r_\Lo(\hk-\hi) = r_\Hi(\hk-\hi)$} \label{eq-r-lohi}
\end{gather}
Here, we make use of the fact that $\hi+\hk \leq r_\Lo(\hk-\hi)$
is equivalent to $\hi^- + \hk^- < r_\Lo(\hk-\hi)$,
and $\hi+\hk \geq r_\Hi(\hk-\hi)$ to $\hi^+ + \hk^+ > r_\Hi(\hk-\hi)$.

The nonzeros of $P_C$ satisfying either of the conditions 
\eqref{eq-r-lo}, \eqref{eq-r-hi} can be found in time $O(n)$
by checking directly each of the nonzeros in matrices $\PC{lo}$ and $\PC{hi}$.
The nonzeros of $P_C$ satisfying condition 
\eqref{eq-r-lohi} can be found in time $O(n)$
by a linear sweep of the points $(\hi,\hk)$ on the two separating paths.
We have now obtained all the nonzeros of matrix $P_C$.

\item[(End of recursive step)]

\end{trivlist}

The generalisation to subpermutation matrices is as in \thref{th-monoid-umonge}.

\begin{trivlist}
\setlabelit
\item[Time analysis.]
The recursion tree is a balanced binary tree of height $\log n$.
In the root node, the computation runs in time $O(n)$.
In each subsequent level, the number of nodes doubles, 
and the running time per node decreases by a factor of $2$.
Therefore, the overall running time is $O(n \log n)$.
\end{trivlist}
\end{proof}

\begin{figure}[p]
\centering

\subfloat[\label{f-mmult-pab}Input matrices $P_A$, $P_B$]{%
\includegraphics{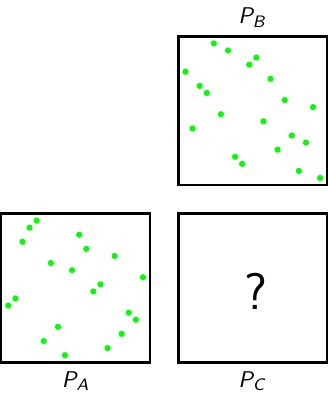}}
\qquad
\subfloat[\label{f-mmult-pabx}%
Subproblems 
$\PA{lo} \boxdot \PB{lo} = \PC{lo}$ and
$\PA{hi} \boxdot \PB{hi} = \PC{hi}$]{%
\includegraphics{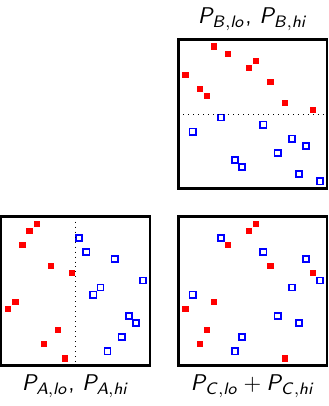}}

\subfloat[\label{f-mmult-pabxc}%
Conversion of $\PC{lo}+\PC{hi}$ into $P_C$]{%
\includegraphics{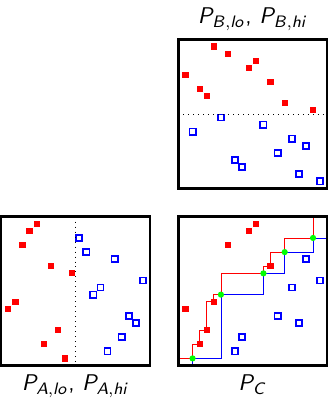}}
\qquad
\subfloat[\label{f-mmult-pabc}Output matrix $P_C$]{%
\includegraphics{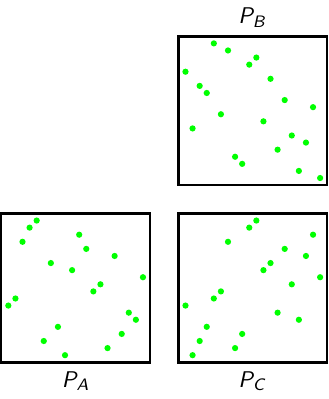}}

\caption{\label{f-mmult}%
Proof of \thref{th-mmult}: $P_A \boxdot P_B = P_C$}
\end{figure}

\begin{example}
\figref{f-mmult} illustrates the proof of \thref{th-mmult}
on a problem instance
with a solution generated by the Wolfram Mathematica software.
\sfigref{f-mmult-pab} shows a pair of input 
$20 \times 20$ permutation matrices $P_A$, $P_B$,
with nonzeros indicated by green circles.
\sfigref{f-mmult-pabx} shows the partitioning 
of the implicit $20 \times 20$ matrix distance multiplication problem
into two $10 \times 10$ subproblems.
The nonzeros in the two subproblems are shown respectively 
by filled red squares and hollow blue squares.
\sfigref{f-mmult-pabxc} shows a recursive step.
The lower and the upper separating paths 
are shown respectively in red and in blue
(note that the lower path is visually above the upper one;
the lower/upper terminology refers to the relative values of $\delta$,
rather than the visual position of the paths).
The nonzeros in the output matrix $P_C$ satisfying 
\eqref{eq-r-lo}, \eqref{eq-r-hi}, \eqref{eq-r-lohi} are shown respectively 
by filled red squares, hollow blue squares, and green circles;
note that overall, there are 20 such nonzeros,
and that they define a permutation matrix.
\sfigref{f-mmult-pabc} shows the output matrix $P_C$.
\end{example}

\mysection{Seaweed braids}
\label{s-braids}

Further understanding of the unit-Monge monoid
(and, by isomorphism, 
of the implicit distance multiplication monoid of permutation matrices)
can be gained via an algebraic formalism closely related to braid theory.
We refer the reader to \cite{Kassel_Turaev:08} 
for the background on classical braid theory.

\index{seaweed}%
\index{seaweed!braid}%
\index{seaweed!braid!width}%
Consider two sets of $n$ nodes each,
drawn on two parallel horizontal lines in the Euclidean plane.
We put the two node sets into one-to-one correspondence
by connecting them pairwise, in some order, 
with continuous monotone curves. 
(Here, a curve is called monotone, 
if its vertical projection is always directed downwards.)
These curves will be called \emph{seaweeds}%
\footnote{A tongue-in-cheek justification for this term is that
seaweed braids are like ordinary braids, except that they are sticky:
a pair of seaweeds, once they have crossed, cannot be fully untangled.}.
We call the resulting configuration 
a \emph{seaweed braid} of width $n$.

\begin{example}
In \figref{f-mmult-example}, \sfigref{f-mmult-seaweed} 
shows three different seaweed braids.
\end{example}

There is remarkable similarity between seaweed braids and classical braids.
However, there is also a crucial difference:
all crossings between seaweeds are ``level crossings'',
i.e.\ a pair of crossing seaweeds 
are not assumed to pass under/over one another as in classical braids.
We will also assume that all crossings are between exactly two seaweeds,
hence three or more seaweeds can never meet at a single point.

\index{seaweed!braid!reduced}%
In a seaweed braid, a given pair of seaweeds 
may cross an arbitrary number of times.
We call a seaweed braid \emph{reduced}, 
if every pair of its seaweeds cross at most once 
(i.e.\ either once, or not at all).

\index{seaweed!braid!multiplication}%
Similarly to classical braids, 
two seaweed braids of the same width can be \emph{multiplied}.
The product braid is obtained as follows.
First, we draw one braid above the other,
identifying the bottom nodes of the top braid
with the top nodes of the bottom braid.
Then, we join up each pair of seaweeds 
that became incident in the previous step.
Note that, even if both original seaweed braids were reduced,
their product may in general not be reduced.
\begin{example}
In \sfigref{f-mmult-seaweed}, 
the left-hand side is a product of two reduced seaweed braids.
In this product braid, some pairs of seaweeds cross twice,
hence it is not reduced.
\end{example}

Seaweed braids can be transformed
(and, in particular, unreduced braids can be reduced) 
according to a specific set of algebraic rules.
These rules are incorporated into the following formal definition. 
\begin{definition}
\label{def-monoid}
\index{seaweed!monoid}%
The \emph{seaweed monoid} $\mathcal T_n$
is a finitely presented monoid on $n$ generators:
$\mathit{id}$ (the identity element), $g_1$, $g_2$, \ldots, $g_{n-1}$.
The presentation of monoid $\mathcal T_n$ 
consists of the \emph{idempotence relations}
\begin{alignat}{2}
{}
&g_t^2 = g_t &\qquad &t \in [1:n-1] \label{rel-idem}\\
\intertext{the \emph{far commutativity relations}}
&g_t g_u = g_u g_t &&t,u \in [1:n-1], u-t \geq 2 \label{rel-farcomm}\\
\intertext{and the \emph{braid relations}}
&g_t g_u g_t = g_u g_t g_u &&t,u \in [1:n-1], u-t = 1 \label{rel-braid}
\end{alignat}
\end{definition}
Traditionally, this structure is also known as the 
\emph{$0$-Hecke monoid of the symmetric group} $H_0(\mathcal S_n)$,
or the \emph{Richardson--Springer monoid}
(for details, see e.g.\ Denton et al.\ \cite{Denton+:11}, 
Mazorchuk and Steinberg \cite{Mazorchuk_Steinberg:12},
Deng et al.\ \cite{Deng+:08}).

\begin{figure}[tb]
\centering
\subfloat[\label{f-relations-idem}Idempotence relations \eqref{rel-idem}]{%
\makebox[\textwidth]{\includegraphics{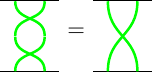}}}

\subfloat[\label{f-relations-farcomm}Far commutativity relations \eqref{rel-farcomm}]{%
\includegraphics{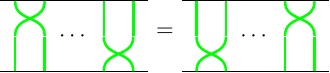}}

\subfloat[\label{f-relations-braid}Braid relations \eqref{rel-braid}]{%
\includegraphics{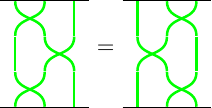}}
\caption{\label{f-relations} 
Defining relations of the seaweed monoid}
\end{figure}
The correspondence between elements of the seaweed monoid
and seaweed braids is as follows.
The monoid multiplication
(i.e.\ concatenation of words in the generators)
corresponds to the multiplication of seaweed braids.
The identity element $\mathit{id}$
corresponds to a seaweed braid where the top nodes
are connected to the bottom nodes in the left-to-right order, 
without any crossings.
Each of the remaining generators $g_t$ corresponds
to an \emph{elementary crossing},
i.e.\ to a seaweed braid where the only crossing
is between a pair of neighbouring seaweeds 
in half-integer positions $t^-$ and $t^+$.
\figref{f-relations} shows the defining relations
of the seaweed monoid \eqref{rel-idem}--\eqref{rel-braid}
in terms of seaweed braids.

\begin{example}
In \sfigref{f-mmult-seaweed}, 
the left-hand side is an unreduced product of two seaweed braids.
We now \emph{comb} the seaweeds by running through all their crossings,
respecting the top-to-bottom partial order of the crossings.
For each crossing, we check whether the two crossing seaweeds
have previously crossed above the current point.
If this is the case, 
then we undo the current crossing by removing it from the braid
and replacing it by two non-crossing seaweed pieces.
The correctness of this combing procedure is easy to prove
by the seaweed monoid relations \eqref{rel-idem}--\eqref{rel-braid}.
After all the crossings have been combed,
we obtain a reduced seaweed braid shown in the middle of \sfigref{f-mmult-seaweed}.
Another equivalent reduced seaweed braid in shown in the right-hand side.
\end{example}

A permutation matrix $P$ over $\ang{0:n \mid 0:n}$
can be represented by a seaweed braid as follows.
The row and column indices correspond respectively 
to the top and the bottom nodes, ordered from left to right.
A nonzero $P(\hi,\hj)=1$ corresponds to a seaweed
connecting top node $\hi$ and bottom node $\hj$.
For a given permutation, it is always possible to draw the seaweeds 
so that the resulting seaweed braid is reduced.
In general, this reduced braid will not be unique;
however, it turns out that all the reduced braids 
corresponding to the same permutation are equivalent.
We formalise this observation by the following lemma.
\begin{lemma}
\label{lm-nfact}
The seaweed monoid $\mathcal T_n$ consists of at most $n!$ distinct elements.
\end{lemma}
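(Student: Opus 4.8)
The plan is to show that every element of $\mathcal T_n$ can be written in a canonical normal form, and that there are at most $n!$ such forms. The natural choice is the normal form coming from the theory of Coxeter groups: since the relations \eqref{rel-farcomm}, \eqref{rel-braid} are exactly the braid relations of the symmetric group $\mathcal S_n$, and \eqref{rel-idem} replaces $g_t^2 = \mathit{id}$ by $g_t^2 = g_t$, the monoid $\mathcal T_n$ is the $0$-Hecke monoid $H_0(\mathcal S_n)$. First I would recall that every permutation $w \in \mathcal S_n$ has a reduced word $s_{t_1} s_{t_2} \cdots s_{t_\ell}$ in the Coxeter generators, where $\ell = \ell(w)$ is the number of inversions of $w$, and that any two reduced words for $w$ are connected by a sequence of braid moves \eqref{rel-farcomm}, \eqref{rel-braid} (Matsumoto's/Tits' theorem), which \emph{do not} use the idempotence relation. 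Hence the element $g_{t_1} g_{t_2} \cdots g_{t_\ell} \in \mathcal T_n$ depends only on $w$, not on the chosen reduced word; call it $\bar w$. This gives a well-defined map $\mathcal S_n \to \mathcal T_n$, $w \mapsto \bar w$, and I claim the $n!$ elements $\{\bar w : w \in \mathcal S_n\}$ exhaust $\mathcal T_n$.

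To prove surjectivity, I would take an arbitrary word $g_{t_1} \cdots g_{t_m}$ in the generators and reduce it to one of the $\bar w$'s by an explicit rewriting (``combing'') procedure, mirroring the combing of seaweed braids described just before the lemma. Process the letters left to right, maintaining a word that is always reduced (i.e.\ of the form $\bar w$ for the current partial product $w$). When appending $g_t$: if $\ell(w s_t) > \ell(w)$, then $w s_t$ has a reduced word ending in $s_t$, and by Tits' theorem we may rewrite the stored word using only \eqref{rel-farcomm}, \eqref{rel-braid} so that it literally ends in a form to which appending $g_t$ yields a reduced word for $w s_t$; the new stored element is $\overline{w s_t}$. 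If instead $\ell(w s_t) < \ell(w)$, then $w$ has a reduced word ending in $s_t$; rewrite the stored word (again only by braid moves) to end in $g_t$, so the appended letter creates $g_t g_t$, which by the idempotence relation \eqref{rel-idem} collapses to a single $g_t$, leaving the stored element equal to $\bar w$. In either case the invariant is restored, so after all $m$ letters the product equals $\bar w$ for some $w \in \mathcal S_n$. Therefore $\mathcal T_n = \{\bar w : w \in \mathcal S_n\}$ has at most $n!$ elements.

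The main obstacle is the legitimacy of the two rewriting steps --- namely, that whenever $w$ has \emph{some} reduced word ending in $s_t$, we can transform our \emph{particular} stored reduced word into one ending in $g_t$ using only the homogeneous relations \eqref{rel-farcomm} and \eqref{rel-braid}. This is precisely the word property (Tits' solution to the word problem for Coxeter groups): any two reduced words for the same element of $\mathcal S_n$ are connected by braid moves alone. I would cite this standard fact rather than reprove it, and then the combing argument above goes through cleanly. (Note the lemma only asserts an upper bound of $n!$; a matching lower bound --- that the $\bar w$ are pairwise distinct --- is not needed here and would follow later, e.g.\ from the faithful action on the unit-Monge monoid or on seaweed braids, but I would not address it in this proof.) One small point to verify in passing: the identity generator $\mathit{id}$ in the presentation of \defref{def-monoid} is redundant as a generator once we note $g_t g_t = g_t$ forces nothing about a separate identity; I would simply treat $\mathit{id}$ as the empty word, corresponding to $\bar e$ for the identity permutation $e$, so the count is unaffected.
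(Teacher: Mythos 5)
Your proof is correct and takes essentially the same route as the paper: reduce an arbitrary word (braid) to a reduced one using the idempotence relation, invoke the fact that any two reduced words for the same permutation are connected by the far commutativity and braid relations alone (Tits/Matsumoto, which the paper states in seaweed-braid language), and conclude that the map from permutations onto $\mathcal T_n$ is surjective, giving the bound $n!$. Your left-to-right combing argument merely spells out the reduction step that the paper calls straightforward, so no substantive difference.
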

\begin{proof}
It is straightforward to see that any seaweed braid 
can be transformed into a reduced one,
using relations \eqref{rel-idem}--\eqref{rel-braid}.
Then, any two reduced seaweed braids corresponding to the same permutation
can be transformed into one another,
using far commutativity \eqref{rel-farcomm}
and the braid relations \eqref{rel-braid}.
Therefore, each permutation corresponds to a single element of $\mathcal T_n$.
This mapping is surjective,
therefore the number of elements in $\mathcal T_n$ 
is at most the total number of permutations $n!$.
\end{proof}

We now establish a direct connection 
between elements of the seaweed monoid and permutation matrices.
The identity generator $\mathit{id}$
corresponds to the identity matrix $\Id$.
Each of the remaining generators $g_t$ corresponds 
to an \emph{elementary transposition matrix} $G_t$, defined as
\begin{gather*}
G_t(\hi,\hj) =
\begin{cases}
1-\Id(\hi,\hj) & \text{if $\hi,\hj \in \brc{t^-,t^+}$}\\
\Id(\hi,\hj) & \text{otherwise}
\end{cases}
\end{gather*}

\begin{lemma}
\label{lm-gen}
The set $\bigbrc{G_t^\Sigma}$, $t \in \bra{1:n-1}$, generates
the full distance multiplication monoid of simple unit-Monge matrices.
\end{lemma}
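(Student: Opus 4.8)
The plan is to prove, by induction on the number of inversions of $P$ (that is, on the number of $\gtrless$-comparable pairs of nonzeros of $P$), the following statement: for every permutation matrix $P$ over $\ang{0:n \mid 0:n}$, the simple unit-Monge matrix $P^\Sigma$ can be written as a $\odot$-product of matrices from $\bigbrc{G_t^\Sigma}$, the empty product being the monoid identity $\Id^\Sigma$. Since by \defref{def-implicit-simple} every element of the monoid is of the form $P^\Sigma$, this yields the lemma. The base case $P = \Id$ is immediate: $P^\Sigma = \Id^\Sigma$ is the empty product.

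The technical core is a single computation, describing the product $G_t^\Sigma \odot P^\Sigma$. First I would observe, directly from \defrefs{def-distribution}, \ref{def-density}, that $G_t^\Sigma$ and $\Id^\Sigma$ differ in exactly one entry, $G_t^\Sigma(t,t) = 1$ versus $\Id^\Sigma(t,t) = 0$; this is a short four-case count of how many of the two ``swapped'' nonzeros of $G_t$, as opposed to the two ``fixed'' ones of $\Id$, are $\gtrless$-dominated by a given integer point. Substituting this single-entry perturbation into \defref{def-mmult} and using that $\Id^\Sigma$ is the identity of the unit-Monge monoid, one sees that $G_t^\Sigma \odot P^\Sigma$ coincides with $P^\Sigma$ outside row $t$, while in row $t$ the entry $(t,k)$ is increased by $1$ exactly when the minimum defining $(\Id^\Sigma \odot P^\Sigma)(t,k)$ is attained \emph{only} at $j = t$. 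Unwinding the unit-monotonicity (in $i$) of the columns of $P^\Sigma$ shows that this happens precisely for those $k$ lying strictly between the column of the nonzero of $P$ in row $t^-$ and the column of the nonzero in row $t^+$. Since $G_t^\Sigma \odot P^\Sigma$ is again simple unit-Monge by \thref{th-monoid-umonge}, it equals $Q^\Sigma$ for a permutation matrix $Q$, which by the above agrees with $P$ off rows $t^-$ and $t^+$, and hence is either $P$ itself or $P$ with these two rows interchanged. Comparing these two possibilities against the description of the increased entries yields the clean conclusion: writing $G_t P$ for the ordinary matrix product (i.e.\ $P$ with rows $t^-,t^+$ interchanged), we have $G_t^\Sigma \odot P^\Sigma = (G_t P)^\Sigma$ when the nonzeros of $P$ in rows $t^-$ and $t^+$ are $\ll$-comparable (``$P$ has no inversion at $t$''), and $G_t^\Sigma \odot P^\Sigma = P^\Sigma$ otherwise.

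Granting this identity, the inductive step is short. Assume $P \neq \Id$ and pick any $t \in \bra{1:n-1}$ at which $P$ has an inversion (some such $t$ exists, since $P \neq \Id$). Put $P' = G_t P$; then $P'$ has one fewer inversion than $P$ and, by construction, no inversion at $t$. By the induction hypothesis $P'^\Sigma = G_{t_1}^\Sigma \odot \cdots \odot G_{t_{\ell-1}}^\Sigma$ for suitable $t_1, \ldots, t_{\ell-1}$. As $G_t$ is an involution under ordinary matrix multiplication and $P'$ has no inversion at $t$, the identity above gives $G_t^\Sigma \odot P'^\Sigma = (G_t P')^\Sigma = P^\Sigma$, whence $P^\Sigma = G_t^\Sigma \odot G_{t_1}^\Sigma \odot \cdots \odot G_{t_{\ell-1}}^\Sigma$, completing the induction.

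I expect the main obstacle to be the computation of $G_t^\Sigma \odot P^\Sigma$ itself, in particular the half-integer/integer index bookkeeping needed to pin down exactly which entries of row $t$ increase, together with the step that identifies the resulting permutation matrix as precisely $G_t P$ rather than merely ``some permutation matrix agreeing with $P$ off rows $t^-, t^+$''. Invoking \thref{th-monoid-umonge} to know a priori that the product is simple unit-Monge is what keeps this last step under control. One could instead argue pictorially with seaweed braids (stacking an elementary crossing $g_t$ on top of a reduced braid for $P$ and combing), but since the isomorphism between the matrix and braid pictures has not yet been established at this point, the direct matrix computation seems the more appropriate route.
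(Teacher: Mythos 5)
Your proof is correct, and at its core it is the same argument as the paper's: the paper also peels off one generator at an adjacent descent, writing $P^\Sigma = G_t^\Sigma \odot Q^\Sigma$ where its $Q$ (the reduced seaweed braid for $P$ with its first crossing removed) is exactly your $G_t P$, and it verifies this identity by the same kind of entrywise analysis of the minimum in row $t$, with induction amounting to your induction on the number of inversions. The only differences are presentational: the paper selects $t$ as the first crossing of a reduced seaweed braid rather than by inversion counting, and it evaluates row $t$ by explicitly splitting the minimum over $j \leq t-1$, $j = t$, $j \geq t+1$ rather than via your single-entry perturbation $G_t^\Sigma(t,t) = \Id^\Sigma(t,t) + 1$; your braid-free packaging (and the correct but unused absorption case $G_t^\Sigma \odot P^\Sigma = P^\Sigma$ at an inversion) changes nothing essential.
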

\begin{proof}
Let $P$ be a permutation matrix.
Consider an arbitrary reduced seaweed braid corresponding to $P$,
and let $t$ be the position of its first elementary crossing.
Consider the truncated seaweed braid, 
obtained by removing this seaweed crossing.
This braid is still reduced, and such that the pair of seaweeds 
originating in $t^-$, $t^+$ do not cross.
Let this pair of seaweeds terminate 
at indices $\hk_0$, $\hk_1$, where $\hk_0 < \hk_1$.
Let $Q$ be the permutation matrix corresponding 
to the truncated seaweed braid.
We have
\begin{gather*}
P(t^-,\hk_1) = P(t^+,\hk_0) = 1\\
Q(t^-,\hk_0) = Q(t^+,\hk_1) = 1
\end{gather*}

We will now show that $P = G_t \boxdot Q$ or,
equivalently $P^\Sigma = G_t^\Sigma \odot Q^\Sigma$.
The lemma statement then follows by induction.

Note that $G_t^\Sigma(i,j) = \Id^\Sigma(i,j)$
and $Q^\Sigma(i,j) = P^\Sigma(i,j)$
for all $i \in \bra{0:n}$, $i \neq t$,
and for all $j \in \bra{0:n}$.
Therefore, we have 
\begin{gather*}
\bigpa{G_t^\Sigma \odot Q^\Sigma} (i,k) = 
\bigpa{\Id^\Sigma \odot Q^\Sigma} (i,k) = 
Q^\Sigma (i,k) =
P^\Sigma (i,k)
\end{gather*}
for all $i \in \bra{0:n}$, $i \neq t$,
and for all $k \in \bra{0:n}$.

It remains to consider the case $i=t$.
Note that $G_t^\Sigma(t,j) = \Id^\Sigma(t,j)$
for all $j \in \bra{0:n}$, $j \neq t$.
Let $k \in \bra{0:n}$.
We have
\begin{gather}
\label{eq-gen-min}
\bigpa{G_t^\Sigma \odot Q^\Sigma} (t,k) =
\min_{j \in \bra{0:n}} \bigpa{G_t^\Sigma(t,j) + Q^\Sigma(j,k)}
\end{gather}
By definition of the distribution matrix (\defref{def-distribution}), we have
\begin{gather*}
G_t^\Sigma(t,t-1) = 0\\
G_t^\Sigma(t,t) = G_t^\Sigma(t,t+1) = 1\\
0 \leq Q^\Sigma(t-1,k) - Q^\Sigma(t,k) \leq 1\\
0 \leq Q^\Sigma(t,k) - Q^\Sigma(t+1,k) \leq 1
\end{gather*}
Hence, we have
\begin{multline*}
G_t^\Sigma(t,t) + Q^\Sigma(t,k) = 1 + Q^\Sigma(t,k) \geq {} \\
0 + Q^\Sigma(t-1,k) = G_t^\Sigma(t,t-1) + Q^\Sigma(t-1,k)
\end{multline*}
and, analogously,
\begin{gather*}
G_t^\Sigma(t,t) + Q^\Sigma(t,k) \geq
G_t^\Sigma(t,t+1) + Q^\Sigma(t+1,k)
\end{gather*}
We have established that the value under the minimum 
operator in \eqref{eq-gen-min} for $j=t$
is always no less than the values for both $j=t-1$ and $j=t+1$.
Therefore, the minimum is never attained solely at $j=t$, 
so we may assume $j \neq t$.
We now consider two cases: either $j \in \bra{0:t-1}$, or $j \in \bra{t+1:n}$.

For $j \in \bra{0:t-1}$, we have $G_t^\Sigma(t,j) = 0$.
Therefore,
\begin{gather*}
\min_{j \in \bra{0:t-1}} \bigpa{G_t^\Sigma(t,j) + Q^\Sigma(j,k)} = {}\\
\min_{j \in \bra{0:t-1}} \bigpa{0 + Q^\Sigma(j,k)} = {} 
\why{attained at $j=t-1$}\\
Q^\Sigma(t-1,k) = P^\Sigma(t-1,k)
\end{gather*}
Similarly, for $j \in \bra{t+1:n}$, we have $G_t^\Sigma(t,j) = j-t$.
Therefore, 
\begin{gather*}
\min_{j \in \bra{t+1:n}} \bigpa{G_t^\Sigma(t,j) + Q^\Sigma(j,k)} = {}\\
\min_{j \in \bra{t+1:n}} \bigpa{j-t + Q^\Sigma(j,k)} = {} 
\why{attained at $j=t+1$}\\
1 + Q^\Sigma(t+1,k) = 1 + P^\Sigma(t+1,k)
\end{gather*}
Substituting into \eqref{eq-gen-min}, we now have
\begin{gather*}
\bigpa{G_t^\Sigma \odot Q^\Sigma} (t,k) =
\min \bigpa{P^\Sigma(t-1,k), 1 + P^\Sigma(t+1,k)}
\end{gather*}
Recall that $P(t^-,\hk_1) = P(t^+,\hk_0) = 1$.
We have
\begin{alignat*}{2}
{}
&P^\Sigma(t-1,k) = P^\Sigma(t,k) = P^\Sigma(t+1,k)          &\quad 
&\text{for $k < \hk_0$}\\
&P^\Sigma(t-1,k) = P^\Sigma(t,k) = 1 + P^\Sigma(t+1,k)      &\quad 
&\text{for $\hk_0 < k < \hk_1$}\\
&P^\Sigma(t-1,k) - 1 = P^\Sigma(t,k) = 1 + P^\Sigma(t+1,k)  &\quad 
&\text{for $\hk_1 < k$}
\end{alignat*}
In all three above cases, we have 
\begin{gather*}
\min \bigpa{P^\Sigma(t-1,k), 1 + P^\Sigma(t+1,k)} = P^\Sigma(t,k)
\end{gather*}
which completes the proof.
\end{proof}

We are now able to establish a formal connection
between the unit-Monge monoid and the seaweed monoid.
\begin{theorem}
\label{th-iso}
The distance multiplication monoid of $n \times n$ simple unit-Monge matrices
is isomorphic to the seaweed monoid $\mathcal T_n$.
\end{theorem}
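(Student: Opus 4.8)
The plan is to construct the isomorphism explicitly, in the direction $\mathcal T_n \to \mathcal M_n$, where I write $\mathcal M_n$ for the distance multiplication monoid of $n \times n$ simple unit-Monge matrices. Define a map on generators by $\mathit{id} \mapsto \Id^\Sigma$ and $g_t \mapsto G_t^\Sigma$ for $t \in \bra{1:n-1}$; each $G_t$ is a permutation matrix, so $G_t^\Sigma \in \mathcal M_n$ by \defref{def-implicit-simple}. By the universal property of a finitely presented monoid, this assignment extends to a monoid homomorphism $\phi \colon \mathcal T_n \to \mathcal M_n$ provided the images satisfy the defining relations \eqref{rel-idem}--\eqref{rel-braid}, with monoid multiplication replaced by $\odot$. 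So the first task is to verify these three relations for the matrices $G_t^\Sigma$.

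This verification is a finite computation. Since $G_t$ coincides with the identity matrix outside the $2 \times 2$ block in rows and columns $t^-, t^+$, each $\odot$-product occurring in \eqref{rel-idem}--\eqref{rel-braid} is governed only by a bounded window of indices around $t$ (and, in \eqref{rel-farcomm}--\eqref{rel-braid}, also around $u$), so the three relations reduce respectively to identities among simple unit-Monge matrices of width $2$, $3$ and $4$. The far commutativity relation \eqref{rel-farcomm} is immediate, as for $u-t \geq 2$ the two perturbed blocks are disjoint. The idempotence relation $G_t^\Sigma \odot G_t^\Sigma = G_t^\Sigma$ and the braid relation $G_t^\Sigma \odot G_u^\Sigma \odot G_t^\Sigma = G_u^\Sigma \odot G_t^\Sigma \odot G_u^\Sigma$ for $u-t = 1$ are checked by direct evaluation of \defref{def-mmult} on the relevant $3 \times 3$ and $4 \times 4$ sub-blocks; combinatorially, they record the facts that stacking two elementary crossings in the same position collapses to a single crossing, and that the two ways of combing a three-strand seaweed braid agree.

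Once $\phi$ is known to be a well-defined monoid homomorphism, surjectivity is exactly the content of \lmref{lm-gen}: the matrices $G_t^\Sigma$ generate all of $\mathcal M_n$. For injectivity I argue by cardinality. By \defref{def-implicit-simple} the assignment $P \mapsto P^\Sigma$ is a bijection between $n \times n$ permutation matrices and elements of $\mathcal M_n$, so $|\mathcal M_n| = n!$; on the other hand \lmref{lm-nfact} gives $|\mathcal T_n| \leq n!$. A surjective map from a set of cardinality at most $n!$ onto a set of cardinality exactly $n!$ is necessarily a bijection. Hence $\phi$ is a bijective monoid homomorphism, i.e.\ an isomorphism $\mathcal T_n \cong \mathcal M_n$, which by \thref{th-monoid-umonge} is precisely the distance multiplication monoid of $n \times n$ simple unit-Monge matrices.

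The main obstacle is the relation-checking step: one must confirm that the $\odot$-products of the generators $G_t^\Sigma$ genuinely satisfy idempotence and the braid relation. This is where the ``sticky'', non-group behaviour of the monoid (in contrast with the symmetric group $\mathcal S_n$, where $G_t^2 = \Id$) actually manifests, and it has to be pinned down by explicit computation rather than inherited from classical braid theory. Everything afterwards — surjectivity from \lmref{lm-gen} and injectivity from the $n!$ count — is formal.
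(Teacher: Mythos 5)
Your proposal is correct and follows essentially the same route as the paper's own proof: map $g_t \mapsto G_t^\Sigma$, check the defining relations \eqref{rel-idem}--\eqref{rel-braid} to get a homomorphism, use \lmref{lm-gen} for surjectivity, and conclude by the $n!$ cardinality count via \lmref{lm-nfact}. The only difference is that you spell out the (local, finite) relation check that the paper dismisses as ``straightforward''.
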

\begin{proof}
We have already established a bijection between the generators of both monoids:
a generator simple unit-Monge matrix $G_t^\Sigma$ corresponds to
a generator $g_t$ of the seaweed monoid $\mathcal T_n$.
It is straightforward to check that relations \eqref{rel-idem}--\eqref{rel-braid}
are verified by matrices $G_t^\Sigma$,
therefore the bijection on the generators defines a homomorphism
from the seaweed monoid to the unit-Monge matrix monoid.
By \lmref{lm-gen}, this homomorphism is surjective,
hence the cardinality of $\mathcal T_n$ 
is at least the number 
of all simple unit-Monge matrices of size $n$, equal to $n!$.
However, by \lmref{lm-nfact}, 
the cardinality of $\mathcal T_n$ is at most $n!$.
Thus, the cardinality of $\mathcal T_n$ is exactly $n!$,
and the two monoids are isomorphic.
\end{proof}

\begin{example}
In \figref{f-mmult-example}, 
the seaweed braids shown in \sfigref{f-mmult-seaweed} 
correspond to the implicit matrix distance product $P_A \boxdot P_B = P_C$
in \sfigref{f-mmult-matrix}.
\end{example}

The seaweed monoid is closely related 
to some other well-known algebraic structures:
\begin{itemize}
\item by replacing the idempotence relations \eqref{rel-idem}
with involution relations $g_t^2 = \mathit{id}$,
we obtain the \emph{Coxeter presentation} of the symmetric group;
\item by removing the idempotence relations \eqref{rel-idem},
and keeping far commutativity \eqref{rel-farcomm}
and braid relations \eqref{rel-braid}, we obtain
the classical \emph{positive braid monoid} 
(see e.g.\ \cite[Section 6.5]{Kassel_Turaev:08});
\item by removing the braid relations \eqref{rel-braid}, 
and keeping idempotence \eqref{rel-idem} 
and far commutativity \eqref{rel-farcomm},
we obtain the \emph{locally free idempotent monoid} 
\cite{Vershik+:00} (see also \cite{Esyp+:05});
\item by introducing the generators' inverses $g_t^{-1}$,
and replacing the idempotence relations \eqref{rel-idem}
with cancellation relations $g_t g_t^{-1} = \mathit{id}$,
we obtain the classical \emph{braid group}.
\end{itemize}

A generalisation of the seaweed monoid is given by 
$0$-Hecke monoids of general Coxeter groups,
also known as \emph{Coxeter monoids}.
These monoids arise naturally as subgroup monoids in groups.
The theory of Coxeter monoids 
can be traced back to Bourbaki \cite{Bourbaki:68},
and was developed in 
\cite{Tsaranov:90,Richardson_Springer:90,Fomin_Greene:98,Buch+:08}.
A further generalisation to \emph{$\mathcal J$-trivial monoids} 
has been studied by Denton et al.\ \cite{Denton+:11}.
The contents of this chapter can be regarded as a first step
in the algorithmic study of such general classes of monoids.

\mysection{Bruhat order}
\label{s-bruhat}

Given a permutation, it is natural to ask how well-sorted it is.
In particular, a permutation may be 
either fully sorted (the identity permutation),
or fully anti-sorted (the reverse identity permutation),
or anything in between.
More generally, given two permutations, it is natural to ask 
whether, in some sense, one is ``more sorted'' than the other.

Let $P_A$, $P_B$ be permutation matrices over $\ang{0:n \mid 0:n}$.
A classical ``degree-of-sortedness'' comparison
is given by the following partial order
(see e.g.\ B\'ona \cite{Bona:04}, Hammett and Pittel \cite{Hammett_Pittel:08}, 
and references therein).
\begin{definition}
\label{def-bruhat}
\index{matrix!permutation!Bruhat order}%
\index{$\preceq$: Bruhat order}%
Matrix $P_A$ is lower than matrix $P_B$ in the \emph{Bruhat order},
$P_A \preceq P_B$,
if $P_A$ can be transformed to $P_B$ by a sequence of \emph{anti-sorting} steps.
Each such step substitutes a (not necessarily contiguous)
submatrix of the form $\bigbra{\smat{1&0\\0&1}}$
by a submatrix of the form $\bigbra{\smat{0&1\\1&0}}$.
\end{definition}
Informally, $P_A \preceq P_B$,
if $P_A$ defines a ``more sorted'' permutation than $P_B$.
More precisely, $P_A \preceq P_B$, 
if the permutation defined by $P_A$ can be transformed 
into the one defined by $P_B$ by successive pairwise anti-sorting
between arbitrary pairs of elements.
Symmetrically, the permutation defined by $P_B$ can be transformed 
into the one defined by $P_A$ by successive pairwise sorting
(or, equivalently, by an application of a comparison network;
see e.g.\ Knuth \cite{Knuth:98_3}).

Bruhat order is an important group-theoretic concept,
which can be generalised to arbitrary Coxeter groups
(see Bj\"orner and Brenti \cite{Bjorner_Brenti:05},
Denton et al.\ \cite{Denton+:11}
for more details and further references).

Many equivalent definitions of the Bruhat order on permutations are known;
see e.g.\ Bj\"orner and Brenti \cite{Bjorner_Brenti:05},
Drake et al.\ \cite{Drake+:04}, 
Johnson and Nasserasr \cite{Johnson_Nasserasr:10}.
Probably the simplest one, 
known as \emph{Ehresmann's tableau criterion} \cite{Hammett_Pittel:08}
or \emph{dot criterion} \cite{Bjorner_Brenti:05}, 
is as follows.
\begin{theorem}
\label{th-bruhat}
We have $P_A \preceq P_B$,
if and only if $P_A^\Sigma \leq P_B^\Sigma$ elementwise.
\end{theorem}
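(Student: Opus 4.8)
The plan is to prove the two directions of the equivalence separately, exploiting the characterisation of the distribution matrix $P^\Sigma(i,j)$ as the number of nonzeros of $P$ that are $\gtrless$-dominated by the integer point $(i,j)$ (equivalently, the count of nonzeros $P(\hi,\hj)=1$ with $\hi > i$, $\hj < j$, using the matrix-indexing convention). Under this reading, each anti-sorting step of \defref{def-bruhat} removes a nonzero at some $(\hi_0,\hj_0)$ and a nonzero at some $(\hi_1,\hj_1)$ with $\hi_0 < \hi_1$, $\hj_0 < \hj_1$ (an $\ll$-comparable pair), and replaces them by nonzeros at $(\hi_0,\hj_1)$ and $(\hi_1,\hj_0)$ (an $\ll$-antichain). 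All other nonzeros are unchanged.

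\emph{Forward direction ($P_A \preceq P_B \implies P_A^\Sigma \le P_B^\Sigma$).} It suffices, by transitivity, to check that a single anti-sorting step can only increase each entry of the distribution matrix. Fix an integer point $(i,j)$ and consider the four nonzeros involved: the pair $(\hi_0,\hj_0)$, $(\hi_1,\hj_1)$ before the step, and $(\hi_0,\hj_1)$, $(\hi_1,\hj_0)$ after. I would do a short case analysis on the position of $(i,j)$ relative to the rectangle $\bra{\hi_0:\hi_1 \mid \hj_0:\hj_1}$: for each of the relevant regions, compare how many of the ``old'' nonzeros are $\gtrless$-dominated by $(i,j)$ against how many of the ``new'' ones are. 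The key inequality is that the new configuration $\{(\hi_0,\hj_1),(\hi_1,\hj_0)\}$ is $\gtrless$-dominated at least as often as the old configuration $\{(\hi_0,\hj_0),(\hi_1,\hj_1)\}$ by any fixed point; this is a $2\times2$ verification essentially identical to the Monge inequality, and is where the ``anti-sorting increases dominance counts'' intuition is made precise. Summing over all nonzeros then gives $P_A^\Sigma(i,j) \le P_B^\Sigma(i,j)$ for all $i,j$.

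\emph{Reverse direction ($P_A^\Sigma \le P_B^\Sigma \implies P_A \preceq P_B$), which I expect to be the main obstacle.} Here I would argue by induction on the number of inversions of $P_B$ relative to $P_A$ (or equivalently on the total sum $\sum_{i,j}\bigpa{P_B^\Sigma(i,j) - P_A^\Sigma(i,j)}$, which is a nonnegative integer under the hypothesis, and is zero iff $P_A = P_B$ since a matrix is recoverable from its distribution matrix via $A^{\Sigma\square}$ and \defref{def-simple}). If this quantity is positive, I must exhibit a single \emph{sorting} step applicable to $P_B$ — i.e.\ an $\ll$-antichain pair of nonzeros in $P_B$ that can be sorted — producing a permutation matrix $P_B'$ with $P_A \preceq$-reachable from, sorry, with $P_A^\Sigma \le P_B'^\Sigma \le P_B^\Sigma$ and strictly fewer inversions. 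Concretely: since $P_A \neq P_B$, pick an integer point $(i,j)$ witnessing $P_A^\Sigma(i,j) < P_B^\Sigma(i,j)$; this forces $P_B$ to have, within the quadrant strictly ``above-left'' of some cell, a nonzero pair forming a descent that $P_A$ does not have. The careful part is choosing a \emph{minimal} such descent — a pair of nonzeros of $P_B$ in positions $(\hi_0,\hj_1)$, $(\hi_1,\hj_0)$ with $\hi_0 < \hi_1$, $\hj_0 < \hj_1$ and with no nonzero of $P_B$ in the open rectangle between them — and then showing that sorting this particular pair keeps $P_B'^\Sigma \ge P_A^\Sigma$ everywhere. The inequality $P_B'^\Sigma \ge P_A^\Sigma$ is delicate precisely because a sorting step \emph{decreases} one entry region of the distribution matrix, so I must use the minimality of the chosen descent together with the integrality of all entries to argue that the single cell whose value drops still stays $\ge$ the corresponding value of $P_A^\Sigma$. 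Once this local step is established, induction closes the argument. I would also mention that this reverse direction is exactly the statement that Bruhat order coincides with the componentwise order on distribution matrices, which is the classical Ehresmann criterion, so the structure of the induction can be cross-checked against the standard proof for permutations.
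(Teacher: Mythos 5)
The paper itself does not prove this statement: it records it as the classical Ehresmann ``dot criterion'' and cites Bj\"orner--Brenti, so you are attempting to supply an argument the paper omits. Your forward direction is correct and complete: writing $x_0=[\hi_0>i]$, $x_1=[\hi_1>i]$, $y_0=[\hj_0<j]$, $y_1=[\hj_1<j]$, an anti-sorting step changes the dominance count at $(i,j)$ by $(x_1-x_0)(y_0-y_1)\geq 0$, which is exactly your ``Monge-like'' $2\times 2$ check, and summing over steps gives $P_A^\Sigma\leq P_B^\Sigma$.

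The reverse direction, however, has a genuine gap: the one step you defer is the entire content of the theorem. Sorting a descent pair of $P_B$ at $(\hi_0,\hj_1)$, $(\hi_1,\hj_0)$ decreases $P_B^\Sigma$ by exactly $1$ at precisely the integer points $(i,j)$ with $\hi_0<i<\hi_1$ and $\hj_0<j<\hj_1$, and leaves all other entries unchanged. So what you must produce is a descent of $P_B$ such that \emph{every} integer point in the interior of its rectangle satisfies the strict inequality $P_A^\Sigma(i,j)<P_B^\Sigma(i,j)$; minimality of the rectangle (no nonzero of $P_B$ inside) does not by itself guarantee this, and an arbitrary minimal descent can fail. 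For instance, with $P_B$ the reverse identity $321$ and $P_A=312$ we have $P_A^\Sigma\leq P_B^\Sigma$, but sorting the minimal descent of $P_B$ formed by its nonzeros in rows $1$ and $2$ yields $231$, whose distribution matrix is no longer $\geq P_A^\Sigma$ (it drops below at the point where $312$ has its surplus); only the other minimal descent works. Your sketch ties the choice to ``a witness point'', but a single witness only guarantees strictness at one interior point, not throughout the rectangle, so you still need an explicit selection rule (the classical one takes the first row where the two permutations differ and a carefully chosen pair of columns) together with a proof that all interior points of the chosen rectangle are strict. Until that construction and verification are supplied, the induction does not close, and this is exactly the part for which the paper falls back on the cited literature.
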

\begin{proof}
Straightforward from the definitions; see \cite{Bjorner_Brenti:05}.
\end{proof}
\begin{example}
We have
\begin{gather*}
\bmat{1 & 0 & 0 \\ 0 & 0 & 1 \\ 0 & 1 & 0} = 
\bmat{0 & 1 & 2 & 3 \\ 0 & 0 & 1 & 2 \\ 0 & 0 & 1 & 1 \\ 0 & 0 & 0 & 0}^\square \preceq
\bmat{0 & 1 & 2 & 3 \\ 0 & 1 & 2 & 2 \\ 0 & 0 & 1 & 1 \\ 0 & 0 & 0 & 0}^\square = 
\bmat{0 & 0 & 1 \\ 1 & 0 & 0 \\ 0 & 1 & 0}
\end{gather*}
Note that the permutation matrix on the right can be obtained from the one on the left
by anti-sorting the $2 \times 2$ submatrix at the intersection of the top two rows
with the leftmost and rightmost columns.

We also have
\begin{gather*}
\bmat{1 & 0 & 0 \\ 0 & 0 & 1 \\ 0 & 1 & 0} = 
\bmat{0 & 1 & 2 & 3 \\ 0 & 0 & 1 & 2 \\ 0 & 0 & 1 & 1 \\ 0 & 0 & 0 & 0}^\square \qquad
\bmat{0 & 1 & 0 \\ 1 & 0 & 0 \\ 0 & 0 & 1} = 
\bmat{0 & 1 & 2 & 3 \\ 0 & 1 & 1 & 2 \\ 0 & 0 & 0 & 1 \\ 0 & 0 & 0 & 0}^\square
\end{gather*}
The above two permutation matrices are incomparable in the Bruhat order.
\end{example}
\thref{th-bruhat} immediately gives one an algorithm 
for deciding whether two permutations 
are Bruhat-comparable in time $O(n^2)$.
To the author's knowledge, no asymptotically faster algorithm
for deciding Bruhat comparability has been known so far.

To demonstrate an application of our techniques,
we now give a new characterisation of the Bruhat order 
in terms of the unit-Monge monoid (or, equivalently, the seaweed monoid).
This characterisation will give us a substantially faster algorithm 
for deciding Bruhat comparability.

Intuitively, the connection between the Bruhat order 
and seaweeds is as follows.
Consider matrix $P_A$ and the rotated matrix $P_A^R$.
The matrix rotation induces a one-to-one correspondence
between the nonzeros of $P_A^R$ and $P_A$, and therefore also between 
individual seaweeds in their reduced seaweed braids.
A pair of seaweeds cross in a reduced braid of $P_A^R$, if and only if 
the corresponding pair of seaweeds do not cross in a reduced braid of $P_A$.
Now consider the product braid $P_A^R \boxdot P_A$,
where each seaweed is made up of two mutually corresponding seaweeds
from $P_A^R$ and $P_A$, respectively.
Every pair of seaweeds in braid $P_A^R \boxdot P_A$
either cross in the top sub-braid $P_A^R$,
or in the bottom sub-braid $P_A$, but not in both.
Therefore, the product braid is a reduced seaweed braid, 
in which every pair of seaweeds cross exactly once.
Thus, we have $P_A^R \boxdot P_A = \Id^R$.

Now suppose $P_A \preceq P_B$.
By \thref{th-bruhat}, we have $P_A^\Sigma \leq P_B^\Sigma$ elementwise.
Therefore, by \defref{def-mmult}, 
$P_A^{R\Sigma} \odot P_A^\Sigma \leq P_A^{R\Sigma} \odot P_B^\Sigma$ elementwise,
hence by \thref{th-bruhat}, we have $P_A^R \boxdot P_A \preceq P_A^R \boxdot P_B$.
However, as argued above, $P_A^R \boxdot P_A = \Id^R$,
which is the highest possible permutation matrix in the Bruhat order,
corresponding to the reverse identity permutation.
Therefore, $P_A^R \boxdot P_B = \Id^R$.
We thus have a necessary condition for $P_A \preceq P_B$.
It turns out that this condition is also sufficient,
giving us a new, computationally efficient criterion for Bruhat comparability.
\begin{theorem}
\label{th-bruhat-fast}
We have $P_A \preceq P_B$, if and only if $P_A^R \boxdot P_B = \Id^R$.
\end{theorem}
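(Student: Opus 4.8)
The \textup{only if} implication is essentially already proved in the discussion preceding the statement (via \thref{th-bruhat}, monotonicity of $\boxdot$, and the identity $P_A^R \boxdot P_A = \Id^R$), so the plan is to concentrate on the \textup{if} direction: assuming $P_A^R \boxdot P_B = \Id^R$, I want to conclude $P_A \preceq P_B$. By \thref{th-bruhat} this is the same as $P_A^\Sigma \le P_B^\Sigma$ entrywise, and I would argue by contradiction. Suppose $P_A^\Sigma(i_0,j_0) > P_B^\Sigma(i_0,j_0)$ for some $i_0, j_0 \in \bra{0:n}$; since any distribution matrix vanishes on its leftmost column, necessarily $j_0 \ge 1$. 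The goal is then to exhibit a single entry of $P_A^{R\Sigma} \odot P_B^\Sigma$ that is strictly smaller than the corresponding entry of $\Id^{R\Sigma}$, which contradicts $P_A^R \boxdot P_B = \Id^R$.

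The entry I would test is $(n-j_0,\,j_0)$, bounding the distance product from above by the single term at the pivot $j = i_0$ in the defining minimum:
\begin{multline*}
(P_A^{R\Sigma} \odot P_B^\Sigma)(n-j_0,\,j_0)
\le P_A^{R\Sigma}(n-j_0,\,i_0) + P_B^\Sigma(i_0,\,j_0) \\
< P_A^{R\Sigma}(n-j_0,\,i_0) + P_A^\Sigma(i_0,\,j_0),
\end{multline*}
the last inequality holding because both $\Sigma$-matrices are integer and $P_B^\Sigma(i_0,j_0) \le P_A^\Sigma(i_0,j_0) - 1$. What makes the row $n-j_0$ and the pivot $j = i_0$ the right choices is the counting identity $P_A^{R\Sigma}(n-j_0,\,i_0) + P_A^\Sigma(i_0,\,j_0) = j_0$: using $P_A^R(\hi,\hj) = P_A(\hj,n-\hi)$, the first summand counts the nonzeros of $P_A$ lying in its top $i_0$ rows and leftmost $j_0$ columns, the second counts those in its bottom $n-i_0$ rows and leftmost $j_0$ columns, and together they exhaust precisely the $j_0$ nonzeros occupying the leftmost $j_0$ columns. (Equivalently, I would first record the rotation identity $P^{R\Sigma}(i,j) = (n-i) - P^\Sigma(j,n-i)$, valid for every permutation matrix $P$ over $\ang{0:n \mid 0:n}$, and specialise it.) Combining, $(P_A^{R\Sigma} \odot P_B^\Sigma)(n-j_0,j_0) < j_0 = \min\bigpa{n-(n-j_0),\,j_0} = \Id^{R\Sigma}(n-j_0,j_0)$, the contradiction sought.

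The only genuinely non-formal point is this counting identity, together with the observation it encodes — that the minimum defining $(P_A^{R\Sigma} \odot P_A^\Sigma)(n-j_0,j_0) = j_0$ is attained at $j = i_0$; everything else (invoking \thref{th-bruhat}, using monotonicity of $\boxdot$ with respect to the entrywise order on $\Sigma$-matrices, and reading off $\Id^{R\Sigma}(i,j) = \min(n-i,j)$) is routine and already spelled out in the surrounding text. I would also remark that the same term-by-term comparison re-proves the \textup{only if} direction in one line: if $P_A^\Sigma \le P_B^\Sigma$ then $\Id^{R\Sigma} = P_A^{R\Sigma} \odot P_A^\Sigma \le P_A^{R\Sigma} \odot P_B^\Sigma \le \Id^{R\Sigma}$, where the final step uses \thref{th-monoid-umonge} and the maximality of $\Id^{R\Sigma}$ among distribution matrices of permutation matrices, forcing equality throughout.
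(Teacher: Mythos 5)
Your proposal is correct and is essentially the paper's own argument: your rotation/counting identity $P_A^{R\Sigma}(n-j_0,i_0)+P_A^\Sigma(i_0,j_0)=j_0$ is exactly the identity the paper derives at the start of its proof (rewriting $P_A^{R\Sigma}(i,j)+P_B^\Sigma(j,n-i)$ as $P_B^\Sigma(j,n-i)-P_A^\Sigma(j,n-i)+n-i$), and your contradiction at the product entry $(n-j_0,j_0)$ with pivot $j=i_0$ is just the contrapositive of the paper's deduction that every term $P_A^{R\Sigma}(i,j)+P_B^\Sigma(j,n-i)\geq n-i$ forces $P_A^\Sigma\leq P_B^\Sigma$ entrywise. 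The only (harmless) divergence is in the easy direction, which the paper re-proves self-containedly by noting the minimum is attained at $j=0$ and that $\Id^R$ is the unique permutation matrix with those anti-diagonal $\Sigma$-values, whereas you lean on $P_A^R\boxdot P_A=\Id^R$ from the informal seaweed discussion preceding the theorem (an identity your own counting argument could in any case make rigorous).
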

\begin{proof}
Let $i,j \in \bra{0:n}$.
We have
\begin{gather}
P_A^{R\Sigma}(i,j) + P_B^\Sigma(j,n-i) ={} 
\why{definition of $R$} \\
\bigpa{n-i - P_A^\Sigma(j,n-i)} + P_B^\Sigma(j,n-i) ={} 
\why{term rearrangement} \\
\bigpa{P_B^\Sigma(j,n-i) - P_A^\Sigma(j,n-i)} + n-i 
\label{eq-x}
\end{gather}
We now prove the implication separately in each direction.
\begin{trivlist}
\setlabelit
\item[Necessity.]
Let $P_A \preceq P_B$.
By \eqref{eq-x} and \thref{th-bruhat}, we have
\begin{gather*}
P_A^{R\Sigma}(i,j) + P_B^\Sigma(j,n-i) \geq n-i 
\end{gather*}
This lower bound is attained at $j=0$ (and, symmetrically, $j=n$): 
we have $P_A^{R\Sigma}(i,0) + P_B^\Sigma(0,n-i) = 0+(n-i) = n-i$.
Therefore,
\begin{gather*}
(P_A^R \boxdot P_B)^\Sigma(i,n-i) ={} 
\why{definition of $\boxdot$} \\
\min_j \bigpa{P_A^{R\Sigma}(i,j) + P_B^\Sigma(j,n-i)} = n-i
\why{attained at $j=0$}
\end{gather*}
It is now easy to prove (e.g.\ by induction on $n$) 
that $P_A^R \boxdot P_B = \Id^R$ is the only permutation matrix 
satisfying the above equation for all $i$.
\item[Sufficiency.]
Let $P_A^R \boxdot P_B = \Id^R$.
By \defref{def-mmult}, we have
\begin{gather*}
\min_j \bigpa{P_A^{R\Sigma}(i,j) + P_B^\Sigma(j,n-i)} = 
\Id^{R\Sigma}(i,n-i) = n-i
\end{gather*}
for all $i$.
Therefore, for all $i$, $j$,
$P_A^{R\Sigma}(i,j) + P_B^\Sigma(j,n-i) \geq n-i$.
By \eqref{eq-x}, this is equivalent to 
$P_B^\Sigma(j,n-i) - P_A^\Sigma(j,n-i) \geq 0$,
therefore $P_A^\Sigma(j,n-i) \leq P_B^\Sigma(j,n-i)$,
hence by \thref{th-bruhat}, we have $P_A \preceq P_B$.
\qed
\end{trivlist}
\renewcommand{\qed}{}
\end{proof}

The combination of \threfs{th-mmult} and \ref{th-bruhat-fast}
gives us a fast algorithm for deciding Bruhat comparability of permutations.
\begin{theorem}
Given permutation matrices $P_A$, $P_B$, 
it is possible to determine whether $P_A \preceq P_B$
in time $O(n \log n)$.
\end{theorem}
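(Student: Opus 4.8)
The plan is to combine the new Bruhat criterion of \thref{th-bruhat-fast} with the fast implicit matrix distance multiplication algorithm of \thref{th-mmult}. By \thref{th-bruhat-fast}, deciding whether $P_A \preceq P_B$ is equivalent to deciding whether the implicit distance product $P_A^R \boxdot P_B$ equals the reverse identity matrix $\Id^R$. So the whole task reduces to three steps: (i) form the input permutation matrix $P_A^R$; (ii) compute the implicit product $P_A^R \boxdot P_B$; (iii) test the resulting permutation matrix against $\Id^R$.

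First I would handle step (i): recall $P_A^R$ is obtained from $P_A$ by counterclockwise $90$-degree rotation, i.e.\ $P_A^R(i,j) = P_A(j,n-i)$, so given the arrays $\pi$, $\pi^{-1}$ representing $P_A$, the corresponding arrays representing $P_A^R$ are obtained by a trivial index relabelling in time $O(n)$. For step (ii), I would invoke \thref{th-mmult} directly: it computes the nonzeros of $P_A^R \boxdot P_B$ from the nonzeros of $P_A^R$ and $P_B$ in time $O(n \log n)$. For step (iii), note that $\Id^R(\hi,\hj) = 1$ precisely when $\hj = n-\hi$, i.e.\ the nonzeros of $\Id^R$ are exactly those on the anti-diagonal; hence checking that all $n$ computed nonzeros of $P_A^R \boxdot P_B$ lie on the anti-diagonal can be done in time $O(n)$. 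Adding up the three steps, the total running time is $O(n \log n)$, as claimed.

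The hard part will essentially be nothing new: the substantive content has already been established in \thref{th-bruhat-fast} (the structural reformulation) and \thref{th-mmult} (the algorithm). The only points requiring a little care are bookkeeping ones — making sure the index ranges and the rotation convention match those assumed by \thref{th-mmult} (square permutation matrices over $\ang{0:n \mid 0:n}$, represented implicitly by their permutation arrays), and confirming that the $O(n\log n)$ bound of \thref{th-mmult} already subsumes the $O(n)$ cost of the pre- and post-processing. One might also remark that only the one-directional product $P_A^R \boxdot P_B$ is needed, with no symmetric companion computation, so no extra asymptotic factor is incurred.
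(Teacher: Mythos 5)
Your proposal is correct and follows exactly the paper's route: apply \thref{th-bruhat-fast} to reduce Bruhat comparability to testing $P_A^R \boxdot P_B = \Id^R$, compute that product by \thref{th-mmult} in time $O(n \log n)$, with the rotation and the final anti-diagonal check costing only $O(n)$. The paper states this combination without elaboration; your extra bookkeeping (index relabelling for $P_A^R$, linear-time comparison with $\Id^R$) is a faithful spelling-out of the same argument.
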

\begin{proof}
Immediately from \threfs{th-mmult} and \ref{th-bruhat-fast}.
\end{proof}


\mychapter{Semi-local string comparison}
\label{c-semi}

In this chapter, we introduce semi-local string comparison,
and establish its connection with the mathematical concepts
of the previous chapter.

This chapter is organised as follows. 
In \secrefs{s-lcs}, \ref{s-slcs},
we formally define the LCS and the semi-local LCS problems.
In \secref{s-adag}, we describe a representation 
of the semi-local LCS problem by an alignment dag,
and introduce the associated score matrix and seaweed matrix.
In \secref{s-matrix-notation},
we introduce some further notation relevant to the semi-local LCS problem.
In \secref{s-adag-composition}, we discuss the fundamental operation
of seaweed matrix composition.
We then obtain an efficient algorithm for this operation,
based on the algebraic framework developed in the previous chapter.

\mysection{The LCS problem}
\label{s-lcs}

We will consider strings of characters taken from an alphabet.
No a priori assumptions are made on the size of the alphabet
and on the set of primitive character operations;
we will make specific assumptions in different contexts
(e.g.\ a fixed finite alphabet with only equality comparisons,
or an alphabet of integers up to a given $n$ 
with standard arithmetic operations, etc.)
Two alphabet characters $\alpha$, $\beta$ \emph{match}, 
if $\alpha=\beta$, and \emph{mismatch} otherwise.
\index{\charguard: guard character}%
\index{\charwild: wildcard character}%
In addition to alphabet characters, we introduce two special extra characters:
the \emph{guard character} `\charguard',
which only matches itself and no other characters,
and the \emph{wildcard character} `\charwild',
which matches itself and all other characters.

\index{string!indexing}%
\index{string!concatenation!right}%
\index{string!concatenation!left}%
\index{$ab$, $\underline{a}b$: string right concatenation}%
\index{$a \underline{b}$: string left concatenation}%
It will be convenient to index strings 
by half-integer, rather than integer indices, e.g.\ %
string $a = \alpha_{0^+} \alpha_{1^+} \ldots \alpha_{m^-}$.
We will index strings as vectors, writing e.g.\ %
$a(\hi) = \alpha_{\hi}$, $a\ang{i:j} = \alpha_{i^+} \ldots \alpha_{j^-}$.
Given strings $a$ over $\ang{i:j}$ and $b$ over $\ang{i':j'}$,
we will distinguish between string \emph{right concatenation} $\underline{a}b$,
which is over $\ang{i:j+j'-i'}$ and preserves the indexing within $a$,
and \emph{left concatenation} $a \underline{b}$,
which is over $\ang{i'-j+i:j'}$ and preserves the indexing within $b$.
We extend this notation to concatenation of more than two strings,
e.g.\ $a \underline{b} c$ is a concatenation of three strings,
where the indexing of the second string is preserved.
If no string is marked in the concatenation, 
then right concatenation is assumed by default.

\index{substring}%
\index{subsequence}%
Given a string, we distinguish between its contiguous \emph{substrings},
and not necessarily contiguous \emph{subsequences}.
\index{prefix}
\index{suffix}
Special cases of a substring 
are \emph{a prefix} and \emph{a suffix} of a string.
Unless indicated otherwise, an algorithm's input is
a string $a$ of length $m$, and a string $b$ of length $n$.

A classical approach to string comparison is based 
on the following numerical measure of string similarity.
\begin{definition}
\label{def-lcs}
\index{problem!longest common subsequence (LCS)}%
\index{alignment score!LCS}
\index{$\lcs(a,b)$: LCS score}
Given strings $a$, $b$, the \emph{longest common subsequence (LCS) problem}
asks for the length of the longest string 
that is a subsequence of both $a$ and $b$.
We will call this length the \emph{LCS score} of strings $a$, $b$,
and denote it by $\lcs(a,b)$.
\end{definition}

\begin{example}
Let 
\begin{gather*}
a = \textsf{``BAABCBCA''}\\
b = \textsf{``BAABCABCABACA''}
\end{gather*}
This example, borrowed from Alves et al.\ \cite{Alves+:08},
will serve as a running example for this chapter.
String $b$ of length $13$ contains 
the whole string $a$ of length $8$ as a subsequence,
therefore we have
\begin{gather*}
\lcs(a,b) = 8
\end{gather*}
The LCS score of string $a$ 
against substring $b\ang{4:11} = \textsf{``CABCABA''}$ of length $7$
is realised by a common subsequence \textsf{``ABCBA''} of length $5$,
therefore we have
\begin{gather*}
\lcs(a,b\ang{4:11}) = 5
\end{gather*}
\end{example}

The classical dynamic programming algorithm for the LCS problem 
\cite{Needleman_Wunsch:70,Wagner_Fischer:74} runs in time $O(mn)$.
The best known algorithms improve on this running time
by some (model-dependent) polylogarithmic factors
\cite{Masek_Paterson:80,Crochemore+:03_SIAM,Wu+:96,Bille_Farach:08}.
We will recall the necessary background on LCS algorithms
in \chaprefs{c-seaweed}, \ref{c-compressed}.

\index{problem!subsequence recognition!global}%
\index{problem!subsequence matching}%
A simple special case of the LCS problem
is the \emph{(global) subsequence recognition problem}
(also known as the ``subsequence matching problem'').
Given a \emph{text string} $t$ of length $n$ 
and a \emph{pattern string} $p$ of length $m \leq n$,
the problem asks whether the text $t$ 
contains the whole pattern $p$ as a subsequence.
This is equivalent to asking 
whether the LCS score of $t$ against $p$ is exactly $m$. 
The global subsequence recognition problem has been considered 
e.g.\ by Aho et al.\ \cite[Section 9.3]{Aho+:76},
who describe a straightforward algorithm running in time $O(n)$.
Various extensions of this problem have been explored
by Crochemore et al.\ \cite{Crochemore+:03_JDA}.

A more detailed measure of string similarity
can be obtained by comparing strings locally by their substrings.
Such an approach is particularly useful in biological applications.
We will consider local string comparison in \chapref{c-beyond}.

\mysection{Semi-local LCS}
\label{s-slcs}

Although global comparison (full string against full string)
and local comparison (all substrings against all substrings)
are the two most common approaches to comparing strings,
in between of them there is another important type of string comparison.
\begin{definition}
\label{def-semi-local}
\index{problem!semi-local LCS}%
\index{problem!string-substring LCS}%
\index{problem!prefix-suffix LCS}%
\index{problem!suffix-prefix LCS}%
\index{problem!substring-string LCS}%
Given strings $a$, $b$, the \emph{semi-local LCS problem}
asks for the LCS scores as follows:
\begin{itemize}
\item the whole $a$ against every substring of $b$
(\emph{string-substring LCS});
\item every prefix of $a$ against every suffix of $b$
(\emph{prefix-suffix LCS});
\item every suffix of $a$ against every prefix of $b$
(\emph{suffix-prefix LCS});
\item every substring of $a$ against the whole $b$
(\emph{substring-string LCS}).
\end{itemize}
\index{problem!string-substring LCS!extended}%
\index{problem!substring-string LCS!extended}%
The first three (respectively, the last three) components, taken together,
will also be called the \emph{extended string-substring}
(respectively, \emph{substring-string}) \emph{LCS problem}.
These versions of the problem will be useful 
whenever string $a$ (respectively $b$) is too long
for considering all its substrings.
\end{definition}
Semi-local string comparison will be the main focus of this work.

Some alternative terms for semi-local comparison,
used especially in biological texts, 
are ``end-free comparison" \cite[Subsection 11.6.4]{Gusfield:97}
or ``semi-global alignment'' 
\cite[Problem 6.24]{Jones_Pevzner:04}, \cite[Section 8.4]{Gogol+:10}.
The string-substring (and its symmetric substring-string) component
of semi-local string comparison is also called ``fitting alignment'' 
\cite[Problem 6.23]{Jones_Pevzner:04}.
String-substring LCS is an important problem in its own right,
closely related to approximate pattern matching,
where a short fixed pattern string 
is compared to various substrings of a long text string.
We will consider approximate pattern matching in \chapref{c-weighted}.
The prefix-suffix (and the symmetric suffix-prefix) LCS problem,
sometimes called ``overlap alignment'' 
\cite[Problem 6.22]{Jones_Pevzner:04}, \cite[Section 8.4]{Gogol+:10},
also occurs independently in some applications.

Many string comparison algorithms
output either a single optimal comparison score 
across all local comparisons,
or a number of local comparison scores 
that are ``sufficiently close'' to the globally optimal.
In contrast with this approach, \defref{def-semi-local}
asks for all the locally optimal comparison scores.
This approach is more flexible, and will be useful 
for various algorithmic applications described later in this work.

It turns out that, although more general than the LCS problem,
the semi-local LCS problem can still be solved in time $O(mn)$;
similarly to the classical LCS problem.
It is also possible to obtain (model-dependent) 
polylogarithmic speedups on this running time.
We will consider semi-local LCS algorithms 
on plain strings in \chapref{c-seaweed},
and on compressed strings in \chapref{c-compressed}.

\index{problem!subsequence recognition!local}%
A special case of the semi-local LCS problem
is the \emph{local subsequence recognition problem},
which, given a text $t$ and a pattern $p$, 
asks for the substrings in $t$ containing $p$ as a subsequence.
This problem can also be regarded as a basic form
of approximate pattern matching.
We will consider algorithms for local subsequence recognition
and other types of approximate pattern matching
on plain strings in \chapref{c-weighted},
and on compressed strings in \chapref{c-compressed}.

\mysection{Alignment dags and seaweed matrices}
\label{s-adag}

A standard method for the LCS problem represents
a problem instance by a \emph{dag} (directed acyclic graph)
on a rectangular grid of nodes,
where every edge is assigned a score of either $0$ or $1$.
\begin{definition}
\label{def-gd-dag}
\index{dag!grid-diagonal}%
A \emph{grid-diagonal dag} is a weighted dag, 
defined on the set of nodes $v_{l,i}$, 
$l \in \bra{0:m}$, $i \in \bra{0:n}$.
The edge and path weights are called \emph{scores}.
For all $l \in \bra{0:m}$, $\hl \in \ang{0:m}$,
$i \in \bra{0:n}$, $\hi \in \ang{0:n}$,
the grid-diagonal dag contains:
\begin{itemize}
\item the horizontal edge $v_{l,\hi^-} \to v_{l,\hi^+}$
and the vertical edge $v_{\hl^-,i} \to v_{\hl^+,i}$,
both with score $0$;
\item the diagonal edge $v_{\hl^-,\hi^-} \to v_{\hl^+,\hi^+}$ 
with score either $0$ or $1$.
\end{itemize}
\end{definition}
A grid-diagonal dag can be viewed as an $m \times n$ grid of \emph{cells}.

\begin{definition}
\label{def-alignment-dag}
\index{dag!alignment}%
\index{$\rG_{a,b}$: alignment dag}%
An instance of the semi-local LCS problem on strings $a$, $b$
corresponds to an $m \times n$ grid-diagonal dag $\rG_{a,b}$,
called the \emph{alignment dag} of $a$ and $b$.
A cell indexed by $\hl \in \ang{0:m}$, $\hi \in \ang{0:n}$
is called a \emph{match cell}, if $a(\hl)$ matches $b(\hi)$,
and a \emph{mismatch cell} otherwise
(recall that the strings may contain wildcard characters).
The diagonal edges in match cells have score $1$,
and in mismatch cells score $0$.
\end{definition}
\begin{figure}[tb]
\centering
\includegraphics{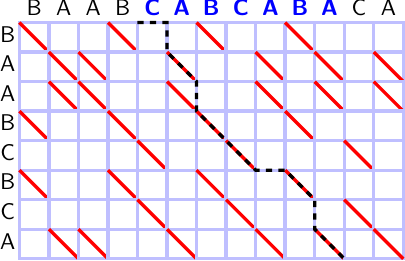}
\caption{\label{f-align} Alignment dag $\rG_{a,b}$ and a highest-scoring path}
\end{figure}

\begin{example}
\figref{f-align} shows the alignment dag for strings 
$a = \textsf{``BAABCBCA''}$, $b = \textsf{``BAABCABCABACA''}$.
All edges are directed left-to-right and top-to-bottom.
The diagonal edges of score $0$ are not shown.
The colour of the remaining edges indicates their scores:
blue (respectively, red) corresponds to edge score $0$ (respectively, $1$).
\end{example}

\index{dag!alignment!full-mismatch}%
\index{dag!alignment!full-match}%
A particular special case of an alignment dag 
is the \emph{full-mismatch dag}, 
which consists entirely of mismatch cells.
This dag can be obtained as the alignment dag 
of a pair strings that have no characters in common.
Another special case is the \emph{full-match dag},
which consists entirely of match cells.
This dag can be obtained as the alignment dag 
of a pair of strings over an alphabet of a single character,
or, alternatively, 
a pair of strings, one of which consists entirely of wildcard characters.

Given a pair of strings $a$, $b$, 
their semi-local common subsequences correspond
to boundary-to-boundary paths in the alignment dag $\rG_{a,b}$.
In particular, a common string-substring, 
suffix-prefix, prefix-suffix, or substring-string subsequence
corresponds, respectively,
to a top-to-bottom, left-to-bottom, top-to-right, and left-to-right path.
The length of a subsequence is equal 
to the total score of the corresponding path.
The semi-local LCS problem is therefore equivalent 
to finding the maximum path scores of the following four types:
\begin{alignat}{2}
{}
&\lcs\bigpa{a, b\ang{i:i'}} &&{}= 
\max \mathit{score}(v_{0,i} \pathto v_{m,i'}) \label{eq-lcs1}\\
&\lcs\bigpa{a\ang{l:m}, b\ang{0:i'}} &&{}= 
\max \mathit{score}(v_{l,0} \pathto v_{m,i'}) \label{eq-lcs2}\\
&\lcs\bigpa{a\ang{0:l'}, b\ang{i:n}} &&{}= 
\max \mathit{score}(v_{0,i} \pathto v_{l',n}) \label{eq-lcs3}\\
&\lcs\bigpa{a\ang{l:l'}, b} &&{}= 
\max \mathit{score}(v_{l,0} \pathto v_{l',n}) \label{eq-lcs4}
\end{alignat}
where $l,l' \in [0:m]$, $i,i' \in [0:n]$,
and the score maxima are taken across all paths between the given endpoints.
The diagonal edges with score $0$ in mismatch cells 
do not affect maximum node-to-node scores, and can therefore be ignored.
\begin{example}
In \figref{f-align}, the highlighted top-to-bottom path
corresponds to the string-substring LCS score
$\lcs\bigpa{a,b\ang{4:11}} = \lcs(a,\textsf{``CABCABA''}) = 5$.
\end{example}

Finding the maximum boundary-to-boundary path scores in an alignment dag 
is equivalent to finding the corresponding distances 
in an undirected graph, obtained from the alignment dag
by assigning length $1$ to vertical and horizontal edges,
assigning lengths $0$ and $2$ to diagonal edges 
in match and mismatch cells respectively, 
and ignoring edge directions.
The problem thus becomes a special case of
the problem to find distances between boundary nodes
and all nodes in a planar graph.
In the case a grid-diagonal dag with arbitrary real weights,
this problem has been studied by Schmidt \cite{Schmidt:98}.
In the still more general case of 
an arbitrary real-weighted undirected planar graph,
the problem has been studied by Klein \cite{Klein:05} 
and Cabello and Chambers \cite{Cabello_Chambers:07}.
These more general algorithms are slower than the ones presented in this work,
where we exploit both the grid-diagonal structure of the dag,
and the discreteness of edge lengths.

The analysis of the four different path types \eqref{eq-lcs1}--\eqref{eq-lcs4} 
can be simplified by padding string $b$
with wildcard characters on both sides,
and then considering paths in the alignment dag
for string $a$ over $\ang{0:m}$ 
against the padded string $\charwild^m \underline{b} \charwild^m$ 
over $\ang{-m:m+n}$.
\begin{definition}
\label{def-matrix}
\index{matrix!score!semi-local}%
\index{$\rH_{a,b}$: semi-local score matrix}%
Given strings $a$, $b$,
the corresponding \emph{semi-local score matrix%
\footnote{These matrices are called ``DIST matrices'' 
e.g.\ in \cite{Schmidt:98,Crochemore+:04}, 
and ``score matrices'' in \cite{Tiskin:06_CPM}.
Our current terminology is chosen to reflect 
the semi-local score-maximising nature of the matrix elements, 
while avoiding confusion with pairwise substitution score matrices
used in comparative genomics (see e.g.\ \cite{Jones_Pevzner:04}).}}
is a matrix over $\bra{-m:n \mid 0:m+n}$, defined by
\begin{gather*}
\rH_{a,b}(i,j) = \max \mathit{score}(v_{0,i} \pathto v_{m,j})
\end{gather*}
where $i \in \bra{-m:n}$, $j \in \bra{0:m+n}$,
and the maximum is taken across all paths 
between the given endpoints $v_{0,i}$, $v_{m,j}$
in the $m \times (2m+n)$ padded alignment dag 
$\rG_{a,\charwild^m \underline{b} \charwild^m}$.
If $i=j$, we have $\rH_{a,b}(i,j)=0$. 
By convention, if $j < i$, then we let $\rH_{a,b}(i,j)=j-i<0$.
\end{definition}
\begin{figure}[tb]
\centering
\includegraphics{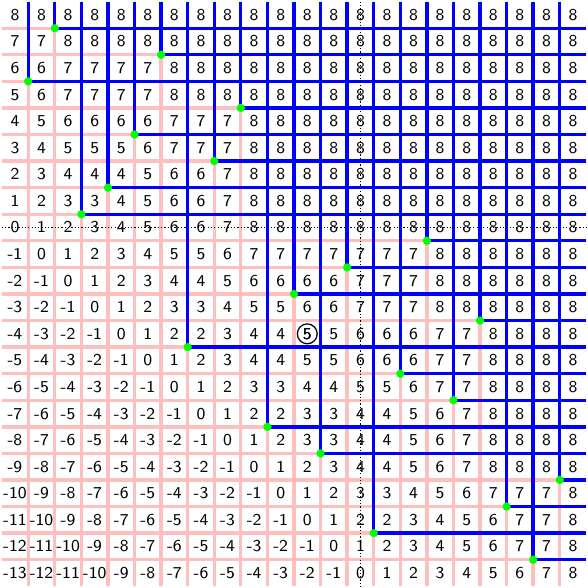}
\caption{\label{f-score} 
Matrices $\rH_{a,b}$ and $\rP_{a,b}$}
\end{figure}
\begin{example}
\figref{f-score} shows the matrix $\rH_{a,b}$,
giving all semi-local LCS scores 
for strings $a$, $b$ as in the previous examples.
The entry $\rH_{a,b}(4,11)=5$ is circled.
\end{example}

The solution for each of the four components
of the semi-local LCS problem in \defref{def-semi-local}
can now be obtained from \eqref{eq-lcs1}--\eqref{eq-lcs4} as follows:
\begin{alignat*}{2}
{}
&\lcs\bigpa{a, b\ang{i:i'}} &&{}= \rH_{a,b}(i,i')\\
&\lcs\bigpa{a\ang{l:m}, b\ang{0:i'}} &&{}= \rH_{a,b}(-l,i')-l\\
&\lcs\bigpa{a\ang{0:l'}, b\ang{i:n}} &&{}= \rH_{a,b}(i,m+n-l')-m+l'\\
&\lcs\bigpa{a\ang{l:l'}, b} &&{}= \rH_{a,b}(-l,m+n-l')-m-l+l'
\end{alignat*}

where $l,l' \in [0:m]$, $i,i' \in [0:n]$.

The key property of semi-local score matrices 
is captured by the following theorem.
\begin{theorem}
\label{th-ps}
Given strings $a$, $b$,
the corresponding semi-local score matrix $\rH_{a,b}$ is unit-anti-Monge.
More precisely, we have
\begin{gather*}
\rH_{a,b}(i,j) = j - i - \rP_{a,b}^\Sigma (i,j) = m - \rP_{a,b}^{T \Sigma T} (i,j)
\end{gather*}
where $\rP_{a,b}$ is a permutation matrix over $\ang{-m:n \mid 0:m+n}$.
In particular, string $a$ is a subsequence of substring $b\ang{i:j}$ 
for some $i,j \in \bra{0:n}$, if and only if $\rP_{a,b}^{T \Sigma T} (i,j) = 0$.
\end{theorem}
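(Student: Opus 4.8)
The plan is to reduce the statement to the criterion of \lmref{lm-unit-Monge}, by presenting $\rH_{a,b}$ as a simple unit-Monge matrix shifted by a separable term. Set
\begin{gather*}
A(i,j) = j - i - \rH_{a,b}(i,j),\qquad i \in \bra{-m:n},\quad j \in \bra{0:m+n}.
\end{gather*}
Then $A$ is a square integer matrix over $\bra{-m:n \mid 0:m+n}$ (path scores, and the convention value $j-i$ for $j<i$, are integers). Since the term $j-i$ is separable, its density vanishes identically, so $A^\square = -\rH_{a,b}^\square$; hence $A$ is Monge exactly when $\rH_{a,b}$ is anti-Monge. Once $A$ is known to be simple unit-Monge, \defref{def-simple} gives $A = \rP_{a,b}^\Sigma$ with $\rP_{a,b} := A^\square$ a permutation matrix over $\ang{-m:n \mid 0:m+n}$ --- which is the first displayed identity --- and, $-\rH_{a,b}$ differing from the simple unit-Monge matrix $\rP_{a,b}^\Sigma$ only by this separable term, $(-\rH_{a,b})^\square = \rP_{a,b}$ is a permutation matrix, so $\rH_{a,b}$ is unit-anti-Monge by \defref{def-unit-Monge}.

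The main obstacle is the anti-Monge property of $\rH_{a,b}$: for $i \le i'$, $j \le j'$ one must show $\rH_{a,b}(i,j) + \rH_{a,b}(i',j') \ge \rH_{a,b}(i,j') + \rH_{a,b}(i',j)$. This is the classical non-crossing argument for grid DIST matrices. I would fix highest-scoring paths $\pi\colon v_{0,i} \pathto v_{m,j'}$ and $\pi'\colon v_{0,i'} \pathto v_{m,j}$ in the padded alignment dag of \defref{def-matrix}. For each integer row $r \in \bra{0:m}$, record the contiguous interval of columns each path occupies in that row; monotonicity of the edges makes the left and right endpoints of these intervals nondecreasing in $r$, with the entry column of row $r{+}1$ exceeding the exit column of row $r$ by $0$ or $1$. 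Since $\pi$ starts weakly left of $\pi'$ (because $i \le i'$) and ends weakly right of $\pi'$ (because $j' \ge j$), a discrete intermediate-value argument forces the two paths to share a node $v$; splicing their $v$-to-bottom tails produces a $v_{0,i} \pathto v_{m,j}$ path and a $v_{0,i'} \pathto v_{m,j'}$ path of combined score $\mathit{score}(\pi)+\mathit{score}(\pi') = \rH_{a,b}(i,j')+\rH_{a,b}(i',j)$, which yields the inequality. The points needing care are the diagonal edges (two monotone paths can only meet at a node, never cross strictly inside a cell, since a cell carries a single diagonal edge), the boundary rows $0$ and $m$ where a path may run horizontally, and the entries falling under the convention $\rH_{a,b}(i,j)=j-i$ for $j<i$, which are checked directly.

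Next I would verify that $A$ is simple and satisfies the two boundary conditions of \lmref{lm-unit-Monge} by evaluating $\rH_{a,b}$ on the boundary of the padded dag, whose leftmost $m$ and rightmost $m$ columns are full-match (wildcard) regions; on a full-match rectangle a highest-scoring corner-to-corner path collects the smaller of the two side lengths. Thus for $j \ge n$ one has $\rH_{a,b}(n,j) = \min(m,j-n)$, so $A(n,j) = \max(0,j-n-m) = 0$ since $j \le m+n$, while for $j<n$ the convention gives $\rH_{a,b}(n,j)=j-n$, so again $A(n,j)=0$; symmetrically $A(i,0)=0$ for all $i$, so $A$ is simple. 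Moreover any substring of the padded string that reaches the far right (resp.\ the far left) contains $\charwild^m$ as a suffix (resp.\ prefix), so $\rH_{a,b}(i,m+n) = \rH_{a,b}(-m,j) = m$ for all admissible $i,j$; hence $A(i,m+n)=n-i$ and $A(-m,j)=j$, which give $A(\hi^-,m+n)-A(\hi^+,m+n)=1$ and $A(-m,\hj^+)-A(-m,\hj^-)=1$. By \lmref{lm-unit-Monge}, $A$ is unit-Monge, completing the proof of the first identity.

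For the identity $\rH_{a,b}(i,j) = m - \rP_{a,b}^{T \Sigma T}(i,j)$, observe that $\rP_{a,b}^\Sigma(i,j)$ counts the nonzeros of $\rP_{a,b}$ lying below-left of $(i,j)$, whereas $\rP_{a,b}^{T \Sigma T}(i,j)$ counts those lying above-right. Let $\mathit{AL},\mathit{AR},\mathit{BL},\mathit{BR}$ be the four quadrant counts around the integer point $(i,j)$ (no nonzero lies on the lines through $(i,j)$, as nonzeros are half-integer points). Since in the permutation matrix $\rP_{a,b}$ each of the $m+i$ half-integer rows above $i$ and each of the $j$ half-integer columns left of $j$ contains exactly one nonzero, one gets $\mathit{AL}+\mathit{AR}=m+i$ and $\mathit{AL}+\mathit{BL}=j$, hence $\mathit{AR} = \mathit{BL}+m+i-j$ and $m-\mathit{AR} = j-i-\mathit{BL} = \rH_{a,b}(i,j)$ by the first identity. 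Finally, for $i,j \in \bra{0:n}$, equation~\eqref{eq-lcs1} gives $\lcs\bigpa{a,b\ang{i:j}} = \rH_{a,b}(i,j) = m - \rP_{a,b}^{T \Sigma T}(i,j)$, and $a$ is a subsequence of $b\ang{i:j}$ precisely when this common value equals the length $m$ of $a$, that is, precisely when $\rP_{a,b}^{T \Sigma T}(i,j)=0$.
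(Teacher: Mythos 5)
Your proposal is correct and follows essentially the same route as the paper's proof: anti-Monge via the crossing-path exchange argument, then passing to $N(i,j)=j-i-\rH_{a,b}(i,j)$, verifying simplicity and the two boundary conditions of \lmref{lm-unit-Monge} using the wildcard padding, and concluding $N=\rP_{a,b}^\Sigma$ for a permutation matrix $\rP_{a,b}$. The only additions are that you spell out the second identity $\rH_{a,b}=m-\rP_{a,b}^{T\Sigma T}$ by quadrant counting (which the paper dismisses as ``straightforward from the definitions'') and give the boundary values $\rH_{a,b}(i,m+n)=\rH_{a,b}(-m,j)=m$ explicitly, both of which check out.
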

\begin{proof}
For any crossing pair of highest-scoring paths,
$v_{0,\hi^+} \pathto v_{m,\hj^-}$ and $v_{0,\hi^-} \pathto v_{m,\hj^+}$,
where $\hi \in \ang{-m:n}$, $\hj \in \ang{0:m+n}$,
there exists a non-crossing pair of paths 
$v_{0,\hi^-} \pathto v_{m,\hj^-}$ and $v_{0,\hi^+} \pathto v_{m,\hj^+}$
of at least the same total score,
which can be obtained by rearranging the edges in the paths.
Therefore, we have 
\begin{gather*}
\rH_{a,b}(\hi^+,\hj^-) + \rH_{a,b}(\hi^-,\hj^+) \leq
\rH_{a,b}(\hi^-,\hj^-) + \rH_{a,b}(\hi^+,\hj^+)
\end{gather*}
for all $\hi \in \ang{-m:n}$, $\hj \in \ang{0:m+n}$,
hence $\rH_{a,b}^\square (\hi,\hj) \leq 0$,
and matrix $\rH_{a,b}$ is anti-Monge.

Let $N(i,j) = j - i - \rH_{a,b}(i,j)$.
From the above, matrix $N$ is Monge.
We have
\begin{gather*}
N(n,j) = j - n - \rH_{a,b}(n,j) = j-n - (j-n) = 0\\
N(i,0) = 0 - i - \rH_{a,b}(i,0) = -i - (-i) = 0
\end{gather*}
for all $i \in \bra{-m:n}$, $j \in \bra{0:m+n}$,
hence $P$ is simple.
Furthermore, we have
\begin{gather*}
N(\hi^-, m+n) - N(\hi^+, m+n) ={}
\why{definitions of $N$ and $\rH_{a,b}$} \\
\bigpa{m+n - \hi^- - (m+n)} - \bigpa{m+n - \hi^+ - (m+n)} ={} 
\why{cancellation} \\ 
\tHalf+\tHalf = 1
\end{gather*}
for all $\hi \in \ang{-m:n}$.
Analogously,
\begin{gather*}
N(-m, \hj^+) - N(-m, \hj^-) = 1
\end{gather*}
for all $\hj \in \ang{0:m+n}$.
By \lmref{lm-unit-Monge}, matrix $P$ is simple unit-Monge.
Therefore, $N = \rP_{a,b}^\Sigma$, where $\rP_{a,b}$ is a permutation matrix.
The second equality in the theorem statement 
is now straightforward from the definitions.
\end{proof}
The intuition behind \thref{th-ps} is as follows.
Let $a'$ be a substring of $a$, and consider its LCS score against string $b$.
If substring $a'$ is extended at either end by one character,
the LCS score is either unchanged, or increased by $1$,
depending on whether or not the new character of $a'$ 
can be usefully matched to a character of $b$.
The unit-anti-Monge property of matrix $\rH_{a,b}$
describes the fact that, as substring $a'$ grows in size,
so does the number of obstacles to useful matching of characters.
Each obstacle, in other words 
a ``critical point'' at which a useful match become useless,
corresponds to a nonzero in the permutation matrix $\rP_{a,b}$. 
However, this is only very general intuition:
in fact, the notion of a ``useful match'' is not absolute,
but depends on the choice of a particular highest-scoring path
through the alignment dag.

\thref{th-ps} and its proof hold more generally 
for any grid-diagonal dag, not necessarily obtainable 
as an alignment dag of any particular pair of strings.
However, the given form of the theorem 
will be sufficient for the rest of this work.

The key idea of our approach is to view \thref{th-ps}
as a description of an implicit solution to the semi-local LCS problem:
the semi-local score matrix $\rH_{a,b}$
is represented implicitly by (the nonzeros of) the permutation matrix $\rP_{a,b}$.
\begin{definition}
\label{def-matrix-seaweed}
\index{matrix!seaweed!semi-local}%
\index{$\rP_{a,b}$: semi-local seaweed matrix}%
Given strings $a$, $b$, the \emph{semi-local seaweed matrix}
is a permutation matrix $\rP_{a,b}$ over $\ang{-m:n \mid 0:m+n}$,
defined by \thref{th-ps}.
\end{definition}
\defref{def-matrix-seaweed} leads to 
the following interpretation of \thref{th-ps}.
The LCS score for string $a$ against substring $b\ang{i:j}$
is determined by the length $j-i$ of $b\ang{i:j}$,
and by the number $\rP_{a,b}^\Sigma(i,j)$ of nonzeros in $\rP_{a,b}$ 
that are $\gtrless$-dominated by the point $(i,j)$.
Alternatively, the same LCS score is determined
by the length $m$ of string $a$,
and by the number $\rP_{a,b}^{T \Sigma T}(i,j)$ of nonzeros in $\rP_{a,b}$
that $\gtrless$-dominate the point $(i,j)$.

\begin{example}
\figref{f-score} shows the unit-anti-Monge property of matrix $\rH_{a,b}$ 
by the coloured grid pattern: red (respectively, blue) lines 
separate matrix elements that differ by $1$ (respectively, by $0$).
The nonzeros of the semi-local seaweed matrix $\rP_{a,b}$ 
over $\ang{-8:13 \mid 0:8+13}$ are shown by green bullets.

Nonzeros of $\rP_{a,b}$ that are $\gtrless$-dominated by the point $(4,11)$
correspond to the green bullets 
lying below-left of the circled entry.
There are exactly two such nonzeros, 
therefore $\rP_{a,b}^\Sigma(4,11) = 2$,
and we have $\rH_{a,b}(4,11) = 11-4-\rP_{a,b}^\Sigma(4,11) = 11-4-2 = 5$.

Nonzeros of $\rP_{a,b}$ that $\gtrless$-dominate the point $(4,11)$
correspond to the green bullets
lying above-right of the circled entry.
There are exactly three such nonzeros, 
therefore $\rP_{a,b}^{T \Sigma T}(4,11) = 3$,
and we have $\rH_{a,b}(4,11) = 8-\rP_{a,b}^{T \Sigma T}(4,11) = 8-3 = 5$.
\end{example}

\begin{figure}[tb]
\centering
\includegraphics{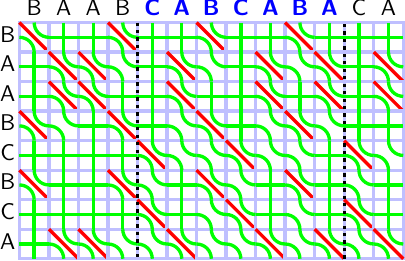}
\caption{\label{f-align-seaweeds} 
Alignment dag $\rG_{a,b}$ and nonzeros of $\rP_{a,b}$ as seaweeds}
\end{figure}

A semi-local seaweed matrix can be naturally identified
with a reduced seaweed braid of width $m+n$
(or, more precisely, with a family of equivalent reduced seaweed braids).
As we show in the following section,
divide-and-conquer solutions to the semi-local LCS problem 
generally correspond to implicit distance multiplication of seaweed matrices,
and therefore also to the multiplication of the corresponding seaweed braids.

\begin{example}
\figref{f-align-seaweeds} shows matrix $\rP_{a,b}$ 
as a reduced seaweed braid of width $8+13=21$,
laid out directly on the alignment dag $\rG_{a,b}$.
The nonzeros correspond to seaweeds, laid out as paths in the dual graph.
We say that a seaweed goes \emph{from $\hi$ to $\hj$},
if it originates between the nodes $v_{0,\hi^-}$ and $v_{0,\hi^+}$,
and terminates between the nodes $v_{8,\hj^-}$ and $v_{8,\hj^+}$.
In particular, every nonzero $\rP_{a,b}(\hi,\hj)=1$, 
where $\hi,\hj \in \ang{0:13}$, 
is represented by a seaweed going from $\hi$ to $\hj$.
The remaining seaweeds,
originating or terminating at the sides of the dag,
correspond to nonzeros $\rP_{a,b}(\hi,\hj)=1$,
where either $\hi \in \ang{-8:0}$, or $\hj \in \ang{13:8+13}$, or both.
For the purposes of this example,
the specific layout of the seaweeds between their endpoints is not important.
However, this layout will become meaningful 
in the context of the algorithms described in the next chapter.

The full set of $8+13=21$ nonzeros 
corresponds to the full set of $21$ seaweeds in \figref{f-align-seaweeds}.
The two nonzeros that are $\gtrless$-dominated by the point $(4,11)$
correspond to the two seaweeds,
going from $4.5$ to $6.5$ and from $7.5$ to $9.5$.
These two seaweeds fit completely 
between the two dashed vertical lines $i=4$ and $j=11$.
The three nonzeros that $\gtrless$-dominate the point $(4,11)$
correspond to the three seaweeds
going from $0.5$ to the right boundary, from $1.5$ to $13.5$,
and from $3.5$ to the right boundary.
These three seaweeds pierce both dashed vertical lines.
\end{example}

When talking about semi-local score and seaweed matrices,
we will sometimes omit the qualifier ``semi-local'',
as long as it is clear from the context.

The definitions of score and seaweed matrices are not symmetric
with respect to the order of the input strings.
The precise relationship between score matrices $\rH_{a,b}$, $\rH_{b,a}$,
and between seaweed matrices $\rP_{a,b}$, $\rP_{b,a}$,
is given by the following lemma.
\begin{lemma}
\label{lm-switch}
Given input strings $a$, $b$, we have
\begin{gather*}
\rH_{b,a}(i,j) = \rH_{a,b}(-i,m+n-j) - i + j - n
\end{gather*}
for all $i \in \bra{0:n}$, $j \in \bra{0:m}$, and
\begin{gather*}
\rP_{b,a}(\hi,\hj) = \rP_{a,b}(-\hi,m+n-\hj)
\end{gather*}
for all $\hi \in \ang{0:n}$, $\hj \in \ang{0:m}$. 
\end{lemma}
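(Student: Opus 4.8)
The plan is to prove the two displayed identities in turn, deriving the seaweed-matrix identity from the score-matrix one. For the \emph{score-matrix identity}: for $i,j$ in the indicated range, $\rH_{b,a}(i,j)$ is, by the string-substring interpretation of score matrices recorded after \defref{def-matrix} (read with the roles of $a$ and $b$ interchanged), exactly the LCS score $\lcs\bigpa{b,a\ang{i:j}}$. Since the LCS score is symmetric in its two arguments, $\lcs\bigpa{b,a\ang{i:j}}=\lcs\bigpa{a\ang{i:j},b}$, and this right-hand side is a \emph{substring-string} score for the pair $(a,b)$. Applying the appropriate one of the semi-local identities following \defref{def-matrix} rewrites it as $\rH_{a,b}(-i,m+n-j)$ plus a term affine in $i$ and $j$; collecting the affine contributions yields the first identity.

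For the \emph{seaweed-matrix identity}, invoke \thref{th-ps}: $\rH_{a,b}(i,j)=j-i-\rP_{a,b}^\Sigma(i,j)$, and likewise $\rH_{b,a}(i,j)=j-i-\rP_{b,a}^\Sigma(i,j)$. Substituting the score-matrix identity just established and cancelling, one sees that the two simple unit-Monge matrices $(i,j)\mapsto\rP_{b,a}^\Sigma(i,j)$ and $(i,j)\mapsto\rP_{a,b}^\Sigma(-i,m+n-j)$ differ only by a function affine in $i$ and $j$. Taking the density $\square$ of both sides then finishes the argument, via three facts: the four-point difference defining $\square$ annihilates any affine function, so the correction term vanishes; $\rP_{b,a}^{\Sigma\square}=\rP_{b,a}$, since $\Sigma$ and $\square$ are mutually inverse (\secref{s-distribution}); and $\square$ commutes with the reindexing $(\hi,\hj)\mapsto(-\hi,m+n-\hj)$, because negating a half-integer sends $\hi^+$ to $(-\hi)^-$ and $\hi^-$ to $(-\hi)^+$ (and similarly $m+n-\hj^+=(m+n-\hj)^-$), so the two sign interchanges arising in the four-point difference cancel, leaving $\bigpa{A(-\hi,m+n-\hj)}^\square=A^\square(-\hi,m+n-\hj)$. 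These together give $\rP_{b,a}(\hi,\hj)=\rP_{a,b}(-\hi,m+n-\hj)$.

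I expect the argument to be essentially bookkeeping once \thref{th-ps} and the semi-local identities following \defref{def-matrix} are in hand. The one step that will need care is the range and orientation check in the first paragraph above: one must verify that $\rH_{a,b}(-i,m+n-j)$ is the \emph{substring-string} entry, and not one of the other three semi-local entries of \defref{def-matrix}, since those carry different affine corrections. It is also worth recording the geometric reason behind the lemma: transposing rows and columns turns the alignment dag $\rG_{a,b}$ into $\rG_{b,a}$ (the match/mismatch pattern is symmetric), sending string-substring paths to substring-string paths and seaweeds to seaweeds; one could build the whole proof on this picture instead, but then the bookkeeping would migrate to reconciling the two wildcard paddings $\charwild^m$ and $\charwild^n$ used in \defref{def-matrix} --- exactly what the semi-local identities already package.
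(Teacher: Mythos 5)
The paper itself proves this lemma only by the phrase ``straightforward by definitions'', so the question is whether your derivation actually goes through. Your treatment of the second identity does: by \thref{th-ps}, $\rP_{b,a}^\Sigma$ and $(i,j)\mapsto\rP_{a,b}^\Sigma(-i,m+n-j)$ differ by an affine function, the four-point density operator annihilates affine functions, and it commutes with the reindexing $(\hi,\hj)\mapsto(-\hi,m+n-\hj)$ because the two sign interchanges cancel; moreover this argument is insensitive to the exact affine correction in the first identity, so that half is solid.

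The gap is in the first identity, at precisely the step you flag and then defer: ``collecting the affine contributions''. Carrying it out under the paper's conventions ($|a|=m$, $|b|=n$), the string-substring reading of $\rH_{b,a}$ gives $\rH_{b,a}(i,j)=\lcs\bigpa{b,a\ang{i:j}}$ only for $i,j\in\bra{0:m}$ with $i\le j$ (note the row range of $\rH_{b,a}$ is $\bra{-n:m}$, so the printed range $i\in\bra{0:n}$ does not even fit inside it when $n>m$), and the substring-string identity for $(a,b)$ then yields $\rH_{b,a}(i,j)=\rH_{a,b}(-i,m+n-j)-i+j-m$: the correction is $-m$, not the printed $-n$. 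Indeed the formula as printed fails on a two-character example: for $a=\textsf{``AB''}$, $b=\textsf{``A''}$ and $(i,j)=(0,0)$ the left side is $0$ while $\rH_{a,b}(0,3)-1=2-1=1$. So your bookkeeping, done honestly, produces the identity with $m$ in place of $n$ on $i,j\in\bra{0:m}$ (equivalently, the printed form is what one gets with the roles of $\rH_{a,b}$ and $\rH_{b,a}$ exchanged, on $i,j\in\bra{0:n}$); asserting that it ``yields the first identity'' as stated is therefore not justified — you need to either record the corrected constant and ranges or flag the discrepancy in the statement. Two smaller points: the case $j<i$ needs a separate line (there the convention $\rH_{b,a}(i,j)=j-i$ is matched because $\rH_{a,b}(-i,m+n-j)=m$, all of $a$ being absorbed by the wildcard padding of \defref{def-matrix}), and since the seaweed identity is derived by differencing, it survives unchanged; only the score-matrix bookkeeping has to be redone.
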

\begin{proof}
Straightforward by definitions.
\end{proof}

\mysection{Seaweed submatrix notation}
\label{s-matrix-notation}

The four individual components of the semi-local LCS problem
correspond to a partitioning of both the score matrix $\rH_{a,b}$ 
and the seaweed matrix $\rP_{a,b}$ into submatrices.
It will be convenient to introduce a special notation
for these submatrices as follows:
\begin{center}
\begin{tabular}{|c|cc|}
\hline
& $\bra{0:n}$ & $\bra{n:m+n}$ \\ \hline
$\bra{-m:0}$ & $\rH_{a,b}^\sufpre$ & $\rH_{a,b}^\subs$ \\
$\bra{0:n}$  & $\rH_{a,b}^\ssub$   & $\rH_{a,b}^\presuf$ \\ \hline
\end{tabular}
\qquad
\begin{tabular}{|c|cc|}
\hline
& $\ang{0:n}$ & $\ang{n:m+n}$ \\ \hline
$\ang{-m:0}$ & $\rP_{a,b}^\sufpre$ & $\rP_{a,b}^\subs$ \\
$\ang{0:n}$  & $\rP_{a,b}^\ssub$   & $\rP_{a,b}^\presuf$ \\ \hline
\end{tabular}
\end{center}
For example, we have $\rH^\ssub_{a,b} = \rH_{a,b}\bra{0:n \mid 0:n}$,
which is the matrix of all string-substring LCS scores for strings $a$, $b$.
Note that the four resulting submatrices of $\rH_{a,b}$
overlap by one row/column at the boundaries
(in particular, the global LCS score $\rH_{a,b}(0,n)$
belongs to all four score submatrices),
while the corresponding submatrices of $\rP_{a,b}$ are disjoint:
\begin{gather*}
\rP_{a,b} = \bmat{\rP^\sufpre_{a,b} & \rP_{a,b}^\subs \\ 
                  \rP^\ssub_{a,b}   & \rP^\presuf_{a,b}}
\end{gather*}
\begin{definition}
\label{def-matrix-split}
\index{matrix!score!string-substring}%
\index{matrix!score!prefix-suffix}%
\index{matrix!score!suffix-prefix}%
\index{matrix!score!substring-string}%
\index{$\rH^\ssub_{a,b}$: string-substring score matrix}%
\index{$\rH^\presuf_{a,b}$: prefix-suffix score matrix}%
\index{$\rH^\sufpre_{a,b}$: suffix-prefix score matrix}%
\index{$\rH^\subs_{a,b}$: substring-string score matrix}%
\index{matrix!seaweed!string-substring}%
\index{matrix!seaweed!prefix-suffix}%
\index{matrix!seaweed!suffix-prefix}%
\index{matrix!seaweed!substring-string}%
\index{$\rP^\ssub_{a,b}$: string-substring seaweed matrix}%
\index{$\rP^\presuf_{a,b}$: prefix-suffix seaweed matrix}%
\index{$\rP^\sufpre_{a,b}$: suffix-prefix seaweed matrix}%
\index{$\rP^\subs_{a,b}$: substring-string seaweed matrix}%
Given strings $a$, $b$, 
the corresponding \emph{suffix-prefix, substring-string, 
string-substring and prefix-suffix score} 
(respectively, \emph{seaweed}) \emph{matrices}
are the submatrices $\rH^\sufpre_{a,b}$, $\rH^\subs_{a,b}$, $\rH^\ssub_{a,b}$, $\rH^\presuf_{a,b}$
(respectively, $\rP^\sufpre_{a,b}$, $\rP^\subs_{a,b}$, $\rP^\ssub_{a,b}$, $\rP^\presuf_{a,b}$).
\end{definition}
Similarly to the full semi-local seaweed matrix, 
its component seaweed submatrices can be processed 
into an efficient data structure of \thref{th-query}
for answering individual semi-local score queries.
\begin{example}
\figref{f-score} shows by thin dotted lines
the partitioning of $\rH_{a,b}$ and $\rP_{a,b}$
into submatrices of \defref{def-matrix-split}.
The suffix-prefix, substring-string, 
string-substring and prefix-suffix submatrices are respectively 
on the top-left, top-right, bottom-left and bottom-right.
The elements of $\rH_{a,b}$ lying directly on the dotted lines
are shared by the bordering score submatrices.
Note that the substring-string score submatrix $\rH^\subs_{a,b}$ 
and seaweed submatrix $\rP^\subs_{a,b}$ 
are both filled with a constant value ($8$ and $0$, respectively);
this is due to the fact that the whole string $a$ is a subsequence of $b$.
\end{example}

The nonzeros of each seaweed submatrix introduced in \defref{def-matrix-split}
can be regarded as an implicit solution 
to the corresponding component of the semi-local LCS problem.
Similarly, by considering only three out of the four submatrices,
we can define an implicit solution 
to the extended string-substring (respectively, substring-string) LCS problem.
\begin{definition}
\label{def-matrix-ext}
%
%
\index{matrix!seaweed!string-substring!extended}%
\index{matrix!seaweed!substring-string!extended}%
\index{$\rH^\ssubX_{a,b}$: extended string-substring seaweed matrix}%
\index{$\rH^\subsX_{a,b}$: extended substring-string seaweed matrix}%
Given strings $a$, $b$, we define the \emph{extended string-substring} 
(respectively, \emph{substring-string}) \emph{seaweed matrix} 
over $\ang{-m:n \mid 0:m+n}$ as
\begin{gather*}
\rP^\ssubX_{a,b} = \bmat{\rP^\sufpre_{a,b} & \cdot \\ \rP^\ssub_{a,b} & \rP^\presuf_{a,b}}
\qquad
\rP^\subsX_{a,b} = \bmat{\rP^\sufpre_{a,b} & \rP^\subs_{a,b} \\ \cdot & \rP^\presuf_{a,b}}
\end{gather*}
\end{definition}
The extended string-substring seaweed matrix $\rP^\ssubX_{a,b}$
contains at least $n$ and at most $\min(m+n,2n)$ nonzeros.
Note that for $m \geq n$, the number of nonzeros in $\rP^\ssubX_{a,b}$ 
is at most $2n$, which is convenient when $m$ is large.
Analogously, for $m \leq n$, the number of nonzeros 
in the extended substring-string matrix $\rP^\subsX_{a,b}$ 
is at most $2m$, which is convenient when $n$ is large.

\index{substring!cross-}%
\index{prefix!cross-}%
\index{suffix!cross-}%
Let string $a$ of length $m$ be a concatenation of two fixed strings:
$a=a'a''$, where $a'$, $a''$ are nonempty strings 
of length $m'$, $m''$ respectively, and $m=m'+m''$.
A substring of the form $a\ang{i':i''}$ 
with $i' \in \bra{0:m'}$, $i'' \in \bra{m':m}$
will be called a \emph{cross-substring}.
A cross-substring that is either a suffix of $a'$, or a prefix of $a''$
(i.e.\ with either $i'=m'$, or $i''=m'$), will be called \emph{degenerate}.
In other words, a cross-substring of $a$ consists 
of a suffix of $a'$ and a prefix of $a''$;
a cross-substring is non-degenerate, if and only if both of these are non-empty.
A cross-substring that is a prefix or a suffix of $a$
will be called a \emph{cross-prefix} and a \emph{cross-suffix}, respectively.
Given string $b$ of length $n$ that is a concatenation of two fixed strings,
$b=b'b''$, cross-substrings of $b$ are defined analogously.
\begin{definition}
\label{def-matrix-seaweed-cross}
\index{matrix!score!cross-semi-local}%
\index{matrix!seaweed!cross-semi-local}%
\index{$\rH_{a',a'';b}$, $\rH_{a;b',b''}$: cross-semi-local score matrix}%
\index{$\rP_{a',a'';b}$, $\rP_{a;b',b''}$: cross-semi-local seaweed matrix}%
Given strings $a=a'a''$ and $b$, 
the corresponding \emph{cross-semi-local score matrix} is the submatrix 
\begin{gather*}
\rH_{a',a'';b} = \rH_{a,b} \bra{-m':n \mid 0:m''+n}
\end{gather*}
Symmetrically, given strings $a$ and $b=b'b''$, 
the corresponding \emph{cross-semi-local score matrix} is the submatrix 
\begin{gather*}
\rH_{a;b',b''} = \rH_{a,b}\bra{-m:n' \mid n':m+n}
\end{gather*}
The \emph{cross-semi-local seaweed matrices} are defined analogously:
\begin{gather*}
\rP_{a',a'';b} = \rP_{a,b} \ang{-m':n \mid 0:m''+n}\\
\rP_{a;b',b''} = \rP_{a,b}\ang{-m:n' \mid n':m+n}
\end{gather*}
\end{definition}

A cross-semi-local score matrix represents the solution 
of a restricted version of the semi-local LCS problem.
In this version, instead of all substrings (prefixes, suffixes) 
of string $a$ (respectively, $b$),
we only consider cross-substrings (cross-prefixes, cross-suffixes).
At the submatrix boundaries $\rH_{a',a'';b}(*,m''+n)$ and $\rH_{a',a'';b}(-m',*)$,
cross-substrings of string $a$ degenerate to suffixes of $a'$ and prefixes of $a''$;
in particular, cross-prefixes and cross-suffixes of $a$
degenerate respectively to the whole $a'$ and $a''$.
The submatrix boundaries $\rH_{a;b',b''}(*,n')$ and $\rH_{a;b',b''}(n',*)$
correspond to similarly degenerate cross-substrings of string $b$.
Submatrix elements on these boundaries will also be called \emph{degenerate}.

Occasionally, we will use cross-semi-local score and seaweed matrices
in combination with the superscripted submatrix notation,
introduced earlier in this section.
In such cases, the range of the resulting matrix
will be determined by the intersection of the ranges
implied by the superscript and the subscript.
For example, matrix $\rH_{a;b',b''}^\ssub = \rH_{a,b}\bra{0:n' \mid n':n}$
is the \emph{string-cross-substring score matrix},
i.e.\ the matrix of all LCS scores
between string $a$ and all cross-substrings of string $b=b'b''$.

In all the above definitions, the seaweed matrices can be used
to give an implicit representation for the corresponding score matrices,
just as the full seaweed matrix $\rP_{a,b}$ 
does for the full score matrix $\rH_{a,b}$.

\mysection{Alignment dag composition}
\label{s-adag-composition}

We now combine all the concepts and techniques introduced in previous sections
to provide a mechanism for string comparison in a divide-and-conquer framework.


Let $a'$, $a''$ be nonempty strings of length $m'$, $m''$ respectively.
We consider the semi-local LCS problem 
on a concatenation string $a=a'a''$ of length $m=m'+m''$ 
against a fixed string $b$ of length $n$.

\index{dag!alignment!composite}%
The alignment dag $\rG_{a,b}$ consists of 
alignment subdags $\rG_{a',b}$, $\rG_{a'',b}$,
sharing a horizontal row of $n$ nodes and $n-1$ edges,
which is simultaneously 
the bottom row of $\rG_{a',b}$ and the top row of $\rG_{a'',b}$.
We will say that dag $\rG_{a,b}$ is the \emph{composite}
of dags $\rG_{a',b}$ and $\rG_{a'',b}$.

\newcommand{\showdown}[1]{\genfrac{}{}{0pt}{}{#1}{\blacktriangledown}}
\newcommand{\Showdown}[1]{\genfrac{}{}{0pt}{}{#1}{\triangledown}}
\newcommand{\showup}[1]{\genfrac{}{}{0pt}{}{\blacktriangle}{#1}}
\newcommand{\Showup}[1]{\genfrac{}{}{0pt}{}{\vartriangle}{#1}}

\begin{figure}[tb]
\centering
\subfloat[\label{f-comp-pab}Seaweed braids for matrices $\rP_{a',b}$, $\rP_{a'',b}$ 
  and block decomposition]{%
\beginpgfgraphicnamed{f-comp-pab}%
\begin{tikzpicture}[x=0.5cm,y=-0.5cm]
\tikzstyle{every circle node}=[inner sep=0pt,minimum size=0pt,fill]
\path (0,0) coordinate(p);

\draw (p) ++(-0.5,0)
  +( 0,0) node[above,fill=white]{$\showdown{-m}$}
  +( 3,0) node[above,fill=white]{$\showdown{-m'}$}
  +( 7,0) node[above,fill=white]{$\showdown{0}$} 
  +(16,0) node[above,fill=white]{$\showdown{n}$};

\domain{I}{0}{15}
\path (p) ++(4,4) coordinate(p);
\domain{J}{0}{15}
\path (p) ++(3,3) coordinate(p);
\domain{K}{0}{15}

\draw (p) ++(-0.5,0)
  +( 0,0) node[below]{$\showup{0}$}
  +( 9,0) node[below]{$\showup{n}$}
  +(12,0) node[below]{$\showup{m''+n}$}
  +(16,0) node[below]{$\showup{m+n}$};

\draw[very thick,dotted,green]
  (I0) -- (J0)
  (I1) -- (J1)
  (I2) -- (J2)
  (J12) -- (K12)
  (J13) -- (K13)
  (J14) -- (K14)
  (J15) -- (K15);
  
\draw[very thick,rounded corners,green]
  (I3) -- ++(3.5,3.5) ++(0,0) -- ++(3,0) -- (J6)
  (I4) -- ++(2.5,2.5) ++(0,0) -- (J4)
  (I5) -- ++(1.5,1.5) ++(0,0) -- ++(2,0) -- ++(2,2) -- ++(5,0) ++(0,0) -- (J12)
  (I6) -- ++(0.5,0.5) ++(0,0) -- ++(0.5,0.5) -- (J3)
  (I7) -- ++(1,1) -- ++(0,2) -- (J5)
  (I8) -- ++(3,3) -- (J7)
  (I9) -- ++(2.5,2.5) -- ++(4,0) ++(0,0) -- (J13)
  (I10) -- ++(2,2) -- (J8)
  (I11) -- ++(2,2) -- (J9)
  (I12) -- ++(3,3) -- (J11)
  (I13) -- ++(1,1) -- (J10)
  (I14) -- ++(1.5,1.5) ++(0,0) -- (J14)
  (I15) -- ++(0.5,0.5) ++(0,0) -- (J15)
  (J0) -- ++(2.5,2.5) ++(0,0) -- ++(1,0) -- (K1)
  (J1) -- ++(1.5,1.5) ++(0,0) -- ++(0.5,0.5) -- (K0)
  (J2) -- ++(0.5,0.5) ++(0,0) -- ++(2,2) -- ++(4,0) -- (K6)
  (J3) -- ++(2,2) -- (K2)
  (J4) -- ++(1.5,1.5) -- ++(1,0) -- (K5)
  (J5) -- ++(1,1) -- (K3)
  (J6) -- ++(1,1) -- (K4)
  (J7) -- ++(0.5,0.5) -- ++(4,0) ++(0,0) -- (K11)
  (J8) -- ++(0,1) -- (K7)
  (J9) -- ++(0,1) -- (K8)
  (J10) -- ++(1.5,1.5) ++(0,0) -- (K10)
  (J11) -- ++(0,2) -- ++(0.5,0.5) ++(0,0) -- (K9);
\draw[very thick] 
  (6.5,0) rectangle +(9,4)
  (6.5,4) rectangle +(9,3);
\draw[dotted]
  (6.5,4) ++(3,0) -- +(0,3) ++(3,0) -- +(0,3);
\draw[draw opacity=0]
  (-0.5,0) -- ++(4,4) node[midway,auto=right]{\small $\rP_{a',b}$} --
    ++(3,3) node[midway,auto=right]{\small $\rP_{a'',b}$};
\end{tikzpicture}
\endpgfgraphicnamed}

\subfloat[\label{f-comp-pc}Seaweed braid for output matrix $\rP_{a,b}$]{%
\beginpgfgraphicnamed{f-comp-pc}%
\begin{tikzpicture}[x=0.5cm,y=-0.5cm]
\tikzstyle{every circle node}=[inner sep=0pt,minimum size=0pt,fill]
\path (0,0) coordinate(p);

\draw (p) ++(-0.5,0)
  +( 0,0) node[above,fill=white]{$\showdown{-m}$}
  +( 3,0) node[above,fill=white]{$\showdown{-m'}$}
  +( 7,0) node[above,fill=white]{$\showdown{0}$} 
  +(16,0) node[above,fill=white]{$\showdown{n}$};

\domain{I}{0}{15}
\path (p) ++(4,4) coordinate(p);
\domain{J}{0}{15}
\path (p) ++(3,3) coordinate(p);
\domain{K}{0}{15}

\draw (p) ++(-0.5,0)
  +( 0,0) node[below]{$\showup{0}$}
  +( 9,0) node[below]{$\showup{n}$}
  +(12,0) node[below]{$\showup{m''+n}$}
  +(16,0) node[below]{$\showup{m+n}$};

\draw[very thick,green]
  (I0) -- (J0)
  (I1) -- (J1)
  (I2) -- (J2)
  (J12) -- (K12)
  (J13) -- (K13)
  (J14) -- (K14)
  (J15) -- (K15);
  
\draw[very thick,rounded corners,green]
  (I3) -- ++(3.5,3.5) ++(0,0) -- ++(3,0) -- (J6)
  (I4) -- ++(2.5,2.5) ++(0,0) -- (J4)
  (I5) -- ++(1.5,1.5) ++(0,0) -- ++(2,0) -- ++(2,2) -- ++(5,0) ++(0,0) -- (J12)
  (I6) -- ++(0.5,0.5) ++(0,0) -- ++(0.5,0.5) -- (J3)
  (I7) -- ++(1,1) -- ++(0,2) -- (J5)
  (I8) -- ++(3,3) -- (J7)
  (I9) -- ++(2.5,2.5) -- ++(4,0) ++(0,0) -- (J13)
  (I10) -- ++(2,2) -- (J8)
  (I11) -- ++(2,2) -- (J9)
  (I12) -- ++(3,3) -- (J11)
  (I13) -- ++(1,1) -- (J10)
  (I14) -- ++(1.5,1.5) ++(0,0) -- (J14)
  (I15) -- ++(0.5,0.5) ++(0,0) -- (J15)
  (J0) -- ++(2.5,2.5) ++(0,0) -- ++(1,0) -- (K1)
  (J1) -- ++(1.5,1.5) ++(0,0) -- ++(0.5,0.5) -- (K0)
  (J2) -- ++(0.5,0.5) ++(0,0) -- ++(2,2) -- ++(4,0) -- (K6)
  (J3) -- ++(2,2) -- (K2)
  (J4) -- (K4)
  (J5) -- ++(1,1) -- (K3)
  (J6) -- ++(2,2) -- (K5)
  (J7) -- ++(0.5,0.5) -- ++(4,0) ++(0,0) -- (K11)
  (J8) -- ++(0,1) -- (K7)
  (J9) -- ++(0,1) -- (K8)
  (J10) -- ++(0,1) -- ++(1.5,1.5) ++(0,0) -- (K9)
  (J11) -- ++(0,1) -- ++(0.5,0.5) ++(0,0) -- (K10);
\draw[very thick]
  (6.5,0) rectangle +(9,7);
\draw[dotted]
  (6.5,4) ++(3,0) -- +(0,3) ++(3,0) -- +(0,3);
\draw[draw opacity=0]
  (-0.5,0) -- ++(7,7) node[midway,auto=right]{\small $\rP_{a,b}$};
\end{tikzpicture}
\endpgfgraphicnamed}
\caption{\label{f-comp}Semi-local seaweed matrix multiplication:
$\rP_{a',b} \boxdot \rP_{a'',b} = \rP_{a,b}$}
\end{figure}

\index{matrix!seaweed!composite}%
Our goal is, given the seaweed matrices $\rP_{a',b}$, $\rP_{a'',b}$, 
to compute the \emph{composite} seaweed matrix $\rP_{a,b}$.
It turns out that this problem can be solved efficiently
by the techniques of \chapref{c-mmult}.

\begin{theorem}
\label{th-comp-mmult}
The composite string-substring (respectively, semi-local) seaweed matrix 
can be obtained from the respective original seaweed matrices as
\begin{gather}
\label{eq-comp-ssub}
\rP^\ssub_{a,b} = \rP^\ssub_{a',b} \boxdot \rP^\ssub_{a'',b}\\
\label{eq-comp-semi}
\rP_{a,b} = 
\bmat{\Id_{m'} & \cdot \\ \cdot & \rP_{a',b}}
\boxdot
\bmat{\rP_{a'',b} & \cdot \\ \cdot & \Id_{m''}}
\end{gather}
Here, the dimensions of the offset identity submatrices
are chosen to conform to the dimensions of the product.
\end{theorem}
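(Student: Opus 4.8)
The plan is to derive both identities from a single geometric observation — the composite alignment dag splits across its shared middle row — and then to push that splitting through \thref{th-ps} into the algebra of implicit distance multiplication.

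I would first treat the string-substring identity \eqref{eq-comp-ssub}. Viewing $\rG_{a,b}$ as the composite of $\rG_{a',b}$ and $\rG_{a'',b}$ glued along the middle row of nodes $v_{m',k}$, $k \in \bra{0:n}$, every monotone path $v_{0,i} \pathto v_{m,j}$ crosses this row at a unique node $v_{m',k}$, and conversely every crossing point yields such a path. Maximising over the crossing point gives
\[
\rH^\ssub_{a,b}(i,j) = \max_{k \in \bra{0:n}} \bigl( \rH^\ssub_{a',b}(i,k) + \rH^\ssub_{a'',b}(k,j) \bigr)
\]
for all $i,j \in \bra{0:n}$. Here the values $k \notin \bra{i:j}$ never improve the maximum: using the convention $\rH_{a,b}(i,j) = j-i$ for $j < i$ together with the subadditive bound $\lcs(x,uv) \le |u| + \lcs(x,v)$ (split an LCS of $x$ and $uv$ into its $u$-part and $v$-part), one checks that such a $k$ yields a value at most $\rH^\ssub_{a,b}(i,j)$, while a valid crossing point of a truly optimal path attains it. Now by \thref{th-ps}, $\rH^\ssub_{a,b}(i,j) = (j-i) - \rP_{a,b}^\Sigma(i,j)$, and since for $i,j \in \bra{0:n}$ the point $(i,j)$ $\gtrless$-dominates only nonzeros lying in the string-substring block, $\rP_{a,b}^\Sigma(i,j) = (\rP^\ssub_{a,b})^\Sigma(i,j)$ there. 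Substituting and regrouping $(j-i) = (k-i) + (j-k)$ converts the $\max$ of a sum into the $\min$ of a sum:
\[
(\rP^\ssub_{a,b})^\Sigma(i,j) = \min_{k \in \bra{0:n}} \Bigl( (\rP^\ssub_{a',b})^\Sigma(i,k) + (\rP^\ssub_{a'',b})^\Sigma(k,j) \Bigr),
\]
which is exactly $\rP^\ssub_{a,b} = \rP^\ssub_{a',b} \boxdot \rP^\ssub_{a'',b}$ by \defref{def-mmult-imp}; \thref{th-monoid-umonge} guarantees the right-hand side is again a simple subunit-Monge seaweed matrix, so the product is well defined.

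For the semi-local identity \eqref{eq-comp-semi} I would apply the same decomposition directly to the padded composite dag $\rG_{a, \charwild^m \underline{b} \charwild^m}$, split across its middle row $m'$ into the top half $\rG_{a', \charwild^m \underline{b} \charwild^m}$ and the bottom half $\rG_{a'', \charwild^m \underline{b} \charwild^m}$. Maximising over the crossing point and translating through \thref{th-ps} exactly as above expresses $\rP_{a,b}$ as the implicit distance product of the two half-dag seaweed matrices; it then remains to identify those with the block-diagonal factors in \eqref{eq-comp-semi}. Relative to $a'$, the padded string $\charwild^m \underline{b} \charwild^m$ is $\charwild^{m'} \underline{b} \charwild^{m'}$ — whose comparison with $a'$ gives precisely the core block $\rP_{a',b}$ — flanked by a further all-wildcard strip, and a full-match strip of height $m'$ has offset-identity seaweed matrix $\Id_{m'}$ (its string-substring scores being $\min(m',\,j-i)$, whose complement is $\Id_{m'}^\Sigma$). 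Hence the top half contributes $\bmat{\Id_{m'} & \cdot \\ \cdot & \rP_{a',b}}$ and, symmetrically, the bottom half contributes $\bmat{\rP_{a'',b} & \cdot \\ \cdot & \Id_{m''}}$; the offsets and dimensions of the identity blocks, together with the intermediate index range, are forced by requiring that the product carry the index range of $\rP_{a,b}$. A final appeal to \thref{th-monoid-umonge} confirms the product is a permutation matrix, hence equal to $\rP_{a,b}$.

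The routine steps are the $(\max,+) \to (\min,+)$ manipulation, the subadditivity estimate, and the closing invocation of \thref{th-monoid-umonge}. The main obstacle is the index bookkeeping behind \eqref{eq-comp-semi}: one must verify that stacking the two $m$-padded sub-dags along the middle row reproduces the $m$-padded composite dag exactly on the row and column ranges relevant to $\rP_{a,b}$, and in particular that only a single offset-identity block survives on each side — the remaining wildcard strips fall outside the relevant ranges of the restricted matrices, so the middle-row crossing points they would contribute never realise an optimal path. Once these ranges are pinned down, the rest is a direct computation with distribution matrices.
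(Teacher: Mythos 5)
Your proposal rests on the same two pillars as the paper's proof -- decomposition of the composite alignment dag along the shared middle row of nodes, followed by translation through \thref{th-ps} from max-plus score matrices to min-plus distribution matrices -- and for \eqref{eq-comp-ssub} your argument is essentially identical to the paper's (the paper's ``cross range'' case). Where you genuinely diverge is \eqref{eq-comp-semi}. The paper splits the index range of $\rP_{a,b}^\Sigma$ into three overlapping cases: the range $i \in \bra{-m':n}$, $k \in \bra{0:m''+n}$, handled by the same crossing-point maximisation; the range $i \in \bra{-m:-m'}$, where it shows directly that $\rH_{a,b}(i,k) = m' + \rH_{a'',b}(i+m',k)$, hence $\rP_{a,b}^\Sigma(i,k) = (\Id_{m'} \boxdot \rP_{a'',b})^\Sigma(i,k)$; and the symmetric range $k \in \bra{m''+n:m+n}$; it then assembles the block formula from the three cases. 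You instead keep a single crossing-point decomposition over the padded composite dag $\rG_{a,\charwild^m \underline{b} \charwild^m}$ and identify each half-dag with one block-diagonal factor. Both routes work: the paper's case split buys very short per-case computations at the price of an assembly step (and its own implicit bookkeeping about crossings in the padding), while your route buys one uniform formula at the price of the range-pinning you flag as the main obstacle.

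That obstacle is the one place where your write-up asserts rather than proves. You need that restricting the middle-row crossing column to $\bra{-m'':m'+n}$ -- the common index range of the two block-diagonal factors, of size $m+n+1$ -- loses nothing, i.e.\ that an optimal path crossing row $m'$ inside one of the outer wildcard strips can be replaced by one crossing at that strip's inner boundary. This is true, and a short exchange argument closes it: sliding the crossing column one unit towards the boundary gains one match in the half-dag above (every column of the strip is a match column for it, and the path has spare rows there because the strip is wider than $m'$, respectively $m''$) and forfeits at most one match in the half-dag below (at most one diagonal step per column traversed), so the total score cannot decrease. With that argument, plus the blockwise verification you sketch that the restricted top-to-bottom score matrices of the two half-dags equal $j-i$ minus the distribution of $\bmat{\Id_{m'} & \cdot \\ \cdot & \rP_{a',b}}$ and $\bmat{\rP_{a'',b} & \cdot \\ \cdot & \Id_{m''}}$ respectively (the outer strips behaving exactly as the offset-identity blocks), your proof is complete and correct.
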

\begin{proof}
By \thref{th-ps}, we have
\begin{gather*}
\rH_{a,b}(i,k) = k-i - \rP_{a,b}^\Sigma(i,k)
\end{gather*}
for all $i \in \bra{-m:n}$, $k \in \bra{0:m+n}$.
Three cases are possible, 
based on the partitioning of the index ranges.

\begin{trivlist}

\setlabelit
\item[Case $i \in \bra{-m':n}$, $k \in \bra{0:m''+n}$.]
By \defref{def-matrix} and \thref{th-ps}, we have
\begin{gather*}
\rH_{a,b}(i,k) =
\max_{j \in \bra{0:n}} \bigpa{\rH_{a',b}(i,j) + \rH_{a'',b}(j,k)} = {}\\
\qquad\max_{j \in \bra{0:n}} 
\bigpa{j-i-\rP_{a',b}^\Sigma(i,j) + k-j-\rP_{a'',b}^\Sigma(j,k)} = {}\\
\qquad k-i - \min_{j \in \bra{0:n}} 
\bigpa{\rP_{a',b}^\Sigma(i,j) + \rP_{a'',b}^\Sigma(j,k)}
\end{gather*}
Therefore, 
\begin{gather*}
\rP_{a,b}^\Sigma(i,k) =
\min_{j \in \bra{0:n}} \bigpa{\rP_{a',b}^\Sigma(i,j) + \rP_{a'',b}^\Sigma(j,k)} =
\bigpa{\rP_{a',b}^\Sigma \odot \rP_{a'',b}^\Sigma}(i,k)
\end{gather*}
In particular, this holds for $i,k \in \bra{0:n}$.
Hence, we have \eqref{eq-comp-ssub}:
\begin{gather*}
\rP^\ssub_{a,b} = \rP^\ssub_{a',b} \boxdot \rP^\ssub_{a'',b}
\end{gather*}
\item[Case $i \in \bra{-m:-m'}$, $k \in \bra{0:m+n}$.]
We have
\begin{gather*}
\rH_{a,b}(i,k) = m' + \rH_{a'',b}(i+m',k) ={}\\
\qquad m' + k - (i+m') - \rP_{a'',b}^\Sigma(i+m',k) ={}\\
\qquad k - i - \rP_{a'',b}^\Sigma(i+m',k)
\end{gather*}
Therefore, 
\begin{gather*}
\rP_{a,b}^\Sigma(i,k) = \rP_{a'',b}^\Sigma(i+m',k) = 
(\Id_{m'} \boxdot \rP_{a'',b})^\Sigma(i,k)
\end{gather*}
\item[Case $i \in \bra{-m:n}$, $k \in \bra{m''+n:m+n}$.]
Symmetrically to the previous case, we have
\begin{gather*}
\rP_{a,b}^\Sigma(i,k) = 
(\rP_{a',b} \boxdot \Id_{m''})^\Sigma(i,k)
\end{gather*}
\end{trivlist}
Summarising the above three cases, we have the proof of \eqref{eq-comp-semi}.
\end{proof}

\begin{example}
\figref{f-comp} shows an instance of seaweed matrix composition,
as obtained by \thref{th-comp-mmult}
(using an arbitrary layout for individual seaweeds).
\sfigref{f-comp-pab} shows the input matrices $\rP_{a',b}$, $\rP_{a'',b}$.
Additionally, it shows the auxiliary matrices $\Id_{m'}$, $\Id_{m''}$
by dotted seaweeds.
\sfigref{f-comp-pc} shows the output matrix $\rP_{a,b}$.
\end{example}

\begin{theorem}
\label{th-comp-mmult-comp}
Given a pair of seaweed matrices,
the corresponding composite seaweed matrix
can be computed in the following time:
\begin{center}
\begin{tabular}{|l|l|l|l|}
\hline
Type & Input & Output & Time \\ \hline
string-substring & $\rP^\ssub_{a',b}$, $\rP^\ssub_{a'',b}$ & 
$\rP^\ssub_{a,b}$ & $O\bigpa{n \log N}$\\ \hline
extended string-substring & $\rP^\ssubX_{a',b}$, $\rP^\ssubX_{a'',b}$ &
$\rP^\ssubX_{a,b}$ & $O\bigpa{n \log N}$\\
cross-semi-local & & $\rP_{a',a'';b}$ & \\ \hline
semi-local & $\rP_{a',b}$, $\rP_{a'',b}$ & 
$\rP_{a,b}$ & $O\bigpa{m + n \log N}$ \\ \hline
\end{tabular}
\end{center}
Here, $N = \min(m',m'',n)$.
\end{theorem}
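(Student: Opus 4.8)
The plan is to read off each row of the table from the matching identity of \thref{th-comp-mmult}, run the implicit multiplication algorithm of \thref{th-mmult} on it, and then remove the gap between $O(n\log n)$ (or $O\bigpa{(m+n)\log(m+n)}$) and the claimed bounds by contracting away the parts of the seaweed matrices that behave like offset identities. Throughout, let $N=\min(m',m'',n)$.

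For the string-substring row, \eqref{eq-comp-ssub} gives $\rP^\ssub_{a,b}=\rP^\ssub_{a',b}\boxdot\rP^\ssub_{a'',b}$, a product of two $n\times n$ subpermutation matrices, which \thref{th-mmult} evaluates in $O(n\log n)$ — already the asserted bound when $N=n$. For $N<n$ I would, after possibly swapping the roles of $a'$ and $a''$, take $N=m'$ and reduce the product to an instance of size $O(m')$ by contracting the shift-like parts of the three matrices involved: using \thref{th-ps} together with $0\le\lcs\bigpa{a',b\ang{i:j}}\le m'$ (which confine $\rP^{\ssub\Sigma}_{a',b}(i,j)$ to $\bra{\max(0,j-i-m'):j-i}$), and the subpermutation reduction of \thref{th-monoid-umonge}, one should be able to delete the $O(m')$ forced zero rows and columns, merge maximal runs of shift-like nonzeros both along the inner index range shared with $\rP^\ssub_{a'',b}$ and along the outer index range of $\rP^\ssub_{a,b}$, and be left with an $O(m')\times O(m')$ subpermutation multiplication, solved in $O(m'\log m')$ by \thref{th-mmult}; the index re-mappings cost $O(n)$, for a total of $O(n\log N)$. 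The extended string-substring and cross-semi-local rows then follow similarly: by \defref{def-matrix-ext} the extended matrices carry at most $\min(m+n,2n)=O(n)$ nonzeros and split into the three blocks $\sufpre,\ssub,\presuf$, of which only the $\ssub$ block needs the multiplication just described — the $\sufpre$ and $\presuf$ blocks, and the passage from $\rP^\ssubX_{a,b}$ to the cross-semi-local submatrix $\rP_{a',a'';b}$, are linear-time relabellings.

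For the semi-local row I would use \eqref{eq-comp-semi}. Both block matrices there have size $m+n$, but each contains a full offset-identity block — $\Id_{m'}$ occupying $m''$ rows of the left factor, $\Id_{m''}$ occupying $m'$ columns of the right factor — and multiplying by an identity block merely copies those rows/columns through with a fixed index offset; this accounts for $O(m)$ of the nonzeros of $\rP_{a,b}$ at no more than linear cost. What remains is the genuine interaction between $\rP_{a',b}$ and $\rP_{a'',b}$, which by the case split in the proof of \thref{th-comp-mmult} overlaps only in the central width-$n$ strip and there reduces to exactly the string-substring product $\rP^\ssub_{a',b}\boxdot\rP^\ssub_{a'',b}$, costing $O(n\log N)$ by the first part, while the $\subs$, $\sufpre$, $\presuf$ portions are again copied over directly. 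Summing, $O(m+n\log N)$.

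The main obstacle is the contraction underlying the $\log N$ (rather than $\log n$) factor: one must show rigorously that the non-trivial interaction between two string-substring seaweed matrices built against the same string $b$ can always be localised to a sub-instance of size $O\bigpa{\min(m',m'',n)}$, and that this localisation — identifying and collapsing the shift-like runs on both index ranges, then remapping — takes only $O(n)$ time. Everything else is a direct application of \threfs{th-comp-mmult} and \ref{th-mmult} together with routine index bookkeeping and the linear pass-through behaviour of offset-identity blocks.
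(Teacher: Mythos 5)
Your overall scaffolding (splitting off the offset-identity blocks of \eqref{eq-comp-semi} in linear time, reducing the semi-local row to the string-substring core over the shared width-$n$ range, and invoking \thref{th-mmult} directly when $N=n$) agrees with the paper. But the step you yourself flag as the main obstacle — localising the ``genuine interaction'' to a \emph{single} sub-instance of size $O(N)$ by contracting maximal shift-like (offset-identity-like) runs of nonzeros — does not work, and it is precisely the step that has to carry the $\log N$ factor. Take $m'=1$, $a'=\textsf{``A''}$, $b=\textsf{``ABAB\ldots AB''}$: tracing the seaweeds, the nonzeros of $\rP^\ssub_{a',b}$ alternate between offset $0$ and offset $2$ at every other row, so the number of maximal constant-offset runs is $\Theta(n)$ even though $m'=1$; taking $a''=\textsf{``B''}$ against the same $b$ gives the same picture shifted by one, and the output $\rP^\ssub_{a,b}$ likewise has $\Theta(n)$ breakpoints. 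Hence no contraction of the kind you describe can shrink the product to an $O(N)\times O(N)$ instance, and your implicit bound $O(n+N\log N)$ (one size-$O(N)$ multiplication plus linear remapping) cannot be obtained this way — note it would even be stronger than the theorem's $O(n\log N)$, which should itself be a warning sign. The bounds on $\rP^{\ssub\Sigma}_{a',b}(i,j)$ coming from $0\le\lcs(a',b\ang{i:j})\le m'$ constrain values, not the \emph{positions} of the deviations, which can be scattered across the whole range $\ang{0:n}$.

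What the paper does instead is a blockwise, not a global, reduction: assuming WLOG $N=m''$ (the case $N=n$ being the direct staggered product, computed in time $O(m'+m''+n\log n)$), it partitions the dag $\rG_{a'',b}$ into $n/m''$ square blocks of size $m''\times m''$, decomposes the seaweed braid of $\rP_{a'',b}$ into a staggered product of $n/m''$ braids of width $2m''$, each pair of consecutive braids overlapping over width $m''$ (the seaweed matrices of all these sub-braids are extracted from $\rP_{a'',b}$ in total time $O(n)$), and then performs $n/m''$ successive staggered multiplications of the braid of $\rP_{a',b}$ by these blocks, each costing $O(m''\log m'')$ by \thref{th-mmult} with the linear terms amortised, giving $O\bigpa{m+\frac{n}{m''}\cdot m''\log m''}=O(m+n\log N)$. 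So the $\log N$ comes from doing $n/N$ small multiplications along $b$, not from collapsing the instance to one small multiplication; to repair your proof you would need to replace the contraction step by such a block decomposition (at which point it essentially becomes the paper's argument).
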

\begin{proof}
We prove the statement in the final row of the table; 
the proof for other rows is analogous.

We will represent the product \eqref{eq-comp-semi}
by a product of seaweed braids.
To achieve such a representation, we need to extend 
the notion of seaweed braid multiplication to pairs of braids 
whose index ranges are not identical, but have a partial overlap.
Consider two seaweed braids of size $n'$, $n''$ respectively,
whose index ranges overlap over a range of width $n \leq \min(n',n'')$.
The product of such seaweed braids can be defined naturally as follows.
We join up the $n$ seaweeds across the overlap,
and extend each of the remaining $n'+n''-2n$ seaweeds
either upwards or downwards by a straight vertical line,
forming the resulting product braid of width $n'+n''-2n$.
By \thref{th-mmult}, such a staggered seaweed product 
can be computed in time $O(n'+n''+n \log n)$.

In the setting of the current theorem,
the semi-local seaweed matrices $\rP_{a',b}$, $\rP_{a'',b}$
correspond each to a seaweed braid of width $m'+n$, $m''+n$ respectively.
The two braids overlap along the common boundary of the two dags;
the width of the overlap is $n$,
so the width of the (generally non-reduced) product seaweed braid is $m+n$.

Suppose $N=n$. 
Then, as discussed above, 
the seaweed product of width $n$ can be computed directly
in time $O(m'+m''+n \log n) = O(m'+m''+n \log n^*)$.

We may now assume without loss of generality that $N=m''$. 
Assume, also without loss of generality, that $\frac{n}{m''} \geq 1$ is an integer.
We partition the dag $\rG_{a'',b}$
into $\frac{n}{m''}$ square blocks of size $m'' \times m''$.
Consider the seaweed braid of width $m''+n$, corresponding to matrix $\rP_{a'',b}$.
It is now straightforward to decompose this seaweed braid
into a staggered product of $\frac{n}{m''}$ seaweed braids of width $2m''$,
so that each pair of successive braids in this product
overlap over an interval of width $m''$.
Given the matrix $\rP_{a'',b}$, 
the seaweed matrices for all the braids in the product
can easily be obtained in time $\frac{n}{m''} \cdot O(m'') = O(n)$.

We now compute the product \eqref{eq-comp-semi} by first performing 
$\frac{n}{m''}$ successive multiplications
of the seaweed braid corresponding to $\rP_{a',b}$
by each of the braids in the decomposition of $\rP_{a'',b}$.
The resulting running time is
$O\bigpa{m + \frac{n}{m''} \cdot m'' \log m''} = O(m + n \log m'')$.
\end{proof}

\begin{example}
\figref{f-comp} illustrates the proof 
of \thref{th-comp-mmult-comp} as follows.
Decomposition of the seaweed braid corresponding to matrix $\rP_{a'',b}$ 
into blocks is shown in \sfigref{f-comp-pab} by thin dotted lines.
Clearly, the whole seaweed braid for $\rP_{a,b}$
can be obtained from $\rP_{a',b}$ by successive multiplication 
with staggered block subbraids of $\rP_{a'',b}$, block-by-block.
\sfigref{f-comp-pc} shows the resulting seaweed braid 
for the product matrix $\rP_{a,b}$.
\end{example}

\mychapter{The seaweed method}
\label{c-seaweed}

In this chapter, we develop a simple and efficient algorithm
for the semi-local LCS problem, called the seaweed combing algorithm.
We then give some of the algorithm's modifications and applications.

This chapter is organised as follows. 
In \secref{s-seaweed}, we describe the main version 
of the seaweed combing algorithm,
and in \secref{s-micro}, its micro-block speedup.
In \secrefs{s-incremental} and \ref{s-blockwise}, 
we apply the seaweed combing algorithm to solving several incremental 
and blockwise versions of the semi-local LCS problem.
In \secref{s-cyclic} we give algorithms 
for the window LCS and cyclic LCS problems,
and in \secref{s-lrs} for the longest repeating subsequence problem.
All our algorithms match, improve on, and/or generalise 
the existing algorithms.

\mysection{Seaweed combing}
\label{s-seaweed}

\index{algorithm!dynamic programming}%
\index{algorithm!dynamic programming!traceback}%
\index{problem!prefix-prefix LCS}%
A classical solution to the global LCS problem on input strings $a$ and $b$
is given by the dynamic programming algorithm,
discovered independently by Needleman and Wunsch
\cite{Needleman_Wunsch:70} (without an explicit analysis), 
and by Wagner and Fischer \cite{Wagner_Fischer:74}.
This algorithm solves in fact the more general \emph{prefix-prefix LCS problem,}
which consists in computing the LCS scores
for all prefixes of string $a$ against all prefixes of string $b$.
The algorithm assumes a character comparison model that only allows 
comparison outcomes ``equal'' and ``unequal'',
and the unit-cost RAM computation model;
it runs in time $O(mn)$. 
The solution to the prefix-prefix LCS problem 
can be used to \emph{trace back} (i.e.\ to obtain character by character) 
the actual LCS of strings $a$ and $b$
in time proportional to the size of the output
(i.e.\ the length of the output subsequence).
A memory-saving recomputation technique by Hirschberg \cite{Hirschberg:75}
can be applied to achieve LCS traceback
in the same asymptotic time, but in a linear amount of memory.

\index{problem!prefix-substring LCS}%
The semi-local LCS problem can be solved naively
by computing the LCS score 
for each semi-local substring pair (string-substring, etc.) 
independently by the classical dynamic programming algorithm,
in overall time $O\bigpa{(m+n)^4}$.
By making use of all the prefix-prefix LCS scores
produced by each run of the classical algorithm,
this can be improved to $O\bigpa{(m+n)^3}$.
Based on the ideas of Schmidt \cite{Schmidt:98}, 
Alves et al.\ \cite{Alves+:08} gave an algorithm 
for the string-substring LCS problem.
Their algorithm solves in fact the more general \emph{prefix-substring LCS problem,}
which consists in computing an implicit representation of the LCS scores
for all prefixes of string $a$ against all substrings of string $b$.
The algorithm runs in time $O(mn)$.
It therefore extends the functionality 
of the standard dynamic programming LCS algorithm
(prefix-substring instead of just prefix-prefix),
while matching its asymptotic running time.

We now give a simple and efficient algorithm for the semi-local LCS problem,
which extends still further the functionality of the above algorithms,
while matching their model assumptions and asymptotic running time.
We call it the \emph{seaweed combing algorithm},
since it can be interpreted in terms of combing 
(i.e.\ removing double crossings from) seaweed braids.

\index{algorithm!seaweed combing}%
\begin{algorithm}
\textbf{(Semi-local LCS: Seaweed combing)}%
\label{alg-seaweed}
\setlabelitbf
\nobreakitem[Input:]
strings $a$, $b$ of length $m$, $n$, respectively.
\item[Output:]
nonzeros of semi-local seaweed matrix $\rP_{a,b}$.
\item[Description.]
We obtain the output matrix $\rP_{a,b}$ by constructing  
its corresponding reduced seaweed braid of width $m+n$,
laid out within the cells of the alignment dag $\rG_{a,b}$.
Individual seaweeds within the braid 
will be identified by their starting index 
at the top boundary of the padded alignment dag 
$\rG_{a,\charwild^m \underline{b} \charwild^m}$.

We initialise the seaweed braid 
as a composition of $mn$ elementary seaweed braids, 
each laid over a single cell of $\rG_{a,b}$.
An elementary braid in a match (respectively, mismatch) cell
consists of two non-crossing (respectively, crossing) seaweeds.
The composition of all elementary braids is 
a generally unreduced seaweed braid of width $m+n$,
laid over the full dag $\rG_{a,b}$.
By \thref{th-comp-mmult}, an equivalent reduced seaweed braid
corresponds to the semi-local seaweed matrix $\rP_{a,b}$.

We now need to comb the initial unreduced seaweed braid, 
in order to obtain an equivalent reduced one.
The combing is performed by sweeping the cells of the alignment dag $\rG_{a,b}$
from left to right and from top to bottom,
either in the lexicographic order, or in any other total order
compatible with the $\ll$-dominance partial order of the cells.

For each cell, we consider the two seaweeds passing through it.
The two seaweeds initially cross within the cell,
if and only if it is a mismatch cell.
The new layout for the two seaweeds within the current cell 
is decided as follows.
In a match cell, we always leave the seaweeds uncrossed.
In a mismatch cell, we first determine whether the two crossing seaweeds
have ever crossed previously (in terms of the cells' $\ll$-dominance order) 
in the alignment dag.
This check can be performed efficiently 
by comparing the starting indices of the two current seaweeds.
If the current pair of seaweeds have never crossed previously,
we keep their crossing in the current cell.
However, if the two seaweeds have crossed previously,
we undo (``comb away'') their crossing in the current cell.
We then move on to the next cell in the sweeping order.

From the definition of the seaweed monoid (\defref{def-monoid}),
it is easy to see that each step 
is an equivalence transformation on the current seaweed braid.
Hence, the resulting seaweed braid is equivalent to the initial one.
Furthermore, since our procedure never allows 
a given pair of seaweeds to cross twice,
the resulting seaweed braid is reduced,
and gives us the nonzeros of the output matrix $\rP_{a,b}$.
\item[Cost analysis.]
Within a cell, both the crossing check and the update run in time $O(1)$.
Therefore, the overall running time of the algorithm is $O(mn)$.

Note that we never need to store the full seaweed braid explicitly.
For each seaweed, we only need to store its starting index,
and its index on the current frontier of processed cells.
Therefore, the overall memory cost of the algorithm is $O(m+n)$.
\end{algorithm}

\begin{figure}[tbp]
\centering
\subfloat[\label{f-seaweed-init}Initial state: unreduced seaweed braid]{%
\includegraphics{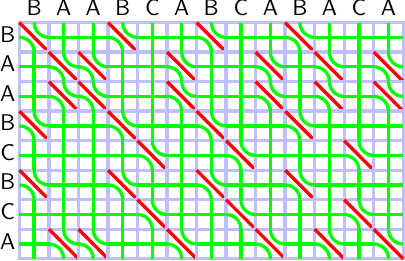}}

\subfloat[\label{f-seaweed-snap}A snapshot of the execution]{%
\includegraphics{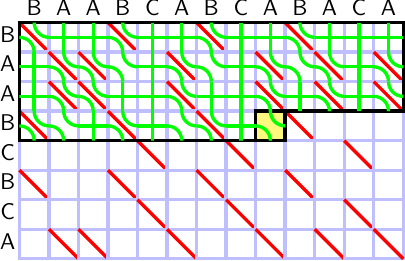}}

\subfloat[\label{f-seaweed-final}Final state: reduced seaweed braid]{%
\includegraphics{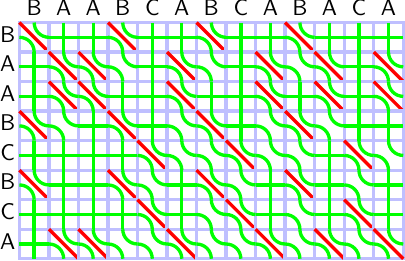}}
\caption{\label{f-seaweed} 
Execution of \algref{alg-seaweed} (seaweed combing)}
\end{figure}
\begin{example}
\figref{f-seaweed} shows the execution of \algref{alg-seaweed}.

\sfigref{f-seaweed-init} shows the initial state of the seaweed braid,
obtained by direct composition of elementary braids 
in all the individual cells. 
The dag $\rG_{a,b}$ is then swept
in the top-to-bottom, left-to-right lexicographic cell order,
which is compatible with $\ll$-dominance of the cells.

\sfigref{f-seaweed-snap} shows a snapshot 
of some intermediate state of the seaweed braid.
The dag area that has already been processed is shown by the dark border;
the current cell is shaded in yellow.
Since the two seaweeds passing through the current cell 
have previously crossed,
their crossing has been undone in the current cell.

\sfigref{f-seaweed-final} shows the final state of the seaweed braid,
identical to the one shown in \figref{f-align-seaweeds}.
It is a reduced seaweed braid corresponding to the output matrix $\rP_{a,b}$.
\end{example}


\index{problem!prefix semi-local LCS}%
Recall that the dag cells in \algref{alg-seaweed}
can be swept in any order compatible with $\ll$-dominance of the cells.
Note that, irrespective of the sweeping order,
the algorithm ends up computing a reduced seaweed braid
(and therefore, implicitly, the semi-local seaweed matrix)
for each prefix of string $a$ against whole string $b$,
and for the whole string $a$ against each prefix of string $b$.
By building an appropriate data structure for querying these implicit matrices,
we obtain a solution to the more general \emph{prefix semi-local LCS problem,}
which incorporates the (ordinary) semi-local LCS problem,
and consists in computing an implicit representation of the LCS scores
for all prefixes of string $a$ against all substrings of string $b$,
and for all substrings of string $a$ against all prefixes of string $b$.

\index{algorithm!seaweed combing!traceback}%
As in the standard dynamic programming LCS algorithm,
the solution to the prefix semi-local LCS problem 
can be used to trace back the actual LCS corresponding to 
any prefix semi-local (i.e.\ prefix-substring or substring-prefix) LCS query,
in time proportional to the size of the output.
A technique similar to the one by Hirschberg \cite{Hirschberg:75}
can also be applied to achieve prefix semi-local LCS traceback
in the same asymptotic time, but in a linear amount of memory.


Assuming the ``equal/unequal'' character comparison model,
Aho et al.\ \cite{Aho+:76_JACM} gave a lower bound of $\Omega(mn)$
on the solution running time of the (global) LCS problem
(see also a survey by Bergroth et al.\ \cite{Bergroth+:00}).
Both the standard dynamic programming LCS algorithm,
and the seaweed combing algorithm (\algref{alg-seaweed}) match this lower bound,
and are therefore asymptotically optimal in this model.

\mysection{The micro-block speedup}
\label{s-micro}

In the previous section, we developed a semi-local LCS algorithm
running in time $O(mn)$,
assuming the character comparison model 
that only allows comparison outcomes ``equal'' and ``unequal''.
We now aim for an asymptotically faster algorithm.
In such a situation, we cannot afford to perform 
all the $mn$ pairwise comparisons of characters from each string.
Therefore, it is natural to switch to a more powerful comparison model,
where the alphabet is a totally ordered set,
and comparison outcomes are ``less than'', ``equal'' and ``greater than''.
The ``missing'' comparisons can be obtained by transitivity.
In this model, we no longer need to process every dag cell individually,
so algorithms with running time $o(mn)$ become possible%
\footnote{This holds true even if 
the computation model assumption is weakened,
so that character comparisons and arithmetic operations 
are charged using the log-cost RAM model.
However, for uniformity we will stick to our original assumption 
of the unit-cost RAM model.}.

A classical LCS computation speedup 
originates from a matrix multiplication method
by Arlazarov et al.\ \cite{Arlazarov+:70},
often nicknamed the ``four Russians method''.
\index{algorithm!dynamic programming!micro-block}%
In this work, we call it the \emph{micro-block speedup,}
adopting the terminology of Bille and G{\o}rtz \cite{Bille_Goertz:09}.
The main idea of this method is to sweep the alignment dag 
in regular micro-blocks of a small, suitably chosen size,
such that running time can be saved  
by precomputing all possible micro-block updates in advance.
Without loss of generality, we assume that $m \leq n$.
By applying the micro-block speedup 
to the classical dynamic programming algorithm,
Masek and Paterson \cite{Masek_Paterson:80}
gave an algorithm for the (global) LCS problem
running in time $O\bigpa{\frac{mn}{\log^2 n} + n}$
for a constant-size alphabet%
\footnote{The original algorithm
by Masek and Paterson \cite{Masek_Paterson:80}
runs in time $O\bigpa{\frac{mn}{\log n} + n}$
for a constant-size alphabet in the log-cost RAM model.
The unit-cost RAM version of the algorithm 
was given in \cite{Wu+:96,Bille_Farach:08}.}.
An alternative approach to subquadratic LCS computation
was developed by Crochemore et al.\ \cite{Crochemore+:03_SIAM}.

An extension of the micro-block subquadratic LCS algorithm 
to an alphabet of unbounded size,
running in time $O\bigpa{\frac{mn (\log\log n)^2}{\log^2 n} + n}$,
was suggested by Paterson and Dan\v c\'\i k \cite{Paterson_Dancik:94},
and fully developed by Bille and Farach-Colton \cite{Bille_Farach:08}.
In this extension, a second, coarser level 
of alignment dag partitioning is introduced.
The blocks of this second level, called \emph{macro-blocks},
are used for reducing the effective alphabet size,
maximising the number of input string characters 
that fit into a machine word for each micro-block update. 

We now give an algorithm for semi-local LCS running in subquadratic time,
which makes a slight improvement on \algref{alg-seaweed}.
The algorithm uses the two-level micro-block method,
similar to the global LCS algorithm of \cite{Bille_Farach:08},
and matches it in running time.
At the same time, our algorithm provides 
a substantially more detailed string comparison.
However, in contrast to the global LCS algorithms,
our algorithm does not allow any further speedup
in the case of a constant alphabet size.

\index{algorithm!seaweed combing!micro-block}%
\begin{algorithm}
\textbf{(Semi-local LCS: Seaweed combing with micro-block speedup)}%
\label{alg-seaweed-micro}%
\setlabelitbf
\nobreakitem[Input:]
strings $a$, $b$ of length $m$, $n$, respectively; we assume $m \leq n$.
\item[Output:]
nonzeros of semi-local seaweed matrix $\rP_{a,b}$.
\item[Description.]
Without loss of generality, 
we may assume that the alphabet size is at most $2n$,
and that the characters are encoded 
by half-integers in the range $\ang{-n:n}$.
We call two strings of equal length \emph{isomorphic},
if one can obtained from the other by a permutation of the alphabet.

We process the alignment dag in square \emph{micro-blocks} of size
\begin{gather*}
\textstyle t=\min\bigpa{\frac{\log n}{16 \cdot \log\log n},m}
\end{gather*}
where the logarithms are base 2.
Similarly to \algref{alg-seaweed},
we start with an initial seaweed braid of width $m+n$,
obtained as a composite of elementary seaweed braids in individual cells.

We then transform the seaweed braid by incremental micro-block combing 
into an equivalent reduced seaweed braid, 
corresponding to the output matrix $\rP_{a,b}$.
The algorithm sweeps the alignment dag $\rG_{a,b}$ in an arbitrary order
compatible with the $\ll$-dominance order of the micro-blocks.
For each micro-block, we perform an update on the current seaweed braid.

Consider a micro-block defined by the input substrings 
$a\ang{l:l+t}$, $b\ang{i:i+t}$,
where $l \in \bra{0:m-h}$, $i \in \bra{0:n-h}$.
Let $i^*=i+m-l$.
A total of $2t$ seaweeds pass through the current micro-block.
A micro-block can be regarded as a function,
parameterised by the current input substrings,
and performing an update on the $2t \times 2t$ permutation matrix $P$,
corresponding to a subbraid of width $2t$ defined by the micro-block.
The states of matrix $P$ before and after the update
will be called the micro-block's 
\emph{input matrix} and \emph{output matrix}, respectively.
Note that both of these are permutation matrices,
and therefore can be represented implicitly by their nonzeros.

The alignment dag can be swept in an arbitrary order 
compatible with the $\ll$-dominance partial order of the micro-blocks.
Combined with precomputation, 
this is already sufficient to obtain a subquadratic algorithm.
However, in order to achieve higher speedup,
we introduce a second level of \emph{macro-blocks} of size 
\begin{gather*}
\textstyle s=\min\bigpa{\frac{\log^2 n}{2},m}
\end{gather*}
We define a macro-block's input and output matrices 
similarly to a micro-block's ones.

The characters of a macro-block's defining input substrings 
are encoded by values in the range $\ang{-n:n}$.
The macro-block's input and output matrices are represented  
by the row and column indices of the nonzeros;
the natural range of these indices is also $\ang{-n:n}$.
In order to perform the computation efficiently,
we remap each of these ranges to a smaller range $\ang{-s:s}$
before passing the values to the micro-block level.
The range remapping preserves the linear order of the values
(both characters and matrix indices).

We process each macro-block as follows.
First, we obtain its defining substrings and the input matrix;
overall, we have $O(s)$ characters and index values for the macro-block.
We then remap both the characters and the index values 
from $\ang{-n:n}$ to $\ang{-s:s}$ 
by removing $2n-2s$ unused values from the range,
while preserving the relative order of the remaining $2s$ values.
This remapping requires sorting of the $O(s)$ values,
and can be performed in time $O(s \log s)$.
We then sweep the current macro-block by micro-blocks,
in an arbitrary order compatible with 
the $\ll$-dominance partial order of the micro-blocks.
For each micro-block, we obtain 
its defining input substrings and the input matrix;
overall, we have $O(t)$ values for the micro-block of size $t$.
We then apply a precomputed update
to the micro-block's $2t \times 2t$ input matrix $P$.

The micro-block's defining substrings and the input matrix
consist each of $2t$ values, ranging over $\ang{-s:s}$.
For each of the at most $(2s)^{2t+2t}=(2s)^{4t}$ possible combinations 
of the input character and index values, 
the output index values resulting from the update
are precomputed in advance, using \algref{alg-seaweed}.

The algorithm maintains the same invariant as \algref{alg-seaweed}:
the current seaweed braid is equivalent to the initial one,
and corresponds to the semi-local LCS problem on the dag $\rG_{a,b}$.
Therefore, at the end of the sweep,
we obtain a reduced seaweed braid corresponding to the output matrix $\rP_{a,b}$.
\item[Cost analysis.]
In the precomputation stage, there are at most $(2s)^{4t}$ problem instances,
each of which runs in time $O(t^2)$.
Therefore, the running time of the precomputation is
\begin{gather*}
O\bigpa{(2s)^{4t} \cdot t^2} =
O\bigpa{(\log^2 n)^{\frac{\log n}{4 \cdot \log\log n}} \cdot t^2} =
O\bigpa{2^{\log(\log^2 n) \cdot \frac{\log n}{4 \log \log n}} \cdot t^2} = {}\\
O\bigpa{2^{2 \log\log n \cdot \frac{\log n}{4 \log \log n}} \cdot t^2} =
O\bigpa{2^{\frac{\log n}{2}} \cdot t^2} = o(n)
\end{gather*}
which is negligible, compared to the subsequent main computation stage.

In the main computation stage,
there are $\frac{mn}{t^2}$ micro-block update steps.
The micro-block's defining substrings and input-output matrices
are each represented by $2t$ values in the range $\ang{-s:s}$.
The full micro-block data are of size
\begin{gather*}
\textstyle
O\bigpa{2t \cdot \log (2s)} = 
O\bigpa{\frac{\log n}{\log\log n} \cdot \log(\log^2 n)} = 
O(\log n)
\end{gather*}
and hence fit into a constant number of machine words.
Therefore, the total running time of the algorithm is
\begin{gather*}
\textstyle
\frac{mn}{t^2} \cdot O(1) = 
O\bigpa{\frac{mn (\log\log n)^2}{\log^2 n} + n}
\end{gather*}
\end{algorithm}

\begin{figure}[tb]
\centering
\includegraphics{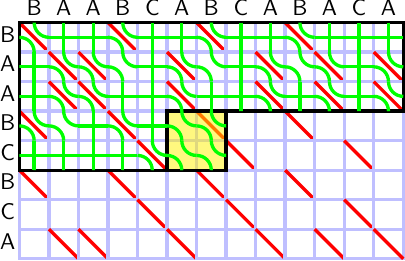}
\caption{\label{f-seaweed-micro} 
A snapshot of \algref{alg-seaweed-micro} (micro-block seaweed combing)}
\end{figure}
\begin{example}
\figref{f-seaweed-micro} shows a snapshot of \algref{alg-seaweed-micro},
using the same conventions as \figref{f-seaweed}.
For simplicity, the macro-blocks are not shown,
and the micro-blocks are assumed to be of size $2$.
As in \algref{alg-seaweed}, the final layout of the seaweed braid 
is identical to the one given in \figref{f-align-seaweeds}.
\end{example}

\mysection{Incremental LCS}
\label{s-incremental}

We now begin to explore the applications of the seaweed method. 
First, we consider on-line string comparison.
In this framework, rather than dealing with a pair of fixed input strings,
we consider either one or both strings to be variable.
These variable string(s) are subject to updates;
several different models of string updating may be considered.
The goal of an on-line LCS algorithm is to maintain a data structure
that will store the LCS score for the input strings,
and will allow efficient on-line updates of this score.
We denote by $a$, $b$ denote the current state of each input string,
and by $m$, $n$ their respective current size.

\paragraph{Appending characters.}
The simplest possible model of string updating
is by \emph{appending} a character at the end of an input string.
The classical dynamic programming LCS algorithm 
\cite{Needleman_Wunsch:70,Wagner_Fischer:74}
maintains an array of all prefix-prefix LCS scores.
Such an array can be efficiently updated 
whenever a character is appended (but not prepended)
to either of the input strings.
The running time of this on-line version 
of the classical dynamic programming algorithm 
is $O(n)$ (respectively, $O(m)$)
per update of string $a$ (respectively, $b$).

\index{problem!incremental LCS}%
\paragraph{Appending/prepending characters.}
Another possible model for string updating 
is to allow both appending a character at the end,
and \emph{prepending} a character at the beginning of a string.
This updating model is assumed
by the \emph{incremental LCS problem},
introduced by Landau et al.\ \cite{Landau+:98} 
and by Kim and Park \cite{Kim_Park:04}.
The problem asks to maintain the LCS score
for a fixed string $a$ against a variable string $b$, 
updated on-line by appending and/or prepending characters.
Works \cite{Landau+:98,Kim_Park:04} 
gave algorithms for the incremental LCS problem 
with worst-case time $O(m)$ per update of string $b$.

The more general \emph{fully-incremental LCS problem},
was introduced by Ishida et al.\ \cite{Ishida+:05}.
Here, both strings can be updated on-line 
by appending and/or prepending characters at either end.
Work \cite{Ishida+:05}
gave an algorithm for the fully-incremental LCS problem,
with worst-case time $O(n)$ (respectively, $O(m)$)
per update of string $a$ (respectively, $b$).

A new algorithm for the fully-incremental LCS problem,
matching the above algorithms in running time,
can be obtained by a straightforward generalisation 
of \algref{alg-seaweed} (seaweed combing) as follows.
Our dynamic LCS data structure will consist of the nonzeros
of semi-local seaweed matrix $\rP_{a,b}$.
Whenever a new character is prepended or appended 
to string $a$ (respectively, $b$),
this seaweed matrix is updated by processing 
a new row of cells along the top or bottom boundary
(respectively, a new column of cells along the left or right boundary) 
of the alignment dag $\rG_{a,b}$.
This amounts to the multiplication of seaweed braids
corresponding to alignment dags of size $m \times n$ and $1 \times n$
(respectively, $m \times n$ and $m \times 1$).
A product of such seaweed braids
can be computed in time $O(n)$ (respectively, $O(m)$)
by \thref{th-comp-mmult-comp}.
Having computed the product, it is also straightforward 
to update the value of the LCS in the same asymptotic running time.
Therefore, the overall running time per update 
is $O(n)$ (respectively, $O(m)$)
per update of string $a$ (respectively, $b$).

\mysection{Blockwise LCS}
\label{s-blockwise}

To demonstrate another natural application of the seaweed method,
let us consider the following generalisation of string comparison.
Suppose that instead of individual characters, 
the input string $a$ is composed of character \emph{blocks},
taken from a pre-specified set of \emph{admissible blocks}
(for example, a list of frequently occurring words in a text).
The set of admissible blocks is known in advance,
and off-line preprocessing of the blocks 
against input string $b$ is allowed.

Block decomposition of strings is very common in string compression.
In this section, we are only dealing with the most general case
of such decomposition;
we will consider specific string compression models in \chapref{c-compressed}.

We denote by $\bar m$ the number of blocks in string $a$,
while keeping the notation $m$ for its total number of characters.
A block may consist of a single character,
therefore such \emph{blockwise string comparison}
generalises ordinary string comparison.
On the other hand, a block may be long;
in such a case, our goal is to process the block quickly,
ideally in the same time as it takes
to process a single character in string $a$.

\index{problem!blockwise LCS}%
\index{problem!blockwise semi-local LCS}%
Given a string of blocks $a$, and a string of ordinary characters $b$,
the \emph{blockwise LCS problem} asks for their LCS score.
Likewise, the \emph{blockwise semi-local LCS problem} 
asks for their implicit semi-local LCS scores,
represented by the seaweed matrix $\rP_{a,b}$.
We can solve both problems naively, by running \algref{alg-seaweed} 
while ignoring the block structure of string $a$, in time $O(mn)$.

\index{problem!common-substring LCS}%
A special case of the blockwise LCS problem
was introduced by Landau and Ziv-Ukelson \cite{Landau_Ziv-Ukelson:01}
as the \emph{common-substring LCS problem} (see also \cite{Crochemore+:04}).
In this version of the problem,
there is only a single non-trivial admissible block of length $l$,
called \emph{common substring}.
The algorithm of \cite{Landau_Ziv-Ukelson:01}
preprocesses the common substring against string $b$ in time $O(nl)$,
and then solves the common-substring LCS problem in time $O(\bar m n)$.

The seaweed method allows us to solve the blockwise LCS and 
the blockwise semi-local LCS problems efficiently as follows.

\paragraph{Preprocessing.}
We preprocess every admissible block $c$ against string $b$
by computing the semi-local seaweed matrix $\rP_{c,b}$.
Let $l$ be the length of the admissible block.
The preprocessing is by \algref{alg-seaweed}, 
and runs in time $O(nl)$ per block;
if $l$ is sufficiently large,
then \algref{alg-seaweed-micro} (the micro-block speedup) can be used.

\paragraph{Blockwise LCS.}
Let us define the score vector $h_{a,b}$ over $\bra{0:n}$ as
$h_{a,b} = \rH^\ssub_{a,b}(0,*)$.
We compute this vector incrementally as follows.
Suppose we have computed the vector $h_{a',b}$,
and let $a = a'c$, where $c$ is an admissible block.
Then we have $h_{a,b} = h_{a',b} \boxdot \rP_{c,b}$,
where matrix $\rP_{c,b}$ has been obtained in the preprocessing phase.
This implicit vector-matrix distance product can be computed
by \thref{th-mvmult} in time $O(n)$.
The overall running time is $O(\bar m n)$.

\paragraph{Blockwise semi-local LCS.}
We compute the seaweed matrix $\rP_{a,b}$ incrementally,
starting from the precomputed matrix $\rP_{d,b}$ 
for an arbitrary block $d$ of string $a$.
Suppose we have computed the matrix $\rP_{a',b}$,
and let $a = a'c$, where $c$ is an admissible block
(the case $a = ca'$ is dealt with analogously).
Then we have $\rP_{a,b} = \rP_{a',b} \boxdot \rP_{c,b}$,
where, as before, matrix $\rP_{c,b}$ has been obtained in the preprocessing phase.
This implicit matrix distance product can be computed
by \thref{th-comp-mmult-comp} in time $O(n \log l + l)$,
where $l$ is the length of block $c$.
The overall running time is $O(\bar m n \log L + m)$,
where $L$ is the maximum size of an admissible block.

\paragraph{Symmetric blockwise (semi-local) LCS.}
Let us consider the symmetric version of the above problems,
where input strings $a$ and $b$ are both composed of admissible blocks,
on which preprocessing is allowed.
In this setting, the preprocessing runs in time $O(l_1 l_2)$
for each pair of blocks of lengths $l_1$, $l_2$.
The blockwise LCS problem can then be solved in time $O(\bar m n + m \bar n)$,
where $\bar m$ and $\bar n$ is the number of blocks 
in string $a$ and $b$, respectively.
The blockwise semi-local LCS problem 
can be solved in time $O\bigpa{(\bar m n + m \bar n)\cdot \log L}$,
where $L$ is the maximum size of an admissible block.

\paragraph{Incremental blockwise (semi-local) LCS.}
We can also consider the incremental versions 
of the blockwise LCS and semi-local LCS problems,
generalising the incremental (semi-local) LCS problem
considered in \secref{s-incremental}.
Both input strings are considered to be variable;
either string can be updated by appending a block on the right,
or prepending a block on the left.
The \emph{block-incremental LCS problem} asks to maintain a data structure
that will store the LCS score for the input strings,
and will allow efficient on-line updates of this score.

The incremental blockwise (semi-local) LCS problem
can be solved by the above algorithms as follows.
The algorithm for the blockwise LCS problem
takes time $O(n+L)$ (respectively, $O(m+L)$)
to process a block appended to string $a$ (respectively, $b$);
here, blocks can only be appended to either string on the right.
The algorithm for the blockwise semi-local LCS problem
takes time $O(n \log L + \bar n L)$ (respectively, $O(m \log L + \bar m L)$) 
to process a block either appended 
or prepended to string $a$ (respectively, $b$);
blocks can be appended (prepended) both on the right or on the left.


\mysection{Window LCS and cyclic LCS}
\label{s-cyclic}

\index{substring!window}%
\index{problem!window LCS}%
Given a fixed parameter $w$, we call a substring of length $w$ 
a \emph{$w$-window} in the corresponding string.
Given strings $a$, $b$, and a window length $w$, the \emph{window LCS problem}
asks for the LCS score of the whole string $a$ 
against every $w$-window in string $b$.

Using the seaweed method, we are now able to give 
a simple efficient algorithm for the window LCS problem.
First, we run \algref{alg-seaweed-micro} (micro-block seaweed combing)
on strings $a$, $b$
obtaining the semi-local seaweed matrix $\rP_{a,b}$.
Then, we perform $m-w+1$ string-substring LCS score queries
for whole $a$ against every $w$-window of $b$.
This can be done efficiently 
as a diagonal batch query via \thref{th-query-inc}.
The overall running time is dominated 
by the call to \algref{alg-seaweed-micro},
which runs in time $O\bigpa{\frac{mn (\log\log n)^2}{\log^2 n} + n}$.

\index{problem!cyclic LCS}%
The following problem can also be solved 
as a special case of the window LCS problem.
Given strings $a$, $b$ of length $m$, $n$ respectively, 
the \emph{cyclic LCS problem} asks for the highest LCS score 
of $a$ against all cyclic shifts of $b$
(equivalently, all cyclic shifts of $a$ against $b$,
or all cyclic shifts of both strings against each other).
Cyclic string comparison has been considered
by Maes \cite{Maes:90}, Bunke and B\"uhler \cite{Bunke_Buehler:93},
Landau et al.\ \cite{Landau+:98}, Schmidt \cite{Schmidt:98},
Marzal and Barrachina \cite{Marzal_Barrachina:00}.
The algorithms of \cite{Landau+:98,Schmidt:98}
solve the cyclic LCS problem in worst-case time $O(mn)$.

We now give a simple algorithm for the cyclic LCS problem,
improving on the existing algorithms by the micro-block speedup.
First, we solve the window LCS problem
on string $a$ against string $bb$ (a concatenation of string $b$ with itself),
which is of length $2n$, with window size $w=n$.
We then take the maximum LCS score across all windows.
The overall running time is still dominated 
by the call to \algref{alg-seaweed-micro},
which runs in time $O\bigpa{\frac{mn (\log\log n)^2}{\log^2 n} + n}$.

\mysection{Longest repeating subsequence}
\label{s-lrs}

\index{problem!longest repeating subsequence}%
Given a string $a$ of length $n$,
the \emph{longest repeating subsequence problem}
asks for the length of the longest subsequence of $a$
that is a \emph{square}, i.e.\ a concatenation of two identical strings.

This problem has been considered
under the name ``longest tandem scattered subsequence problem''
by Kosowski \cite{Kosowski:04},
who gave an algorithm running in time $O(n^2)$.

Using the seaweed method, we are now able to give 
a new algorithm for the longest repeating subsequence problem,
improving on the existing algorithm by the micro-block speedup.
First, we run \algref{alg-seaweed-micro} (micro-block seaweed combing)
on string $a$ against itself,
obtaining the semi-local seaweed matrix $\rP_{a,a}$
in time $O\bigpa{\frac{n^2 (\log\log n)^2}{\log^2 n}}$.
Then, we perform $n-1$ prefix-suffix LCS score queries
for every possible non-trivial prefix-suffix decomposition of $a$;
this can be done in time $O(1)$ per query by \thref{th-query-inc}.
Finally, we take the maximum score among all the queries.
The overall running time is $O\bigpa{\frac{n^2 (\log\log n)^2}{\log^2 n}}$.


\mychapter{Weighted string comparison}
\label{c-weighted}

In this chapter, we generalise our techniques 
developed for the unweighted semi-local LCS problem 
to rational-weighted semi-local string comparison.

This chapter is organised as follows. 
In \secref{s-weighted}, we introduce weighted string alignment,
and describe the blow-up technique for rational-weighted alignment.
In \secref{s-amatch}, we use the framework of weighted alignment 
to obtain algorithms for several versions 
of the approximate pattern matching problem.

\mysection{Weighted scores and edit distances}
\label{s-weighted}

\newcommand{\wmatch}{w_{\scriptscriptstyle\mathrm M}}
\newcommand{\wmismatch}{w_{\scriptscriptstyle\mathrm X}}
\newcommand{\wgap}{w_{\scriptscriptstyle\mathrm G}}
\newcommand{\win}{w_{\scriptscriptstyle\mathrm I}}
\newcommand{\wdel}{w_{\scriptscriptstyle\mathrm D}}
\newcommand{\wsub}{w_{\scriptscriptstyle\mathrm S}}

\index{alignment score!weighted}
\index{$\wmatch$: match weight}
\index{$\wmismatch$: mismatch weight}
\index{$\wgap$: gap weight}
The concept of LCS score is generalised 
by that of \emph{(weighted) alignment score} (see e.g.\ \cite{Jackson_Aluru:06}).
An \emph{alignment} of strings $a$, $b$ is obtained 
by putting a subsequence of $a$ into one-to-one correspondence 
with a (not necessarily identical) subsequence of $b$,
character by character and respecting the index order.
The corresponding pair of characters, one from $a$ and the other from $b$,
are said to be \emph{aligned}.
A character that is not aligned against a character of another string 
is said to be aligned against a \emph{gap} in that string.
Each of the resulting character alignments is given a real \emph{weight}:
\begin{itemize}
\item a pair of aligned matching characters has weight $\wmatch \geq 0$;
\item a pair of aligned mismatching characters has weight $\wmismatch < \wmatch$;
\item a gap-character or character-gap pair
has weight $\wgap \leq \tHalf\wmismatch$;
it is normally assumed that $\wgap \leq 0$
(i.e.\ this weight is in fact a penalty).
\end{itemize}
The intuition behind the weight inequalities is as follows:
aligning a matching pair of characters is always better
than aligning a mismatching pair of characters,
which in its turn is never worse than leaving both characters unaligned
(aligned against a gap).
\begin{definition}
\label{def-score}
The \emph{(weighted) alignment score} for strings $a$, $b$ 
is the maximum total weight across all possible alignments of $a$ against $b$.
\end{definition}
\begin{example}
\label{ex-scorew}
\index{alignment score!LCS}%
The LCS alignment score is given by
\begin{gather*}
\wmatch=1 \qquad \wmismatch=\wgap=0
\end{gather*}
\index{alignment score!weighted!for DNA}%
A slightly more sophisticated alignment score, 
intended to penalise gaps in DNA sequence alignment, is given by 
\begin{gather*}
\wmatch=1 \qquad \wmismatch=0 \qquad \wgap=-0.5
\end{gather*}
Another alignment score used for DNA sequence comparison
\cite[Section 1.3]{Chao_Zhang:09} is given by
\begin{gather*}
\wmatch=2 \qquad \wmismatch=-1\qquad \wgap=-1.5
\end{gather*}
\end{example}
\begin{definition}
\index{problem!semi-local alignment score}%
We define the \emph{semi-local (weighted) alignment score problem}
and its component subproblems (string-substring alignment score, etc.)
by straightforward extension of \defref{def-semi-local},
replacing the LCS score by (weighted) alignment scores.
\end{definition}

The concepts of alignment dag and score matrix,
which we introduced in \chapref{c-semi} for unweighted alignments,
can also be naturally extended to the weighted case.
To distinguish between the weighted and unweighted cases,
we will use the script font (e.g.\ $\cG$, $\cH$) in the weighted case,
keeping the ordinary font ($\rG$, $\rH$) for the unweighted case.

\begin{definition}
\index{alignment dag!weighted}%
\index{$\cG_{a,b}$: weighted alignment dag}%
We define the \emph{weighted alignment dag} $\cG_{a,b}$
by straightforward extension of \defref{def-alignment-dag},
where diagonal match edges, diagonal mismatch edges, 
and horizontal/vertical edges
have weight $\wmatch$, $\wmismatch$, $\wgap$, respectively.
\end{definition}

\begin{definition}
\index{matrix!score!weighted semi-local}%
\index{$\cH_{a,b}$: semi-local weighted score matrix}%
We define the \emph{semi-local (weighted) score matrix} $\cH_{a,b}$
by straightforward extension of \defref{def-matrix}.
A semi-local alignment score corresponds to 
a boundary-to-boundary highest-scoring path 
in the $m \times (2m+n)$ padded weighted alignment dag
$\cG_{a,\charwild^m \underline{b} \charwild^m}$.
\end{definition}
Matrix $\cH_{a,b}$ is anti-Monge.
However, in contrast with the unweighted case,
it is not necessarily unit-anti-Monge.

\index{alignment score!weighted!normalised}
Given an arbitrary set of alignment weights,
it is often convenient to normalise them 
so that $0 = \wgap \leq \wmismatch < \wmatch = 1$.
To obtain such a normalisation, first observe that, 
given a pair of strings $a$, $b$, 
and arbitrary weights $\wmatch \geq 0$, 
$\wmismatch < \wmatch$, $\wgap \leq \tHalf\wmismatch$,
we can replace the weights respectively by 
$\wmatch + 2x$, $\wmismatch + 2x$, $\wgap + x$, for any real $x$.
This weight transformation increases the score of every 
global alignment (top-left to bottom-right path in $\rG_{a,b}$) by $(m+n)x$.
Therefore, the relative scores 
of different global alignment paths do not change.
In particular, the maximum global alignment score 
is attained by the same path for all values of $x$.
By taking $x=-\wgap$, 
and dividing the resulting weights by $\wmatch-2\wgap > 0$,
we achieve the desired normalisation.
(A similar method is used by Rice et al.\ \cite{Rice+:97};
see also \cite{Gusfield+:94,Jones_Pevzner:04}.)
\begin{definition}
\label{def-normalised}
\index{alignment score!weighted!normalised}%
Given original weights $\wmatch$, $\wmismatch$, $\wgap$,
the corresponding \emph{normalised weights} are
\begin{gather*}
\wmatch^* = 1\qquad
\wmismatch^* = \frac{\wmismatch-2\wgap}{\wmatch-2\wgap}\qquad
\wgap^* = 0
\end{gather*}
The resulting \emph{normalised score} is
\begin{gather*}
h^* = \frac{h - (m+n)\wgap}{\wmatch - 2\wgap}
\end{gather*}
where $h$ is the original alignment score.
This original score can be restored from the normalised one 
by reversing the normalisation:
$h = h^* \cdot (\wmatch-2\wgap) + (m+n) \cdot \wgap$.
\end{definition}
\begin{example}
In \exref{ex-scorew}, the LCS score is already normalised.
The other two scores correspond to normalised scores 
with weights $\wmatch^* = 1$,
$\wmismatch^* = \frac{0-2\cdot(-0.5)}{1-2\cdot(-0.5)} = 0.5$ 
(respectively, $\wmismatch^* = \frac{-1-2\cdot(-1.5)}{2-2\cdot(-1.5)} = 0.4$),
and $\wgap^* = 0$.
\end{example}

As discussed above, 
maximising the normalised global alignment score $h^*$
is equivalent to maximising the original score $h$,
for fixed string lengths $m$ and $n$.
However, more care is needed when it is required 
to maximise the alignment score across substrings of different lengths,
which is typical for various approximate string matching problems.
In such cases, explicit conversion from normalised weights
to original weights will be necessary prior to the maximisation.

In this work, we will mostly restrict ourselves to character alignment weights
that satisfy the following rationality condition.
\begin{definition}
\label{def-edist-rational}
\index{alignment score!weighted!rational}%
A set of character alignment weights will be called \emph{rational},
if all the weights are rational numbers.
\end{definition}
For a rational set of alignment weights, 
the corresponding normalised weights 
(in particular, the only non-trivial normalised weight $\wmismatch$) 
are also rational.
Given such a rational set of normalised weights,
the semi-local alignment score problem on strings $a$, $b$
can be reduced to the semi-local LCS problem 
by the following \emph{blow-up} procedure.
Let $\wmismatch = \tfrac{\mu}{\nu} < 1$,
where $\mu$, $\nu$ are positive natural numbers.
We transform input strings $a$, $b$ of lengths $m$, $n$
into new \emph{blown-up} strings $\Ta$, $\Tb$ 
of lengths $\Tm = \nu m$, $\Tn = \nu n$.
The transformation consists in replacing
every character $\gamma$ in each of the strings
by a substring $\charguard^\mu \gamma^{\nu-\mu}$ of length $\nu$
(recall that $\charguard$ is a special guard character,
not present in the original strings).
We have
\begin{gather*}
\cH_{a,b}(i,j) = \tfrac{1}{\nu} \cdot \rH_{\Ta,\Tb}(\nu i,\nu j)
\end{gather*}
for all $i \in \bra{-m:n}$, $j \in \bra{0:m+n}$,
where the matrix $\cH_{a,b}$ is defined 
by the normalised scores on the original strings $a$, $b$,
and the matrix $\rH_{\Ta,\Tb}$ 
by the LCS scores on the blown-up strings $\Ta$, $\Tb$.
Using this reduction, we are able to apply
the techniques of semi-local LCS 
to the more general semi-local alignment score problem,
assuming that all weights are rational and that $\nu$ is constant.

\begin{figure}[tbp]
\centering
\subfloat[\label{f-align-weighted}Weighted alignment dag $\cG_{a,b}$]{%
\includegraphics{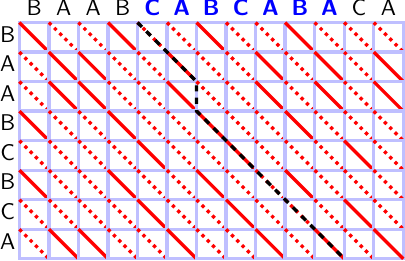}}

\subfloat[\label{f-align-blowup}Alignment dag $\rG_{\Ta,\Tb}$ for the blown-up strings]{%
\includegraphics{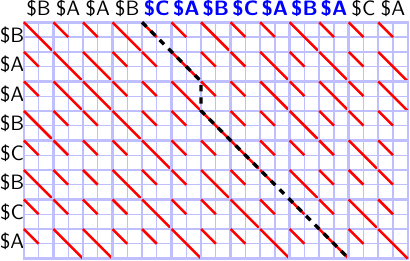}}

\subfloat[\label{f-align-blowup-seaweeds}Alignment dag $\rG_{\Ta,\Tb}$
and the nonzeros of $\rP_{\Ta,\Tb}$ as seaweeds]{%
\makebox[\textwidth]{\includegraphics{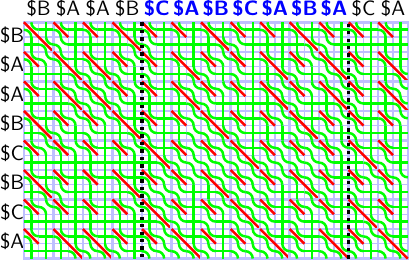}}}
\caption{\label{f-align-w}Semi-local weighted alignment}
\end{figure}

\begin{example}
\figref{f-align-w} shows semi-local weighted alignment of strings $a$, $b$,
using normalised alignment weights $\wmatch=1$, $\wmismatch=0.5$, $\wgap = 0$.

\sfigref{f-align-weighted} shows the alignment dag $\cG_{a,b}$.
Match edges of weight $\wmatch=1$ and mismatch edges of weight $\wmatch=0.5$
are shown respectively by solid and dotted red lines.
The highlighted path of score $5.5$ corresponds
to the string-substring weighted alignment score for string $a$
against substring $b\ang{4:11} = \textsf{``CABCABA''}$.

\sfigref{f-align-blowup} shows the alignment dag $\rG_{\Ta,\Tb}$
for the blown-up strings $\Ta$, $\Tb$. 
We have $\mu=1$, $\nu=2$,
\begin{gather*}
\Ta = \textsf{``\$B\$A\$A\$B\$C\$B\$C\$A''}\\
\Tb = \textsf{``\$B\$A\$A\$B\$C\$A\$B\$C\$A\$B\$A\$C\$A''}
\end{gather*}
The highlighted path has the the same meaning as in \sfigref{f-align-weighted}.

\sfigref{f-align-blowup-seaweeds} shows matrix $\rP_{\Ta,\Tb}$ 
as a reduced seaweed braid of width $(8+13) \cdot 2 = 42$.
Compared to the seaweed braid 
for the original strings $a$, $b$ (\figref{f-align-seaweeds}),
the complexity of the seaweed braid for the blown-up strings
has increased by a factor of $\nu^2 = 4$.
\end{example}

\index{problem!edit distance}%
An important special case of weighted string alignment
is the \emph{edit distance problem}.
Here, the characters are assumed to match ``by default'': $\wmatch = 0$.
The mismatches and gaps are penalised: $2\wgap \leq \wmismatch < 0$.
The resulting score is always nonpositive.
Equivalently, we can regard string $a$ as being transformed into string $b$
by a sequence of \emph{character edits,} each with an associated cost:
\begin{itemize}
\item character insertion or deletion (\emph{indel}) has cost $-\wgap > 0$;
\item character \emph{substitution} has cost $-\wmismatch > 0$.
\end{itemize}
\begin{definition}
\label{def-edist}
The \emph{(weighted) edit distance} between strings $a$, $b$ 
is the minimum total cost of a sequence of character edits
transforming $a$ into $b$.
Equivalently, it is the (nonnegative) absolute value 
of the corresponding (nonpositive) alignment score.
\end{definition}
The edit distance is a \emph{metric:}
it is nonnegative (zero on equal strings and positive otherwise),
symmetric, and satisfies the triangle inequality.
\begin{example}
\index{problem!edit distance!indel (LCS, simple)}%
\index{alignment score!weighted!indel}%
The \emph{indel distance} (also called the \emph{LCS distance} or \emph{simple distance})
\cite{Needleman_Wunsch:70,Apostolico_Guerra:87_Algorithmica,Bergroth+:00} 
has indel cost $1$ and substitution cost $2$,
making any substitution equivalent to an insertion-deletion pair, and thus redundant.
The corresponding \emph{indel alignment score} is given by 
\begin{gather*}
\wmatch=0 \qquad \wmismatch=-2 \qquad \wgap=-1 
\end{gather*}
\index{problem!edit distance!indelsub (Levenshtein)}%
\index{alignment score!weighted!indelsub}%
The \emph{indelsub distance} (also called the \emph{Levenshtein distance})
\cite{Levenshtein:65} 
has both indel cost and substitution cost equal to $1$.
The corresponding \emph{indelsub alignment score} is given by 
\begin{gather*}
\wmatch=0 \qquad \wmismatch=\wgap=-1 
\end{gather*}
\end{example}

\index{problem!edit distance!episode}%
The definition of edit distance can be generalised
by allowing insertions and deletions to have two separate, distinct costs.
For example, the asymmetric \emph{episode distance} \cite{Das+:97} 
corresponds to insertion cost $0$,
and strictly positive deletion and substitution costs.

In the rest of this work, the edit distance problem 
will be treated as a special case of the weighted alignment problem.
In particular, all the techniques of the previous sections
apply to the semi-local edit distance problem,
as long as the character edit costs are rational.

\mysection{Approximate pattern matching}
\label{s-amatch}

\index{problem!approximate matching}%
\index{substring!matching}%
Approximate pattern matching is a natural generalisation 
of classical (exact) pattern matching,
allowing for some character differences between the pattern
and a matching substring of the text.
It also fits well in our framework
of semi-local alignment score and edit distance problems.
Given a text string $t$ of length $n$ 
and a pattern string $p$ of length $m \leq n$, 
the \emph{approximate pattern matching problem}
asks for all the substrings of the text 
that are close to the pattern,
i.e.\ that have sufficiently high alignment score
(or, equivalently, sufficiently low edit distance) against the pattern.
Such substrings of the text will be called \emph{matching substrings}.
The precise definition of ``sufficiently high alignment score'' 
(or ``sufficiently low edit distance''),
and therefore of a matching substring,
may vary in different versions of the problem.
Typically, matching substrings will correspond 
to a certain set of maxima (global, local, row, column etc.) 
in the string-substring score matrix $\cH^\ssub_{p,t}$.

The most general form of approximate pattern matching is as follows.
\begin{definition}
\label{def-cmatch}
\index{problem!approximate matching!complete}%
The \emph{complete approximate matching problem}
assumes an alignment score with arbitrary weights.
For every suffix of text $t$, 
the problem asks for a prefix of this suffix 
that has the highest alignment score against pattern $p$.
This corresponds to the set of all row maxima 
in the matrix $\cH^\ssub_{p,t}$.
\end{definition}

\index{algorithm!dynamic programming}%
The complete approximate pattern matching problem can be solved 
by a classical dynamic programming algorithm 
due to Sellers \cite{Sellers:80}.
This algorithm solves in fact the more general problem
of complete approximate matching between every prefix of pattern $p$
against the whole text $t$.
The algorithm runs in time $O(mn)$.
By applying the micro-block speedup to Sellers' algorithm,
the running time can be improved to $O\bigpa{\frac{mn}{\log^2 n} + n}$ 
for a constant-size alphabet,
and to $O\bigpa{\frac{mn (\log\log n)^2}{\log^2 n} + n}$ 
for an unbounded-size alphabet,
assuming a rational set of alignment weights.
Various extensions of the problem have been considered 
by Cormode and Muthukrishnan \cite{Cormode_Muthukrishnan:07} and many others 
(see e.g.\ a survey by Navarro \cite{Navarro:01} and references therein).

\index{$\tau_h(A)$: threshold subset}%
The complete approximate pattern matching problem
asks for the best approximate pattern match at every position of the text.
However, in most cases we are only interested in matches
that are above a certain fixed similarity threshold.
Given a matrix $A$ and a threshold $h$, 
it will be convenient to denote 
the subset of entries above the threshold by
\begin{gather*}
\tau_h(A) = \bigbrc{\text{$(i,j)$, such that $A(i,j) \geq h$}}
\end{gather*}

\begin{definition}
\label{def-tmatch}
\index{problem!approximate matching!threshold}%
The \emph{threshold approximate matching problem}
(often called simply ``approximate matching'')
assumes an alignment score with arbitrary weights.
Given a threshold score $h$,
the problem asks for all substrings of text $t$
that have alignment score at least $h$ against pattern $p$.
This corresponds to all points in the set $\tau_h(\cH^\ssub_{p,t})$.
\end{definition}

\index{problem!approximate matching!output filtering}%
The introduction of a threshold
concentrates the search for matching substrings 
on the areas of high similarity between the pattern and the text.
However, the resulting set of matching substrings may be highly redundant.
In particular, the substrings asked for 
by \defref{def-tmatch} (threshold approximate matching) 
will typically include some highly overlapping ones,
with starting and/or ending positions only differing by a few characters.
The usual convention for eliminating such redundancy 
is to $\emph{filter}$ the output,
by retaining only a specially constrained subset of all the matching substrings.
The filtered output may retain, for instance:
\begin{itemize}
\item only inclusion-minimal matching substrings;
\item only the matching substrings of a fixed length $w$;
\item only the unique starting positions 
(or, symmetrically, the unique ending positions) of matching substrings.
\end{itemize}
In the results that we cite, the filtering is often left implicit,
and can usually be assumed to be of one of the above types.

The most basic version of threshold approximate matching
can be obtained by choosing the alignment score to be the unweighted LCS score,
and setting the threshold to $h=m$ (the pattern length).
A few different filtering methods can be applied 
to cut down on the matching substring redundancy.
\begin{definition}
\label{def-lsubrec}
\index{problem!subsequence recognition!local}%
\index{problem!episode matching}%
\index{problem!subsequence recognition!minimal-window}%
\index{problem!subsequence recognition!fixed-window}%
The \emph{local subsequence recognition problem}
(also known as the \emph{episode matching problem})
asks for all substrings in text $t$ 
containing pattern $p$ as a subsequence.
This corresponds to all points in the set $\tau_m(\rH^\ssub_{p,t})$.
If this set is nonempty 
(i.e.\ if $p$ is contained in the whole $t$ as a subsequence), 
it is the set of all global maxima in the matrix $\rH^\ssub_{p,t}$.
The \emph{minimal-window subsequence recognition problem}
asks for all inclusion-minimal substrings in the text
containing the pattern as a subsequence.
This corresponds to all $\gtrless$-minimal points 
in the set $\tau_m(\rH^\ssub_{p,t})$.
The \emph{fixed-window subsequence recognition problem,}
given a window length $w$,
asks for all substrings of length $w$ of the text
containing the pattern as a subsequence.
This corresponds to all points lying 
on the intersection of the diagonal $j-i=w$ 
with the set $\tau_m(\rH^\ssub_{p,t})$.
\end{definition}

The local subsequence recognition problem
has been considered by Das et al.\ \cite{Das+:97}.
For both the minimal-window and fixed-window versions, 
they give an algorithm running in time $O\bigpa{\frac{mn}{\log^2 n} + n}$
for a constant-size alphabet, which can be modified to 
an algorithm running in time $O\bigpa{\frac{mn (\log\log n)^2}{\log^2 n} + n}$ 
for an unbounded-size alphabet.
A multi-pattern version of the problems has been considered 
by C\'egielski et al.\ \cite{Cegielski+:06_IPL}.

The most well-studied version of threshold approximate pattern matching 
is under the Levenshtein edit distance (equivalently, the indelsub alignment score).
\begin{definition}
\label{def-edmatch}
\index{problem!edit distance matching}%
The problem of \emph{edit distance matching}
is a special case of the approximate matching problem with threshold $h < 0$,
where the alignment score is given by weights 
$\wmatch=0$, $\wmismatch$, $\wgap$, where $2\wgap \leq \wmismatch < 0$.
This problem is usually defined
in terms of the corresponding edit distance threshold $k= -h > 0$,
and only the unique starting positions of matching substrings are required.
\end{definition}
\index{problem!matching with $k$ differences}%
The most well-studied case of the edit distance matching problem 
is Levenshtein distance matching, 
also known as \emph{matching with $k$ differences},
where $\wmismatch = \wgap = -1$.
When the threshold $k$ is low,
the best known algorithm for matching with $k$ differences
is by Cole and Hariharan \cite{Cole_Hariharan:02},
running in time $O\bigpa{\frac{nk^4}{m}+n}$.
For higher values of $k$, the best known algorithm 
is by Landau and Vishkin \cite{Landau_Vishkin:89},
running in time $O(nk)$.

Seaweed combing provides us with a unified algorithm 
for approximate pattern matching,
applicable to any of the problem's versions
described by \defrefs{def-cmatch}--\ref{def-edmatch},
in the case of rational weights.
Our algorithm matches the micro-block version of Sellers' algorithm
in running time for an unbounded-size alphabet.

The algorithm is as follows.
First, we apply the normalisation and blow-up technique of \secref{s-weighted}
to transform the semi-local alignment score problem on strings $p$, $t$ 
into the semi-local LCS problem on blown-up strings $\Tp$, $\Tt$.
Then, we call \algref{alg-seaweed-micro} 
(seaweed combing with micro-block speedup) on strings $\Tp$, $\Tt$
obtaining the semi-local seaweed matrix $\rP_{\Tp,\Tt}$.
By \thref{th-query}, we then build a data structure that allows to query 
any element of the semi-local LCS score matrix $\rH_{\Tp,\Tt}$ in polylogarithmic time.
This data structure provides us with an implicit representation
of the semi-local alignment score matrix $\cH_{p,t}$.
Since both this matrix 
and its string-substring submatrix $\cH^\ssub_{p,t}$ are anti-Monge, 
all the row maxima can now be found efficiently 
by the algorithm of \lmref{lm-rowmin-tmon}.
This solves the complete approximate matching problem.
Both the minimal-window and the fixed-window versions 
of the local subsequence recognition problem,
as well as the approximate matching problem with $k$ differences,
can now be solved by selecting the rows where the maxima
satisfy the additional filtering conditions.

The algorithm's running is independent of the threshold parameter $k$.
In all the described versions of the algorithm, 
the overall running time is dominated 
by the call to \algref{alg-seaweed-micro},
which runs in time $O\bigpa{\frac{mn (\log\log n)^2}{\log^2 n} + n}$.


\mychapter{Periodic string comparison}
\label{c-periodic}

In this chapter, we introduce the periodic string-substring LCS problem,
and solve it by the seaweed method.

This chapter is organised as follows. 
In \secref{s-periodic}, we define the periodic string-substring LCS problem,
and develop an algorithm for its solution,
called the wraparound seaweed combing algorithm.
In \secref{s-tandem}, we apply this algorithm
to the tandem LCS problem and the tandem cyclic alignment problem,
improving on existing algorithms in running time.

\mysection{Wraparound seaweed combing}
\label{s-periodic}

Strings with periodic (or approximately periodic) structure
play an important role 
in both the theory and the applications of string algorithms.
In particular, a variant of periodic string comparison is a key subroutine 
in the fastest known low-threshold approximate pattern matching algorithm
by Cole and Hariharan \cite{Cole_Hariharan:02}.
In computational molecular biology, approximately periodic substrings
of a genome are known as \emph{tandem repeats},
and are crucial for efficient genome analysis
(see e.g.\ Schmidt \cite{Schmidt:98} and references therein).

In this chapter, we give a method of string comparison
that is designed to exploit a periodic structure in the input.
This is made possible by adapting the seaweed method (\chapref{c-seaweed})
to account for input string periodicity.

\index{string!periodic}%
As before, we denote by $a$ a finite input string of length $m$.
Let the input string $b$ be infinite in both directions and \emph{periodic}:
$b = u^{\pm\infty} = \ldots uuuu \ldots$
The \emph{period string} $u$ is finite of length $p$.

\begin{definition}
\label{def-semi-local-periodic}
\index{problem!periodic string-substring LCS}%
Given strings $a$, $u$, the \emph{periodic string-substring LCS problem}
asks for the LCS score of string $a$
against every finite substring of $b = u^{\pm\infty}$.
\end{definition}

Without loss of generality, we assume 
that every character of $a$ occurs in $u$ at least once.
Then, the length of the substring of $b$ in \defref{def-semi-local-periodic}
can be restricted to be at most $mp$
(for any longer substring of $b$, every character of $a$ 
can be matched to a different copy of the period $u$ within the substring,
and therefore the string-substring LCS score will be equal to $m$).

The definition of the alignment dag $\rG_{a,b}$ (\defref{def-alignment-dag})
extends naturally to the periodic string-substring LCS problem.
The alignment dag for this problem is itself infinite and \emph{periodic}:
all vertical and horizontal edges have score $0$,
and each pair of diagonal edges
\begin{gather*}
v_{\hl^-,\hi^-}   \to v_{\hl^+,\hi^+}\qquad
v_{\hl^-,\hi^-+p} \to v_{\hl^+,\hi^++p}
\end{gather*}
have equal scores for all $\hl \in \ang{0:m}$, $\hi \in \ang{-\infty:+\infty}$.
Such an infinite alignment dag can also be regarded as a horizontal composition
of an infinite sequence of \emph{period subdags},
each isomorphic to the $m \times p$ alignment dag $\rG_{a,u}$.

Since string $b$ is infinite, and therefore has no finite prefixes of suffixes, 
the semi-local score and seaweed matrices 
can be understood as just the respective infinite string-substring matrices:
matrix $\rH_{a,b} = \rH^\ssub_{a,b}$ over $\bra{-\infty:+\infty}$,
and matrix $\rP_{a,b} = \rP^\ssub_{a,b}$ over $\ang{-\infty:+\infty}$.
\index{matrix!periodic}%
Furthermore, matrices $\rH_{a,b}$, $\rP_{a,b}$ are again \emph{periodic}:
we have 
\begin{alignat*}{2}
{}
&\rH_{a,b}(i,j) &&= \rH_{a,b}(i+p,j+p)\\
&\rP_{a,b}(\hi,\hj) &&= \rP_{a,b}(\hi+p,\hj+p)
\end{alignat*}
for all $i,j \in \bra{-\infty:\infty}$, $\hi,\hj \in \ang{-\infty:\infty}$.
\index{matrix!periodic!period submatrix}%
To represent such matrices, it is sufficient to store the $p$ nonzeros
of an infinite submatrix of $\rP_{a,b}$: 
either the $p \times \infty$ \emph{row-period submatrix} 
$\rP_{a,b}\ang{0:p \mid *}$,
or, symmetrically, the $\infty \times p$ \emph{column-period submatrix} 
$\rP_{a,b}\ang{* \mid 0:p}$.
The nonzero sets of the two period submatrices 
can be obtained from one another in time $O(p)$.
When working with an infinite periodic seaweed matrix,
we will assume such a representation by default.
For example, accessing a column $\rP_{a,b}(*, \hi)$
will correspond to accessing all columns $\rP_{a,b}(*, \hi + kp)$,
where $k \in \bra{-\infty:+\infty}$.

For the periodic semi-local LCS problem,
the seaweeds only need to be traced within a single period subdag, 
with appropriate wraparound.
Therefore, the problem can be solved 
by the following variant of seaweed combing.

\index{algorithm!seaweed combing!wraparound}%
\begin{algorithm}
\textbf{(Periodic string-substring LCS: Wraparound seaweed combing)}
\label{alg-seaweed-periodic}
\setlabelitbf
\nobreakitem[Input:]
strings $a$, $u$ of length $m$, $p$, respectively; here, $b=u^{\pm\infty}$.
\item[Output:]
nonzeros of semi-local seaweed matrix $\rP_{a,b}$, represented by nonzeros 
of (say) row-period submatrix $\rP_{a,b}\ang{0:p \mid *}$.
\item[Description.]
Similarly to the ordinary seaweed combing algorithm (\algref{alg-seaweed}),
we start with a generally unreduced seaweed braid on the alignment dag $\rG_{a,b}$,
obtained by composition of elementary seaweed braids for individual cells.
We now need to comb this braid in order to obtain
an equivalent reduced seaweed braid.
Note that we only need to maintain $p$ seaweeds,
corresponding to (say) the row-period submatrix $\rP_{a,b}\ang{0:p \mid *}$;
every such seaweed will represent an infinite periodic family of seaweeds.

In contrast with \algref{alg-seaweed},
we are no longer able to sweep the cells of the alignment dag $\rG_{a,b}$
in an order compatible with $\ll$-dominance,
since there are no $\ll$-minimal cells to begin from.
Instead, we sweep the dag in the following special order.
In the outer loop, we run through the rows of cells top-to-bottom.
For each current row $\hat l \in \ang{0:m}$, 
we start the inner loop at an arbitrary match cell,
i.e.\ at an index $\hi_0 \in \ang{0:p}$, 
such that $a(\hat l) = u(\hi)$.
Such an index $\hi_0$ is guaranteed to exist by the assumption 
that every character of $a$ occurs in $u$ at least once.
Then, starting from $\hi=\hi_0$, 
we sweep the cells of the current row left-to-right,
wrapping around from $\hi=p^-$ to $\hi=0^+$,
and continuing the sweep left-to-right up to $\hi=\hi_0$.

For each cell, we consider the two seaweeds passing through it.
The new layout for the two seaweeds within the current cell 
is decided similarly to \algref{alg-seaweed}.
In a match cell, we always leave the seaweeds uncrossed.
In a mismatch cell, we leave the seaweeds crossed,
if and only if this pair of seaweeds have never crossed previously
(in terms of the $\ll$-dominance order) in the alignment dag.
However, if the two seaweeds have crossed previously,
we undo (``comb away'') their crossing in the current cell.
We then move on to the next cell in the sweeping order.

In order to perform the crossing check efficiently, we maintain,
similarly to \algref{alg-seaweed}, 
the starting index of every seaweed considered during the sweep.
This starting index will always be 
at the (infinite) top boundary of the dag $\rG_{a,b}$.
However, since the seaweeds enter and leave 
the current period subdag during the sweep,
a seaweed's starting index may not necessarily belong to the current subdag.
The starting indices of all the considered seaweeds
can still be maintained efficiently as follows.
As the row sweep wraps around, 
a seaweed leaves the current period subdag 
at the right boundary of the cell with $\hi=p^-$.
This seaweed is replaced by a seaweed from the same periodic family
entering the current period subdag 
at the left boundary of the cell with $\hi=0^+$.
The starting index of the new seaweed can be obtained
by subtracting the period length $p$ 
from the starting index of the leaving seaweed.
The crossing check for a pair of seaweeds is performed
by comparing their starting indices, similarly to \algref{alg-seaweed}.

The correctness of the combing procedure is implied,
similarly to \algref{alg-seaweed},
by \thref{th-comp-mmult} and \defref{def-monoid}.
The resulting seaweed braid is reduced, and gives us the nonzeros 
of the column-period submatrix $\rP_{a,b}\ang{* \mid 0:p}$.
By translating the indices of every nonzero by an appropriate multiple of $p$,
it is straightforward to convert this matrix 
to the output row-period submatrix $\rP_{a,b}\ang{0:p \mid *}$.
\item[Cost analysis.]
Similarly to \algref{alg-seaweed},
a seaweed crossing check and a cell update both run in time $O(1)$.
The conversion from column-period to row-period submatrix
runs in time $O(p)$.
Therefore, the overall running time of the algorithm is $O(mp)$.

Like in \algref{alg-seaweed},
for each seaweed, we only need to store its starting index,
and its index on the current frontier of processed cells.
Only the $p$ current seaweeds are represented explicitly,
therefore the overall memory cost of the algorithm is $O(p)$.
\end{algorithm}
\begin{figure}[tb]
\centering
\includegraphics{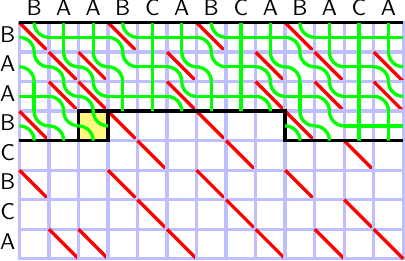}
\caption{\label{f-seaweed-periodic} 
A snapshot of \algref{alg-seaweed-periodic} (wraparound seaweed combing)}
\end{figure}
\begin{example}
\figref{f-seaweed-periodic} shows a snapshot of \algref{alg-seaweed-periodic},
using the same conventions as \figref{f-seaweed}.
The seaweed braid is laid out on a period subdag $\rG_{a,u}$;
each seaweed leaving the subdag on the right is replaced
by a seaweed from the same periodic family entering the subdag
at the corresponding point on the left.
Note that, although the two seaweeds meeting in the highlighted current cell
do not cross in the current period subdag,
they have both arrived from the previous period subdag,
where they did cross.
Therefore, their crossing has been undone in the current cell.
\end{example}

In contrast with \algref{alg-seaweed},
the extra data dependencies caused by the wraparound
and the resulting restrictions of the cell sweeping order
seem to rule out the possibility 
of a micro-block version for \algref{alg-seaweed-periodic}.

\mysection{Tandem alignment}
\label{s-tandem}

The periodic LCS problem has many variations that can be solved 
by an application of wraparound seaweed combing 
(\algref{alg-seaweed-periodic}).

\paragraph{Tandem LCS.}
\index{problem!tandem LCS}%
The first such variation is the \emph{tandem LCS problem}.
The problem asks for the LCS score of a string $a$ of length $m$
against a tandem $k$-repeat string $b=u^k$ of length $n=kp$.
As before, we assume that every character of $a$ occurs in $u$ at least once;
we may also assume that $k \leq m$.

The tandem LCS problem can be solved naively 
by considering the LCS problem directly on strings $a$ and $b$, 
ignoring the periodic structure of string $b$.
The standard dynamic programming LCS algorithm 
\cite{Needleman_Wunsch:70,Wagner_Fischer:74}
solves the problem in time $O(mn) = O(mkp)$.
This running time can be slightly improved 
by the micro-block speedup.

The tandem LCS problem can also be regarded 
as a special case of the common-substring LCS problem 
\cite{Landau_Ziv-Ukelson:01,Crochemore+:04} (see \secref{s-blockwise}).
Using this technique, the problem can be solved in time $O\bigpa{m(k+p)}$.
The techniques of Landau et al.\ \cite{Crochemore+:04,Landau+:07}
give an algorithm for the tandem LCS problem,
parameterised by the LCS score of the input strings;
however, the worst-case running time of this algorithm 
is still $O\bigpa{m(k+p)}$.
Landau \cite{Landau:06} asked if the running time for the tandem LCS problem
can be improved to $O\bigpa{m(\log k+p)}$.

We now give an algorithm that improves on the current algorithms
in time and functionality,
and even exceeds Landau's expectation.
First, we call \algref{alg-seaweed-periodic}
on strings $a$ and $u$.
Then, we count the number of nonzeros $\gtrless$-dominated by point $(0,n)$,
i.e.\ nonzeros in the submatrix $\rP_{a,b}\ang{0:+\infty \mid -\infty:n}$.
Given the (say) row-period submatrix $\rP_{a,b}\ang{0:p \mid *}$,
this can be done by a sweep of its $p$ nonzeros,
counting every nonzero with appropriate multiplicity.
More precisely, every nonzero $\rP_{a,b}(\hi,\hj)=1$,
$\hi \in \ang{0:p}$, $\hj \in \ang{-\infty:\infty}$,
is counted with multiplicity $k-\floor{\hj/p}$, if $\hj \in \ang{0:n}$,
and is skipped (counted with multiplicity $0$) otherwise.
The solution to the tandem LCS problem is then obtained by \thref{th-ps}.
The overall running time is dominated
by the call of \algref{alg-seaweed-periodic},
which runs in time $O(mp)$.

\paragraph{Tandem alignment.}
\index{problem!tandem alignment}%
Another set of variations on the periodic LCS problem
was introduced by Benson \cite{Benson:05} as the \emph{tandem alignment problem}.
Instead of asking for all string-substring LCS scores
of $a$ against $b=u^{\pm\infty}$,
the tandem alignment problem asks for a substring of $b$
that is closest to $a$ in terms of alignment score (or edit distance),
under different restrictions on the substring.
In particular:
\begin{itemize}
\item the \emph{pattern global, text global (PGTG) tandem alignment problem}
restricts the substring of $b$
to consist of a whole number of copies of $u$,
i.e.\ to be of the form $u^k = uu \ldots u$ for an arbitrary integer $k$;
\item the \emph{tandem cyclic alignment problem}
restricts the substring of $b$
to be of length $kp$ for an arbitrary integer $k$
(but it may not consist of a whole number of copies of $u$);
\item the \emph{pattern local, text global (PLTG) tandem alignment problem}
leaves the substring of $b$ unrestricted.
\end{itemize}

All these three versions of the tandem alignment problem 
can be regarded as special cases 
of the approximate pattern matching problem (see \secref{s-amatch})
on strings $a$ of length $m$ and $b' = u^m$ of length $n=mp$
(but with the roles of the text and the pattern reversed
with respect to Benson's terminology).
Therefore, the tandem LCS problem can be solved naively 
by considering the approximate pattern matching problem 
directly on strings $a$ and $b'$, 
ignoring the periodic structure of string $b'$.
Given an arbitrary (real) set of alignment weights,
the standard dynamic programming algorithm \cite{Sellers:80}
solves the problem in time $O(mn) = O(m^2 p)$.
For a rational set of weights, the running time can be slightly improved
by the micro-block speedup (see \secref{s-amatch}).

\index{algorithm!dynamic programming!wraparound}%
The PGTG and PLTG tandem alignment problems 
can be solved more efficiently
by the technique of \emph{wraparound dynamic programming}
\cite{Myers_Miller:89,Fischetti+:93} (see also \cite{Benson:05})
in time $O(mp)$.
For the tandem cyclic alignment problem, 
Benson \cite{Benson:05} modified this technique 
to give an algorithm running in time $O(mp \log p)$. 

We now give a new algorithm for the tandem cyclic alignment problem,
which improves on the existing algorithm in running time,
assuming a rational set of alignment weights.
The running time of the new algorithm matches the current algorithms 
for the PGTG and PLTG tandem alignment problems.

Given input strings $a$, $u$,
we first solve the periodic string-substring problem
by calling \algref{alg-seaweed-periodic}.
This gives us a period submatrix of matrix $\rP_{a,b}$, where $b=u^{\pm\infty}$.
Then, for each $k$, $0 < k < m$, 
we perform independently the following procedure.
We solve the tandem LCS problem for strings $a$ and $u^k$
by the method described earlier in this section,
counting every nonzero in the period submatrix $\rP_{a,b}$
with an appropriate multiplicity.
This gives us the LCS score for $a$ against $u^k$ for every $k$.
We then update this score incrementally, 
obtaining the LCS score for string $a$ 
against a window of length $p$ in $b$,
sliding through $p$ successive positions.
This is equivalent to querying $p$ successive elements
in a diagonal of matrix $\rP_{a,b}$,
which can be achieved by $2p$ incremental dominance counting queries.
By \thref{th-query-inc},
every one of these queries can be performed in time $O(1)$.

The call to \algref{alg-seaweed-periodic} runs in time $O(mp)$;
its output is shared by the tandem LCS computation for all $k$.
For each $k$, the running time 
of the remaining computation is $O(p)$.
Therefore, the combined running time 
for all values of $k$ is $m \cdot O(p) = O(mp)$.
Overall, the algorithm runs in time $O(mp)$.

\mychapter{Permutation string comparison}
\label{c-permutation}

In this chapter, we consider the semi-local comparison of permutation strings.

This chapter is organised as follows. 
In \secref{s-permutation}, 
we introduce the semi-local LCS problem on permutation strings,
and develop an algorithm for its solution.
By direct application of this algorithm, in \secref{s-permutation-cyclic} 
we obtain an improved algorithm for the cyclic LCS problem on permutations.
Further applications include improved algorithms 
for the longest pattern-avoiding subsequence problem,
given in \secref{s-permutation-lxs},
and for the longest $k$-increasing and $k$-modal subsequence problems,
given in \secref{s-permutation-piecewise}.
In \secref{s-circle}, we consider 
the maximum clique problem in a circle graph 
represented by an interval model.
By application our semi-local LCS algorithm on permutations,
we obtain new algorithms for this problem,
both for general and sparse circle graphs,
achieving a substantial improvement on existing algorithms in running time.
In \secref{s-linear}, we describe an application of these algorithms
to the problem of finding exact and approximate 
commonly structured patterns in linear graphs.

\mysection{Semi-local LCS between permutations}
\label{s-permutation}

\index{string!permutation}%
\index{string!permutation!identity}%
\index{$\id$: identity permutation string}%
An important special case of string comparison 
is where each of the input strings $a$, $b$ is a \emph{permutation string},
i.e.\ a string that consists of all distinct characters.
Without loss of generality, we may assume that $m=n$,
and that both strings are permutations 
of a given totally ordered alphabet of size $n$.
For consistency with the notation in previous chapters,
we will assume that a permutation string of length $n$
is indexed by half-integers $\ang{0:n}$,
and is over the alphabet $\ang{0:n}$, unless indicated otherwise.
The \emph{identity permutation} string of length $n$ is the string 
$\id=\pa{0^+,1^+,\ldots,n^-}$.

\index{string!reverse}%
\index{$\bar a$: string reverse}%
Given a string $a$, we denote its reverse string by $\bar a$.
In particular, the \emph{reverse identity permutation} string 
is $\overline{\id}=\pa{n^-,n^--1,\ldots,0^+}$.
\index{$\Sigma(a)$: character set}%
We denote by $\Sigma(a)$ the set of characters 
appearing in $a$ at least once.

The LCS problem on permutation strings is closely related 
to the following classical problem.
\begin{definition}
\label{def-lis}
\index{problem!longest increasing subsequence (LIS)}%
\index{LIS score}%
Given a string $a$, the \emph{longest increasing subsequence (LIS) problem}
asks for the length of the longest string 
that is an increasing subsequence of $a$.
For consistency with our previous terminology,
we will call this length the \emph{LIS score} of string $a$.
\end{definition}
Indeed, the LIS problem is equivalent to the LCS problem
on string $a$ against the identity string $\id$,
and the LCS problem on a pair of permutation strings
can be reduced to the LIS problem by sorting one of the input strings.

The LIS problem has a long history, 
going back to Erd{\"o}s and Szekeres \cite{Erdos_Szekeres:35}
and Robinson \cite{Robinson:38}.
Later, Knuth \cite{Knuth:70}, Fredman \cite{Fredman:75} 
and Dijkstra \cite{Dijkstra:80}
gave algorithms running in time $O(n \log n)$.
The problem was studied further by Chang and Wang \cite{Chang_Wang:92},
Bespamyatnikh and Segal \cite{Bespamyatnikh_Segal:00}.
Crochemore and Porat \cite{Crochemore_Porat:10}
gave an LIS algorithm running in time $O(n \log\log \lambda)$,
where $\lambda$ is the output LIS score.

\index{problem!local LIS}%
We consider the semi-local LCS problem on permutation strings.
Note that its string-substring component
is equivalent to the \emph{local LIS problem}, 
which asks for the LIS score in every substring of a given permutation string.
We now give an efficient algorithm 
for the semi-local LCS problem on permutation strings.

\index{algorithm!seaweed doubling}%
\begin{algorithm}
\textbf{(Semi-local LCS between permutation strings: Seaweed doubling)}%
\label{alg-perm}
\setlabelitbf
\nobreakitem[Input:]
permutation strings $a$, $b$ of length $n$ over an alphabet of size $n$.
\item[Output:]
nonzeros of the semi-local seaweed matrix $\rP_{a,b}$.
\item[Description.]
Recursion on $n$.
\setlabelnormal
\item[Recursion base: $n=1$.] The computation is trivial.
\item[Recursive step: $n>1$.]
Assume without loss of generality that $n$ is even.
We partition the input string $a$
into a concatenation $a=a' a''$ of two strings of length $\Half[n]$.
Each of the strings $a'$, $a''$ 
is a permutation string over an alphabet of size $\Half[n]$.

The semi-local seaweed matrices $\rP_{a',b}$, $\rP_{a'',b}$,
are each over $\bigang{-\Half[n]:n \mid 0:\Half[3n]}$, 
and each contain $\Half[3n]$ nonzeros.
The semi-local seaweed matrix $\rP_{a,b}$
is over $\ang{-n:n \mid 0:2n}$, 
and contains $2n$ nonzeros.

Note that whenever $b(\hi) \not\in \Sigma(a')$
(respectively $b(\hi) \not\in \Sigma(a'')$),
we have $\rP_{a',b}(\hi,\hi)=1$ (respectively, $\rP_{a'',b}(\hi,\hi)=1$),
so the corresponding set of indices $\hi$
defines within $\rP_{a',b}(\hi,\hi)$ (respectively, $\rP_{a'',b}(\hi,\hi)$)
an $\tHalf[n] \times \tHalf[n]$ non-contiguous submatrix 
equal to the identity matrix $\Id$.

Let $b'$ be a $\tHalf[n]$-subsequence of $b$,
obtained by deleting all characters not belonging to $\Sigma(a')$.
We call the algorithm recursively on strings $a'$, $b'$
to compute the semi-local seaweed matrix $\rP_{a',b'}$.
Then, matrix $\rP_{a',b}$ is obtained
by inserting a set of $\tHalf[n]$ rows 
and $\tHalf[n]$ columns into matrix $\rP_{a',b'}$,
which corresponds to reinserting 
the $\tHalf[n]$ deleted characters of string $b$ into string $b'$.
The newly inserted rows and columns 
are filled with (implicit) zeros and ones,
so that they form a $\tHalf[n] \times \tHalf[n]$ 
non-contiguous submatrix equal to the identity matrix $\Id$.

Symmetrically, matrix $\rP_{a'',b}$ is obtained
by deleting from string $b$ all characters not belonging to $\Sigma(a'')$,
calling the algorithm recursively, and then inserting 
a set of $\tHalf[n]$ rows and $\tHalf[n]$ columns into resulting matrix
to form a $\tHalf[n] \times \tHalf[n]$ non-contiguous identity submatrix.

Given matrices $\rP_{a',b}$, $\rP_{a'',b}$, 
the output matrix $\rP_{a,b}$ is computed 
by the algorithm of \thref{th-comp-mmult-comp},
which calls the algorithm of \thref{th-mmult} as a subroutine.
Note that we now have two nested recursions:
the outer recursion of the current algorithm, 
and the inner recursion of \thref{th-mmult}.
\item[(End of recursive step)]
\setlabelitbf
\item[Cost analysis.]
The recursion tree is a balanced binary tree of height $\log n$.
In the root node, the running time 
is dominated by the call to the algorithm of \thref{th-comp-mmult-comp}, 
and is therefore $O(n \log n)$.
In each subsequent level of the recursion tree, 
the number of nodes doubles, 
and the running time per node is reduced by at least a factor of $2$.
Therefore, the running time per level is $O(n \log n)$.
The overall running time is $\log n \cdot O(n \log n) = O(n \log^2 n)$.
\end{algorithm}
\begin{figure}[tb]
\centering
\includegraphics[scale=0.8]{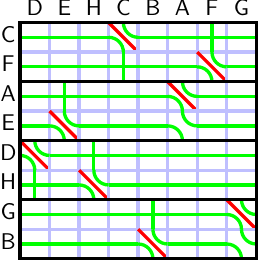}\qquad
\includegraphics[scale=0.8]{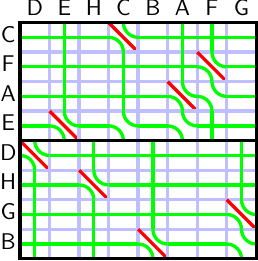}\qquad
\includegraphics[scale=0.8]{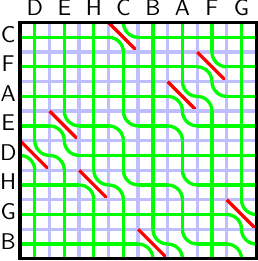}
\caption{\label{f-seaweed-perm} Snapshots of \algref{alg-perm}
(Local LIS)}
\end{figure}
\begin{example}
\figref{f-seaweed-perm} shows a series of snapshots 
of an execution of \algref{alg-perm} on permutation strings
$a = \textsf{``CFAEDHGB''}$, $b = \textsf{``DEHCBAFG''}$.
\end{example}

\index{algorithm!seaweed doubling!traceback}%
By keeping the algorithm's intermediate data, we obtain a data structure
that allows efficient traceback of any semi-local LCS query
on a pair of permutations, in time proportional to the size of the output
(i.e.\ the length of the output subsequence).


\mysection{Window and cyclic LIS}
\label{s-permutation-cyclic}

\index{substring!window}%
\index{problem!window LIS}%
\index{problem!window LIS!between permutations}%
Recall from \secref{s-cyclic} that, 
given a fixed parameter $w$, we call a substring of length $w$ 
a \emph{$w$-window} in the corresponding string.
The window LCS problem between permutation strings
is equivalent to the \emph{window LIS problem}, 
which asks for the LIS score in every $w$-window of a given permutation string.


The window LIS problem has been studied by Albert et al.\ \cite{Albert+:04}
and by Chen et al.\ \cite{Chen+:07}.
In particular, work \cite{Chen+:07} gives an algorithm
that reports all window LIS (as opposed to just their lengths)
in time $O(\mathit{output})$.
In the same work, the algorithm is also generalised
for reporting all LIS in an arbitrary subset of $n$ substrings 
of the input permutation string, possibly of different sizes.
Deorowicz \cite{Deorowicz:12_CAI} considered the problem
of finding, for a given $w$, 
the maximum LIS score across all $w$-windows of the input string.
The resulting algorithm runs in time
$O\bigpa{n \log\log n + \min(n \lambda, 
n \bigceil{\frac{\lambda^3}{w}} \cdot \log \bigceil{\frac{w}{\lambda^2} + 1})}$,
where $\lambda$ is the output maximum LIS score.
In all the above versions of the problem, 
the length of each window LCS can be as high as $\Theta(w)$, 
and therefore the algorithms' running time can be as high as $\Theta(nw)$.

Using our techniques, the window LIS problem 
on a permutation string $a$ of size $n$ can be solved as follows.
Consider a subset of $2w$-windows in $a$, 
overlapping over prefixes and suffixes of length $w$:
$a\ang{0:2w}$, $a\ang{w:3w}$, $a\ang{2w:4w}$, \ldots
First, we run \algref{alg-perm} (seaweed doubling)
on each of these $2w$-windows $a'$
against the identity permutation string $\id$,
obtaining the semi-local seaweed matrix $\rP_{a',\id}$.
Then, we perform $2w-1$ string-substring LCS score queries
for every $w$-window of $a'$ against $\id$.
This can be done efficiently 
as a diagonal batch query via \thref{th-query-inc}.
Every $w$-window of $a$ is a substring 
in some string $a'$ among the considered subset of overlapping $2w$-windows,
hence we have obtained the full solution to the window LIS problem.
The overall running time is dominated 
by the calls to \algref{alg-perm},
which run in combined time $O(\frac{2n}{w} \cdot w \log^2 w) = O(n \log^2 w)$.

\index{problem!cyclic LCS!between permutations}%
\index{problem!cyclic LIS}%
The cyclic LCS problem has been defined in \secref{s-cyclic}.
The cyclic LCS problem on a pair of permutation strings 
is equivalent to the \emph{cyclic LIS problem},
which asks for the maximum LIS score
across all cyclic shifts of the input string.

The cyclic LIS problem has been considered by Albert et al.\ \cite{Albert+:07},
who gave a Monte Carlo randomised algorithm, 
running in time $O(n^{1.5} \log n)$ with small error probability.
Deorowicz \cite{Deorowicz:09} observed that \cite{Albert+:07}
also provides a deterministic algorithm for cyclic LIS,
running in time $O(n \lambda \log n)$,
and gave an improved algorithm 
running in time $O\bigpa{\min(n\lambda, n \log n + \lambda^3 \log n)}$.

We now give a simple algorithm for the cyclic LIS problem,
similar to the cyclic LCS algorithm given in \secref{s-cyclic},
but based on the semi-local LCS problem for permutation strings.
First, we run \algref{alg-perm} (semi-local LCS between permutation strings),
obtaining the semi-local seaweed matrix $\rP_{a,b}$.
Then, we run the algorithm of \thref{th-comp-mmult-comp} 
on matrix $\rP_{a,b}$ against itself,
obtaining the semi-local seaweed matrix $\rP_{aa,b}$.
Finally, we perform $n$ substring-string LCS queries
for every substring of $aa$ of length $n$ against string $b$.
The overall running time is dominated 
by the call to \algref{alg-perm},
which runs in time $O(n \log^2 n)$.

The resulting algorithm for the cyclic LIS problem 
improves on existing algorithms both in running time,
and by being deterministic.
In particular, our algorithm is faster 
than the algorithm of \cite{Deorowicz:09},
unless $l = o\bigpa{(n \log n)^{1/3}}$.

\mysection{Longest pattern-avoiding subsequences}
\label{s-permutation-lxs}

Two given permutation strings $a$, $b$ of equal length
(but generally over different alphabets) are called \emph{isomorphic}, 
if they have the same relative order of characters, 
i.e.\ $a(\hi) < a(\hj)$ iff $b(\hi) < b(\hj)$ for all $\hi$, $\hj$.
Given a target permutation string $t$ of length $n$ 
and a pattern permutation string $p$ of fixed length,
the \emph{longest $p$-isomorphic subsequence problem},
or simply the \emph{longest $p$-subsequence problem},
asks for the longest subsequence of $t$ that is isomorphic to $p$.
More generally, given a set of pattern permutation strings $X$,
the \emph{longest $X$-subsequence problem}
asks for the longest subsequence of $t$ 
that is isomorphic to any pattern string in $p$.
\begin{example}
The LIS problem can be interpreted as the longest $X$-subsequence problem,
where $X = \brc{\textsf{``1''}, \textsf{``12''}, \textsf{``123''}, \ldots, 
\textsf{``123\ldots n''}}$.
\end{example}

\index{problem!longest $Y$-avoiding subsequence}%
Given a set of \emph{antipattern} permutation strings $Y$,
the \emph{longest $Y$-avoiding subsequence problem}
asks for the longest subsequence of $t$
that \emph{does not} contain a subsequence isomorphic
to any string in $Y$.
\begin{example}
The LIS problem on a permutation string can be interpreted 
as the longest $\brc{\textsf{``21''}}$-avoiding subsequence problem.
\end{example}

For a detailed introduction into these problems and their connections,
see the work by Albert et al.\ \cite{Albert+:03} and references therein.

The LIS problem is the only nontrivial example 
of the longest $Y$-avoiding subsequence problem 
with antipatterns of length $2$.
Albert et al.\ \cite{Albert+:03} gave the full classification 
of the longest $Y$-avoiding subsequence problem
for all sets of antipatterns of length $3$.
There are 10 non-trivial sets of such antipatterns.
For each of these sets, 
the algorithms given in \cite{Albert+:03} run in polynomial time, 
ranging from $O(n \log n)$ to $O(n^5)$.
Two particular antipattern sets considered in \cite{Albert+:03} are
(in that work's original notation):
\begin{gather*}
C_3 = \brc{\textsf{``132'', ``213'', ``321''}}\\
C_4 = \brc{\textsf{``132'', ``213'', ``312''}}
\end{gather*}
For both these antipattern sets, algorithms given in \cite{Albert+:03} 
run in time $O(n^2 \log n)$.

We now give new algorithms 
for the longest $C_3$- and $C_4$-avoiding subsequence problems,
improving on the above algorithms in running time.

Permutation strings that are $C_3$-avoiding
are all cyclic rotations of an increasing permutation string.
The longest such subsequence in the target string can be found 
by the algorithm for the cyclic LCS problem between permutations
(\secref{s-permutation-cyclic}), running in time $O(n \log^2 n)$.

Permutation strings that are $C_4$-avoiding
are all obtained from an increasing permutation string 
by reversing some suffix.
The longest $C_4$-avoiding subsequence in the target string 
can be found as follows.
Let the target string $t$ be over the alphabet $\ang{0:n}$.
First, we call the standard LIS algorithm on $t$,
obtaining explicitly the prefix-prefix LCS scores
\begin{gather*}
\mathit{lcs}\bigpa{t\ang{0:\hi^+}, \id\ang{0:t(\hi)^+}} =
\mathit{lcs}\bigpa{t\ang{0:\hi^-}, \id\ang{0:t(\hi)^-}} + 1
\end{gather*}
for all $\hi \in \ang{0:n}$.
Independently, we call the seaweed doubling algorithm (\algref{alg-perm}) on $t$ 
against the reverse identity permutation $\ol{\id}$,
and use \thref{th-query} to process its output into a data structure
that allows efficient queries of all suffix-prefix LCS scores
$\mathit{lcs}\bigpa{t\ang{k:n}, \ol{\id}\ang{0:l}}$
for all $k,l \in \bra{0:n}$.
Finally, we obtain the solution 
to the longest $C_4$-avoiding subsequence problem as 
\begin{gather*}
\max_{\hi \in \ang{0:n}} \Bigpa{
\mathit{lcs}\bigpa{t\ang{0:\hi^+},\id\ang{0:t(\hi)^+}} +
\mathit{lcs}\bigpa{t\ang{\hi^+:n},\ol{\id}\ang{0:n-t(\hi)^+}}}
\end{gather*}
The overall running time is dominated by the call to \algref{alg-perm}, 
which runs in time $O(n \log^2 n)$.

\mysection{Longest piecewise monotone subsequences}
\label{s-permutation-piecewise}

The classical LIS problem asks for the longest increasing 
(or, equivalently, decreasing) subsequence in a permutation string.
A natural generalisation is to ask for the longest subsequence
that consists of a constant number of monotone pieces.
\index{problem!longest $k$-increasing subsequence}%
\index{problem!longest $k$-modal subsequence}%
In particular, given a permutation string $a$ of length $n$,
the \emph{longest $k$-increasing subsequence}
(respectively, \emph{longest $k$-modal subsequence}) problem 
asks for the longest subsequence in $a$
that is a concatenation of at most $k$ sequences,
all of which are increasing
(respectively, alternate between increasing and decreasing).
In the case of the longest $k$-modal subsequence problem,
we assume without loss of generality that $k$ is even.
Both problems can be solved as an instance of the LCS problem,
comparing the input permutation string $a$ against string $\id^k$,
i.e.\ the concatenation of $k$ copies of the identity permutation $\id$
(respectively, against string $(\id\,\ol{\id})^{k/2}$, 
i.e.\ the concatenation of $k$ alternating copies of $\id$ 
and its reverse $\ol{\id}$).
The resulting alignment dag is of size $n \times kn$,
and contains $kn$ match cells.
Using standard sparse LCS algorithms
\cite{Hunt_Szymanski:77,Apostolico_Guerra:87_Algorithmica},
such an instance of the LCS problem can be solved in time $O(nk \log n)$.
Demange et al.\ \cite{Demange+:09} gave a similar algorithm 
for the longest $k$-modal subsequence problem,
also running in time $O(nk \log n)$.

We now give new algorithms for the longest $k$-increasing subsequence
and the longest $k$-modal subsequence problems,
improving on the above algorithms in running time for sufficiently large $k$.

To solve the longest $k$-increasing subsequence problem,
we run \algref{alg-perm} (seaweed doubling),
obtaining the semi-local seaweed matrix $\rP_{\id,a}$,
from which we extract 
the string-substring seaweed matrix $\rP^\ssub_{\id,a}$.
We then run the algorithm of \thref{th-comp-mmult-comp} 
repeatedly $\log k$ times, 
obtaining the string-substring seaweed matrix $P_1 = \rP^\ssub_{\id^k,a}$.

For the the longest $k$-modal subsequence problem,
assume without loss of generality that $k$ is even.
To solve this problem, we first run \algref{alg-perm} (seaweed doubling) twice,
on strings $\id$ and $\ol{\id}$, respectively, against string $a$.
As as a result, we obtain the semi-local seaweed matrices 
$\rP_{\id,a}$ and $\rP_{\ol{\id},a}$.
We then obtain matrix $\rP_{\id\,\ol{\id},a}$ by \thref{th-comp-mmult-comp}.
From this matrix, we extract 
the string-substring seaweed matrix $\rP^\ssub_{\id\,\ol{\id},a}$.
We then run the algorithm of \thref{th-comp-mmult-comp} 
repeatedly $\log k-1$ times, 
obtaining the string-substring seaweed matrix
$P_2 = \rP^\ssub_{(\id\,\ol{\id})^{k/2},a}$.

The final step of the algorithm is identical for both problems:
we use the obtained string-substring seaweed matrix 
($P_1$ and $P_2$, respectively) to query the global LCS score.
Both described algorithms run in time 
$O(n \log^2 n) + \log k \cdot O(n \log n) = O(n \log^2 n)$.
This is faster than both the sparse LCS approach
and the algorithm of \cite{Demange+:09}, for all $k = \omega(\log n)$.

\mysection{Maximum clique in a circle graph}
\label{s-circle}

\begin{figure}[tb]
\centering
\subfloat[\label{f-circle-chord}The chord model]{%
\includegraphics{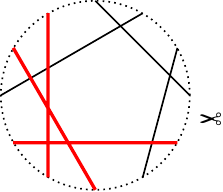}}
\qquad
\subfloat[\label{f-circle-int}An interval model]{%
\includegraphics{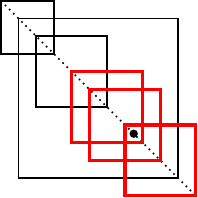}}
\caption{\label{f-circle} A circle graph and its maximum clique}
\end{figure}
\index{circle graph}
A \emph{circle graph} \cite{Even_Itai:71,Golumbic:04} is defined 
as the intersection graph of a set of chords in a circle,
i.e.\ the graph where each node represents a chord,
and two nodes are adjacent, whenever the corresponding chords intersect.
We consider the maximum clique problem on a circle graph.

\index{circle graph!interval model}
The \emph{interval model} of a circle graph is obtained
by cutting the circle at an arbitrary point 
and laying it out on a line, so that the chords become (closed) intervals.
The original circle graph is isomorphic 
to the overlap graph of its interval model,
i.e.\ the graph where each node represents an interval,
and two nodes are adjacent, whenever the corresponding intervals intersect 
but do not contain one another.

\begin{example}
\figref{f-circle} shows an instance of the maximum clique problem
on a six-node circle graph.
\sfigref{f-circle-chord} shows the set of chords defining a circle graph,
with one of the maximum cliques highlighted in bold red.
The cut point is shown by scissors.
\sfigref{f-circle-int} shows the corresponding interval model;
the dotted diagonal line contains the intervals,
each defined by the diagonal of a square.
The squares corresponding to the maximum clique 
are highlighted in bold red.
\end{example}

\index{problem!circle graph maximum clique}%
It has long been known that the maximum clique problem 
in a circle graph on $n$ nodes 
is solvable in polynomial time \cite{Gavril:73}.
A number of algorithms have been proposed for this problem
\cite{Rotem_Urrutia:81,Hsu:85,Masuda+:90,Apostolico+:92_DAM};
the problem has also been studied in the context 
of line arrangements in the hyperbolic plane 
\cite{Karzanov:79,Dress+:02}.
Given an interval model of a circle graph,
the running time of the above algorithms is $O(n^2)$ in the worst case,
i.e.\ when the input graph is dense.
In \cite{Tiskin:06_CPM,Tiskin:08_MCS},
we gave an algorithm running in time $O(n^{1.5})$.

We now give a new algorithm for the maximum clique problem in a circle graph,
improving on existing algorithms in running time.
The algorithm is based on 
the fast implicit distance multiplication procedure of \thref{th-mmult}.

Our algorithm takes as input
the interval model of a circle graph $G$ on $n$ nodes.
Without loss of generality, we may assume
that the set of interval endpoints is $\ang{0:2n}$. 
The interval model is represented by a permutation string $a$ of size $2n$,
where for each left (respectively, right) 
interval endpoint $\hi \in \ang{0:2n}$,
$a(\hi)$ is the corresponding right (respectively, left) endpoint.
Note that for all $\hi < \hj$,
an interval with left endpoint $\hi$ 
does not contain an interval with left endpoint $\hj$,
if and only if $a(\hi) < a(\hj)$.
Various alternative representations of interval models 
(e.g.\ the ones used in \cite{Rotem_Urrutia:81,Apostolico+:92_DAM})
can be converted to this representation in linear time.

In the interval model, a clique corresponds 
to a set of pairwise intersecting intervals,
none of which contains another interval from the set.
Recall that intervals in the line satisfy the \emph{Helly property}: 
if all intervals in a set intersect pairwise,
then they all intersect at a common point.
In our context, we only need to consider 
integer indices as intersection points.

Consider a clique in $G$.
Let $k \in \bra{1:2n-1}$ be a common intersection point 
of the intervals representing the clique, 
which is guaranteed to exist by the Helly property.
Since the intervals representing the clique cannot contain one another,
the sequence of their right endpoints is an increasing subsequence of $a$.
Let $\id$ be the identity permutation string of length $2n$.
From the observations above, it follows that 
the clique corresponds to a common subsequence 
of a prefix $a\ang{0:k}$ and a suffix $\id\ang{2n-k:2n}$.
Therefore, the maximum clique problem can be solved 
as an instance of the semi-local LCS problem.

\begin{algorithm}
\textbf{(Maximum clique in a circle graph)}%
\label{alg-circle}
\setlabelitbf
\nobreakitem[Input:]
interval model of circle graph $G$, 
represented by permutation string $a$ of size $2n$.
\item[Output:]
maximum-size clique of $G$,
represented by the corresponding set of intervals.
\item[Description.]
\setlabelit
\nobreakitem[First phase.]
We run \algref{alg-perm} on the input permutation string $a$
against the identity permutation string $\id$,
obtaining the seaweed matrix $\rP_{a,\id}$.
We then build the data structure of \thref{th-query}
for querying semi-local LCS scores of $a$ against $\id$.
\item[Second phase.]
The size of the maximum clique can now be obtained as
\begin{gather*}
\max_{k \in \bra{1:2n-1}} \lcs\bigpa{a\ang{0:k}, \id\ang{2n-k:2n}}
\end{gather*}
For each $k$, the prefix-suffix LCS score is queried 
from seaweed matrix $\rP_{a,\id}$ by \thref{th-query}.
The value $k^*$ for which the maximum score is attained
gives a common intersection point of the clique intervals.
\item[Third phase.]
Let $a'$ be a subsequence of the prefix $a\ang{0:k^*}$,
obtained by deleting all characters 
not belonging to the interval $\ang{2n-k^*:2n}$.
The intervals defining the maximum clique can now be obtained
by running a standard LIS algorithm on string $a'$,
and then tracing back the elements of the resulting LIS.
\setlabelitbf
\item[Cost analysis.]
\setlabelit
\item[First phase.]
The running time of \algref{alg-perm} is $O(n \log^2 n)$.
\item[Second phase.]
By \thref{th-query}, the combined running time 
of all the prefix-suffix queries is $O(n \log^2 n)$, 
if the queries are performed independently.
This time can be reduced to $O(n)$ by observing that
the queries can be performed as a single diagonal batch query
of the type described by \thref{th-query}.
\item[Third phase.]
The LIS algorithm on string $a'$ runs in time $O(n \log n)$.
\item[Total.]
The overall running time is $O(n \log^2 n)$.
\end{algorithm}

Like many algorithmic problems, 
the problem of finding a maximum clique in a circle graph
admits various parameterised versions.
Some relevant parameters are:
\begin{itemize}
\item the size $l$ of the maximum clique;
\item the \emph{thickness} $d$ of the interval model, i.e.\ 
  the maximum number of intervals containing a point,
  taken across all points in the line;
\item the number $e$ of graph edges.
\end{itemize}
For any interval model of a non-trivial circle graph,
we have $l \leq d \leq n \leq e \leq n^2$.
Notice that, given a permutation representing an interval model,
its thickness can be found in time $O(n \log^2 n)$
by building a range tree on the corresponding set of planar points,
and then performing $O(n)$ dominance counting queries.

Apostolico et al.\ \cite{Apostolico+:92_DAM} give algorithms
for the parameterised version 
of the maximum clique problem in a circle graph,
running in time $O(n \log n + e)$ and $O\bigpa{n \log n + nl \log(n/l)}$.
They also describe an algorithm for the maximum independent set problem,
parameterised by the interval model's thickness.

We now give a new algorithm for the maximum clique problem in a circle graph,
parameterised by the thickness of the input interval model.
Our algorithm improves on the parameterised algorithms 
of \cite{Apostolico+:92_DAM} for most values of the parameters.
The algorithm is an extended version of \algref{alg-circle}.

\begin{algorithm}
\textbf{(Maximum clique in a circle graph, parameterised by thickness)}%
\label{alg-circle-thickness} \setlabelitbf
\nobreakitem[Input:]
interval model of circle graph $G$, 
represented by string $a$ of size $2n$.
\item[Output:]
maximum-size clique of $G$,
represented by the corresponding set of intervals.
\item[Parameter:] 
thickness $d$, $d \leq n$, of the input interval model.
\nobreakitem[Description.]
\setlabelit
\nobreakitem[First phase.]
We run \algref{alg-perm} on string $a_r = a\ang{0:(r+1)d}$
against string $\id_r = \id\ang{rd:2n}$,
independently for all $r \in \bra{0:\frac{2n}{d}-1}$.
As will be shown in the algorithm's analysis,
in each run we obtain a seaweed matrix $\rP_{a_r,\id_r}$
with at most $4d$ non-trivial nonzeros.
For every $r$, 
we then build the data structure of \thref{th-query}
for querying semi-local LCS scores of $a_r$ against $\id_r$.
\item[Second phase.]
The size of the maximum clique can now be obtained as
\begin{gather*}
\max_{k \in \bra{1:2n-1}} \lcs\bigpa{a\ang{0:k}, \id\ang{2n-k:2n}} = {}\\
\max_{k \in \bra{1:2n-1}} 
  \lcs\bigpa{a_{\floor{k/d}}\ang{0:k}, \id_{\floor{k/d}}\ang{2n-k:2n}}
\end{gather*}
For each $k$, the prefix-suffix LCS score is queried 
from seaweed matrix $\rP_{a_{\floor{k/d}},\id_{\floor{k/d}}}$
by \thref{th-query}.
The value $k^*$ for which the maximum score is attained
gives a common intersection point of the clique intervals.
\item[Third phase.]
As in \algref{alg-circle}.
\setlabelitbf
\item[Cost analysis.]
\setlabelit
\nobreakitem[First phase.]
Consider the string decomposition
\begin{gather*}
a_r = a\ang{0:(r+1)d} = a\ang{0:rd} \; a\ang{rd:(r+1)d}
\end{gather*}
The alignment dag $\rG_{a_r,\id_r}$
is therefore the composite of alignment dags 
$\rG_{a\ang{0:rd},\id_r}$ and $\rG_{a\ang{rd:(r+1)d},\id_r}$.

The alignment dag $\rG_{a\ang{0:rd},\id_r}$
contains at most $d$ match cells, since every match 
corresponds to an interval containing point $rd$,
and there can be at most $d$ such intervals by the definition of thickness.
The alignment dag $\rG_{a\ang{rd:(r+1)d},\id_r}$
also contains at most $d$ match cells, 
since the length of the string $a\ang{rd:(r+1)d}$ is $d$.
Hence, the alignment dag $\rG_{a_r,\id_r}$
contains at most $d+d=2d$ matches.
Therefore, the time for each run of \algref{alg-perm} is $O(d \log^2 d)$,
and the overall running time of this phase 
is $O(n/d \cdot d \log^2 d) = O(n \log^2 d)$.
\item[Second phase.]
By \thref{th-query}, the combined running time 
of all the prefix-suffix queries is $O(n \log^2 d)$, 
if the queries are performed independently.
This time can be reduced to $O(n/d \cdot d) = O(n)$ by observing that
the queries can be performed as a set of $n/d$ diagonal batch queries
of the type described by \thref{th-query}.
\item[Third phase.]
String $a'$ contains at most $d$ characters, since every such character 
corresponds to an interval containing point $k^*$.
Therefore, the LIS algorithm on string $a'$ runs in time $O(d \log d)$.
\item[Total.]
The overall running time is $O(n \log^2 d)$.
\end{algorithm}
\algref{alg-circle-thickness} is faster than 
the $O(n \log n + e)$ algorithm of \cite{Apostolico+:92_DAM},
unless $e = o(n \log^2 d) = O(n \log^2 n)$. 
It is also faster than 
the $O\bigpa{n \log n + nl \log(n/l)}$ algorithm of \cite{Apostolico+:92_DAM},
unless $l = o\bigpa{\frac{\log^2 d}{\log n}} = O(\log n)$.

\mysection{Maximum common pattern between linear graphs}
\label{s-linear}

\index{linear graph}%
The concept of a \emph{linear graph},
introduced by Davydov and Batzoglou \cite{Davydov_Batzoglou:06},
is similar to an interval model of a circle graph defined in \secref{s-circle}.
The interval relations of disjointness, containment and overlapping
are denoted respectively by symbols $<$, $\sqsubset$ and $\between$.
A \emph{pattern} in a linear graph is defined as an ordered subset of intervals,
all of which satisfy pairwise a prescribed subset of relations. 

\index{problem!linear graph maximum common pattern}%
Fertin et al.\ \cite{Fertin+:10} considered 
the \emph{maximum common $S$-structured pattern ($S$-MCSP)} problem.
The problem asks for the maximum common pattern in a set of $n$ linear graphs,
each defined by at most $m$ intervals,
where the structure of the common pattern
is restricted by a prescribed subset of relations
$S \subseteq \brc{<, \sqsubset, \between}$.
In particular, the $\brc{\between}$-MCSP problem
asks for the maximum commonly-structured 
subset of pairwise overlapping intervals;
for $n=1$ this is equivalent to finding the maximum clique of a circle graph,
and for general $n$ is equivalent to finding the minimum-sized clique
among maximum cliques of the $n$ input circle graphs.
The $\brc{<,\sqsubset}$-MCSP problem
asks for the maximum commonly-structured subset of intervals,
no two of which are overlapping;
for $n=1$ this is equivalent to finding 
the maximum independent set of a circle graph;
however, for general $n$ the maximum commonly-structured independent set
of the $n$ input circle graphs
may be significantly different from (and smaller than)
each of the $n$ individual maximum independent sets.
The $\brc{<,\sqsubset,\between}$-MCSP problem
asks for the maximum commonly-structured subset of intervals
without any a priori restriction on its structure.

Extending and generalising a number of previous results,
paper \cite{Fertin+:10} considers the $S$-MCSP problem,
where $S$ runs over all seven nonempty subsets 
of $\brc{<,\sqsubset,\between}$.
For some of these seven variants, the algorithms use as a subroutine 
the algorithm of \cite{Tiskin:06_CPM,Tiskin:08_MCS}
for the maximum clique problem in a circle graph.
By plugging in the more efficient \algref{alg-circle},
we can obtain improved algorithms for those variants of the $S$-MCSP problem,
where finding the maximum clique in a circle graph is a bottleneck.

In particular, the $\brc{\between}$-MCSP problem 
is solved in \cite{Fertin+:10} by finding the maximum clique 
independently for $n$ circle graphs, 
each corresponding to one of the input linear graphs,
in overall time $O(nm^{1.5})$.
By plugging in \algref{alg-circle}, 
the running time is improved to $O(nm \log^2 m)$.

The $\brc{<,\between}$-MCSP problem 
is shown in \cite{Fertin+:10} to be NP-hard,
and to admit a polynomial-time $2h(k)$-approximation,
where $h(k)=\sum_{1 \leq i \leq k} 1/i = \ln n + O(1)$;
for the rest of this section, 
$k$ denotes the size of the optimal solution to the problem.
The approximation is obtained by $O(nm)$ calls
to the following subroutine:
given a circle graph of size $m$, and two integers $m_1$, $m_2$,
decide whether the graph contains $m_1$ disjoint cliques, each of size $m_2$.
This subroutine is performed in time $O(m^{2.5} \log m)$,
and therefore the overall running time is
$O(nm) \cdot O(m^{2.5} \log m) = O(nm^{3.5} \log m)$.
By a straightforward extension of \algref{alg-circle},
the running time of the subroutine is improved to $O(m \log^2 m)$,
and therefore the overall running time of the approximation algorithm 
is improved to 
$O(nm) \cdot O(m \log^2 m) = O(nm^2 \log^2 m)$.

The $\brc{\sqsubset,\between}$-MCSP problem 
is also shown in \cite{Fertin+:10} to be NP-hard,
and to admit a polynomial-time $k^{1/2}$-approximation.
The approximation is obtained by combining exact solutions
for the $\brc{\sqsubset}$-MCSP and $\brc{\between}$-MCSP problems
on the same input sets.
The exact solution for the $\brc{\between}$-MCSP is the bottleneck;
by plugging in the improved algorithm for this problem described above, 
the running time of the approximation algorithm 
for the $\brc{\sqsubset,\between}$-MCSP problem 
is improved from $O(nm^{1.5})$ to $O(nm \log^2 m)$.

Finally, paper \cite{Fertin+:10} 
argues that the $\brc{<,\sqsubset,\between}$-MCSP problem is NP-hard,
and gives several polynomial-time approximation algorithms.
In particular, it gives an $O(k^{2/3})$-approximation algorithm
running in time $O(nm^{1.5})$,
and an $O\bigpa{(k \log k)^{1/2}}$-approximation algorithm
running in time $O(nm^{3.5} \log m)$.
By using the techniques described above,
the running times of these approximation algorithms
are improved respectively to $O(nm \log^2 m)$ and $O(nm^2 \log^2 m)$.


\mychapter{Compressed string comparison}
\label{c-compressed}

In this chapter, we consider the semi-local comparison of compressed strings.

This chapter is organised as follows. 
In \secref{s-gc}, we introduce the grammar compression (GC) framework,
that generalises the classical LZ78 and LZW methods.
In \secref{s-gc-subrec-global}, we describe a folklore algorithm
for global subsequence recognition in a GC-string.
In \secref{s-gc-slcs}, we develop an efficient algorithm
for the extended substring-string problem 
between a plain pattern and a GC-string.
By application of this algorithm, in \secref{s-gc-lsubrec}
we obtain an algorithm for local subsequence recognition in a GC-string,
and in \secref{s-gc-tmatch}
an algorithm for threshold approximate matching in a GC-string;
both these algorithms improve on the existing ones in running time.

\extra{%
\mysection{Run-length-compressed strings}
\label{s-rlc}

Let $t$ be a string of length $m$ (typically large).
We call $t$ a \emph{run-length-compressed string} (\emph{RLC-string}), 
when it is represented implicitly as $t = T_1 T_2 \ldots T_{\bar m}$.
Each $t_i$ is of the form $\alpha^k$,
where $\alpha$ is an alphabet character and $k$ a positive integer.

Geometric framework by Mitchell \cite{Mitchell:97}. Oriented paths.

Generalised alignment dag - ref blow-up.
Generalised highest-score matrix.

\mysection{Semi-local LCS on RLC-strings}
\label{s-rlc-slcs}

Recall that the LCS problem on uncompressed strings
can be solved in time $O\bigpa{\frac{mn}{\log^2 n} + n}$
assuming $m \leq n$ are reasonably close
\cite{Masek_Paterson:80,Crochemore+:03_SIAM}.
The LCS problem on two RLC-strings
has been considered by \ldots

Time $O(m \bar n + \bar m n)$ by \cite{Bunke_Csirik:95}.

Time $O\bigpa{\bar m \bar n \log (\bar m \bar n)}$ 
by \cite{Apostolico+:99}.

Parameterised by number $\bar r$ of block matches:
Time $O\bigpa{(\bar r + \bar m + \bar n) \log (\bar r + \bar m + \bar n)}$
by \cite{Mitchell:97}.

Arbitrary (real) edit distance by
\cite{Crochemore+:03_SIAM,Crochemore+:04} and \cite{Maekinen+:03}.

More (get from \cite{Kim+:08}).

The LCS problem on two input strings,
one of which is an RLC-string and the other uncompressed,
has been considered by \ldots

Implicit in many of the above?

Time $O(\bar m n)$ by \cite{Liu+:07}.

Semi-local LCS. Match above results in running time, 
arbitrary (real) edit distance, improved functionality 
(really or already implicit in the above?)

Algorithms: recursive 1D blocking, recursive 2D blocking. 

Parameterised by maximum run length.

}

\mysection{Grammar-compressed strings}
\label{s-gc}

String compression is a classical paradigm, 
touching on many different areas of computer science.
From an algorithmic viewpoint, 
it is natural to ask whether compressed strings 
can be processed efficiently without decompression.
Early examples of such algorithms were given 
e.g.\ by Amir et al.\ \cite{Amir+:96} and by Rytter \cite{Rytter:99};
for a recent survey on the topic, see Lohrey \cite{Lohrey:12}.
Efficient algorithms for compressed strings
can also be applied to achieve speedup over ordinary string processing algorithms
for plain strings that are highly compressible.

We consider the following general model of compression.
\begin{definition}
\index{string!grammar-compressed (GC)}%
\index{straight-line program (SLP)}%
Let $t$ be a string of length $n$ (typically large).
String $t$ will be called a \emph{grammar-compressed string (GC-string)},
if it is generated by a context-free grammar,
also called a \emph{straight-line program (SLP)}.
\index{straight-line program (SLP)!statement}%
An SLP of length $\bar n$, $\bar n \leq n$,
is a sequence of $\bar n$ \emph{statements.}
A statement numbered $k$, $1 \leq k \leq \bar n$, 
has one of the following forms:
\begin{gather*}
t_k = \alpha\qquad t_k = uv
\end{gather*}
where $\alpha$ is an alphabet character,
and each of $u$, $v$ is either an alphabet character,
or symbol $t_i$ for some $i$, $1 \leq i < k$.
\end{definition}
We identify every symbol $t_r$ with the string it expands to;
in particular, we have $t = t_{\bar n}$.
In general, the plain string length $n$ 
can be exponential in the GC-string length $\bar n$.

\begin{example}
The \emph{Fibonacci string} \textsf{``ABAABABAABAAB''} of length 13
can be represented by the following SLP of length 6:
\begin{gather*}
t_1 = \textsf{A}\quad
t_2 = t_1 \textsf{B}\quad
t_3 = t_2 t_1\quad
t_4 = t_3 t_2\quad
t_5 = t_4 t_3\quad
t_6 = t_5 t_4
\end{gather*}
In general, a Fibonacci string of length $n$
can be represented by an SLP of length $\bar n$, where 
$n = F_{\bar n} = 
\bigpa{\frac{1}{\sqrt 5}-o(1)} \bigpa{\frac{1+\sqrt 5}{2}}^{\bar n}$
is the $\bar n$-th Fibonacci number.

This example is borrowed from Hermelin et al.\ \cite{Hermelin+:13}.
\end{example}

Kida et al.\ \cite{Kida+:03} introduced 
a more general compression model, called \emph{collage systems}.
Grammar compression is a equivalent 
to a subclass of collage systems called \emph{regular}.
As a special case, grammar compression includes 
the classical LZ78 and LZW compression schemes 
by Ziv, Lempel and Welch \cite{Ziv_Lempel:78,Welch:84}.
Both these schemes can be expressed by an SLP 
that consists of three sections:
\begin{itemize}
\item in the first section, all statements are of the form $t_k = \alpha$;
\item in the second section, all statements are of the form $t_k = t_i t_j$,
      where statement $j$ is from the first section;
\item in the third section, all statements are of the form $t_k = t_{k-1} t_j$,
      where statement $j$ is from the second section.
\end{itemize}
It should also be noted that certain other compression methods,
such as e.g.\ LZ77 \cite{Ziv_Lempel:77} and run-length compression,
do not fit directly into the grammar compression model.

The algorithms in this section will take as input
a plain (uncompressed) \emph{pattern string} $p$ of length $m$,
and a grammar-compressed \emph{text string} $t$ of length $n$, 
generated by an SLP of length $\bar n$.
We aim at algorithms with running time 
that is a low-degree polynomial in $m$, $\bar n$,
but is independent of $n$ (which could be exponential in $\bar n$).

\mysection{Global subsequence recognition}
\label{s-gc-subrec-global}

The first problem that we consider on a compressed text
is the global subsequence recognition problem, introduced in \secref{s-lcs}.
Recall that on a plain text, this problem can be solved
in time $O(n)$ by a straightforward algorithm.
We now revisit this problem, assuming a plain pattern and a GC-text as inputs.
We also generalise the problem slightly,
looking for the length of the longest prefix of $p$ 
that is a subsequence of $t$.
The problem can be solved 
by a simple folklore algorithm as follows.

\begin{algorithm}[Global subsequence recognition]
\label{alg-gc-global}
\setlabelitbf
\nobreakitem[Input:]
plain pattern string $p$ of length $m$;
SLP of length $\bar n$, generating text string $t$ of length $n$.
\item[Output:]
an integer $k$, giving the length of the longest prefix of $p$
that is a subsequence of $t$.
String $t$ contains $p$ as a subsequence, if and only if $k=m$.
\item[Description.]
Recursion on the input SLP generating $t$.

\setlabelnormal
\item[Recursion base: $n=\bar n=1$.]
The output value $k \in \{0,1\}$ is determined 
by a single character comparison.

\item[Recursive step: $n \geq \bar n > 1$.]
Let $t=t' t''$ be the SLP statement defining string $t$.
Let $k'$ be the length of the longest prefix of $p$
that is a subsequence of $t'$.
Let $k''$ be the length of the longest prefix of $p \ldrop k'$ 
that is a subsequence of $t''$.
We call the algorithm recursively to obtain $k'$ and $k''$,
and then return $k=k'+k''$.

\nobreakitem[(End of recursive step)]
\setlabelitbf
\item[Cost analysis.]
The running time of the algorithm is $O(k \bar n)$.
The proof is by induction.
The running time of the recursive calls is respectively
$O(k' \bar n)$ and $O(k'' \bar n)$.
The overall running time of the algorithm
is $O(k' \bar n) + O(k'' \bar n) + O(1) = O(k \bar n)$.
In the worst case, this is $O(m \bar n)$.
\end{algorithm}

\mysection{Extended substring-string LCS}
\label{s-gc-slcs}

\index{problem!substring-string LCS!extended}%
We now consider the LCS problem.
Recall from \secrefs{s-lcs}, \ref{s-micro} that on a pair of plain strings,
it can be solved in time $O\bigpa{\frac{mn}{\log^2 n} + n}$, 
assuming that $m \leq n$.
The LCS problem on two GC-strings has been considered 
by Lifshits and Lohrey \cite{Lifshits_Lohrey:06,Lohrey:12},
and proven to be PP-hard (and therefore NP-hard).

In this section, we revisit the LCS problem,
now assuming a plain pattern $p$ and a GC-text $t$ as inputs.
Although, in principle, we would like to solve 
the more general semi-local LCS problem,
it would be impossible to do so 
while keeping the running time independent of $n$,
since the resulting semi-local seaweed matrix would require memory $O(m+n)$.
However, we are still able to solve the extended substring-string LCS problem
(i.e.\ string-substring, prefix-suffix and suffix-prefix LCS),
where the output only requires memory $O(m)$.

In the special case of LZ77 or LZW compression of the text,
the algorithm of Crochemore et al.\ \cite{Crochemore+:03_SIAM}
solves the LCS problem in time $O(m \bar n)$.
Thus, LZ77 or LZW compression of one of the input strings
only slows down the LCS computation by a constant factor
relative to the classical dynamic programming LCS algorithm,
or by a polylogarithmic factor relative to the best known LCS algorithms.

The general case of an arbitrary GC-text appears more difficult.
A GC-text is a special case of a context-free language,
which consists of a single string.
Therefore, the LCS problem between a GC-text and a plain pattern
can be regarded as a special case of the edit distance problem
between a context-free language given by a grammar of size $\bar n$,
and a pattern string of size $m$.
For this more general problem, Myers \cite{Myers:95}
gave an algorithm running in time 
$O(m^3 \bar n + m^2 \cdot \bar n \log \bar n)$.
In \cite{Tiskin:09_JMS}, we gave an algorithm 
for the three-way semi-local LCS problem 
between a GC-text and a plain pattern,
running in time $O(m^{1.5} \bar n)$.
Lifshits \cite{Lifshits:07} asked whether the LCS problem
in the same setting can be solved in time $O(m \bar n)$.

A new algorithm for the extended substring-string LCS problem 
can be obtained by an application of the techniques 
described in \chapref{c-semi}.
The resulting algorithm improves on existing algorithms in running time,
and approaches an answer to Lifshits' question within a logarithmic factor.

\begin{algorithm}[Extended substring-string LCS]
\label{alg-gc-slcs}
\setlabelitbf
\nobreakitem[Input:]
plain pattern $p$ of length $m$;
SLP of length $\bar n$, generating text $t$ of length $n$.
\item[Output:]
nonzeros of matrix $\rP^\subsX_{p,t}$.
\item[Description.]
First, we observe that, 
although the output matrix contains at most $m$ nonzeros,
its range is of size $m+n$, which may be exponentially larger.
To avoid an exponential growth of the indices, 
we will clean up the range by removing unused indices,
and deleting the corresponding zero row-column pairs from the matrix.
Formally, we describe this process as an order-preserving 
remapping of the index range.
\setlabelit
\item[First phase.]
Recursion on the input SLP generating $t$.

\setlabelnormal
\item[Recursion base: $n=\bar n=1$.]
The output can be computed by \algref{alg-seaweed} (seaweed combing)
on plain strings $p$ and $t$, of length $m$ and $1$ respectively.

\item[Recursive step: $n \geq \bar n > 1$.]
Let $t=t' t''$ be the SLP statement defining string $t$.
We call the algorithm recursively to obtain 
the nonzeros of matrices $\rP^\subsX_{p,t'}$, $\rP^\subsX_{p,t''}$.
The total number of nonzeros in each matrix is between $m$ and $2m$.
Conceptually, the ranges of these matrices are respectively
$\ang{-m:n' \mid 0:m+n'}$, and $\ang{-m:n'' \mid 0:m+n''}$.
However, the actual remapped range for each matrix
is $\ang{-m:2m \mid 0:3m}$ after the respective recursive call.

We now compute the matrix $\rP^\subsX_{p,t}$ 
from $\rP^\subsX_{p,t'}$, $\rP^\subsX_{p,t''}$
by \thref{th-comp-mmult-comp}.
The total number of nonzeros in this matrix 
is again between $m$ and $2m$.
Conceptually, the range of this matrix is $\ang{-m:n \mid 0:m+n}$.
However, the actual remapped index range 
after the application of \thref{th-comp-mmult-comp}
is $\ang{-m:4m \mid 0:5m}$.
Therefore, there are at least $2m$ indices $\hi \in \ang{0:4m}$,
such that the row $\rP^\subsX_{p,t}(\hi,*)$
and the column $\rP^\subsX_{p,t}(*,\hi)$ both contain only zeros.
We now delete exactly $2m$ such rows and columns from the matrix,
and remap the index range to $\ang{-m:2m \mid 0:3m}$,
while preserving the linear order of the indices.
\nobreakitem[(End of recursive step.)]

\setlabelit
\item[Second phase.]
We now have the nonzeros of the output matrix $\rP^\subsX_{p,t}$,
remapped to the range $\ang{-m:2m \mid 0:3m}$.
This is already sufficient to query the global LCS score,
or extended substring-string LCS scores 
for pattern $p$ against text $t$.
However, if explicit indices of the nonzeros 
in the output matrix are required,
the index range can be remapped back to $\ang{-m:n \mid 0:m+n}$
by reversing every remapping step in the first phase.

\setlabelitbf
\item[Cost analysis.]
\setlabelit
\nobreakitem[First phase.]
The cost of a recursive step is dominated 
by the application of \thref{th-comp-mmult-comp},
which runs in time $O(m \log m)$.
There are $\bar n$ recursive steps in total,
therefore the first phase runs in time $O(m \log m \cdot \bar n)$.
\item[Second phase.]
For each nonzero, the inverse remapping can be performed recursively
in time $O(\bar n)$.
There are $m$ nonzeros in total, 
therefore the second phase runs in time $O(m \bar n)$.
\item[Total.]
The overall running time is $O(m \log m \cdot \bar n)$.
\end{algorithm}

\algref{alg-gc-slcs} provides, as a special case,
an algorithm for the LCS problem
between a plain string and a GC-string,
running in time $O(m \log m \cdot \bar n)$;
the LCS score can easily be queried 
from the algorithm's output matrix by \thref{th-ps}.
This running time should be contrasted 
with standard LCS algorithms on plain strings,
running in time $O\bigpa{\frac{mn}{\log^2 n} + n}$
\cite{Masek_Paterson:80,Crochemore+:03_SIAM},
and with the PP-hardness of the LCS problem on two GC-strings
\cite{Lifshits_Lohrey:06,Lohrey:12}.

Hermelin et al.\ \cite{Hermelin+:13} and Gawrychowski \cite{Gawrychowski:12}
refined the application of our techniques as follows.
They consider the rational-weighted alignment problem
(equivalently, the LCS or Levenshtein distance problems)
on a pair of GC-strings $a$, $b$
of total compressed length $\bar r = \bar m + \bar n$,
parameterised by the strings' total plain length $r = m + n$.
The algorithm of \cite{Hermelin+:13} 
runs in time $O\bigpa{r \log (r/\bar r) \cdot \bar r}$,
which is improved in \cite{Gawrychowski:12}
to $O\bigpa{r \log^{1/2} (r/\bar r) \cdot \bar r}$.
In both cases, our algorithm of \thref{th-mmult} is used as a subroutine.


\mysection{Local subsequence recognition}
\label{s-gc-lsubrec}

\index{problem!subsequence recognition!local}%
The local subsequence recognition problem was introduced in \secref{s-slcs} 
as a special case of the semi-local LCS problem,
and defined in \secref{s-amatch} (\defref{def-lsubrec})
as a variant of the approximate matching problem.
In the context of local subsequence recognition,
a substring of text $t$ is called a \emph{matching substring},
if it contains the pattern $p$ as a subsequence.
A matching substring will be called \emph{minimally matching},
if it is inclusion-minimal,
i.e.\ it has no proper matching substring.

We recall that, depending on the output filtering,
local subsequence recognition can take the following forms:
the \emph{minimum-window subsequence recognition problem},
which asks for the locations of all substrings of $t$ 
that are minimally matching,
and the \emph{fixed-window subsequence recognition problem},
which asks for the locations 
of all the matching substrings of a fixed length $w$.
A combination of these two problems
is the \emph{bounded minimal-window subsequence recognition problem}, 
which asks for the locations 
of all the minimally matching substrings below a fixed length $w$.

Clearly, the output size for the described 
\emph{reporting versions} of these problems
may be exponential in $\bar n$;
therefore, we have to parameterise the running time by the output size,
which we denote by $\mathit{output}$.
We will also consider the \emph{counting version} for each of the above problems,
which, instead of locations of all the matching substrings,
only asks for their overall number.
The running time of the counting algorithms described in this section 
will be the running time of the corresponding reporting algorithm 
with $\mathit{output} = O(1)$.

The minimal-window, fixed-window and bounded minimal-window 
subsequence recognition problems for a GC-text against a plain pattern
have been considered by C\'egielski et al.\ \cite{Cegielski+:06}.
For each problem, they gave an algorithm running in time 
$O(m^2 \log m \cdot \bar n + \mathit{output})$.
In \cite{Tiskin:09_JMS}, we gave an algorithm 
improving the running time for these problems
to $O(m^{1.5} \bar n + \mathit{output})$,
and then in \cite{Tiskin:11_CSR} 
improved it to $O(m \log m \cdot \bar n + \mathit{output})$
by an extended version of \algref{alg-gc-slcs}.
Yamamoto et al.\ \cite{Yamamoto+:11}, using elementary techniques,
gave an even faster algorithm, 
running in time $O(m \cdot \bar n + \mathit{output})$.

We now describe the algorithm of \cite{Tiskin:11_CSR} 
for local subsequence recognition.
Although inferior, both in running time and simplicity, 
to the algorithm of \cite{Yamamoto+:11},
it will serve as a warm-up for the material presented in the next section.

We extend \algref{alg-gc-slcs} (extended string-substring LCS) as follows.
In addition to the extended substring-string matrix $\rP^\subsX_{p,t}$,
we now also make use of the cross-semi-local matrix $\rP_{p;t',t''}$.
This matrix is used for reporting the minimally matching substrings
that are cross-substrings in the current recursive step.
\begin{algorithm}[Local subsequence recognition]
\label{alg-gc-lsubrec}
\setlabelitbf
\nobreakitem[Input:]
plain pattern $p$ of length $m$;
SLP of length $\bar n$, generating text $t$ of length $n$.
\item[Output:]
locations (or count) of minimally matching substrings in $t$.
\item[Description.]
Similarly to \algref{alg-gc-slcs},
index remapping has to be performed in the background
in order to avoid an exponential growth of the indices.
To simplify the exposition, we now assume constant-time index arithmetic,
keeping the index remapping implicit.
\setlabelit
\nobreakitem[First phase.]
Recursion on the input SLP generating $t$.

\setlabelnormal
\item[Recursion base: $n=\bar n=1$.]
As in \algref{alg-gc-slcs}, 
the extended substring-string matrix $\rP^\subsX_{p,t}$
can be computed by \algref{alg-seaweed} (seaweed combing)
on plain strings $p$ and $t$, of length $m$ and $1$ respectively.
String $t$ is matching, if and only if $m=1$ and $t=p$;
in this case, $t$ is also minimally matching.

\item[Recursive step: $n \geq \bar n > 1$.]
Let $t=t' t''$ be the SLP statement defining string $t$.
We run a recursive step of \algref{alg-gc-slcs},
obtaining the extended substring-string seaweed matrix $\rP^\subsX_{p,t}$.
In addition, we obtain the cross-semi-local seaweed matrix $\rP_{p;t',t''}$
by \thref{th-comp-mmult-comp}.

As in \algref{alg-gc-lsubrec}, we run a recursive step of \algref{alg-gc-slcs},
obtaining the extended substring-string seaweed matrix $\rP^\subsX_{\Tp,\Tt}$.
In addition, we obtain the cross-semi-local seaweed matrix $\rP_{\Tp;\Tt',\Tt''}$
by \thref{th-comp-mmult-comp}.
These two subpermutation matrices are typically very sparse:
their index range is of size $O(m+n)$, where $n$ is typically much higher than $m$,
whereas the number of nonzeros in either matrix is at most $2m$
(for matrix $\rP_{\Tp;\Tt',\Tt''}$, it is exactly $m$).

By \thref{th-ps}, a substring $t\ang{i:j}$ is matching, 
if and only if $\rP_{p,t}^{T \Sigma T}(i,j) = 0$,
i.e.\ the point $(i,j)$ in the score matrix $\rH_{p,t}$ 
is not $\gtrless$-dominated by any nonzeros in the seaweed matrix $\rP_{p,t}$.
Recall that a substring $t\ang{i:j}$ is a cross-substring, 
if $i \in \bra{0:n'-1}$, $j \in \bra{n'+1:n}$;
in other words, a cross-substring consists of 
a non-empty suffix of $t'$ and a non-empty prefix of $t''$.
A point $(i,j)$ corresponding to a cross-substring
lies within the cross-semi-local score matrix $\rH_{p;t',t''}$,
and can only be $\gtrless$-dominated by one of the $m$ nonzeros 
in the cross-semi-local seaweed matrix $\rP_{p;t',t''}$.

Let
\begin{gather*}
L = \Bigbrc{
\bigpa{\hi_{0^+},\hj_{0^+}} \ll \bigpa{\hi_{1^+},\hj_{1^+}} 
\ll \dots \ll \bigpa{\hi_{s^-},\hj_{s^-}}}
\end{gather*}
be the $\ll$-chain of all $\gtrless$-maximal nonzeros in $\rP_{p;t',t''}$,
where $s = \abs{L} \leq m$.
If a point is $\gtrless$-dominated 
by any nonzeros in $\rP_{p;t',t''}$,
then it is dominated by some $\gtrless$-maximal nonzero, 
i.e.\ by a point in $L$.
Therefore, a cross-substring $t\ang{i:j}$ is matching, 
if and only if point $(i,j)$ is not $\gtrless$-dominated by any point in $L$.

Consider the set of all points in $\rH_{p;t',t''}$
that are not $\gtrless$-dominated by any point in $L$.
We are interested in the $\gtrless$-minimal points in this set.
Such points form a $\ll$-chain, interleaved with $L$.
Its two endpoints are the degenerate boundary points 
$\bigpa{\hi_{0^+}^-,n'}$, $\bigpa{n',\hj_{s^-}^+}$;
the remaining points form a $\ll$-chain of size $s-1$:
\begin{gather*}
M = \Big\{
\bigpa{\hi_{1^+}^-,\hj_{1^-}^+} \ll
\bigpa{\hi_{2^+}^-,\hj_{2^-}^+} \ll \dots \ll
\bigpa{\hi_{(s-1)^+}^-,\hj_{(s-1)^-}^+}
\Big\}
\end{gather*}
Let $M^\ssub = M \cap \bra{0:n' \mid n':n}$
be the subset of points in $M$ within
the string-cross-substring score matrix $\rH^\ssub_{p;t',t''}$.
A non-degenerate cross-substring $t\ang{i:j}$ is minimally matching, 
if and only if $(i,j) \in M^\ssub$.
The number of such cross-substrings is 
$\abs{M^\ssub} \leq \abs{M} = s-1 \leq m-1$.
\nobreakitem[(End of recursive step)]

\setlabelit
\item[Second phase.]
For every SLP symbol, we now have the relative locations 
of its minimally matching cross-substrings.
Furthermore, every non-trivial substring of $t$ 
corresponds to a cross-substring for some SLP symbol,
under an appropriate transformation of indices.
By another recursion on the structure of the SLP,
it is now straightforward to obtain either the absolute locations, 
or the count of all the minimally matching substrings in $t$.

\setlabelitbf
\item[Cost analysis.]
\setlabelit
\nobreakitem[First phase.]
As in \algref{alg-gc-slcs},
each seaweed matrix multiplication runs in time $O(m \log m)$.
The $\ll$-chains $L$ and $M$
can be obtained in time $O(m)$.
Hence, the running time of a recursive step is $O(m \log m)$.
There are $\bar n$ recursive steps in total,
therefore the whole recursion runs in time $O(m \log m \cdot \bar n)$.

\item[Second phase.]
For every SLP symbol, there are at most $m-1$
minimally matching cross-substrings.
Given the output of the first phase, 
the absolute locations of all minimally matching substrings in $t$ 
can be reported in time $O(m \bar n + \mathit{output})$,
and their count can be obtained in time $O(m \bar n)$.

\item[Total.]
The overall running time 
is $O(m \log m \cdot \bar n + \mathit{output})$ for reporting,
and $O(m \log m \cdot \bar n)$ for counting.
\end{algorithm}

\begin{figure}[tb]
\centering

\subfloat[\label{f-gc-lsubrec-plain}%
Cross-semi-local matrices $\rH_{p;t',t''}$, $\rP_{p;t',t''}$; 
chains $M$, $L$]{%
\makebox[\textwidth]{\includegraphics{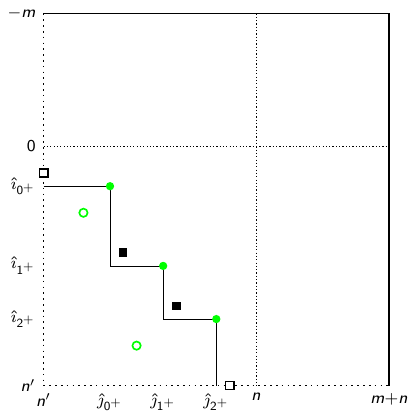}}}

\subfloat[\label{f-gc-lsubrec-seaweed}%
Corresponding seaweed braid]{%
\makebox[\textwidth]{\includegraphics{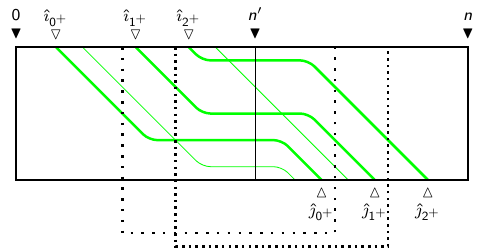}}}
\caption{\label{f-gc-lsubrec}%
A snapshot of \algref{alg-gc-lsubrec} (local subsequence recognition)}
\end{figure}
\begin{example}
\figref{f-gc-lsubrec} shows a snapshot of a recursive step
in the first phase of \algref{alg-gc-lsubrec}.
\sfigref{f-gc-lsubrec-plain} shows 
the cross-semi-local matrices $\rH_{p;t',t''}$ and $\rP_{p;t',t''}$; 
in this particular example, all the nonzeros of the latter
lie within the string-cross-substring matrix $\rP^\ssub_{p,t}$.
\sfigref{f-gc-lsubrec-seaweed} shows the corresponding seaweed braid.
Matrix $\rP_{p;t',t''}$ contains $m=5$ nonzeros, 
shown by green circles in \sfigref{f-gc-lsubrec-plain},
and by green seaweeds in \sfigref{f-gc-lsubrec-seaweed}.
Out of these five nonzeros, three are $\gtrless$-maximal.
These nonzeros form the $\ll$-chain $L$;
they are shown by filled circles 
connected by a thin zig-zag line in \sfigref{f-gc-lsubrec-plain},
and by thicker seaweeds in \sfigref{f-gc-lsubrec-seaweed}.
The remaining two nonzeros are shown by hollow circles
(respectively, by thinner seaweeds).
All points in $\rH_{p;t',t''}$ that are not $\gtrless$-dominated 
by any nonzeros in $\rP_{p;t',t''}$
are located above and to the right of the thin zig-zag line.
The four $\gtrless$-minimal such points are shown 
in \sfigref{f-gc-lsubrec-plain} by black squares;
among these, two are degenerate boundary points (hollow squares),
and the other two form the $\ll$-chain $M$ (filled squares).
Both points of chain $M$ lie within the range $\bra{0:n' \mid n':n}$,
therefore $M^\ssub = M$, and each of the two points in $M$ 
corresponds to a minimally matching non-degenerate cross-substring in $t$.
These two substrings are shown in \sfigref{f-gc-lsubrec-seaweed}
by dotted brackets.

By \thref{th-ps}, a substring in $t$ is matching,
if and only if the corresponding rectangle in the alignment dag
is not pierced by a seaweed 
entering at its left-hand boundary and leaving at its right-hand boundary.
Notice that the bracketed substrings of $t$ in \figref{f-gc-lsubrec-seaweed}
are exactly the two inclusion-minimal non-degenerate cross-substrings 
satisfying this property.
\end{example}

An algorithm for the fixed-window subsequence recognition problem
can be obtained from \algref{alg-gc-lsubrec} as follows.
Substrings $t\ang{i:j}$ of length $w$ 
correspond to points $(i,j)$ lying on the diagonal $j-i=w$
in the semi-local score matrix $\rH_{p,t}$.
Consider the set of all points on this diagonal,
$\gtrless$-dominated by any point in the $\ll$-chain $L$, 
introduced in \algref{alg-gc-lsubrec}.
This set consists of a (not necessarily disjoint) union 
of $s$ diagonal intervals
\begin{gather*}
\label{eq-u}
U = \bigcup_{\Hu \in \ang{0:s}} \Bigbrc{\text{$(i,i+w)$ such that 
$i \in \bigbra{\hi_{\Hu}^+:\hj_{\Hu}^- -w}$}}
\end{gather*}
where any interval of negative length is by convention considered empty.
In every recursive step, the interval endpoints in the set $U$
can be computed in time $O(m)$.

A cross-substring $t\ang{i:i+w}$, $i \in \bra{n'-w+1:n'-1}$, is matching,
if and only if $(i,i+w) \not\in U$.
Therefore, each point corresponding to a cross-substring of $t$
can be reported in constant time.

An algorithm for the bounded minimal-window subsequence recognition problem
can be obtained from \algref{alg-gc-lsubrec} 
by discarding in every recursive step 
the minimally matching cross-substrings of length exceeding $w$.

The overall running time 
of both the above modifications of \algref{alg-gc-lsubrec} is still 
$O(m \log m \cdot \bar n + \mathit{output})$.

\section{Edit distance matching}
\label{s-gc-tmatch}

\index{problem!edit distance matching}%
The edit distance matching problem was introduced 
in \secref{s-amatch} (\defref{def-edmatch}).
In the context of edit distance matching,
a substring of text $t$ will be called a \emph{matching substring},
if it has alignment score at least $h$ against pattern $p$
(alternatively, edit distance at most $k$),
where $h$ (respectively, $k$) is a fixed threshold.

Approximate pattern matching on compressed text 
has been studied by K\"arkk\"ainen et al.\ \cite{Karkkainen+:03}.
For a GC-text of length $\bar n$,
an uncompressed pattern of length $m$, 
and an edit distance threshold $k > 0$,
the (suitably generalised) algorithm of \cite{Karkkainen+:03} 
solves the edit distance matching problem 
in time $O(m \bar n k^2 + \mathit{output})$.
In the special case of LZ78 or LZW compression, 
the running time is reduced to $O(m \bar n k + \mathit{output})$.
Bille et al.\ \cite{Bille+:11} gave an efficient general scheme for adapting 
an arbitrary edit distance matching algorithm to work on a GC-text.
The running time of the resulting algorithms
is parameterised by the text's plain length $n$;
note that $\log n \leq \bar n$.
In particular, when the algorithms 
by Landau and Vishkin \cite{Landau_Vishkin:89}
and by Cole and Hariharan \cite{Cole_Hariharan:02}
are each plugged into this scheme,
the resulting algorithm runs respectively in time
\begin{gather*}
O\bigpa{\bar nmk + \bar n \log_2 n + \mathit{output}}\\
O\bigpa{\bar n(m+k^4) + \bar n \log_2 n + \mathit{output}}
\end{gather*}
In the special case of LZ78 or LZW compression,
Bille et al.\ \cite{Bille+:09} show that it is possible 
to remove the term $\bar n \log_2 n$,
reducing the running time respectively to 
$O\pa{\bar nmk + \mathit{output}}$ and
$O\bigpa{\bar n(m+k^4) + \mathit{output}}$.

Using the techniques of the previous sections,
we now show how the edit distance matching problem on a GC-text
can be solved more efficiently,
for a sufficiently high value of the threshold $k$.
The algorithm extends \algrefs{alg-gc-slcs} (extended string-substring LCS)
and \ref{alg-gc-lsubrec} (local subsequence recognition),
and assumes an edit distance with arbitrary rational weights.
As in \algref{alg-gc-lsubrec}, we assume for simplicity 
the constant-time index arithmetic,
keeping the index remapping implicit.
In the algorithm's description, it will be convenient 
to extend the integer and half-integer interval notation 
to non-integer endpoints as follows:
\begin{gather*}
\bra{u:v} = \bigbra{\ceil{u}:\floor{v}}\qquad
\ang{u:v} = \bigang{\ceil{u}:\floor{v}}
\end{gather*}
for any real $u$, $v$.

\begin{algorithm}[Edit distance matching]
\label{alg-gc-tmatch}
\setlabelitbf
\nobreakitem[Parameters:]
character alignment weights for a mismatch $\wmismatch$ and for a gap $\wgap$, 
where $2\wgap \leq \wmismatch < 0$.
These weights are assumed to be constant rationals.
The weight for a match is fixed as $\wmatch=0$. 
\item[Input:]
plain pattern string $p$ of length $m$;
SLP of length $\bar n$, generating text string $t$ of length $n$;
score threshold $h = -k < 0$.
\item[Output:]
unique starting positions of matching substrings in $t$.
\item[Description.]
\setlabelit
\nobreakitem[First phase.]
Recursion on the input SLP generating $t$.

To reduce the problem to an unweighted LCS score,
we apply the normalisation and the blow-up techniques 
described in \secref{s-weighted}.
Following \defref{def-normalised},
we introduce the normalised weights
$\wmatch^* = 1$,
$\wmismatch^* = \frac{\wmismatch-2\wgap}{-2\wgap}$,
$\wgap^* = 0$.
Let $\wmismatch^* = \tfrac{\mu}{\nu} < 1$,
where $\mu$, $\nu$ are positive natural numbers.
We transform strings $p$, $t$ into
the corresponding blown-up strings $\Tp$, $\Tt$
of length $\Tm = \nu m$, $\Tn = \nu n$, respectively.

\setlabelnormal
\item[Recursion base: $n=\bar n=1$, $\Tn = \nu$.]
The extended substring-string seaweed matrix $\rP^\subsX_{\Tp,\Tt}$
can be computed by \algref{alg-seaweed} (seaweed combing)
on plain strings $\Tp$ and $\Tt$, of length $\nu m$ and $\nu$ respectively.
This matrix can be used to query 
the LCS score $\rH_{\Tp,\Tt}(0,\nu)$ between $\Tp$ and $\Tt$.
String $t$ is matching, if and only if the corresponding 
weighted alignment score is above the threshold:
$\cH_{p,t}(0,1) \geq h$.

\item[Recursive step: $n \geq \bar n > 1$, $\Tn = \nu n$.]
Let $t=t' t''$ be the SLP statement defining string $t$.
We have $\Tt=\Tt' \Tt''$ for the corresponding blown-up strings.

As in \algref{alg-gc-lsubrec} (local subsequence recognition), 
we obtain recursively
the extended substring-string seaweed matrix $\rP^\subsX_{\Tp,\Tt}$
and the cross-semi-local seaweed matrix $\rP_{\Tp;\Tt',\Tt''}$
by \thref{th-comp-mmult-comp}.
These two subpermutation matrices are typically very sparse:
$\rP^\subsX_{\Tp,\Tt}$ contains at most $2\Tm = 2\nu m$ nonzeros,
and $\rP_{\Tp;\Tt',\Tt''}$ exactly $\Tm = \nu m$ nonzeros.

Similarly to \algref{alg-gc-lsubrec}, the matching substrings in $t$
can now be determined by the nonzeros 
of the cross-semi-local seaweed matrix $\rP_{\Tp;\Tt',\Tt''}$.
However, this time it is no longer sufficient
to consider just its $\gtrless$-maximal nonzeros;
we now have to consider all its $\Tm$ nonzeros.
Let us denote the indices of these nonzeros,
in increasing order independently for each dimension, by
\begin{gather}
\label{eq-nz-order}
\hi_{0^+} < \hi_{1^+} < \ldots < \hi_{\Tm^-}\qquad
\hj_{0^+} < \hj_{1^+} < \ldots < \hj_{\Tm^-}
\end{gather}
These two index sequences define an $\Tm\times\Tm$ 
non-contiguous permutation submatrix of nonzeros in $\rP_{\Tp;\Tt',\Tt''}$:
\begin{gather}
\label{eq-amatch-l}
P(\Hs,\Ht) = \rP_{\Tp;\Tt',\Tt''}(\hi_{\Hs},\hj_{\Ht})
\end{gather}
for all $\Hs,\Ht \in \ang{0:\Tm}$.

Index sequence $\hi_{\Hs}$ (respectively, $\hj_{\Ht}$)
partitions the range $\bra{-\Tm:\Tn'}$  (respectively, $\bra{\Tn':\Tm+\Tn}$)
into $\Tm+1$ disjoint non-empty intervals of varying lengths:
\begin{alignat*}{2}
{}
&\bra{-\Tm:\Tn'} &&= \bra{\hi_{0^-}:\hi_{0^+}} \uplus 
  \bra{\hi_{0^+}:\hi_{1^+}} \uplus\cdots\uplus \bra{\hi_{\Tm^-}:\hi_{\Tm^+}}\\
&\bra{\Tn':\Tm+\Tn} &&= \bra{\hj_{0^-}:\hj_{0^+}} \uplus 
  \bra{\hj_{0^+}:\hj_{1^+}} \uplus\cdots\uplus \bra{\hj_{\Tm^-}:\hj_{\Tm^+}}
\end{alignat*}
where the boundary indices are defined as
\begin{gather*}
\textstyle
\hi_{0^-}   = (-\Tm)^- \qquad
\hi_{\Tm^+} = (\Tn')^+ \qquad 
\hj_{0^-}   = (\Tn')^- \qquad
\hj_{\Tm^+} = (\Tm + \Tn)^+
\end{gather*}
(note that we are making use of the interval notation with non-integer endpoints).
Therefore, we have a partitioning 
of the cross-semi-local score matrix $\rH_{\Tp;\Tt',\Tt''}$
into $(\Tm+1)^2$ disjoint non-empty rectangular 
\emph{$H$-blocks} of varying dimensions.
Consider an arbitrary $H$-block
\begin{gather}
\label{eq-amatch-block}
\textstyle
\rH_{\Tp;\Tt',\Tt''}\bigbra{\hi_{u^-}:\hi_{u^+} \mid \hj_{v^-}:\hj_{v^+}}
\end{gather}
where $u, v \in \bra{0:\Tm}$.
By definition of the index sequences \eqref{eq-nz-order},
a nonzero of matrix $\rP_{\Tp;\Tt',\Tt''}$ can only occur
at a meeting point of four different $H$-blocks.
Therefore, matrix $\rP_{\Tp;\Tt',\Tt''}$ 
is identically zero inside every $H$-block:
we have $\rP_{\Tp;\Tt',\Tt''}(\hi,\hj) = 0$ for all 
$\hi \in \bigang{\hi_{u^-}:\hi_{u^+}}$,
$\hj \in \bigang{\hj_{v^-}:\hj_{v^+}}$,
given any fixed index pair $u, v \in \bra{0:\Tm}$
(again using interval notation with non-integer endpoints).
Therefore, given a fixed $H$-block,
all its points are $\gtrless$-dominated 
by the same set of nonzeros in matrix $\rP_{\Tp;\Tt',\Tt''}$
(and hence also in $\rP_{\Tp,\Tt}$).
The number of nonzeros in this set is
\begin{gather}
\label{eq-amatch-d}
d = \rP_{\Tp;\Tt',\Tt''}^{T\Sigma T}(\hi_{u^+}^-,\hj_{v^-}^+) =
\rP_{\Tp,\Tt}^{T\Sigma T}(\hi_{u^+}^-,\hj_{v^-}^+)
\end{gather}
where the reference point within an $H$-block is chosen arbitrarily 
as its bottom-left ($\gtrless$-minimal) point $(\hi_{u^+}^-,\hj_{v^-}^+)$.
Since the value of $d$ is constant across the $H$-block,
by \thref{th-ps} all its entries have identical value: 
\begin{gather}
\label{eq-amatch-md}
\rH_{\Tp;\Tt',\Tt''}(i,j)=\Tm-d
\end{gather}
for all $i \in \bra{\hi_{u^-}:\hi_{u^+}}$, $j \in \bra{\hj_{v^-}:\hj_{v^+}}$.

We now switch our focus from the blown-up strings $\Tp$, $\Tt'$, $\Tt''$
back to the original strings $p$, $t'$, $t''$.
The partitioning of the LCS score matrix $\rH_{\Tp;\Tt',\Tt''}$ into $H$-blocks
induces a partitioning of the alignment score matrix $\cH_{p;t',t''}$
into $(\tilde m+1)^2$ disjoint rectangular 
\emph{$\cH$-blocks} of varying dimensions.
The $\cH$-block corresponding to $H$-block \eqref{eq-amatch-block} is
\begin{gather}
\label{eq-amatch-block-ind}
\textstyle
\cH_{p;t',t''} \Bigbra{
\frac{\hi_{u^-}}{\nu}:\frac{\hi_{u^+}}{\nu} \mid
\frac{\hj_{v^-}}{\nu}:\frac{\hj_{v^+}}{\nu}}
\end{gather}
Note that, 
although an $H$-block \eqref{eq-amatch-block} is by definition non-empty,
the corresponding $\cH$-block \eqref{eq-amatch-block-ind} may be empty.
This happens whenever either of the intervals
$\Bigbra{\frac{\hi_{u^-}}{\nu}:\frac{\hi_{u^+}}{\nu}}$ or
$\Bigbra{\frac{\hj_{v^-}}{\nu}:\frac{\hj_{v^+}}{\nu}}$
contains no integer points
(i.e.\ either of the intervals
$\bigbra{\hi_{u^-}:\hi_{u^+}}$ or $\bigbra{\hj_{v^-}:\hj_{v^+}}$
contains no multiples of $\nu$).
However, for ease of notation, we will assume
that all the $\cH$-blocks are non-empty.

Although all entries within an $H$-block \eqref{eq-amatch-block} are constant,
the entries within the corresponding $\cH$-block \eqref{eq-amatch-block-ind}
will typically vary.
By \eqref{eq-amatch-md} and \defref{def-normalised}, we have
\begin{gather}
\label{eq-amatch-dd}
\textstyle
\cH_{p;t',t''}(i,j) = 
\frac{\Tm-d}{\nu} \cdot (-2\wgap) + (m+j-i) \cdot \wgap
\end{gather}
where 
$i \in \Bigbra{\frac{\hi_{u^-}}{\nu}:\frac{\hi_{u^+}}{\nu}}$,
$j \in \Bigbra{\frac{\hj_{v^-}}{\nu}:\frac{\hj_{v^+}}{\nu}}$.
Recall that $\wgap < 0$. 
Therefore, the score within an $\cH$-block 
is maximised when $j-i$ is minimised, so the maximum score is attained 
by the block's bottom-left (i.e.\ $\gtrless$-minimal) entry 
$\cH_{p;t',t''}\Bigpa{
  \bigfloor{\frac{\hi_{u^+}}{\nu}},\bigceil{\frac{\hj_{v^-}}{\nu}}}$.

We are interested in the bottom-left entries of all the $\cH$-blocks,
since that is where block maxima are attained.
The leftmost column and the bottom row of these entries
(respectively 
$\cH_{p;t',t''}\Bigpa{\bigfloor{\frac{\hi_{u^+}}{\nu}},n'}$ and
$\cH_{p;t',t''}\Bigpa{n',\bigceil{\frac{\hj_{v^-}}{\nu}}}$ for all $u$, $v$),
lie on the bottom-left boundary of matrix $\cH_{p;t',t''}$;
all such boundary points are degenerate.
The remaining block maxima form an $\Tm \times \Tm$ non-contiguous submatrix
\begin{gather*}
\textstyle H(u,v) = 
\cH_{p;t',t''}\Bigpa{\bigfloor{\frac{\hi_{u^+}}{\nu}},\bigceil{\frac{\hj_{v^-}}{\nu}}}
\end{gather*}
where $u \in \bra{0:\Tm-1}$, $v \in \bra{1:\Tm}$.

Let
\begin{gather*}
\textstyle H^\ssub = 
\bigpa{H(u,v) \text{ such that } 
\Bigpa{\bigfloor{\frac{\hi_{u^+}}{\nu}},\bigceil{\frac{\hj_{v^-}}{\nu}}} 
\in \bra{0:n' \mid n':n}}
\end{gather*}
be the submatrix of $H$ within
the string-cross-substring score matrix $\rH^\ssub_{p;t',t''}$.
Since matrix $\cH_{p;t',t''}$ is anti-Monge,
its submatrices $H$ and $H^\ssub$ are also anti-Monge.

We now need to obtain the row maxima of matrix $H^\ssub$.
Let
\begin{gather*}
\textstyle N(u,v) = \frac{\nu}{2\wgap} H^\ssub(u,v) = {}
\why{by \eqref{eq-amatch-dd}, \eqref{eq-amatch-d}, \eqref{eq-amatch-l}}\\ 
\textstyle P^{T\Sigma T}(u,v) - 
\Tm + \frac{\nu \Bigpa{m + 
\bigceil{\frac{\hj_{v^-}}{\nu}} - \bigfloor{\frac{\hi_{u^+}}{\nu}}} \wgap}{2\wgap}
\end{gather*}
Since $\wgap < 0$, the problem of finding row maxima of $H^\ssub$
is equivalent to finding row minima of matrix $N$,
or, equivalently, column minima of the transpose matrix $N^T$.
This matrix (and therefore $N$ itself) is subunit-Monge: we have
\begin{gather*}
N^T(v,u) = N(u,v) = P^{T\Sigma}(v,u) + b(u) + c(v)
\end{gather*}
where
\begin{gather*}
b(u) = - \frac{\nu \bigfloor{\frac{\hi_{u^+}}{\nu}} \cdot \wgap}{2\wgap-\wmatch}\qquad
c(v) = - \Tm + \frac{\nu \Bigpa{m + \bigceil{\frac{\hj_{v^-}}{\nu}}} \wgap}{2\wgap-\wmatch}
\end{gather*}
Therefore, the column minima of $N^T$ 
can be found by either \lmref{lm-rowmin-loglog} or \lmref{lm-rowmin} 
(replacing row minima with column minima by symmetry).

The unique starting positions of matching non-degenerate cross-substrings in $t$
can now be obtained as indices of row maxima in $H^\ssub$
scoring above the threshold $h$.
\nobreakitem[(End of recursive step)]

\setlabelit
\item[Second phase.]
As in \algref{alg-gc-lsubrec}, 
substituting ``matching'' for ``minimally matching''.

\setlabelitbf
\item[Cost analysis.]
\setlabelit
\nobreakitem[First phase.]
As in \algref{alg-gc-slcs}, each seaweed matrix multiplication 
runs in time $O(\Tm \log \Tm) = O(m \log m)$.
The algorithms of \lmref{lm-rowmin-loglog} and \lmref{lm-rowmin} 
run in time $O(\Tm \log\log \Tm) = O(m \log\log m)$ and $O(\Tm) = O(m)$,
respectively.
Hence, the running time of a recursive step is $O(m \log m)$.
There are $\bar n$ recursive steps in total,
therefore the whole recursion runs in time $O(m \log m \cdot \bar n)$.

\item[Second phase.]
As in \algref{alg-gc-lsubrec},
the absolute unique matching positions of all matching substrings in $t$ 
can be found in time $O(m \bar n)$.

\item[Total.]
The overall running time 
is $O(m \log m \cdot \bar n)$.
\end{algorithm}
\algref{alg-gc-tmatch} improves on the algorithm of \cite{Karkkainen+:03}
for $k = \omega\bigpa{(\log m)^{1/2}}$ in the case of general GC-compression,
and $k = \omega(\log m)$ in the case of LZ78 or LZW compression.
\algref{alg-gc-tmatch} also improves 
on the algorithms of \cite{Bille+:09,Bille+:11}
for $k = \omega\bigpa{(m \log m)^{1/4}}$,
in the case of both general GC-compression and LZ78 or LZW compression.

\begin{figure}[tb]
\centering

\subfloat[\label{f-gc-tmatch-plain}%
Cross-semi-local matrices $\rH_{\Tp;\Tt',\Tt''}$, $\rP_{\Tp;\Tt',\Tt''}$;
submatrices $H$, $P$]{%
\makebox[\textwidth]{\includegraphics{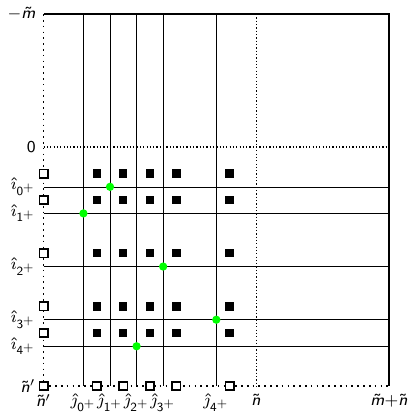}}}

\subfloat[\label{f-gc-tmatch-seaweed}Corresponding seaweed braid]{%
\makebox[\textwidth]{\includegraphics{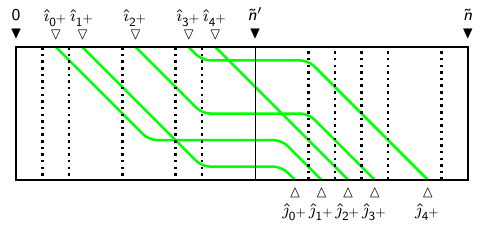}}}
\caption{\label{f-gc-tmatch}%
A snapshot of \algref{alg-gc-tmatch} (threshold approximate matching)}
\end{figure}
\begin{example}
\figref{f-gc-tmatch} shows a snapshot of a recursive step
in the first phase of \algref{alg-gc-tmatch},
which is assumed to run on the same input as in \figref{f-gc-lsubrec}.
\sfigref{f-gc-tmatch-plain} shows 
the cross-semi-local matrices $\rH_{\Tp;\Tt',\Tt''}$ and $\rP_{\Tp;\Tt',\Tt''}$; 
\sfigref{f-gc-tmatch-seaweed} shows the corresponding seaweed braid.
The cross-semi-local matrix and the seaweed braid
are identical to the ones in \figref{f-gc-lsubrec}.
In \sfigref{f-gc-tmatch-plain}, submatrix $P$
and the partitioning of $\rH_{\Tp;\Tt',\Tt''}$ 
into $H$-blocks is shown by a grid of thin solid lines.
The bottom-left entry in each corresponding $\cH$-block 
(note that it need not be bottom-left in the $H$-block)
is marked by a black square;
among these, some are degenerate boundary points (hollow squares),
and the others form submatrix $H$.
In \sfigref{f-gc-tmatch-seaweed}, the corresponding substring boundaries
are shown by dotted lines.
All elements of $H$ lie within the range $\bra{0:\Tn' \mid \Tn':\Tn''}$,
therefore $H^\ssub = H$,
and its row maxima scoring above the threshold $h$
correspond to unique starting positions 
of matching non-degenerate cross-substrings in $t$.
\end{example}


\mychapter{The transposition network method}
\label{c-network}

From \chapref{c-semi}, we already know
that the structure of semi-local string comparison
can be expressed in two equivalent forms:
the distance multiplication monoid of simple unit-Monge matrices,
and the monoid of seaweed braids.
In this chapter, we show that this structure can be seen
in yet another alternative form,
based on the classical concept of comparison networks.

\mysection{Seaweed combing as a transposition network}
\label{s-network}

Comparison networks were first considered as a computation model 
by Batcher \cite{Batcher:68} (see also \cite{Cormen+:01,AlHaj_Batcher:11}).
\begin{definition}
\label{def-network-comp}
\index{network!comparison}%
A \emph{circuit} represents a computation as a dag (directed acyclic graph).
The internal nodes of a circuit are labeled by elementary operations
on values, which are passed along the edges;
source and sink nodes represent the inputs and outputs, respectively.
A \emph{comparator node} (or simply \emph{comparator}) 
is a node of indegree and outdegree 2,
which sorts its two operands in increasing order.
In other words, a comparator node 
compares the operands on the incoming edges, 
and returns each of the minimum and the maximum operand 
on a prescribed outgoing edge.
A \emph{comparison network} is a circuit 
where all internal nodes are comparator nodes.
\end{definition}

The most well-studied types of comparison networks 
are the ones that either sort their inputs, 
or merge two disjoint subsets of inputs.
In particular, Batcher \cite{Batcher:68} 
gave classical merging networks with $O(n \log n)$ comparators, 
and sorting networks with $O(n \log^2 n)$ comparators.
Ajtai et al.\ \cite{Ajtai+:83_STOC,Ajtai+:83_Combinatorica} 
gave an asymptotically optimal sorting network with $O(n \log n)$ comparators;
their construction was subsequently simplified 
by Paterson \cite{Paterson:90} and by Seiferas \cite{Seiferas:09}.

Comparison networks are usually visualised by \emph{wire diagrams}
(also known as \emph{Knuth diagrams}),
where the values propagate across the network
along a set of parallel \emph{wires}.
Every comparator is represented by a directed line segment, 
drawn orthogonally between two (not necessarily adjacent) wires.
The order in which a comparator returns the minimum and the maximum output 
is consistent across all the comparators in the network.
The most common convention on wire diagrams 
(adopted e.g.\ by Knuth \cite{Knuth:98_3}) 
is to draw the wires horizontally, directed from left to right;
sometimes (e.g.\ in \cite{Paterson:90}), 
they are drawn vertically, directed from top to bottom.
In our setting, it will be convenient to draw the wires diagonally,
directed from top-left to bottom-right.
Comparator segments will be directed so that 
the minimum output is returned on the bottom-left, 
and the maximum on the top-right.

We will be dealing exclusively 
with the following restricted type of comparison network.
\begin{definition}
\label{def-network-trans}
\index{network!transposition}%
A comparison network is called a \emph{transposition network},
if in its wire diagram, all the comparisons are between adjacent wires.
\end{definition}

Every grid-diagonal dag (as in \defref{def-gd-dag})
can be associated with a transposition network as follows.
\begin{definition}
\label{def-gd-network}
Let $G$ be a grid-diagonal dag. 
Its corresponding transposition network $\mathcal{N}(G)$ 
has a diagram of $m+n$ wires, laid over dag $G$ diagonally,
so that every horizontal and every vertical edge in $G$ 
is crossed by exactly one wire.
Hence, every cell is crossed diagonally by exactly two adjacent wires.
The cell contains a comparator between these two wires,
if and only if the cell's diagonal edge has weight $0$.
\end{definition}

\begin{figure}[tb]
\centering
\subfloat[\label{f-align1-seaweeds}Alignment dag with a seaweed braid]{%
\includegraphics{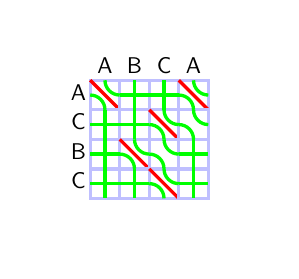}}
\qquad
\subfloat[\label{f-align1-network}Corresponding transposition network]{%
\includegraphics{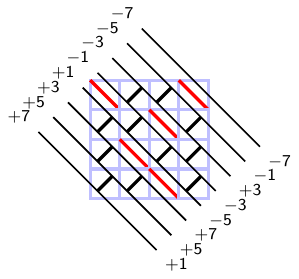}}
\caption{\label{f-align1}
Seaweed combing by a transposition network}
\end{figure}

\begin{example}
\figref{f-align1} illustrates \defref{def-gd-network}
on an alignment dag for strings $a=\textsf{``ACBC''}$, $b=\textsf{``ABCA''}$.
\sfigref{f-align1-seaweeds} shows the alignment dag $\rG_{a,b}$.
Following our usual colour conventions, 
the diagonal edges of weight $1$ are shown in red,
and the diagonal edges of weight $0$ are omitted.
\sfigref{f-align1-seaweeds} also shows the reduced seaweed braid
laid over the alignment dag $\rG_{a,b}$.
\sfigref{f-align1-network} shows in black
the corresponding transposition network, laid diagonally over $\rG_{a,b}$.
By \defref{def-gd-network}, a cell contains a comparator, 
if and only if it does not contain a red diagonal edge.
\end{example}

Let us denote the input and output arrays
of a transposition network $\mathcal{N}(\rG_{a,b})$
by $x\ang{-m:n}$ and $y\ang{0:m+n}$, respectively.
Assuming all input values of the network are distinct,
each value traces a well-defined path through the network.
We write $\pi(\hi)=\hj$, if the input $x(\hi)$ ends up as the output $y(\hj)$.

Although a transposition network $\mathcal{N}(\rG_{a,b})$ 
is in general neither merging nor sorting,
it is still useful to consider its operation on a certain type of input.
Let the input array $x$ be anti-sorted (i.e.\ sorted in decreasing order).
It turns out that, given such an input, 
the operation of the network $\mathcal{N}(\rG_{a,b})$ emulates seaweed combing,
and the resulting bijection $\pi$ coincides with the bijection 
defined by the semi-local seaweed matrix $\rP_{a,b}$.
\begin{theorem}
\label{th-network}
Consider an alignment dag $\rG_{a,b}$ 
and the corresponding transposition network $\mathcal{N}(\rG_{a,b})$.
For all $\hi \in \ang{-m:n}$, $\hj \in \ang{0:m+n}$,
we have $\pi(\hi)=\hj$, if and only if $\rP_{a,b}(\hi,\hj)=1$.
\end{theorem}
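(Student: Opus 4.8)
The plan is to run a single sweep-induction that tracks the two processes in lockstep. I would sweep the cells of $\rG_{a,b}$ in exactly the order used by \algref{alg-seaweed} (any linear extension of the $\ll$-dominance order on cells), and simultaneously regard $\mathcal{N}(\rG_{a,b})$ as being evaluated on the anti-sorted input $x$, resolving the comparators cell by cell in the same order. At every intermediate frontier $F$ (a monotone staircase separating processed from unprocessed cells) I would maintain a two-part invariant: (a) for each frontier position $p$, the wire segment crossing $F$ at $p$ carries the value $x(\hi)$ where $\hi$ is precisely the seaweed occupying position $p$ in the partially combed braid; and (b) for any two frontier positions $p<q$ carrying seaweeds $\hi_p$ and $\hi_q$, the pair $\{\hi_p,\hi_q\}$ has crossed an odd number of times inside the already-processed (hence fully combed) region if and only if $\hi_p>\hi_q$, which by the strict monotone decrease of $x$ is the same as $x(\hi_p)<x(\hi_q)$.

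The base case (top frontier, nothing processed) is immediate: seaweed $\hi$ sits at position $\hi$, the wire at position $\hi$ carries $x(\hi)$, there are no crossings yet, and the start indices along the frontier are increasing. For the inductive step I would process one more minimal cell $c$, whose two incoming edges occupy adjacent frontier positions $p,p+1$ with seaweeds $\hi$ (left) and $\hi'$ (right) carrying, by (a), the values $x(\hi)$ and $x(\hi')$. If $c$ is a match cell there is no comparator and the seaweeds pass through uncrossed (as both the initial braid and the combing rule dictate), so the invariant persists unchanged. If $c$ is a mismatch cell, the comparator swaps exactly when $x(\hi)>x(\hi')$, i.e. by anti-sortedness when $\hi<\hi'$, i.e. by (b) applied at positions $p<p+1$ on the old frontier exactly when $\hi$ and $\hi'$ have not crossed before — which is precisely the case in which the combing retains the crossing of the initial unreduced braid in cell $c$; conversely, when they have crossed before, the comparator leaves the values in place and the combing deletes the crossing. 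In both sub-cases the new seaweed positions match the new wire contents, re-establishing (a), and a short check — using that cell $c$ changes only the mutual crossing count of $\hi$ and $\hi'$ and none involving a third seaweed — re-establishes (b). Carrying the sweep to completion, at the bottom frontier $\ang{0:m+n}$ part (a) says that input $x(\hi)$ emerges at output position $\hj$ exactly when seaweed $\hi$ terminates at $\hj$, i.e. exactly when $\rP_{a,b}(\hi,\hj)=1$ by \defref{def-matrix-seaweed} and \thref{th-ps}; since $\pi(\hi)=\hj$ means precisely that $x(\hi)$ emerges at $\hj$, this is the claim.

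The main obstacle, and the reason part (b) is needed at all, is to reconcile the two styles of local decision: the combing asks a global question (``have these two seaweeds ever crossed before in the sweep?''), whereas the comparator only performs a numerical comparison of two values. Invariant (b) is exactly the bridge, and the delicate point is to verify that it is self-consistently preserved through mismatch cells without circularity — in particular, noting that at a cell the ``left'' frontier position need not hold the smaller-index seaweed (that holds only before any crossing of that pair has occurred), and that the drawing convention for $\mathcal{N}(\rG_{a,b})$ (comparator oriented so that the minimum output exits bottom-left) indeed makes a comparator in a mismatch cell perform the same ``swap'' that the initial unreduced seaweed braid performs there. Once (b) is in hand, the match-cell case and the two mismatch-cell sub-cases are routine. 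I would also remark that anti-sortedness is used twice: to guarantee distinct inputs (so that $\pi$ is well defined) and to convert the comparator's value test into the index test appearing in (b).
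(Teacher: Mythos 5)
Your proof is correct and follows essentially the same route as the paper's: both arguments reduce to the observation that, on a distinct anti-sorted input, a comparator in a mismatch cell swaps its two values exactly when their paths have not previously crossed, which is precisely the combing rule of \algref{alg-seaweed}, so value paths coincide with seaweeds. The only difference is presentational — you formalise this as a sweep induction with an explicit frontier invariant, whereas the paper tracks each pair of values directly — so no further comparison is needed.
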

\begin{proof}
Consider any pair of input values in $\mathcal{N}(\rG_{a,b})$.
Initially, these two values are anti-sorted.
Assume that the two values in question meet at some comparator.
If by that point their paths have not yet crossed,
then they arrive at the comparator anti-sorted, and leave it sorted,
so the two paths cross for the first time.
Otherwise, the two paths have previously crossed once,
therefore the values arrive and leave the comparator sorted,
and their paths never cross again.

Since the comparators are located in mismatch cells of $\rG_{a,b}$,
the described operation of each comparator 
is equivalent to seaweed combing (\algref{alg-seaweed}).
Therefore, the paths of the values correspond 
to the seaweeds in the resulting reduced seaweed braid,
and the output of $\mathcal{N}(\rG_{a,b})$ 
is described in terms of the seaweed matrix $\rP_{a,b}$ as claimed.
\end{proof}

\begin{example}
\figref{f-align1} illustrates the connection 
between seaweed braids and transposition networks,
as described by \thref{th-network}.
\sfigref{f-align1-seaweeds} shows the reduced seaweed braid
obtained by seaweed combing (\algref{alg-seaweed}),
laid over the alignment dag $\rG_{a,b}$.
\sfigref{f-align1-network} shows 
an anti-sorted input array of distinct values,
and the corresponding output array.
Each value in \sfigref{f-align1-network} 
traces a path through the network; 
the paths are not shown explicitly, 
but can be reconstructed by running the network ``by hand''.
Every path turns out to be identical 
to the layout of the corresponding seaweed in \sfigref{f-align1-seaweeds}.
\end{example}

Extending \thref{th-network}, it is natural to consider the situation 
where the input values to a transposition network $\mathcal{N}(\rG_{a,b})$
are anti-sorted, but not necessarily distinct.
An extreme case of that is an anti-sorted \emph{binary input}:
a sequence of ones, followed by a sequence of zeros.
While in this case, the information on semi-local LCS scores is lost,
it turns out that the output still contains sufficient information
to obtain the ordinary (global) LCS score. 
\begin{theorem}
\label{th-network-bin}
Consider an alignment dag $\rG_{a,b}$ 
and the corresponding semi-local seaweed matrix $\rP_{a,b}$.
Let the transposition network $\mathcal{N}(\rG_{a,b})$ 
operate on an anti-sorted input array $x\ang{-m:n}$,
consisting of $m$ $1$-values and $n$ $0$-values:
\begin{gather*}
x(\hi) =
\begin{cases}
1 & \text{if $\hi \in \ang{-m:0}$} \\
0 & \text{if $\hi \in \ang{0:n}$} 
\end{cases}
\end{gather*}
Let $y\ang{0:m+n}$ be the output array of $\mathcal{N}(\rG_{a,b})$.
Then we have
\begin{gather*}
\textstyle
\lcs(a,b) = \sum_{\hj \in \ang{0:n}} y(\hj) = m - \sum_{\hj \in \ang{n:m+n}} y(\hj)
\end{gather*}
\end{theorem}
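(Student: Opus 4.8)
The plan is to reduce Theorem~\ref{th-network-bin} to Theorem~\ref{th-network} by a standard ``$0$--$1$ principle''-style continuity argument, tracking what happens to the permutation $\pi$ under degeneration of distinct inputs to a binary input. First I would recall from Theorem~\ref{th-network} that when the input array $x\ang{-m:n}$ consists of $m+n$ distinct values in decreasing order, the bijection $\pi$ realised by the network $\mathcal{N}(\rG_{a,b})$ is exactly the one encoded by $\rP_{a,b}$, i.e.\ $\pi(\hi)=\hj$ iff $\rP_{a,b}(\hi,\hj)=1$. The key observation is that a binary input is obtained as a limit of distinct decreasing inputs: take any distinct decreasing array $x'$, and apply the monotone threshold map $\theta(v) = 1$ if $v$ lies among the $m$ largest values, $\theta(v)=0$ otherwise. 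Since every comparator is monotone, and $\theta$ is monotone, the network commutes with $\theta$: running $\mathcal{N}(\rG_{a,b})$ on $\theta(x')$ componentwise and then nothing, versus running on $x'$ and then applying $\theta$ to the output, give the same result. Hence the binary output $y$ satisfies $y(\hj) = \theta\bigl(y'(\pi^{-1}(\hj))\bigr)$, which equals $1$ precisely when $\pi^{-1}(\hj) \in \ang{-m:0}$, i.e.\ when the $1$-value starting at some $\hi \in \ang{-m:0}$ ends at position $\hj$.

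With this in hand, the computation becomes counting. The total number of $1$-values in the output is $m$ (the network is a bijection on values), so $\sum_{\hj\in\ang{0:m+n}} y(\hj) = m$, which already gives the second equality $\sum_{\hj\in\ang{0:n}} y(\hj) = m - \sum_{\hj\in\ang{n:m+n}} y(\hj)$ for free. For the first equality, I would express $\sum_{\hj\in\ang{0:n}} y(\hj)$ as the number of $1$-values (equivalently, of indices $\hi\in\ang{-m:0}$) whose image $\pi(\hi)$ lies in $\ang{0:n}$. By Theorem~\ref{th-network}, this count is exactly the number of nonzeros $\rP_{a,b}(\hi,\hj)=1$ with $\hi\in\ang{-m:0}$ and $\hj\in\ang{0:n}$, which is the number of nonzeros in the suffix-prefix seaweed submatrix $\rP_{a,b}^\sufpre$ over $\ang{-m:0 \mid 0:n}$. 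It then remains to identify this count with $\lcs(a,b)$ via Theorem~\ref{th-ps}.

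The crux, and the step I expect to require the most care, is the dominance-counting identity linking the number of nonzeros of $\rP_{a,b}$ in the rectangle $\ang{-m:0 \mid 0:n}$ to $\rH_{a,b}(0,n) = \lcs(a,b)$. Here I would invoke Theorem~\ref{th-ps}, which gives $\rH_{a,b}(i,j) = j-i-\rP_{a,b}^\Sigma(i,j)$. Taking $(i,j)=(0,n)$ yields $\lcs(a,b) = \rH_{a,b}(0,n) = n - \rP_{a,b}^\Sigma(0,n)$, where $\rP_{a,b}^\Sigma(0,n)$ counts nonzeros $\gtrless$-dominated by $(0,n)$, i.e.\ those with row index in $\ang{0:n}$ and column index in $\ang{0:n}$. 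To convert this into a count over rows $\ang{-m:0}$, I would use that $\rP_{a,b}$ is a permutation matrix with exactly $n$ nonzeros having column index in $\ang{0:n}$ (since the columns $\ang{0:n}$ of $\rP_{a,b}$ each contain one nonzero); of these, $\rP_{a,b}^\Sigma(0,n)$ have row index in $\ang{0:n}$ and the remaining ones have row index in $\ang{-m:0}$, giving $n - \rP_{a,b}^\Sigma(0,n)$ nonzeros in the rectangle $\ang{-m:0 \mid 0:n}$, which is precisely $\lcs(a,b)$. (One should double-check the boundary conventions on the half-open intervals $\ang{-m:0}$ versus $\ang{0:n}$ and the $\gtrless$-dominance direction, since the off-by-one bookkeeping is where an error would most plausibly creep in.) Matching this against the output-count expression from the previous paragraph closes the proof.
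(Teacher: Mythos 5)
Your proposal is correct and takes essentially the same route as the paper's proof: reduce to \thref{th-network}, count the nonzeros of $\rP_{a,b}$ in the rectangle $\ang{-m:0 \mid 0:n}$, and identify that count with $\lcs(a,b) = n - \rP_{a,b}^\Sigma(0,n)$ via \thref{th-ps} and the permutation-matrix property; your explicit monotone-threshold (zero--one principle) step is exactly what the paper compresses into the annotations ``definition of $x$'' and ``definition of $\pi$''. The only blemish is notational: in the identity $y(\hj) = \theta\bigl(y'(\pi^{-1}(\hj))\bigr)$ you mean $x'$ rather than $y'$, since $y'(\hj) = x'(\pi^{-1}(\hj))$; the argument is unaffected.
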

\begin{proof}
We have
\begin{gather*}
\lcs(a,b) = n - \rP_{a,b}^\Sigma(0,n) = {}
\why{\thref{th-ps}; definition of $\Sigma$} \\
\textstyle
n - \sum_{(\hi,\hj) \in \ang{0:n \mid 0:n}} \rP_{a,b}(\hi,\hj) = {}
\why{$\rP_{a,b}$ is a permutation matrix} \\
\textstyle
n - \bigpa{n - \sum_{(\hi,\hj) \in \ang{-m:0 \mid 0:n}} \rP_{a,b}(\hi,\hj)} = {}
\why{cancellation} \\
\textstyle
\sum_{(\hi,\hj) \in \ang{-m:0 \mid 0:n}} \rP_{a,b}(\hi,\hj) = {}
\why{$\rP_{a,b}$ is a permutation matrix} \\
\textstyle
\bigabs{\bigbrc{(\hi,\hj) \in \ang{-m:0 \mid 0:n}: \rP_{a,b}(\hi,\hj) = 1}} = {}
\why{\thref{th-network}} \\
\textstyle
\bigabs{\bigbrc{(\hi,\hj) \in \ang{-m:0 \mid 0:n}: \pi(\hi) = \hj}} = {}
\why{definition of $x$} \\
\textstyle
\bigabs{\bigbrc{\hj \in \ang{0:n} : x(\pi^{-1}(\hj)) = 1}} = {}
\why{definition of $\pi$} \\
%
\textstyle
\bigabs{\bigbrc{\hj \in \ang{0:n} : y(\hj) = 1 }} = {}
\why{$y$ binary} \\
\textstyle
\sum_{\hj \in \ang{0:n}} y(\hj) = {}
\why{$\sum_{\hj \in \ang{0:m+n}} y(\hj) = \sum_{\hi \in \ang{-m:n}} x(\hi) = m$} \\
\textstyle
m - \sum_{\hj \in \ang{n:m+n}} y(\hj)
\end{gather*}
\end{proof}
\begin{figure}[tb]
\centering
\subfloat[\label{f-align1-seaweeds-bin}Alignment dag with a partitioned seaweed braid]{%
\includegraphics{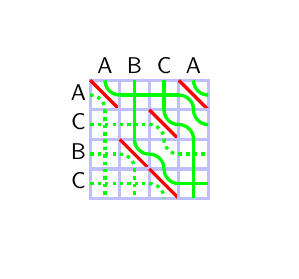}}
\qquad
\subfloat[\label{f-align1-network-bin}Corresponding transposition network with binary input]{%
\includegraphics{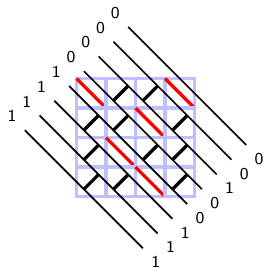}}
\caption{\label{f-align1-bin}
Binary seaweed combing by a transposition network}
\end{figure}

\begin{example}
\figref{f-align1-bin} illustrates \thref{th-network-bin}
on the same pair of strings as \figref{f-align1}.
\sfigref{f-align1-seaweeds-bin} shows 
the seaweed braid of \sfigref{f-align1-seaweeds},
laid over the alignment dag $\rG_{a,b}$.
The seaweeds originating at the top 
(respectively, the left-hand side) of the dag
are shown by solid (respectively, dotted) lines.
\sfigref{f-align1-network-bin} shows
the transposition network of \sfigref{f-align1-network},
laid over the alignment dag $\rG_{a,b}$.
The network is given an anti-sorted binary input.
The path of each $0$-value (respectively, $1$-value)
corresponds to a solid (respectively, dotted) seaweed.
We have 
\begin{gather*}
\textstyle
\lcs(a,b) = \sum_{\hj \in \ang{0:4}} y(\hj) = 3\\
\textstyle
\lcs(a,b) = 4 - \sum_{\hj \in \ang{4:8}} y(\hj) = 4-1 = 3
\end{gather*}
as claimed by \thref{th-network-bin}.
\end{example}

\mysection{Parameterised LCS}
\label{s-parameterised}

An algorithm's complexity is most commonly defined 
to be a function of a single argument: the input size.
However, in the pursuit of efficiency, algorithms may also be designed 
to be sensitive to various other parameters of the input.
In the context of string comparison, the two most relevant parameters are:
\begin{itemize}
\item the input strings' alignment score; 
we consider primarily the LCS score $\lambda = \mathit{lcs}(a,b)$;
\item the input strings' edit distance; 
we consider primarily the LCS distance
$\kappa = \mathit{dist}_\mathit{LCS}(a,b) = m + n - 2\lambda$.
\end{itemize}
In this section, we study algorithms for the LCS problem
that are sensitive to these parameters.
We consider \emph{low-similarity} and \emph{high-similarity} LCS algorithms.
The running times of such algorithms 
are parameterised respectively by $\lambda$ and $\kappa$,
so that advantage can be taken of the low value of a parameter.
More generally, one can also use weighted alignment scores 
or weighted edit distances (e.g.\ the Levenshtein distance) as parameters.

As we aim for algorithms that, for a low value of the parameter,
run substantially faster than $\Theta(mn)$,
we cannot afford to perform 
all the $mn$ pairwise comparisons of characters from each string.
As in \secref{s-micro}, we assume a character comparison model 
that allows comparison outcomes ``equal'', ``less than'' and ``greater than'',
so that the ``missing'' comparisons can be obtained by transitivity,
and algorithms with running time $o(mn)$ become possible.

For simplicity, we ignore the trivial cases 
$\lambda = 0$ (the two input strings have no characters in common)
and $\kappa = 0$ (the two input strings are identical).
As usual, we denote the alphabet size by $\sigma$.
Without loss of generality, we assume $m \leq n$.

\paragraph{Low-similarity comparison (sensitive to low $\lambda$).}
In this type of comparison, 
the overall number of matching character pairs will be low.
To locate these matching pairs effectively, the input strings $a$ and $b$
are preprocessed into a data structure that allows efficient queries 
defined by the \emph{string identification problem}
\cite{Aho+:76_JACM,Hsu_Du_84_JCSS,Rick:00_IPL}.
%
%
The preprocessing builds a binary search tree on each input string,
and returns a partitioning of both $a$ and $b$ into character equality classes.
%
%
This preprocesing procedure runs in time $m \log\sigma$ 
(respectively, $n \log\sigma$).


After the preprocessing, the low-similarity LCS problem can be solved 
by one of the algorithms due to Hirschberg \cite{Hirschberg:77_JACM}, 
Hsu and Du \cite{Hsu_Du_84_JCSS} (see also Apostolico \cite{Apostolico:87_IPL}),
Apostolico and Guerra \cite{Apostolico_Guerra:87_Algorithmica}.
All these algorithms run in time $O(n\lambda)$.
Apostolico, Browne and Guerra \cite{Apostolico+:92_TCS} proposed another algorithm
that requires no preprocessing, 
and runs in time $O(n \lambda \log \sigma)$ and linear space.

\paragraph{High-similarity comparison (sensitive to low $\kappa$).}
In this type of comparison, 
the overall number of matching character pairs may be as high as $mn$.
However, an efficient high-similarity LCS algorithm
may not need to look at all these matches;
speaking informally, a good algorithm 
``will know where to look for relevant matches''.
In fact, it is sufficient to consider
$O(n \cdot \kappa)$ character matches overall.
In contrast to low-similarity comparison, 
there is no need for preprocessing the input strings.

Efficient high-similarity LCS algorithms have been given by
Ukkonen \cite{Ukkonen:85_IC},
Myers \cite{Myers:86_Algorithmica},
Wu et al.\ \cite{Wu+:90_IPL}.
All these algorithms run in time $O(n \cdot \kappa)$.
Apostolico, Browne and Guerra \cite{Apostolico+:92_TCS} proposed another algorithm
that runs in time $O(n \cdot \kappa)$ and linear space.


\paragraph{Flexible comparison (sensitive to both low $\lambda$ and low $\kappa$).}
This type of comparison can be achieved
by preprocessing the input string as for low-similarity comparison,
and then running both comparison types alongside each other.
However, dedicated flexible comparison algorithms have also been proposed.
In particular, the flexible LCS algorithm by Hirschberg \cite{Hirschberg:77_JACM}
runs in time $O(\lambda\kappa \log n)$,
and one by Rick \cite{Rick:00_IPL}
in time $O(\lambda\kappa)$ and linear space. 

We now describe an algorithm based on the comparison network method.
The algorithm is sensitive to both parameters $\lambda$ and $\kappa$,
providing flexible LCS computation 
efficient in both the low- and the high-similarity case.
Our algorithm matches existing flexible-LCS algorithms in running time.
We call it the \emph{waterfall algorithm}.

We preprocess input strings to build a data structure 
for efficient querying of the matches, and also \emph{match successor queries}:
given $\hl$, $\hi$, find lowest $\hj \geq \hi$ such that $a(\hl) = b(\hj)$.


\index{algorithm!waterfall}%
\begin{algorithm}
\textbf{(Parameterised LCS: The waterfall algorithm)}
\label{alg-waterfall}
\setlabelitbf
\nobreakitem[Input:]
strings $a$, $b$ of length $m$, $n$, respectively.
\item[Output:]
the LCS score $\lcs(a,b)$.
\item[Description.]
The algorithm is based on 
transposition network $\mathcal N(\rG_{a,b})$ with binary input, 
as described by \thref{th-network-bin}.
Such a network can be evaluated efficiently as follows.
The dag $\rG_{a,b}$ is processed row-by-row.
Instead of performing binary value comparisons
within individual mismatch cells of a given row of cells,
we partition the row into contiguous horizontal blocks,
and combine the comparisons within each block
into a single constant-time operation.
A block may contain both match and mismatch cells;
as we move vertically from one dag row to the next,
the blocks may split or merge.
We keep track of each block's endpoints in a parameter-sensitive way,
achieving an overall speedup 
whenever either of the parameters $\lambda$, $\kappa$ is low.

Let us now fill in the details.
Consider a row of nodes in the alignment dag $\rG_{a,b}$,
corresponding to a fixed index $l \in \bra{0:m}$.
The nodes in this row are connected by $n$ horizontal edges,
each of which is crossed by a single wire 
of the transposition network $\mathcal N(\rG_{a,b})$.
Hence, when running the network on a binary input as in \thref{th-network-bin},
each row of nodes corresponds to a sequence of $n$ binary values.
We regard this sequence as partitioned
into contiguous runs of $0$-values, called \emph{blocks},
alternating with contiguous runs of $1$-values, called \emph{gaps}.
We only need to deal explicitly with the blocks,
leaving the processing of gaps implicit (hence the block/gap terminology).
If the input strings happen to be highly dissimilar or highly similar,
i.e.\ the respective parameter $\lambda$ or $\kappa$ is low,
then every dag row will have only a small number of blocks (and gaps). 

Consider the operation of the transposition network row by row,
moving downwards across the alignment dag.
In the top row of nodes $l=0$, 
we have a single block of $0$-values coming as input to the network,
with index set spanning the full row $\ang{0:n}$.

Now, given a fixed row of cells $\hl \in \ang{0:m}$,
consider a transition from its top boundary $\hl^-$ 
to its bottom boundary $\hl^+$.
Given the endpoints of all blocks in row of nodes $\hl^-$,
we need to find the endpoints for the blocks in row of nodes $\hl^+$.
This transformation of endpoints can be achieved in two phases: 
\emph{block splitting} and \emph{block merging}.

\setlabelit
\item[Block splitting.] 
Consider a block in row of nodes $\hl^-$, 
spanning the index set $\ang{i_0:i_1}$.
Let $\hi \in \ang{i_0:i_1}$ be the index 
of the leftmost match cell (if one exists) immediately below the given block:
$\hi = \min \brc{\hk \in \ang{i_0:i_1} \mid a(\hl) = b(\hk)}$.
If index $\hi$ exists, then block $\ang{i_0:i_1}$ 
is split at this index into a pair of subblocks:
\begin{itemize}
\item the left subblock $\ang{i_0:\hi^-}$, which is empty if $i = i_0^+$,
and consists of the whole block $\ang{i_0:i_1}$ if index $\hi$ does not exist;
this subblock is kept unchanged in row of nodes $\hl^+$;
\item the right subblock $\ang{\hi^-:i_1}$;
this subblock is shifted by one unit to the right,
forming a new block $\ang{\hi^+:i_1+1}$ in row of nodes $\hl^+$
(unless $i_1=n$, in which case the rightmost $0$-value 
is considered to be ``shifted out'' off the right boundary of the dag,
and the resulting new block is $\ang{\hi^+:i_1}$).
\end{itemize}
A unit-length gap $\ang{\hi^-:\hi^+}$ is created 
between the two subblocks in row $\hl^+$.
The gap at the right of the right subblock 
(which exists in row $\hl^-$, unless $i_1=n$)
is reduced by one unit in row $\hl^+$.

\item[Block merging.]
This occurs whenever a gap between two blocks 
has been closed (reduced to length $0$) 
as a result of shifting a subblock in the previous splitting phase.
Suppose that subblock shifting
has resulted in a pair of touching blocks 
$\ang{i_0:i_1}$ and $\ang{i_1:i_2}$ in row $\hl^+$.
Then, we merge them into a single block $\ang{i_0:i_2}$.
Note that the left touching block $\ang{i_0:i_1}$
must have been split off the right-hand side 
of a larger block in the preceding splitting phase.
Therefore, blocks can only merge in pairs: 
it is impossible for three or more blocks 
to merge together in the same merging phase.
Upon the completion of all block merging, 
we have the correct block endpoints in row of nodes $\hl^+$.

After all the $m$ rows of cells in the alignment dag have been processed,
the values $y(\hj)$, $\hj \in \ang{0:m+n}$ are obtained as follows:
\begin{itemize}
\item for $\hj \in \ang{0:n}$, the values $y(\hj)$
correspond to the sequence of blocks and gaps 
resulting from the final iteration of the split/merge procedure;
\item for $\hj \in \ang{n:m+n}$, the values $y(\hj)$
correspond, in reverse order, 
to the sequence of $0$-values (and, implicitly, $1$-values)
``shifted out'' off the right boundary of the dag
in each of the $m$ iterations of the split/merge procedure.
\end{itemize}
The LCS score of the input strings $a$, $b$
can now be obtained by \thref{th-network-bin}.
\setlabelitbf

We now need to show that the described algorithm emulates correctly
the operation of the transposition network $\mathcal{N}(\rG_{a,b})$
on binary input, as defined by \thref{th-network-bin}.
Consider the operation of the algorithm in row of cells $\hl$.
The operation within an individual cell 
depends on whether the top edge of the cell belongs to a left subblock 
(which can be the whole block in case it does not get split), 
a right subblock, or a gap.
We also need to consider separately 
the leftmost cell under a right subblock 
(the ``subblock splitting cell''), 
and the leftmost cell under a gap following a right subblock 
(the ``gap filling cell'').
We thus have the following cases for a cell's operation
(where L = ``left'', R = ``right'', T = ``top'', B = ``bottom'').
\begin{center}
\begin{tabular}{|l|ll|ll|l|}
\hline
Cell's top edge & \multicolumn{2}{|c|}{Input} & 
\multicolumn{2}{|c|}{Output} & Match? \\
& L & T & B & R & \\ \hline
In L subblock            & 1 & 0 & 0 & 1 & mismatch \\ \hline
Leftmost in R subblock  & 1 & 0 & 1 & 0 & match \\
Other in R subblock     & 0 & 0 & 0 & 0 & either \\ \hline
Leftmost in gap after R subblock & 0 & 1 & 0 & 1 & either \\
Other in gap                & 1 & 1 & 1 & 1 & either \\ \hline
\end{tabular}
\end{center}
It is now straightforward to check that 
the input-output relationship defined by the above table
is consistent with the operation of all the individual comparators
in the network $\mathcal{N}(\rG_{a,b})$,
%
%
and that the algorithm's overall running time is $O(\lambda \cdot \kappa)$.
\end{algorithm}

The name ``waterfall algorithm'' for \algref{alg-waterfall}
is justified by the following interpretation.
Let us think of the $0$-values as a non-compressible water
that flows through the alignment dag under gravity.
The blocks of adjacent $0$-values correspond to water jets
that may split or merge while flowing through the dag.
Initially, there is just a single jet of width $n$,
falling vertically down through the top boundary of the dag.
The diagonal match edges are rocks that form barriers for the water:
whenever a jet hits a rock in its path, it gets displaced 
by one unit to the right, following the rock's inclination.
This displacement may result in a jet splitting
at the top-left tip of the rock;
the displaced water may also fill the gap to its right
and merge with a jet falling vertically just beyond that gap.
The amount of water that emerges from the dag 
at its right-hand side (respectively, at its bottom)  
equals the LCS score $\lambda$ (respectively, the LCS distance $\mu$)
of the input strings. 

\begin{figure}[tb]
\centering
\subfloat[\label{f-waterfall-losim}Low-similarity]{%
\includegraphics{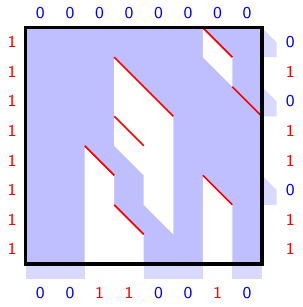}}
\qquad
\subfloat[\label{f-waterfall-hisim}High-similarity]{%
\includegraphics{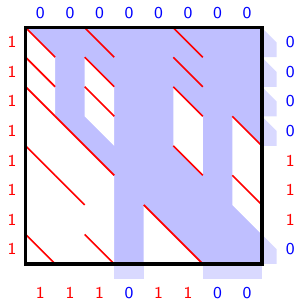}}
\caption{\label{f-waterfall}
The waterfall algorithm}
\end{figure}
\begin{example}
\figref{f-waterfall} illustrates two separate runs 
of the waterfall algorithm (\algref{alg-waterfall}):
\begin{itemize}
\item \sfigref{f-waterfall-losim} shows the low-similarity case,
with input strings $a = \textsf{``ABCBDABE''}$, $b = \textsf{``FFDBCFAC''}$;
\item \sfigref{f-waterfall-hisim} shows the high-similarity case,
with input strings $a = \textsf{``AAABABCA''}$, $b = \textsf{``ABADCADB''}$.
\end{itemize}
Following our usual convention,
the diagonal edges in match cells are shown in red.
The rest of the alignment dag, as well as the input strings,
are kept implicit.
The anti-sorted input to the transposition network
is represented by the sequences of red $1$-values and blue $0$-values 
along the left (respectively, the top) boundary of the alignment dag.

The iterative procedure of block splitting and merging
is shown by filling in the interior of the alignment dag
with blue and white pattern as follows.
Every horizontal edge in the alignment dag
corresponds to a blue (respectively, white) streak in the pattern,
whenever that edge is crossed by an (implicit) wire
carrying a $0$- (respectively, $1$-) value.
Therefore, for each row index $l$ in the dag, 
the corresponding sequence of blocks and gaps
is represented by a sequence of continuous 
blue and white streaks in a horizontal line at level $l$.
The transition of block sequences 
between every pair of successive rows is shown by connecting 
the corresponding pairs of blue streaks with a blue strip.
For the vertical (respectively, diagonal) transition of a single $0$-value,
the connecting strip 
has the shape of a unit square (respectively, unit-width parallelogram).

The output of the transposition network
is represented by the mixed sequence of red $1$-values and blue $0$-values 
along the bottom and right boundaries of the alignment dag.
Each output $0$-value is also shown by a blue tab.
By \thref{th-network-bin}, we have
\begin{gather*}
\textstyle
\lcs(a,b) = \sum_{\hj \in \ang{0:8}} y(\hj) = 3 =
8 - \sum_{\hj \in \ang{4:8}} y(\hj) = 8-5 = 3\\
\textstyle
\lcs(a,b) = \sum_{\hj \in \ang{0:8}} y(\hj) = 5 =
8 - \sum_{\hj \in \ang{4:8}} y(\hj) = 8-3 = 5
\end{gather*}
in \sfigrefs{f-waterfall-losim} and \ref{f-waterfall-hisim},
respectively.

Just as the splitting/merging procedure in \algref{alg-waterfall}
is not symmetric with respect to blocks and gaps,
so the pattern in \figref{f-waterfall} is not symmetric 
either with respect to horizontal and vertical directions,
or with respect to $0$- and $1$-values.
For this reason, while the blocks are represented by the usual blue colour,
the gaps are left uncoloured 
(i.e.\ are represented by the ``background'' white).
The red colour, which we would normally use to represent $1$-values,
is not present in the pattern.
\end{example}

Parameterised LCS algorithms are closely related 
to \emph{threshold LCS} algorithms.
Here, a threshold value for the parameter $\lambda$ or $\kappa$ is given,
and the algorithm is required to output the LCS score of the input strings,
as long as this value is below the threshold,
or to report ``excess'',
if the LCS score (respectively, LCS distance) exceeds the threshold;
in the latter case, there is no requirement 
for the score to be output explicitly.
In general, a threshold algorithm 
can be obtained from a parameterised algorithm
by setting an appropriate time-out threshold 
as a function of the parameter threshold, and reporting ``excess'', 
if the algorithm's running time exceeds the time-out threshold.
Alternatively, a threshold LCS algorithm can be obtained 
from the waterfall algorithm (\algref{alg-waterfall}) by reporting ``excess'', 
if the current number of blocks 
exceeds the threshold value for $\lambda$ (respectively, $\kappa$).

\mysection{Bit-parallel LCS}
\label{s-bitpar}

The most efficient practical method 
for computing the (global) LCS score for a pair of strings
is by \emph{bit-parallel} algorithms.
These algorithms take advantage of bitwise Boolean operations on bit vectors
available in modern processors, often in combination 
with arithmetic operations on the same vectors as integers.
We denote by $w$ the \emph{word} (standard bit vector) length;
in modern general-purpose processors, word length is often $w = 64$.

Early bit-parallel string comparison algorithms
were given by Allison and Dix \cite{Allison_Dix:86_IPL}
and by Myers \cite{Myers:99_JACM}.
Crochemore at al.\ \cite{Crochemore+:01_IPL} proposed 
an efficient bit-parallel LCS algorithm, running in time $O(mn/w)$.
For every $w$ cells of the alignment dag $\rG_{a,b}$, the algorithm only performs
five elementary operations (one arithmetic and four Boolean).
Hyyr\"o \cite{Hyyro:04_AWOCA} improved this to four operations
(two arithmetic and two Boolean).

\begin{figure}[tb]
\centering
\subfloat[\label{f-bitpar-1-c}Binary transposition network cell]{%
\vtop{\vskip0pt\hbox{\includegraphics{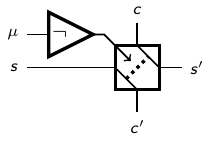}}}}
\qquad
\subfloat[\label{f-bitpar-1-tt}Corresponding truth table]{%
\mbox{\scriptsize
\begin{tabular}[t]{|l|cccccccc|}
\hline
$s$   & 0 & 1 & 0 & 1 & 0 & 1 & 0 & 1 \\
$c$   & 0 & 0 & 1 & 1 & 0 & 0 & 1 & 1 \\
$\mu$ & 0 & 0 & 0 & 0 & 1 & 1 & 1 & 1 \\ \hline
$s'$  & 0 & 1 & 1 & \colorbox{gray!40}{\textbf{1}} & 0 & 0 & 1 & 1 \\
$c'$  & 0 & 0 & 0 & 1 & 0 & 1 & 0 & 1 \\ \hline
\end{tabular}}}

\subfloat[\label{f-bitpar-2-c}Boolean circuit for 
  $2c' + s' \becomes s + (s \land \mu) + c$]{%
\vtop{\vskip0pt\hbox{\includegraphics{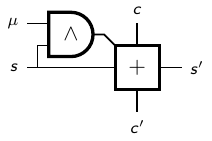}}}}
\qquad
\subfloat[\label{f-bitpar-2-tt}Corresponding truth table]{%
\scriptsize
\begin{tabular}[t]{|l|cccccccc|}
\hline
$s$   & 0 & 1 & 0 & 1 & 0 & 1 & 0 & 1 \\
$c$   & 0 & 0 & 1 & 1 & 0 & 0 & 1 & 1 \\
$\mu$ & 0 & 0 & 0 & 0 & 1 & 1 & 1 & 1 \\ \hline
$s'$  & 0 & 1 & 1 & \colorbox{gray!40}{\textbf{0}} & 0 & 0 & 1 & 1 \\
$c'$  & 0 & 0 & 0 & 1 & 0 & 1 & 0 & 1 \\ \hline
\end{tabular}}
\caption{\label{f-bitpar} 
Binary transposition network via a Boolean circuit}
\end{figure}

Both algorithms \cite{Crochemore+:01_IPL,Hyyro:04_AWOCA} 
can be viewed as an implementation of a binary transposition network, 
described in \secref{s-network},
by standard bit-parallel processor instructions.
\figref{f-bitpar} shows the main idea of such an implementation
for the algorithm of \cite{Crochemore+:01_IPL}.
Consider a cell in a binary transposition network,
and let us denote its input bits by $s$, $c$,
and its output bits by $s'$, $c'$, as shown in \sfigref{f-bitpar-1-c}.
Let us denote by $\mu$ an extra input bit, which takes value $1$ 
if and only if the current dag cell is a match cell.
The operation of a network cell is fully described 
by the truth table in \sfigref{f-bitpar-1-tt}.

Now consider a Boolean circuit shown in \sfigref{f-bitpar-2-c}.
The circuit consists of an $\land$-gate and a \emph{full adder} element,
which adds its three input bits arithmetically,
and returns the sum as two separate output bits.
Let us again denote the input bits by $s$, $c$, $\mu$,
and the output bits by $s'$, $c'$.
Then, the circuit computes a Boolean-arithmetic expression
\begin{equation*}
2c' + s' \becomes s + (s \land \mu) + c
\end{equation*}
The operation of such a circuit is fully described 
by the truth table in \sfigref{f-bitpar-2-tt}.

Note that the truth tables in \sfigrefs{f-bitpar-1-tt} and \ref{f-bitpar-2-tt}
differ in just the one highlighted bit.
This difference can be corrected by two extra Boolean operations,
resulting in a Boolean-arithmetic expression that fully implements
the operation of a single transposition network cell:
\begin{equation}
\label{eq-cipr-bit}
2c' + s' \becomes \bigpa{s + (s \land \mu) + c} \lor (s \land \lnot \mu)
\end{equation}

Now consider a row of $n$ cells in the alignment dag $\rG_{a,b}$, 
assuming for the moment $n \leq w$.
Let $S$ denote a word variable 
that will hold the input $s$, and then the output $s'$, for each cell, 
least significant bit first.
Likewise, let $M$ denote a word constant 
that holds the match parameter $\mu$ for each cell, 
least significant bit first.
(The input $c$ and output $c'$ will not be represented explicitly,
but will instead correspond to a propagating carry bit 
in word integer addition, 
from the least significant bit all the way to the most significant bit.)
The operation of the transposition network $\mathcal{N}(\rG_{a,b})$
in the given row of cells corresponds to evaluating an expression
\begin{equation}
\label{eq-cipr}
S \becomes \bigpa{S + (S \land M)} \lor (S \land \lnot M)
\end{equation}
which is obtained from \eqref{eq-cipr-bit} by identifying 
the output $c'$ of each cell with the input $c$ of the next cell in the row.
Here, the Boolean operations are bitwise,
and the addition is on integers represented by the words.
Note that for an exact correspondence with \eqref{eq-cipr},
the roles of 0-values and 1-values 
in the waterfall algorithm (\algref{alg-waterfall}) must be exchanged 
(or, alternatively, the cells must be composed into words
by columns, rather than by rows).

If $n > w$, then each row of the alignment dag
is partitioned into $\ceil{n/w}$ words of length $w$.
In this case, expression \eqref{eq-cipr} needs to be modified
to allow carry propagation from each word to the next word in its row.

The described bit-parallel approach can be extended 
to provide even more efficient string comparison
in the case of highly similar strings.
We consider the threshold version of the high-similarity LCS problem,
as described in \secref{s-parameterised}.
Let $\kappa$ denote a threshold on $\mathit{dist}_\mathit{LCS}(a,b)$,
i.e.\ the LCS distance between the input strings $a$, $b$.
We assume for simplicity that the value $\kappa$ is odd, and that $m=n$. 
Under these assumptions, a highest-scoring path in the alignment dag $\rG_{a,b}$
must lie strictly within a symmetric diagonal band of width $\kappa+1$, 
unless $\mathit{dist}_\mathit{LCS}(a,b)$ exceeds $\kappa$.
Hence, the waterfall algorithm (\algref{alg-waterfall}) 
can be modified as follows.
First, we ignore all character matches outside the band,
as well as on the band's lower-left and upper-right boundaries.
We also ignore all the (explicit) input 0-values
and all the (implicit) input 1-values outside the band;
therefore, we only have 
$\Half[k+1]$ (explicit) input 0-values at the top of the band,
and $\Half[k+1]$ (implicit) input 1-values at the left-hand side of the band.
Finally, we create artificial \emph{separator matches}
along the bottom-left boundary of the band.
The LCS score $\mathit{lcs}(a,b)$ can be obtained
by counting the output 0-values at the bottom of the band
(or, symmetrically, the output 1-values at the right-hand side of the band).
The algorithm reports ``exceed'', if all the input 0-values
are output at the bottom of the band.
The algorithm runs in time $O(m\kappa/w)$.

\begin{figure}[tb]
\centering
\includegraphics{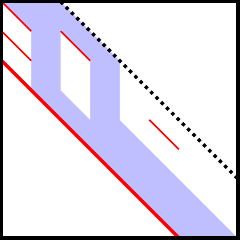}
\caption{\label{f-waterfall-hisim-bitpar}
  High-similarity bit-parallel waterfall algorithm}
\end{figure}

\figref{f-waterfall-hisim-bitpar} shows a run of the resulting 
high-similarity bit-parallel waterfall algorithm, for $\kappa=3$.
The visual conventions are similar to those 
in \figrefs{f-waterfall-losim}, \ref{f-waterfall-hisim}.
Note that the band is of width $\kappa+1 = 4$,
and that there are $\Half[\kappa+1] = 2$ input 0-values at the band's top.
Both these values end up as output 0-values at the band's bottom,
hence the algorithm returns ``exceed'' in the given run.

The described algorithm is particularly easy to implement 
when the bandwidth $m+1 \leq w-1$.
In such a case, every row of the band fits into a single word.
The bit-parallel five- (respectively, four-) instruction sequence
of either of \cite{Crochemore+:01_IPL,Hyyro:04_AWOCA} can be used;
the only modification required is an extra shift instruction in each row,
to account for the band right-shifting by $1$ when moving to the next row.

Still further optimisation is possible 
in the case of \emph{multi-string} comparison.
This type of comparison has been considered 
e.g.\ by by Hyyr\"o et al.\ \cite{Hyyro+:04_WEA}.
Here, we asked to compute the LCS score for string $a$ 
against each of the $r$ strings $b_0$, \ldots, $b_{r-1}$, all of length $n$.
We assume that we are given a single threshold $\kappa$ 
on all $\mathit{dist}_\mathit{LCS}(a,b_s)$, $0 \leq s < r$.
As before, we assume for simplicity 
that the value $\kappa$ is odd, and that $m=n$.
The problem can be solved by $r$ independent runs 
of the high-similarity bit-parallel waterfall algorithm.
However, it is possible to combine these runs 
into a single bit-parallel computation, where in each step, 
we evaluate a single row from every one of the $r$ bands.
The bands are packed together in a single super-band of width $r(\kappa+1)$;
individual bands within the super-band 
are separated by diagonals of separator matches.

\begin{figure}[tb]
\centering
\includegraphics{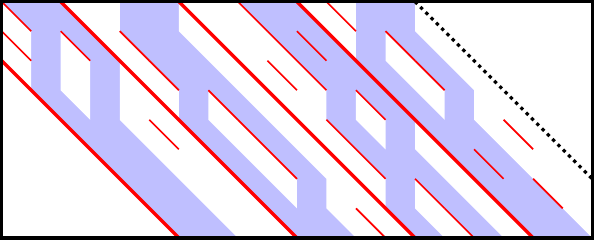}
\caption{\label{f-waterfall-hisim-bitpar-multi}
  High-similarity bit-parallel multi-string waterfall algorithm}
\end{figure}

\figref{f-waterfall-hisim-bitpar-multi} shows a run of the resulting 
high-similarity bit-parallel multi-string waterfall algorithm, 
for $\kappa=3$ and $r=4$.
The leftmost band (band $0$) is identical 
to the band in \figref{f-waterfall-hisim-bitpar}.
The algorithm returns ``exceed'' for bands $0$, $1$, $3$.
For band $2$, we have a single 0-value output at the bottom of the band,
hence $\mathit{lcs}(a,b_2) = n-1$ by \thref{th-network-bin}.

Finally, note that all the described techniques 
for bit-parallel string comparison
are compatible with integer-weighted alignment, 
by application of the blow-up technique described in \secref{s-weighted}.
For example, we can obtain bit-parallel algorithms
for Levenshtein alignment score (or, symmetrically, Levenshtein edit distance)
by blowing up the alignment dag by a factor of $2$ in each dimension.

\mychapter{Beyond semi-locality}
\label{c-beyond}

In this chapter, our aim is to extend the approach of the previous chapters
beyond semi-local string comparison,
with the ultimate goal of efficient fully-local string comparison.
We study several problems, where for each of the two input strings,
their various substrings are selected for pairwise comparison.
This type of string comparison is arguably the most important,
in particular for biological applications. 
However, it is also the most computationally expensive,
therefore there is particular need for efficient algorithms.

This chapter is organised as follows.
In \secref{s-window}, we introduce 
the window-substring and window-window LCS problems,
and give an efficient algorithm for these problems.
This algorithm provides a refinement for the standard dot plot method, 
by allowing window-window string comparison
based on the LCS score, rather than the less sensitive Hamming score.
In \secref{s-quasi}, we introduce the quasi-local LCS problem,
which generalises the semi-local, 
window-substring and window-window LCS problems,
and give an efficient algorithm for this problem.
By application of this algorithm, in \secref{s-spliced} 
we obtain an algorithm for sparse spliced alignment
under an arbitrary rational edit distance metric,
which improves on existing algorithms for this problem.

\mysection{Window-local LCS and alignment plots}
\label{s-window}

\index{algorithm!Smith--Waterman--Gotoh}%
Numerous methods of local string comparison 
have been proposed in the past.
The \emph{Smith--Waterman--Gotoh algorithm} \cite{Smith_Waterman:81,Gotoh:82}
allows one to obtain the highest-scoring pair
across all substring pairs in the two input strings.
It can also be generalised to report all substring pairs 
scoring above a certain threshold.
A significant drawback of the Smith--Waterman--Gotoh algorithm
is that it generally favours long, less precise substring alignments 
over short, more precise ones
(as noted e.g.\ by Arslan et al.\ \cite{Arslan+:01}).
The quality of the alignment is also dependent on the scoring scheme:
for example, for the simplest LCS score,
the algorithm only provides the trivial global comparison,
so the method is generally only useful
for weighted alignment scores with sufficiently high penalties
(negative score weights) for gaps.

The drawbacks of the Smith--Waterman--Gotoh approach 
have been addressed by the techniques 
of normalised alignment by Arslan et al.\ \cite{Arslan+:01}, 
and length-constrained alignment 
by Arslan and E{\u g}ecio{\u g}lu \cite{Arslan_Egecioglu:02}.
Such methods typically have very high computational cost. 
To reduce this cost, various approximation algorithms for the same problems
have been suggested (see \cite{Arslan_Egecioglu:02}).

An alternative approach,  
is to obtain local comparison of two strings by computing an \emph{alignment plot,}
i.e.\ an exhaustive set of alignments for pairs of fixed-length windows in both sequences.
A similar method is used by Rasmussen et al.\ \cite{Rasmussen+:06},
where an algorithm for local alignment in fixed-length windows in described.
The difference of our approach from that of \cite{Rasmussen+:06}
is that the latter is designed to be very efficient
when the similarity threshold is high,
but becomes prohibitively expensive when looking 
for relatively low-similarity matches (e.g.\ 70\% similarity).
In contrast, our method works equally efficiently 
for any similarity threshold.

\index{substring!window}%
\index{dot plot!Hamming-filtered}%
\index{Hamming score}%
Recall from \secref{s-cyclic} that, 
given a fixed parameter $w$, we call a substring of length $w$ 
a \emph{$w$-window} in the corresponding string.
String comparison in windows has a long history.
One of its early instances is \emph{dot plots}
(also known as \emph{diagonal plots} or \emph{dot matrices}), 
introduced by Gibbs and McIntyre \cite{Gibbs_McIntyre:70}
and by Maizel and Lenk \cite{Maizel_Lenk:81}
(see also Crochemore et al. \cite{Crochemore+:07}).
In addition to numerical scores, 
dot plots provide a convenient visualisation of string comparison.
In the context of dot plots, processing a pair of windows
is usually referred to as \emph{filtering}.
The standard filtering method
compares every $w$-window of string $a$ against every $w$-window of string $b$
in terms of their \emph{Hamming score},
i.e.\ the count of matching character pairs under a rigid alignment model,
where every character must be aligned 
(resulting in either a match or a mismatch),
and no gaps are allowed.
A Hamming-filtered dot plot can be computed in time $O(mn)$ 
by the algorithm of \cite{Maizel_Lenk:81,Mueller+:06}.
This algorithm has been implemented in several software packages
(see e.g.\ \cite{Sonnhammer_Durbin:95,Rice+:00,CLC-protein:06}).
A faster suffix-tree based algorithm has been proposed and implemented 
by Krumsiek et al.\ \cite{Krumsiek+:07}.
Enhancement of the dot plot approach have been proposed
by Huang and Zhang \cite{Huang_Zhang:04}
and by Putonti et al.\ \cite{Putonti+:06}.

In contrast with the Smith--Waterman--Gotoh algorithm,
the dot plot method gives the user more flexibility
to select the biologically significant substring alignments,
by providing all the local scores
between fixed-size windows of the input strings.
However, the Hamming scoring scheme used by this method within each window pair
is less sensitive than even the LCS score,
and especially than the weighted alignment score used by Smith--Waterman--Gotoh.
This tradeoff motivates us to combine the best features of the two approaches
in the following definition.


\begin{definition}
\label{def-window-local}
\index{problem!window-window LCS}%
\index{problem!window-substring LCS}%
Given strings $a$, $b$, and a window length $w$, the \emph{window-window}
(respectively, \emph{window-substring}) \emph{LCS problem} 
asks for the LCS score of for every $w$-window in $a$ 
against every $w$-window (respectively, every substring) in $b$.
\end{definition}%
The window-window LCS problem can be seen as a refinement of the dot plot method
and a complement to the Smith--Waterman--Gotoh method.
As in the dot plot method,
we compute all window-window comparison scores between the input strings.
However, instead of the Hamming score, our method is based on the LCS score,
and is therefore potentially more sensitive.
The method can be further extended to use weighted alignment scores.
\index{dot plot!alignment-filtered}%
By analogy with Hamming-filtered dot plots, we call the resulting matrix
of window-window alignment scores an \emph{alignment-filtered dot plot},
or simply an \emph{alignment plot}.
Recently, the alignment plot method has been applied
to the detection of alignment-conserved regions in DNA 
by Picot et al.\ \cite{Picot+:10}.
Computation of an alignment plot is also a key subroutine
in the approximate repeat searching method 
by Federico et al.\ \cite{Federico+:11}.

A solution to the window-substring LCS problem 
can be represented in space $O(mn \log n)$ 
by the data structure of \thref{th-query},
built on the string-substring seaweed matrix
for each window of $a$ against $b$.
An individual window-substring LCS score query 
can be performed on this data structure by \thref{th-query} 
in time $O(\log^2 n)$.
The same data structure can be used to obtain explicitly 
a solution to the window-window LCS problem in time $O(mn)$,
by performing a diagonal batch query directly 
on each of the string-substring seaweed matrices.
Thus, string-substring seaweed matrices provide a unified solution 
for both the window-substring and the window-window LCS problems.

A straightforward algorithm for the window-substring 
(and therefore also window-window) LCS problem
can be obtained by solving the semi-local LCS problem
independently to each window of string $a$ against whole string $b$.
Using \algref{alg-seaweed} (seaweed combing) as a subroutine,
the resulting algorithm runs in time $O(mnw)$.
If window length $w$ is sufficiently large,
the running time can be improved slightly 
by the micro-block speedup (\algref{alg-seaweed-micro}).

We now give an algorithm for the window-substring (and window-window) LCS problem
that improves substantially on the above approach,
and matches the asymptotic running time
of both the Hamming-scored dot plot and the Smith--Waterman--Gotoh methods.

\begin{figure}[tb]
\centering
\subfloat[\label{f-window-1}%
Canonical substrings and windows]{%
\beginpgfgraphicnamed{f-window-1}%
\begin{tikzpicture}[x=0.25cm, y=-0.25cm]
\tikzstyle{every to}=[to path = {-- +(-0.5,0) |- (\tikztotarget)}]
\path (0,0) coordinate(p);
\foreach \k in {0,1,...,15}{%
  \draw (p) to ++(0,1) coordinate(p);}
\path (1,0) coordinate(p);
\foreach \k in {0,2,...,15}{%
  \draw (p) to ++(0,2) coordinate(p);}
\path (2,0) coordinate(p);
\foreach \k in {0,4,...,15}{%
  \draw (p) to ++(0,4) coordinate(p);}
\path (3,0) coordinate(p);
\foreach \k in {0,8,...,15}{%
  \draw (p) to ++(0,8) coordinate(p);}
\path (4,0) coordinate(p);
\draw (p) to ++(0,16);
\path (6,0) coordinate(p);
\foreach \k in {0,4,...,9}{%
  \draw[blue] (p) to ++(0,4) to ++(0,2) to ++(0,1) 
  (p) +(4,4) coordinate(p);}
\path (6,0) +(2,2) coordinate(p);
\foreach \k in {2,6,...,9}{%
  \draw[blue] (p) to ++(0,2) to ++(0,4) to ++(0,1) 
  (p) +(4,4) coordinate(p);}
\path (6,0) +(1,1) coordinate(p);
\foreach \k in {1,5,...,9}{%
  \draw[blue] (p) to ++(0,1) to ++(0,2) to ++(0,4) 
  (p) +(4,4) coordinate(p);}
\path (6,0) +(3,3) coordinate(p);
\foreach \k in {3,7,...,9}{%
\ifthenelse{\equal{\k}{3}}{
  \draw[red,ultra thick] (p) to ++(0,1) to ++(0,4) to ++(0,2)}{
  \draw[blue] (p) to ++(0,1) to ++(0,4) to ++(0,2)}
  (p) +(4,4) coordinate(p);}
\path (16,0) coordinate(p);
\fill[red!25] (p) ++(0,3) rectangle +(10,7);
\draw[dotted] (p) ++(0,3) -- +(10,0) ++(0,1) -- +(10,0) 
  ++(0,4) -- +(10,0) ++(0,2) -- +(10,0);
\draw[thick] (p) +(10,0) -- +(0,0) -- +(0,16) -- +(10,16);
\draw[thick,decorate,decoration=random steps] (p) +(10,0) -- +(10,16);
\end{tikzpicture}
\endpgfgraphicnamed}

\subfloat[\label{f-window-2}%
Window decomposition forest]{%
\beginpgfgraphicnamed{f-window-2}%
\begin{tikzpicture}[x=0.3cm, y=-0.3cm]
\tikzstyle{every node}=[inner sep=0pt,outer sep=0pt,minimum size=0pt]
\draw[very thick,dotted] (0,0) rectangle (16,16);
\draw (0,0) -- (16,16);
\foreach \i/\j in {0/0}
  \draw[dotted] (\i,\j) ++(8,0) -- ++(0,16) (\i,\j) ++(0,8) -- ++(16,0);
\foreach \i/\j in {0/0,8/0,8/8}
  \draw[dotted] (\i,\j) ++(4,0) -- ++(0,8) (\i,\j) ++(0,4) -- ++(8,0);
\foreach \i/\j in {4/0,8/0,8/4,12/4,12/8}
  \draw[dotted] (\i,\j) ++(2,0) -- ++(0,4) (\i,\j) ++(0,2) -- ++(4,0);
\foreach \i/\j in {6/0,8/2,10/4,12/6,14/8}
  \draw[dotted] (\i,\j) ++(1,0) -- ++(0,2) (\i,\j) ++(0,1) -- ++(2,0);
\foreach \j/\i in {7/0,15/8}
  \draw[blue,thick] (\j,\i) node{\bulletd} -- ++(-1,0) node{\bulletc} -- 
    ++(-2,0) node{\bulletc} -- ++(-4,0) node{\bullets};
\foreach \j/\i in {11/4}
  \draw[blue,thick] (\j,\i) node{\bulletd} -- ++(-1,0) node{\bulletc} -- 
    ++(-2,0) node{\bulletc} -- ++(0,4) node{\bullets};
\foreach \j/\i in {9/2,13/6}
  \draw[blue,thick] (\j,\i) node{\bulletd} -- ++(-1,0) node{\bulletc} -- 
    ++(0,2) node{\bulletc};
\foreach \j/\i in {8/1,10/3,12/5,14/7}
  \draw[blue,thick] (\j,\i) node{\bulletd} -- ++(0,1) node{\bulletc};
\foreach \j/\i in {16/9}
  \draw[blue,thick] (\j,\i) node{\bulletd} --
     ++(0,1) node{\bulletc} -- ++(0,2) node{\bulletc} -- ++(0,4) node{\bullets};
\draw[red,ultra thick] (10,3) node{\bulletd} -- ++(0,1) node{\bulletc} -- 
  ++(-2,0) node{\bulletc} -- ++(0,4) node{\bullets};
\end{tikzpicture}
\endpgfgraphicnamed}
\caption{\label{f-window} An execution of \algref{alg-window-lcs}
(window-substring LCS)}
\end{figure}
\begin{algorithm}[Window-substring LCS]
\label{alg-window-lcs}
\setlabelitbf
\nobreakitem[Input:]
strings $a$, $b$ of length $m$, $n$, respectively;
window length $w$.
\item[Output:]
nonzeros of the string-substring seaweed matrix $\rP^\ssub_{a\ang{i:j},b}$
for every $w$-window $a\ang{i:j}$, $j-i=w$, against full string $b$.
\item[Description.]
\index{substring!canonical}%
We call a substring $a\ang{i:j}$ \emph{canonical}, if $j-i=s$, 
where $s$ is an arbitrary power of $2$, $1 \leq s \leq m$,
and $i$, $j$ are both multiples of $s$.
In particular, every individual character of $a$ 
is a canonical substring with $s=1$.
Every substring of $a$ of length $t$ can be decomposed
into a concatenation of at most $\log t$ canonical substrings.

In the following, by processing a substring $a\ang{i:j}$,
we mean computing the string-substring seaweed matrix $\rP^\ssub_{a\ang{i:j},b}$.

\setlabelit
\item[First phase.]
We process all canonical substrings of $a$ 
by the following recursive procedure.

\setlabelnormal
\item[Recursion base: $s=1$.]
For every canonical substring $a\ang{i:j}$ of length $j-i=1$,
matrix $\rP^\ssub_{a\ang{i:j},b}$ 
is computed by \algref{alg-seaweed} (seaweed combing)
on strings $a\ang{i:j}$ and $b$, of length $1$ and $n$ respectively.

\item[Recursive step: $s>1$.]
For every canonical substring $a\ang{i:j}$ of length $j-i=s$,
we have $a\ang{i:j} = a\bigang{i:\Half[i+j]}\: a\bigang{\Half[i+j]:j}$,
where $a\bigang{i:\Half[i+j]}$, $a\bigang{\Half[i+j]:j}$
are both canonical substrings of length $s/2$.
By \thref{th-comp-mmult}, we have
\begin{gather*}
\rP^\ssub_{a\ang{i:j},b} = 
\rP^\ssub_{a\bigang{i:\Half[i+j]},b} \boxdot \rP^\ssub_{a\bigang{\Half[i+j]:j},b}
\end{gather*}
The implicit matrix distance product can be computed by \thref{th-comp-mmult-comp}.
\item[(End of recursive step)]

\setlabelit
\item[Second phase.]
Let $s$ be an arbitrary power of $2$, $1 \leq s \leq m$.
We define \emph{$(s,t)$-window} to be a substring $a\ang{i:j}$, 
such that $j-i=st$, and $i$ and $j$ are both multiples of $s$.
Intuitively, $s$ is the amount of shift between two successive $(s,t)$-windows,
and $t$ is the number of different $(s,t)$-windows 
that cover any single character in the string $a$ 
(except possibly near its boundaries).
Note that a $(1,t)$-window is the same as a $t$-window,
and an $(s,1)$-window is the same as a canonical substring of length $s$.
Given parameters $s$, $t$, we solve the problem 
of processing all $(s,t)$-windows by the following recursive procedure.

\setlabelnormal
\item[Recursion base: $t=1$.]
For any $s$, all $(s,1)$-windows are canonical substrings of length $s$,
therefore they have already been processed in the first phase.

\item[Recursive step: $t>1$.] \ %
\begin{trivlist}
\item{\itshape Case of $t$ even.}
First, we call the second phase recursively
to process all $(s,t-1)$-windows.
Now consider an $(s,t)$-window $a\ang{i:j}$.
We have the decomposition
\begin{gather}
\label{eq-window-decomp}
a\ang{i:j} = a\ang{i:j-s}\: a\ang{j-s:j} = a\ang{i:i+s}\: a\ang{i+s:j}
\end{gather}
Observe that the prefix $a\ang{i:j-s}$ and the suffix $a\ang{i+s:j}$
are both $(s,t-1)$-windows;
also, the suffix $a\ang{j-s:j}$ and the prefix $a\ang{i:i+s}$
are both $(s,1)$-windows.
Therefore, we can now process all $(s,t)$-windows $a\ang{i:j}$ as follows:
\begin{gather}
\label{eq-window-even}
\rP^\ssub_{a\ang{i:j},b} = 
\begin{cases}
\rP^\ssub_{a\ang{i:j-s},b} \boxdot \rP^\ssub_{a\ang{j-s:j},b}\\
\rP^\ssub_{a\ang{i:i+s},b} \boxdot \rP^\ssub_{a\ang{i+s:j},b}
\end{cases}
\end{gather}
where the choice between the two alternatives can be made arbitrarily.

\item{\itshape Case of $t$ odd.}
We first call the second phase recursively
to process all $(2s,\Half[t-1])$-windows
(in other words, ``every other $(s,t-1)$-window'').
Now consider an $(s,t)$-window $a\ang{i:j}$.
As before, we have the decomposition \eqref{eq-window-decomp},
where the prefix $a\ang{i:j-s}$ and the suffix $a\ang{i+s:j}$
are both $(s,t-1)$-windows.
Furthermore, if $\frac{i}{s}$ is even and and $\frac{j}{s}$ is odd,
then the prefix $a\ang{i:j-s}$ is a $(2s,\Half[t-1])$-window.
Likewise, if $\frac{i}{s}$ is odd and and $\frac{j}{s}$ is even,
then the suffix $a\ang{i+s:j}$ is a $(2s,\Half[t-1])$-window.
The suffix $a\ang{j-s:j}$ and the prefix $a\ang{i:i+s}$
are both $(s,1)$-windows.
Therefore, we can now process all $(s,t)$-windows $a\ang{i:j}$ as follows:
\begin{gather}
\label{eq-window-odd}
\rP^\ssub_{a\ang{i:j},b} = 
\begin{cases}
\rP^\ssub_{a\ang{i:j-s},b} \boxdot \rP^\ssub_{a\ang{j-s:j},b} & 
\text{if $\frac{i}{s}$ even and $\frac{j}{s}$ odd} \\
\rP^\ssub_{a\ang{i:i+s},b} \boxdot \rP^\ssub_{a\ang{i+s:j},b} &
\text{if $\frac{i}{s}$ odd and $\frac{j}{s}$ even}
\end{cases}
\end{gather}
Note that the equations in \eqref{eq-window-odd} 
are identical to the ones in \eqref{eq-window-even},
but the choice between the alternatives is now determined 
by the parity of the values $\frac{i}{s}$, $\frac{j}{s}$.
The two alternatives in \eqref{eq-window-odd} are exhaustive:
$\frac{j}{s} - \frac{i}{s} = \frac{j-i}{s} = \frac{t}{s}$ must be odd,
since $t$ is odd and $s$ is a power of $2$.
\end{trivlist}

The implicit matrix distance products 
in \eqref{eq-window-even}, \eqref{eq-window-odd}
are computed by \thref{th-comp-mmult-comp}.
In each case, the product is between two string-substring seaweed matrices:
one for substring $a\ang{i:j-s}$ or $a\ang{i+s:j}$,
already processed by the recursive call,
and the other for a canonical substring $a\ang{j-s:j}$ or $a\ang{i:i+s}$.
\item[(End of recursive step)]

\setlabelitbf
\item[Cost analysis.]

\setlabelit
\nobreakitem[First phase.]
Starting at the top recursion level,
the number of seaweed matrix multiplications 
doubles in every subsequent level down the recursion.
The running time for each matrix multiplication is $O(n \log s)$,
where the parameter $s$ halves in every subsequent level of recursion.
Therefore, the running time of the whole phase 
is dominated by the bottom level of the recursion,
where we have $O(m)$ matrix multiplications, each running in time $O(n)$.
The running time of the whole phase is $O(m) \cdot O(n) = O(mn)$.

\item[Second phase.]
Starting at the top recursion level,
the number of seaweed matrix multiplications 
halves in every recursion level where the parameter $t$ is odd,
which is at least a half of all the recursion levels.
In a recursion level where $t$ is even, 
the number of matrix multiplications does not change.
The running time for each matrix multiplication is $O(n \log s)$,
where the parameter $s$ doubles 
in every recursion level where the parameter $t$ is odd,
and does not change in every recursion level where the parameter $t$ is even.
Therefore, the running time of the whole phase 
is dominated by the top level of the recursion,
where we have $O(m)$ matrix multiplications, each running in time $O(n)$.
The running time of the whole phase is $O(m) \cdot O(n) = O(mn)$.

\item[Total.]
The running time for both the first and the second phase,
and therefore the overall running time, is $O(mn)$.
\end{algorithm}
%
%
Note that the asymptotic running time of \algref{alg-window-lcs}
is independent of the window length $w$.

\begin{example}
\figref{f-window} shows an execution of \algref{alg-window-lcs}
on string $a$ of length $16$ with window size $7$,
against string $b$ of arbitrary length.
\sfigref{f-window-1} shows the canonical substrings of $a$
of lengths $1$, $2$, $4$, $8$, $16$ in black,
and windows of length $7$ in blue.
For each window, the figure shows its decomposition into canonical substrings.
For one of the windows, highlighed in thick red,
the corresponding area is outlined in the alignment dag.
\sfigref{f-window-2} represents substrings of $a$ by points in the plane,
and the decompositions into canonical substrings by a forest of trees.
Each window $a\ang{i:j}$ corresponds to a leaf of a decomposition tree.
The canonical substrings in a decomposition of the window
correspond to the edges on the path from this leaf to the root of the tree; 
for the window shown in red, its corresponding path is also highlighted in red. 
The leaves, internal nodes and roots of decomposition trees
are shown respectively by diamonds, circles and squares.
\end{example}

\mysection{Quasi-local LCS}
\label{s-quasi}

\index{substring!prescribed}%
We now generalise window-local string comparison as follows.
Suppose that we are given an arbitrary set of \emph{prescribed} substrings 
of various lengths in string $a$.
We assume that all the prescribed substrings are non-empty,
and denote their number by $k$, $m \leq k \leq \binom{m}{2}$.

\begin{definition}
\label{def-quasi-local}
\index{problem!quasi-local LCS}%
Given strings $a$, $b$, the \emph{quasi-local LCS problem} 
asks for the LCS score of every prescribed substring in $a$ 
against every substring in $b$.
\end{definition}
The quasi-local LCS problem includes as special cases
the semi-local, window-window, window-substring and fully-local LCS problems,
as well as length-constrained local alignment
considered by Arslan and E{\u g}ecio{\u g}lu \cite{Arslan_Egecioglu:02}.
The solution of the quasi-local LCS problem
can be represented in space $O(kn \log n)$
by the data structure of \thref{th-query},
built on the string-substring seaweed matrix 
for each prescribed substring of $a$ against $b$.
An individual quasi-local LCS score query 
can be performed on this data structure in time $O(\log^2 n)$.

A straightforward algorithm for the quasi-local LCS problem
can be obtained by applying \algref{alg-seaweed} (seaweed combing)
independently to each prescribed substring $a$ against whole string $b$.
The resulting algorithm runs in time $O(kmn)$.
If all the prescribed substrings are sufficiently long,
then the running time can be improved slightly
by the micro-block speedup (\algref{alg-seaweed-micro}).

We now give an algorithm for the quasi-local LCS problem
that improves substantially on the above approach.
\begin{figure}[tb]
\centering
\subfloat[\label{f-quasi-1}%
Canonical and prescribed substrings]{%
\beginpgfgraphicnamed{f-quasi-1}%
\begin{tikzpicture}[x=0.25cm, y=-0.25cm]
\tikzstyle{every to}=[to path = {-- +(-0.5,0) |- (\tikztotarget)}]
\path (0,0) coordinate(p);
\foreach \k in {0,1,...,15}{%
  \draw (p) to ++(0,1) coordinate(p);}
\path (1,0) coordinate(p);
\foreach \k in {0,2,...,15}{%
  \draw (p) to ++(0,2) coordinate(p);}
\path (2,0) coordinate(p);
\foreach \k in {0,4,...,15}{%
  \draw (p) to ++(0,4) coordinate(p);}
\path (3,0) coordinate(p);
\foreach \k in {0,8,...,15}{%
  \draw (p) to ++(0,8) coordinate(p);}
\path (4,0) coordinate(p);
\draw (p) to ++(0,16);
\path (6,0) coordinate(p);
\draw[blue] (p) ++(0,4) to ++(0,2);
\path (7,0) coordinate(p);
\draw[blue] (p) ++(0,10) to ++(0,2) to ++(0,1);
\path (8,0) coordinate(p);
\draw[blue] (p) ++(0,7) to ++(0,1) to ++(0,4);
\path (9,0) coordinate(p);
\draw[blue] (p) ++(0,7) to ++(0,1) to ++(0,4) to ++(0,1);
\path (10,0) coordinate(p);
\draw[blue] (p) ++(0,0) to ++(0,4) to ++(0,2) to ++(0,1);
\draw[blue] (p) ++(1,1) to ++(0,1) to ++(0,2) to ++(0,4);
\draw[blue] (p) ++(2,2) to ++(0,2) to ++(0,4) to ++(0,1);
\draw[blue] (p) ++(3,6) to ++(0,2) to ++(0,4) to ++(0,1);
\draw[blue] (p) ++(4,7) to ++(0,1) to ++(0,4) to ++(0,2);
\draw[blue] (p) ++(5,9) to ++(0,1) to ++(0,2) to ++(0,4);
\path (16,0) coordinate(p);
\draw[blue] (p) ++(0,2) to ++(0,2) to ++(0,4) to ++(0,2);
\draw[red,ultra thick] (p) ++(1,3) to ++(0,1) to ++(0,4) to ++(0,2) to ++(0,1);
\draw[blue] (p) ++(2,8) to ++(0,8);
\path (19,0) coordinate(p);
\draw[blue] (p) ++(0,2) to ++(0,2) to ++(0,4) to ++(0,4);
\path (20,0) coordinate(p);
\draw[blue] (p) ++(0,2) to ++(0,2) to ++(0,4) to ++(0,8);
\path (21,0) coordinate(p);
\draw[blue] (p) ++(0,0) to ++(0,8) to ++(0,4) to ++(0,2) to ++(0,1);
\path (22,0) coordinate(p);
\fill[red!25] (p) ++(0,3) rectangle +(10,8);
\draw[dotted] (p) ++(0,3) -- +(10,0) ++(0,1) -- +(10,0) 
  ++(0,4) -- +(10,0) ++(0,2) -- +(10,0) ++(0,1) -- +(10,0);
\draw[thick] (p) +(10,0) -- +(0,0) -- +(0,16) -- +(10,16);
\draw[thick,decorate,decoration=random steps] (p) +(10,0) -- +(10,16);
\end{tikzpicture}
\endpgfgraphicnamed}

\subfloat[\label{f-quasi-2}%
Prescribed substring decomposition forest]{%
\hspace{3cm}
\beginpgfgraphicnamed{f-quasi-2}%
\begin{tikzpicture}[x=0.3cm, y=-0.3cm]
\tikzstyle{every node}=[inner sep=0pt,outer sep=0pt,minimum size=0pt]
\draw[very thick,dotted] (0,0) rectangle (16,16);
\draw (0,0) -- (16,16);
\foreach \j/\i in {0/0}
  \draw[dotted] (\j,\i) ++(8,0) -- ++(0,16) (\j,\i) ++(0,8) -- ++(16,0);
\foreach \j/\i in {0/0,8/0,8/8}
  \draw[dotted] (\j,\i) ++(4,0) -- ++(0,8) (\j,\i) ++(0,4) -- ++(8,0);
\foreach \j/\i in {4/0,8/0,8/4,8/8,4/4,12/0,12/4,12/8,12/12}
  \draw[dotted] (\j,\i) ++(2,0) -- ++(0,4) (\j,\i) ++(0,2) -- ++(4,0);
\foreach \j/\i in {6/0,8/2,10/2,12/6,12/10,14/0,14/8}
  \draw[dotted] (\j,\i) ++(1,0) -- ++(0,2) (\j,\i) ++(0,1) -- ++(2,0);
\draw[blue,thick]
  (7,0) node{\bulletd} -- ++(-1,0) node{\bulletc} -- 
    ++(-2,0) node{\bulletc} -- ++(-4,0) node{\bullets}
  (15,0) node{\bulletd} -- ++(-1,0) node{\bulletc} -- 
    ++(-2,0) node{\bulletc} -- ++(-4,0) node{\bulletc} to[bend left] 
    ++(-8,0) node{\bullets}
  (8,1) node{\bulletd} -- ++(0,1) node{\bulletc} -- 
    ++(0,2) node{\bulletc} -- ++(0,4) node{\bullets}
  (9,2) node{\bulletd} -- ++(-1,0) node{\bulletc}
  (10,2) node{\bulletd} to[bend right] ++(-2,0) node{\bulletc}
  (12,2) node{\bulletd} to[bend right] ++(-4,0) node{\bulletc}
  (16,2) node{\bulletd} to[bend right] ++(-8,0) node{\bulletc}
  (11,3) node{\bulletd} -- ++(0,1) node{\bulletc} -- 
    ++(-1,0) node{\bulletc} -- ++(-2,0) node{\bulletc} -- 
    ++(0,4) node{\bullets}
  (6,4) node{\bulletd} -- ++(-2,0) node{\bullets}
  (13,6) node{\bulletd} -- ++(-1,0) node{\bulletc} to[bend right] ++(0,2)
    node{\bulletc} -- ++(-4,0) node{\bullets}
  (12,7) node{\bulletd} -- ++(0,1) node{\bulletc}
  (13,7) node{\bulletd} -- ++(0,1) node{\bulletc} -- ++(-1,0) node{\bulletc} 
  (14,7) node{\bulletd} -- ++(0,1) node{\bulletc} to[bend left] 
    ++(-2,0) node{\bulletc} 
  (16,8) node{\bulletd} to[bend right] ++(0,8) node{\bullets}
  (16,9) node{\bulletd} -- ++(0,1) node{\bulletc} -- 
    ++(0,2) node{\bulletc} -- ++(0,4) node{\bullets}
  (13,10) node{\bulletd} -- ++(-1,0) node{\bulletc} -- ++(0,2) node{\bullets};
\draw[red,ultra thick]
  (11,3) node{\bulletd} -- ++(0,1) node{\bulletc} -- 
    ++(-1,0) node{\bulletc} -- ++(-2,0) node{\bulletc} -- 
    ++(0,4) node{\bullets};
\end{tikzpicture}
\endpgfgraphicnamed
\hspace{3cm}}
\caption{\label{f-quasi} An execution of \algref{alg-quasi-lcs}
(quasi-local LCS)}
\end{figure}
\begin{algorithm}[Quasi-local LCS]
\label{alg-quasi-lcs}
\setlabelitbf
\nobreakitem[Input:]
strings $a$, $b$ of length $m$, $n$, respectively;
a set of $k$ endpoint index pairs for the prescribed substrings in $a$.
\item[Output:]
nonzeros of the string-substring seaweed matrix
for every prescribed substring of string $a$ against full string $b$.
\item[Description.]
The algorithm structure is similar to the one of \algref{alg-window-lcs}.

\setlabelit
\item[First phase.]
As in \algref{alg-window-lcs}.

\setlabelit
\item[Second phase.]
First, we remove from consideration all canonical prescribed substrings,
since they have already been processed in the first phase.
We then process all the remaining prescribed substrings of $a$ 
by the following recursive procedure.
Let $s$ be a parameter, assumed to be a power of $2$. 
Initially, we set $s=1$.
At every level of recursion, the endpoint indices
of the prescribed substrings are multiples of $s$.

\setlabelnormal
\item[Recursion base: the set of prescribed substrings is empty.]
In this case, the problem is trivial.

\item[Recursive step: the set of prescribed substrings is nonempty.]
We call the second phase procedure recursively with the parameter $2s$, 
and the following set of prescribed substrings.
For each currently prescribed non-canonical substring $a\ang{i,j}$,
the corresponding new prescribed substring in the recursive call
is $a\bigang{2s\bigceil{\frac{i}{2s}} : 2s\bigfloor{\frac{j}{2s}}}$,
unless this substring is empty.
Informally, we round $i$ and $j$ to a multiple of $2s$;
index $i$ is rounded up and index $j$ down.
Note that different currently prescribed substrings
may correspond to the same new prescribed substring in the recursive call.

The recursive call results in the processing 
of all the prescribed substrings $a\ang{i,j}$ 
where $i$, $j$ are multiples of $2s$;
in other words, where $\frac{i}{s}$ and $\frac{j}{s}$ are both even.
We then process all remaining prescribed substrings $a\ang{i:j}$ as follows:
\begin{gather*}
\label{eq-quasi}
\rP^\ssub_{a\ang{i:j},b} = 
\begin{cases}
\rP^\ssub_{a\ang{i:j-s},b} \boxdot \rP^\ssub_{a\ang{j-s:j},b}&
\text{$\frac{i}{s}$ even, $\frac{j}{s}$ odd}\\
\rP^\ssub_{a\ang{i:i+s},b} \boxdot \rP^\ssub_{a\ang{i+s:j},b}&
\text{$\frac{i}{s}$ odd, $\frac{j}{s}$ even}\\
\rP^\ssub_{a\ang{i:i+s},b} \boxdot \rP^\ssub_{a\ang{i+s:j-s},b} \boxdot 
\rP^\ssub_{a\ang{j-s:j},b}&
\text{$\frac{i}{s}$ odd, $\frac{j}{s}$ odd}
\end{cases}
\end{gather*}
The seaweed matrix products are computed by \thref{th-comp-mmult-comp}.
In each case, the product 
is between two or three string-substring seaweed matrices:
one matrix for substring $a\ang{i:j-s}$, $a\ang{i+s:j}$ or $a\ang{i+s:j-s}$,
already processed by the recursive call;
and the other one or two matrices for canonical substrings 
$a\ang{j-s:j}$ and/or $a\ang{i:i+s}$.
\item[(End of recursive step)]
\setlabelitbf
\nobreakitem[Cost analysis.]
\setlabelit
\item[First phase.]
As in \algref{alg-window-lcs},
the total running time of this phase is $O(mn)$.

\item[Second phase.]
In every level of the recursion, 
the number of matrix multiplications is at most $O(k)$.
The running time for each matrix multiplication is at most $O(n \log m)$.
The recursion has $\log m$ levels.
Therefore, the running time is dominated 
by the top level of the recursion,
where we have $O(m)$ matrix multiplications, each running in time $O(n)$.
Therefore, the running time of the whole phase 
is $O(k \log m \cdot n \log m) = O(kn \log^2 m)$.

For values of $k$ close to the fully-local case
$k = \binom{m}{2} = O(m^2)$,
a sharper analysis is possible.
In this case, the running time of the whole phase is $O(m^2n)$.

\item[Total.]
The overall running time is dominated by the second phase,
and is therefore $O(kn \log^2 m)$.
For values of $k$ close to $\binom{m}{2}$, the running time is $O(m^2n)$.
\end{algorithm}

Note that in the fully-local case, i.e.\ the case 
where all $\binom{m}{2}$ nonempty substrings of $a$ are prescribed,
the same asymptotic time can be obtained 
by solving the prefix-substring LCS problem
independently for each of the $m$ nonempty suffixes of string $a$.
Each of these prefix-substring LCS instances 
can be solved by a separate run of \algref{alg-seaweed} (seaweed combing).

\begin{example}
\figref{f-quasi} shows an execution of \algref{alg-window-lcs}
on string $a$ of length $16$, 
with $16$ prescribed substrings of various sizes,
against string $b$ of arbitrary length.
Conventions are the same as in \figref{f-window}.
\end{example}

\mysection{Sparse spliced alignment}
\label{s-spliced}

Assembling a gene from candidate exons 
is an important problem in computational biology.
Several alternative approaches to this problem 
have been developed over time.
\index{spliced alignment}%
One of such approaches is \emph{spliced alignment}
by Gelfand et al.\ \cite{Gelfand+:96} (see also \cite{Gusfield:97}),
which scores different candidate exon chains within a DNA sequence
by comparing them to a known related gene sequence.
In this method, the two sequences are modelled 
respectively by strings $a$, $b$ of lengths $m$, $n$ respectively.
A subset of substrings in string $a$ are marked as candidate exons.
The comparison between sequences is made by string alignment.
The algorithm for spliced alignment given in \cite{Gelfand+:96}
runs in time $O(m^2 n)$.

In general, the number of candidate exons $k$ 
may be as high as $\binom{m}{2} = O(m^2)$.
\index{spliced alignment!sparse}%
The method of \emph{sparse spliced alignment}
makes a realistic assumption that, prior to the assembly, 
the set of candidate exons undergoes some filtering,
after which only a small fraction of candidate exons remains.
Kent et al.\ \cite{Kent+:06} give an algorithm for sparse spliced alignment
that, in the special case $k=O(m)$, runs in time $O(m^{1.5}n)$.
By a direct application of the quasi-local LCS problem (\secref{s-quasi}),
the running time can be reduced to $O(mn \log^2 m)$.
Sakai \cite{Sakai:11} gave an improved algorithm,
running in time $O(mn \log n)$.

For higher values of $k$, all the described algorithms provide 
a smooth transition in running time to the dense case $k=\binom{m}{2}$.
In this case, the algorithms' running time $O(m^2 n)$ 
is asymptotically equal to the algorithm of \cite{Gelfand+:96}.

We now describe an algorithm for sparse spliced alignment,
based on the approach of \cite{Sakai:11}.
We keep the notation and terminology of the previous sections;
in particular, candidate exons are represented 
by prescribed substrings of string $a$.
We say that substring $a\ang{i':j'}$ 
\emph{precedes} substring $a\ang{i'':j''}$, if $j' \leq i''$.
A \emph{precedence chain} of substrings is a chain
in the partial order of substring precedence.
We identify every precedence chain with the string obtained
by concatenating all its component substrings in the order of precedence.

The algorithm uses a generalisation of the standard network alignment method,
equivalent to the one used by \cite{Kent+:06}.
For simplicity, we describe the algorithm under LCS score;
using the blow-up technique of \secref{s-weighted}, 
the algorithm can be generalised 
to an arbitrary alignment score with rational weights.

\begin{algorithm}[Sparse spliced alignment]
\label{alg-spliced}
\setlabelitbf
\nobreakitem[Input:]
strings $a$, $b$ of length $m$, $n$, respectively;
a set of $k$ endpoint index pairs for the prescribed substrings in $a$.
\item[Output:]
the precedence chain of prescribed substrings in $a$,
giving the highest LCS score against string $b$.
\item[Description.]
The algorithm runs in two phases.
\setlabelit
\item[First phase.]
As in \algref{alg-window-lcs}.

\item[Second phase.]
The problem is now solved by dynamic programming as follows.
Let $u_j(s)$ denote the highest LCS score 
across all precedence chains of prescribed substrings 
in prefix string $a \ltake j$, taken against prefix string $b \ltake s$.
With each $j \in \bra{0:n}$, we associate the integer vector $u_j = u_j(*)$.
We initialise $u_0$ as the zero vector.
We then compute the vectors $u_j$, $j \in \bra{1:n}$, 
in the order of increasing $j$.
Let $a_0=a\ang{i_0:j}$, $a_1=a\ang{i_1:j}$, \ldots, $a_t=\ang{i_t:j}$
be all the prescribed substrings of $a$ terminating at index $j$.
We have
\begin{gather}
\label{eq-spliced}
u_j = u_{j-1} \oplus 
\bigpa{u_{i_0} \odot \rH^\ssub_{a_0,b}} \oplus
\bigpa{u_{i_1} \odot \rH^\ssub_{a_1,b}} \oplus \ldots \oplus
\bigpa{u_{i_t} \odot \rH^\ssub_{a_t,b}}
\end{gather}
for all $j \in \bra{1:n}$.

The matrices $\rH^\ssub_{a_0,b}$, $\rH^\ssub_{a_1,b}$, \ldots, $\rH^\ssub_{a_t,b}$
do not need to be computed explicitly.
Instead, it is straightforward to compute 
each of the vector-matrix distance products in \eqref{eq-spliced}
by up to $\log m$ instances of implicit vector-matrix distance product,
using the decomposition of each of $a_0$, $a_1$, \ldots, $a_t$
into up to $\log m$ canonical substrings, 
along with the corresponding seaweed matrices obtained in the first phase.

The solution score is now given by the value $u_m(n)$.
The solution precedence chain of prescribed substrings 
can be obtained by tracing the dynamic programming sequence backwards
from vector $u_m$ to vector $u_0$. 
\setlabelitbf
\item[Cost analysis.]
\setlabelit
\item[First phase.]
As in \algref{alg-window-lcs},
the total running time of this phase is $O(mn)$.

\item[Second phase.]
For each of $k$ prescribed substrings of $a$,
we execute up to $\log m$ instances 
of implicit matrix-vector distance multiplication.
Every such instance runs in time $O(n)$ by \thref{th-mvmult}.
Therefore, the total running time of this phase is $O(kn \log m)$.

\item[Total.]
The overall running time of the algorithm is dominated by the second phase, 
and is therefore $O(kn \log m)$.
\end{algorithm}

Similarly to \algref{alg-quasi-lcs}, 
a sharper analysis for values of $k$ close to $\binom{m}{2}$ 
leads to a smooth transition to the running time $O(m^2 n)$ 
in the dense case $k = \binom{m}{2}$,
which is asymptotically equal to the running time 
of the dense spliced algorithm of \cite{Gelfand+:96}.

\mychapter{Conclusions}

We have presented a number of existing and new 
algorithmic techniques and applications
related to semi-local string comparison.
Our approach unifies a substantial number of previously unrelated
problems and techniques, 
and in many cases allows us to match or improve existing algorithms.

A number of questions related to the semi-local string comparison framework
remain open.
In particular, it is not yet clear whether the framework 
can be extended to arbitrary real costs,
or to sequence alignment with non-linear gap penalties.

In summary, semi-local string comparison 
turns out to be a powerful algorithmic technique, 
which unifies, and often improves on, a number of previous approaches 
to various substring- and subsequence-related problems.
It is likely that further development of this approach
will give it even more scope and power.

\chapter{Acknowledgement}

This work was conceived in a discussion with Gad Landau in Haifa.
The imaginative term ``seaweeds'' was coined by Yuri Matiyasevich 
during a presentation by the author in St.~Petersburg.
I thank El\.zbieta Babij, Philip Bille, Pawe\l{} Gawrychowski, Tim Griffin,
Dima Grigoriev, Peter Krusche, Gad Landau, Victor Levandovsky, Sergei Nechaev, 
Dima Pasechnik, Lu\'is Russo, Andrei Sobolevski, 
Nikolai Vavilov, Oren Weimann, and Michal Ziv-Ukelson 
for fruitful discussions,
and many anonymous referees for their comments that helped to improve this work.


\bibliographystyle{plain}
\bibliography{auto,books,algebra,string,graph,sort,scan}
\printindex

\end{document}